\let\oldvec=\vec

\documentclass[smallcondensed]{svjour3}

\let\vec=\oldvec

\journalname{Journal of Automated Reasoning}

\usepackage{alltt}
\usepackage[isolatin]{inputenc}
\usepackage{amssymb}
\usepackage{tikz}
\usepackage{infrulex}
\usepackage{syntaxdefs}

\def\size#1{|#1|}
\def\id{{\it id}}
\def\sig{{\it sig}}
\def\chunk{\kappa}
\def\cst{{\it cst}}
\def\unop{{\it op}_1}
\def\binop{{\it op}_2}
\def\fn{F}
\def\fd{{\it Fd}}
\def\ef{{\it Fe}}
\def\prog{P}
\def\ofs{\delta}
\def\ev{\nu}
\def\event#1#2#3{#1(#2 \mapsto #3)}
\def\eval{\Rightarrow}
\def\evaltrace#1{\stackrel{#1}{\Rightarrow}}
\def\evalinf#1{\stackrel{#1}{\Rightarrow} \infty}
\def\twocases#1#2#3#4{\cases{#1 & #2 \cr #3 & #4 \cr}}
\def\threecases#1#2#3#4#5#6{\cases{#1 & #2 \cr #3 & #4 \cr #5 & #6 \cr}}
\def\fourcases#1#2#3#4#5#6#7#8{\cases{#1 & #2 \cr #3 & #4 \cr #5 & #6 \cr #7 & #8 \cr}}
\def\state{{\cal S}}
\def\callstate{{\cal C}}
\def\returnstate{{\cal R}}
\def\stackframe{{\cal F}}

\def\redtrace#1{\mathrel{\smash{\stackrel{#1}{\rightarrow}}\vphantom{\rightarrow}}}
\def\red{\redtrace{\epsilon}}
\def\redm#1{\redtrace{#1}^*}
\def\redp#1{\redtrace{#1}^+}
\def\redinf#1{\redtrace{#1} \infty}

\def\becomes{\leftarrow}
\def\sp{\sigma}
\def\trans#1#2{G \vdash #1 \red #2}
\def\transtrace#1#2#3{G \vdash #1 \redtrace{#2} #3}
\def\transm#1#2#3{G \vdash #1 \redm{#2} #3}
\def\transp#1#2#3{G \vdash #1 \redp{#2} #3}
\def\transinf#1#2{G \vdash #1 \redinf{#2}}
\def\Some#1{\lfloor #1 \rfloor}
\def\None{\emptyset}
\def\lbl{{\it lbl}}
\def\st{S}
\def\initialstate#1#2{{\tt initial}(#1,#2)}
\def\finalstate#1#2{{\tt final}(#1,#2)}
\def\lang#1{{\sf #1}}
\def\match{\sim}
\def\data{{\it data}}
\def\lbl{{\it lbl}}
\def\op{{\it op}}
\def\mode{{\it mode}}
\def\cond{{\it cond}}
\def\tr#1{[\![#1]\!]}
\def\Range{{\rm Rng}}
\def\ltrue{l_{\it true}}
\def\lfalse{l_{\it false}}
\def\lexit{l_e}
\def\lret{l_r}
\def\rret{r_r^?}
\def\lgoto{l_g}
\def\defequal{~~\stackrel{{\rm def}}{=}~~}
\def\loc{\ell}
\def\slot{s}
\def\mreg{r_m}
\def\ty{\tau}
\def\stk{\Sigma} 

\def\retaddr{{\it ra}}

\def\parent#1#2{#2.{\tt top}.#1}
\def\ireg{r_i}
\def\freg{r_f}
\def\crbit{r_c}

\def\gpic{\begin{center}}
\def\endgpic{\end{center}}

\def\eg{e.g.\ }
\def\ie{i.e.\ }
\def\alt{\mid}
\def\th{^{\mbox{\scriptsize th}}}
\def\union{\cup}
\def\Dom{{\rm Dom}}
\def\fun{\rightarrow}
\def\citeurl#1{{\tt #1}}

\emergencystretch=1cm

\begin{document}

\title{A formally verified compiler back-end}
            
\author{Xavier Leroy}
\institute{
X. Leroy \at
INRIA Paris-Rocquencourt, B.P. 105, 78153 Le Chesnay, France
\\ \email{Xavier.Leroy@inria.fr}}

\date{Received: 21 July 2009 / Accepted: 22 October 2009}

\maketitle

\begin{abstract} 
This article describes the development and formal verification (proof
of semantic preservation) of a compiler back-end from Cminor (a simple
imperative intermediate language) to PowerPC assembly code, using the
Coq proof assistant both for programming the compiler and for proving
its soundness.  Such a verified compiler is useful in the context of
formal methods applied to the certification of critical software: the
verification of the compiler guarantees that the safety properties
proved on the source code hold for the executable compiled code as
well.
\end{abstract}

\keywords{
Compiler verification
\and semantic preservation
\and program proof
\and formal methods
\and compiler transformations and optimizations
\and the Coq theorem prover
}

\section{Introduction} \label{s:intro}

Can you trust your compiler?  Compilers are generally assumed to be
semantically transparent: the compiled code should behave as
prescribed by the semantics of the source program.  Yet,
compilers---and especially optimizing compilers---are complex programs
that perform complicated symbolic transformations.  Despite intensive
testing, bugs in compilers do occur, causing the compiler to crash at
compile time or---much worse---to silently generate an incorrect
executable for a correct source program
\cite{McKeeman-98,Lindig-05,Regehr-08}.

For low-assurance software, validated only by testing, the impact of
compiler bugs is low: what is tested is the executable code produced
by the compiler; rigorous testing should expose compiler-introduced
errors along with errors already present in the source program.  Note,
however, that compiler-introduced bugs are notoriously difficult to
track down.  Moreover, test plans need to be made more complex if
optimizations are to be tested: for example, loop unrolling
introduces additional limit conditions that are not apparent in the
source loop.

The picture changes dramatically for safety-critical, high-assurance
software.  Here, validation by testing reaches its limits and needs to
be complemented or even replaced by the use of formal methods: model
checking, static analysis, program proof, etc..  Almost universally,
formal methods are applied to the source code of a program.  Bugs in
the compiler that is used to turn this formally verified source code into an
executable can potentially invalidate all the guarantees so painfully
obtained by the use of formal methods.  In a future where formal
methods are routinely applied to source programs, the compiler
could appear as a weak link in the chain that goes from
specifications to executables.

The safety-critical software industry is aware of these issues and
uses a variety of techniques to alleviate them: even more testing (of
the compiler and of the generated executable); turning compiler
optimizations off; and in extreme cases, conducting manual code
reviews of the generated assembly code.  These techniques do not fully
address the issue and are costly in terms of development time and
program performance.

An obviously better approach is to apply formal methods to the compiler
itself in order to gain assurance that it preserves the semantics of
the source programs.  Many different approaches have been proposed and
investigated, including on-paper and on-machine proofs of semantic
preservation, proof-carrying code, credible compilation, translation
validation, and type-preserving compilers.  (These approaches are
compared in section~\ref{s:framework}.)

For the last four years, we have been working on the development of a
{\em realistic}, {\em verified} compiler called Compcert.  By {\em
verified}, we mean a compiler that is accompanied by a machine-checked
proof that the generated code behaves exactly as prescribed by
the semantics of the source program (semantic preservation property).
By {\em realistic}, we mean a compiler that could realistically be
used in the context of production of critical software.  Namely, it
compiles a language commonly used for critical embedded software: not
Java, not ML, not assembly code, but a large subset of the C language.
It produces code for a processor commonly used in embedded systems, as
opposed \eg to a virtual machine: we chose the PowerPC
because it is popular in avionics.  Finally, the compiler must
generate code that is efficient enough and compact enough to fit the
requirements of critical embedded systems.  This implies a multi-pass
compiler that features good register allocation and some basic
optimizations.

This paper reports on the completion of a large part of this program:
the formal verification of a lightly-optimizing compiler back-end that
generates PowerPC assembly code from a simple imperative intermediate
language called Cminor.  This verification is mechanized using the Coq
proof assistant \cite{Coq,Bertot-Casteran-Coqart}.  Another part of
this program---the verification of a compiler front-end translating a
subset of C called Clight down to Cminor---has also been completed
and is described separately \cite{Blazy-Dargaye-Leroy-06,Blazy-Leroy-Clight-09}.

While there exists a considerable body of earlier work on
machine-checked correctness proofs of parts of compilers (see
section~\ref{s:related} for a review), our work is novel in two ways.
First, published work tends to focus on a few parts of a compiler,
such as optimizations and the underlying static analyses
\cite{Lerner-Chambers-05,Cachera-Jensen-05} or translation of a
high-level language to virtual machine code \cite{Klein-Nipkow-jinja}.
In contrast, our work emphasizes end-to-end verification of a complete
compilation chain from a structured imperative language down to
assembly code through 6 intermediate languages.  We found that many of
the non-optimizing translations performed, while often considered
obvious in compiler literature, are surprisingly tricky to 
prove correct formally.

Another novelty of this work is that most of the compiler is written
directly in the Coq specification language, in a purely functional
style.  The executable compiler is obtained by automatic extraction of
Caml code from this specification.  This approach is an attractive
alternative to writing the compiler in a conventional programming
language, then using a program logic to relate it with its
specifications.  This approach has never been applied before to a
program of the size and complexity of an optimizing compiler.

The complete source code of the Coq development, extensively
commented, is available on the Web \cite{Leroy-Compcert-Coq}.  We take
advantage of this availability to omit proofs and a number of
low-level details from this article, referring the interested reader
to the Coq development instead.  The purpose of this article is to
give a high-level presentation of a verified back-end, with just
enough details to enable readers to apply similar techniques in other
contexts.  The general perspective we adopt is to revisit classic
compiler technology from the viewpoint of the semanticist, in
particular by distinguishing clearly between the correctness-relevant
and the performance-relevant aspects of compilation algorithms, which
are inextricably mixed in compiler literature.

\begin{figure}

\begin{gpic}

\begin{pgfpicture}
  \pgfsetxvec{\pgfpoint{1.000in}{0in}}
  \pgfsetyvec{\pgfpoint{0in}{1.000in}}
  \begin{pgfscope}
    \pgfpathmoveto{\pgfpointxy{0.000}{0.112}}
    \pgfpatharc{180}{270}{0.100in}
    \pgfusepath{stroke}
  \end{pgfscope}
  \begin{pgfscope}
    \pgfpathmoveto{\pgfpointxy{0.100}{0.362}}
    \pgfpatharc{90}{179}{0.100in}
    \pgfusepath{stroke}
  \end{pgfscope}
  \begin{pgfscope}
    \pgfpathmoveto{\pgfpointxy{0.650}{0.262}}
    \pgfpatharc{0}{90}{0.100in}
    \pgfusepath{stroke}
  \end{pgfscope}
  \begin{pgfscope}
    \pgfpathmoveto{\pgfpointxy{0.550}{0.012}}
    \pgfpatharc{-90}{0}{0.100in}
    \pgfusepath{stroke}
  \end{pgfscope}
  \begin{pgfscope}
    \pgfpathmoveto{\pgfpointxy{0.000}{0.112}}
    \pgfpathlineto{\pgfpointxy{0.000}{0.262}}
    \pgfusepath{stroke}
  \end{pgfscope}
  \begin{pgfscope}
    \pgfpathmoveto{\pgfpointxy{0.100}{0.362}}
    \pgfpathlineto{\pgfpointxy{0.550}{0.362}}
    \pgfusepath{stroke}
  \end{pgfscope}
  \begin{pgfscope}
    \pgfpathmoveto{\pgfpointxy{0.650}{0.262}}
    \pgfpathlineto{\pgfpointxy{0.650}{0.112}}
    \pgfusepath{stroke}
  \end{pgfscope}
  \begin{pgfscope}
    \pgfpathmoveto{\pgfpointxy{0.550}{0.012}}
    \pgfpathlineto{\pgfpointxy{0.100}{0.012}}
    \pgfusepath{stroke}
  \end{pgfscope}
  \pgftext[at=\pgfpointadd{\pgfpointxy{0.325}{0.187}}{\pgfpoint{0pt}{0.5 \baselineskip}}]{\lang{Cminor}}
  \pgftext[at=\pgfpointadd{\pgfpointxy{0.325}{0.187}}{\pgfpoint{0pt}{-0.5 \baselineskip}}]{(\S \ref{s:cminor})}
  \begin{pgfscope}
    \pgfpathmoveto{\pgfpointxy{1.400}{0.112}}
    \pgfpatharc{180}{270}{0.100in}
    \pgfusepath{stroke}
  \end{pgfscope}
  \begin{pgfscope}
    \pgfpathmoveto{\pgfpointxy{1.500}{0.362}}
    \pgfpatharc{90}{179}{0.100in}
    \pgfusepath{stroke}
  \end{pgfscope}
  \begin{pgfscope}
    \pgfpathmoveto{\pgfpointxy{2.050}{0.262}}
    \pgfpatharc{0}{90}{0.100in}
    \pgfusepath{stroke}
  \end{pgfscope}
  \begin{pgfscope}
    \pgfpathmoveto{\pgfpointxy{1.950}{0.012}}
    \pgfpatharc{-90}{0}{0.100in}
    \pgfusepath{stroke}
  \end{pgfscope}
  \begin{pgfscope}
    \pgfpathmoveto{\pgfpointxy{1.400}{0.112}}
    \pgfpathlineto{\pgfpointxy{1.400}{0.262}}
    \pgfusepath{stroke}
  \end{pgfscope}
  \begin{pgfscope}
    \pgfpathmoveto{\pgfpointxy{1.500}{0.362}}
    \pgfpathlineto{\pgfpointxy{1.950}{0.362}}
    \pgfusepath{stroke}
  \end{pgfscope}
  \begin{pgfscope}
    \pgfpathmoveto{\pgfpointxy{2.050}{0.262}}
    \pgfpathlineto{\pgfpointxy{2.050}{0.112}}
    \pgfusepath{stroke}
  \end{pgfscope}
  \begin{pgfscope}
    \pgfpathmoveto{\pgfpointxy{1.950}{0.012}}
    \pgfpathlineto{\pgfpointxy{1.500}{0.012}}
    \pgfusepath{stroke}
  \end{pgfscope}
  \pgftext[at=\pgfpointadd{\pgfpointxy{1.725}{0.187}}{\pgfpoint{0pt}{0.5 \baselineskip}}]{\lang{CminorSel}}
  \pgftext[at=\pgfpointadd{\pgfpointxy{1.725}{0.187}}{\pgfpoint{0pt}{-0.5 \baselineskip}}]{(\S \ref{s:cminorsel})}
  \begin{pgfscope}
    \pgfpathmoveto{\pgfpointxy{2.800}{0.112}}
    \pgfpatharc{180}{270}{0.100in}
    \pgfusepath{stroke}
  \end{pgfscope}
  \begin{pgfscope}
    \pgfpathmoveto{\pgfpointxy{2.900}{0.362}}
    \pgfpatharc{90}{179}{0.100in}
    \pgfusepath{stroke}
  \end{pgfscope}
  \begin{pgfscope}
    \pgfpathmoveto{\pgfpointxy{3.450}{0.262}}
    \pgfpatharc{0}{90}{0.100in}
    \pgfusepath{stroke}
  \end{pgfscope}
  \begin{pgfscope}
    \pgfpathmoveto{\pgfpointxy{3.350}{0.012}}
    \pgfpatharc{-90}{0}{0.100in}
    \pgfusepath{stroke}
  \end{pgfscope}
  \begin{pgfscope}
    \pgfpathmoveto{\pgfpointxy{2.800}{0.112}}
    \pgfpathlineto{\pgfpointxy{2.800}{0.262}}
    \pgfusepath{stroke}
  \end{pgfscope}
  \begin{pgfscope}
    \pgfpathmoveto{\pgfpointxy{2.900}{0.362}}
    \pgfpathlineto{\pgfpointxy{3.350}{0.362}}
    \pgfusepath{stroke}
  \end{pgfscope}
  \begin{pgfscope}
    \pgfpathmoveto{\pgfpointxy{3.450}{0.262}}
    \pgfpathlineto{\pgfpointxy{3.450}{0.112}}
    \pgfusepath{stroke}
  \end{pgfscope}
  \begin{pgfscope}
    \pgfpathmoveto{\pgfpointxy{3.350}{0.012}}
    \pgfpathlineto{\pgfpointxy{2.900}{0.012}}
    \pgfusepath{stroke}
  \end{pgfscope}
  \pgftext[at=\pgfpointadd{\pgfpointxy{3.125}{0.187}}{\pgfpoint{0pt}{0.5 \baselineskip}}]{\lang{RTL}}
  \pgftext[at=\pgfpointadd{\pgfpointxy{3.125}{0.187}}{\pgfpoint{0pt}{-0.5 \baselineskip}}]{(\S \ref{s:rtl})}
  \begin{pgfscope}
    \pgfpathmoveto{\pgfpointxy{4.200}{0.112}}
    \pgfpatharc{180}{270}{0.100in}
    \pgfusepath{stroke}
  \end{pgfscope}
  \begin{pgfscope}
    \pgfpathmoveto{\pgfpointxy{4.300}{0.363}}
    \pgfpatharc{90}{180}{0.100in}
    \pgfusepath{stroke}
  \end{pgfscope}
  \begin{pgfscope}
    \pgfpathmoveto{\pgfpointxy{4.850}{0.262}}
    \pgfpatharc{0}{90}{0.100in}
    \pgfusepath{stroke}
  \end{pgfscope}
  \begin{pgfscope}
    \pgfpathmoveto{\pgfpointxy{4.750}{0.012}}
    \pgfpatharc{-90}{0}{0.100in}
    \pgfusepath{stroke}
  \end{pgfscope}
  \begin{pgfscope}
    \pgfpathmoveto{\pgfpointxy{4.200}{0.112}}
    \pgfpathlineto{\pgfpointxy{4.200}{0.262}}
    \pgfusepath{stroke}
  \end{pgfscope}
  \begin{pgfscope}
    \pgfpathmoveto{\pgfpointxy{4.300}{0.362}}
    \pgfpathlineto{\pgfpointxy{4.750}{0.362}}
    \pgfusepath{stroke}
  \end{pgfscope}
  \begin{pgfscope}
    \pgfpathmoveto{\pgfpointxy{4.850}{0.262}}
    \pgfpathlineto{\pgfpointxy{4.850}{0.112}}
    \pgfusepath{stroke}
  \end{pgfscope}
  \begin{pgfscope}
    \pgfpathmoveto{\pgfpointxy{4.750}{0.012}}
    \pgfpathlineto{\pgfpointxy{4.300}{0.012}}
    \pgfusepath{stroke}
  \end{pgfscope}
  \pgftext[at=\pgfpointadd{\pgfpointxy{4.525}{0.187}}{\pgfpoint{0pt}{0.5 \baselineskip}}]{\lang{LTL}}
  \pgftext[at=\pgfpointadd{\pgfpointxy{4.525}{0.187}}{\pgfpoint{0pt}{-0.5 \baselineskip}}]{(\S \ref{s:ltl})}
  \begin{pgfscope}
    \pgfpathmoveto{\pgfpointxy{4.200}{-0.788}}
    \pgfpatharc{180}{270}{0.100in}
    \pgfusepath{stroke}
  \end{pgfscope}
  \begin{pgfscope}
    \pgfpathmoveto{\pgfpointxy{4.300}{-0.537}}
    \pgfpatharc{90}{180}{0.100in}
    \pgfusepath{stroke}
  \end{pgfscope}
  \begin{pgfscope}
    \pgfpathmoveto{\pgfpointxy{4.850}{-0.638}}
    \pgfpatharc{0}{90}{0.100in}
    \pgfusepath{stroke}
  \end{pgfscope}
  \begin{pgfscope}
    \pgfpathmoveto{\pgfpointxy{4.750}{-0.888}}
    \pgfpatharc{-90}{0}{0.100in}
    \pgfusepath{stroke}
  \end{pgfscope}
  \begin{pgfscope}
    \pgfpathmoveto{\pgfpointxy{4.200}{-0.788}}
    \pgfpathlineto{\pgfpointxy{4.200}{-0.638}}
    \pgfusepath{stroke}
  \end{pgfscope}
  \begin{pgfscope}
    \pgfpathmoveto{\pgfpointxy{4.300}{-0.538}}
    \pgfpathlineto{\pgfpointxy{4.750}{-0.538}}
    \pgfusepath{stroke}
  \end{pgfscope}
  \begin{pgfscope}
    \pgfpathmoveto{\pgfpointxy{4.850}{-0.638}}
    \pgfpathlineto{\pgfpointxy{4.850}{-0.788}}
    \pgfusepath{stroke}
  \end{pgfscope}
  \begin{pgfscope}
    \pgfpathmoveto{\pgfpointxy{4.750}{-0.888}}
    \pgfpathlineto{\pgfpointxy{4.300}{-0.888}}
    \pgfusepath{stroke}
  \end{pgfscope}
  \pgftext[at=\pgfpointadd{\pgfpointxy{4.525}{-0.713}}{\pgfpoint{0pt}{0.5 \baselineskip}}]{\lang{LTLin}}
  \pgftext[at=\pgfpointadd{\pgfpointxy{4.525}{-0.713}}{\pgfpoint{0pt}{-0.5 \baselineskip}}]{(\S \ref{s:ltlin})}
  \begin{pgfscope}
    \pgfpathmoveto{\pgfpointxy{2.800}{-0.788}}
    \pgfpatharc{180}{270}{0.100in}
    \pgfusepath{stroke}
  \end{pgfscope}
  \begin{pgfscope}
    \pgfpathmoveto{\pgfpointxy{2.900}{-0.538}}
    \pgfpatharc{90}{179}{0.100in}
    \pgfusepath{stroke}
  \end{pgfscope}
  \begin{pgfscope}
    \pgfpathmoveto{\pgfpointxy{3.450}{-0.638}}
    \pgfpatharc{0}{90}{0.100in}
    \pgfusepath{stroke}
  \end{pgfscope}
  \begin{pgfscope}
    \pgfpathmoveto{\pgfpointxy{3.350}{-0.888}}
    \pgfpatharc{-90}{0}{0.100in}
    \pgfusepath{stroke}
  \end{pgfscope}
  \begin{pgfscope}
    \pgfpathmoveto{\pgfpointxy{2.800}{-0.788}}
    \pgfpathlineto{\pgfpointxy{2.800}{-0.638}}
    \pgfusepath{stroke}
  \end{pgfscope}
  \begin{pgfscope}
    \pgfpathmoveto{\pgfpointxy{2.900}{-0.538}}
    \pgfpathlineto{\pgfpointxy{3.350}{-0.538}}
    \pgfusepath{stroke}
  \end{pgfscope}
  \begin{pgfscope}
    \pgfpathmoveto{\pgfpointxy{3.450}{-0.638}}
    \pgfpathlineto{\pgfpointxy{3.450}{-0.788}}
    \pgfusepath{stroke}
  \end{pgfscope}
  \begin{pgfscope}
    \pgfpathmoveto{\pgfpointxy{3.350}{-0.888}}
    \pgfpathlineto{\pgfpointxy{2.900}{-0.888}}
    \pgfusepath{stroke}
  \end{pgfscope}
  \pgftext[at=\pgfpointadd{\pgfpointxy{3.125}{-0.713}}{\pgfpoint{0pt}{0.5 \baselineskip}}]{\lang{Linear}}
  \pgftext[at=\pgfpointadd{\pgfpointxy{3.125}{-0.713}}{\pgfpoint{0pt}{-0.5 \baselineskip}}]{(\S \ref{s:linear})}
  \begin{pgfscope}
    \pgfpathmoveto{\pgfpointxy{1.400}{-0.788}}
    \pgfpatharc{180}{270}{0.100in}
    \pgfusepath{stroke}
  \end{pgfscope}
  \begin{pgfscope}
    \pgfpathmoveto{\pgfpointxy{1.500}{-0.538}}
    \pgfpatharc{90}{179}{0.100in}
    \pgfusepath{stroke}
  \end{pgfscope}
  \begin{pgfscope}
    \pgfpathmoveto{\pgfpointxy{2.050}{-0.638}}
    \pgfpatharc{0}{90}{0.100in}
    \pgfusepath{stroke}
  \end{pgfscope}
  \begin{pgfscope}
    \pgfpathmoveto{\pgfpointxy{1.950}{-0.888}}
    \pgfpatharc{-90}{0}{0.100in}
    \pgfusepath{stroke}
  \end{pgfscope}
  \begin{pgfscope}
    \pgfpathmoveto{\pgfpointxy{1.400}{-0.788}}
    \pgfpathlineto{\pgfpointxy{1.400}{-0.638}}
    \pgfusepath{stroke}
  \end{pgfscope}
  \begin{pgfscope}
    \pgfpathmoveto{\pgfpointxy{1.500}{-0.538}}
    \pgfpathlineto{\pgfpointxy{1.950}{-0.538}}
    \pgfusepath{stroke}
  \end{pgfscope}
  \begin{pgfscope}
    \pgfpathmoveto{\pgfpointxy{2.050}{-0.638}}
    \pgfpathlineto{\pgfpointxy{2.050}{-0.788}}
    \pgfusepath{stroke}
  \end{pgfscope}
  \begin{pgfscope}
    \pgfpathmoveto{\pgfpointxy{1.950}{-0.888}}
    \pgfpathlineto{\pgfpointxy{1.500}{-0.888}}
    \pgfusepath{stroke}
  \end{pgfscope}
  \pgftext[at=\pgfpointadd{\pgfpointxy{1.725}{-0.713}}{\pgfpoint{0pt}{0.5 \baselineskip}}]{\lang{Mach}}
  \pgftext[at=\pgfpointadd{\pgfpointxy{1.725}{-0.713}}{\pgfpoint{0pt}{-0.5 \baselineskip}}]{(\S \ref{s:mach})}
  \begin{pgfscope}
    \pgfpathmoveto{\pgfpointxy{0.000}{-0.788}}
    \pgfpatharc{180}{270}{0.100in}
    \pgfusepath{stroke}
  \end{pgfscope}
  \begin{pgfscope}
    \pgfpathmoveto{\pgfpointxy{0.100}{-0.538}}
    \pgfpatharc{90}{180}{0.100in}
    \pgfusepath{stroke}
  \end{pgfscope}
  \begin{pgfscope}
    \pgfpathmoveto{\pgfpointxy{0.650}{-0.638}}
    \pgfpatharc{0}{90}{0.100in}
    \pgfusepath{stroke}
  \end{pgfscope}
  \begin{pgfscope}
    \pgfpathmoveto{\pgfpointxy{0.550}{-0.888}}
    \pgfpatharc{-90}{0}{0.100in}
    \pgfusepath{stroke}
  \end{pgfscope}
  \begin{pgfscope}
    \pgfpathmoveto{\pgfpointxy{0.000}{-0.788}}
    \pgfpathlineto{\pgfpointxy{0.000}{-0.638}}
    \pgfusepath{stroke}
  \end{pgfscope}
  \begin{pgfscope}
    \pgfpathmoveto{\pgfpointxy{0.100}{-0.538}}
    \pgfpathlineto{\pgfpointxy{0.550}{-0.538}}
    \pgfusepath{stroke}
  \end{pgfscope}
  \begin{pgfscope}
    \pgfpathmoveto{\pgfpointxy{0.650}{-0.638}}
    \pgfpathlineto{\pgfpointxy{0.650}{-0.788}}
    \pgfusepath{stroke}
  \end{pgfscope}
  \begin{pgfscope}
    \pgfpathmoveto{\pgfpointxy{0.550}{-0.888}}
    \pgfpathlineto{\pgfpointxy{0.100}{-0.888}}
    \pgfusepath{stroke}
  \end{pgfscope}
  \pgftext[at=\pgfpointadd{\pgfpointxy{0.325}{-0.713}}{\pgfpoint{0pt}{0.5 \baselineskip}}]{\lang{PPC}}
  \pgftext[at=\pgfpointadd{\pgfpointxy{0.325}{-0.713}}{\pgfpoint{0pt}{-0.5 \baselineskip}}]{(\S \ref{s:ppc})}
  \begin{pgfscope}
    \pgfpathmoveto{\pgfpointxy{1.300}{0.212}}
    \pgfpathlineto{\pgfpointxy{1.400}{0.187}}
    \pgfpathlineto{\pgfpointxy{1.300}{0.162}}
    \pgfpathlineto{\pgfpointxy{1.300}{0.212}}
    \pgfsetfillcolor{black}
    \pgfsetlinewidth{0.100pt}
    \pgfusepath{fill,stroke}
  \end{pgfscope}
  \begin{pgfscope}
    \pgfpathmoveto{\pgfpointxy{0.650}{0.187}}
    \pgfpathlineto{\pgfpointxy{1.300}{0.187}}
    \pgfusepath{stroke}
  \end{pgfscope}
  \pgftext[at=\pgfpointadd{\pgfpointxy{1.025}{0.187}}{\pgfpoint{0pt}{1.5 \baselineskip}}]{instruction}
  \pgftext[at=\pgfpointadd{\pgfpointxy{1.025}{0.187}}{\pgfpoint{0pt}{0.5 \baselineskip}}]{selection}
  \pgftext[at=\pgfpointadd{\pgfpointxy{1.025}{0.187}}{\pgfpoint{0pt}{-0.5 \baselineskip}}]{(\S \ref{s:selection})}
  \begin{pgfscope}
    \pgfpathmoveto{\pgfpointxy{2.700}{0.212}}
    \pgfpathlineto{\pgfpointxy{2.800}{0.187}}
    \pgfpathlineto{\pgfpointxy{2.700}{0.162}}
    \pgfpathlineto{\pgfpointxy{2.700}{0.212}}
    \pgfsetfillcolor{black}
    \pgfsetlinewidth{0.100pt}
    \pgfusepath{fill,stroke}
  \end{pgfscope}
  \begin{pgfscope}
    \pgfpathmoveto{\pgfpointxy{2.050}{0.187}}
    \pgfpathlineto{\pgfpointxy{2.700}{0.187}}
    \pgfusepath{stroke}
  \end{pgfscope}
  \pgftext[at=\pgfpointadd{\pgfpointxy{2.425}{0.187}}{\pgfpoint{0pt}{1.5 \baselineskip}}]{CFG}
  \pgftext[at=\pgfpointadd{\pgfpointxy{2.425}{0.187}}{\pgfpoint{0pt}{0.5 \baselineskip}}]{construction}
  \pgftext[at=\pgfpointadd{\pgfpointxy{2.425}{0.187}}{\pgfpoint{0pt}{-0.5 \baselineskip}}]{(\S \ref{s:rtlgen})}
  \begin{pgfscope}
    \pgfpathmoveto{\pgfpointxy{4.100}{0.212}}
    \pgfpathlineto{\pgfpointxy{4.200}{0.187}}
    \pgfpathlineto{\pgfpointxy{4.100}{0.162}}
    \pgfpathlineto{\pgfpointxy{4.100}{0.212}}
    \pgfsetfillcolor{black}
    \pgfsetlinewidth{0.100pt}
    \pgfusepath{fill,stroke}
  \end{pgfscope}
  \begin{pgfscope}
    \pgfpathmoveto{\pgfpointxy{3.450}{0.187}}
    \pgfpathlineto{\pgfpointxy{4.100}{0.187}}
    \pgfusepath{stroke}
  \end{pgfscope}
  \pgftext[at=\pgfpointadd{\pgfpointxy{3.825}{0.187}}{\pgfpoint{0pt}{1.5 \baselineskip}}]{register}
  \pgftext[at=\pgfpointadd{\pgfpointxy{3.825}{0.187}}{\pgfpoint{0pt}{0.5 \baselineskip}}]{allocation}
  \pgftext[at=\pgfpointadd{\pgfpointxy{3.825}{0.187}}{\pgfpoint{0pt}{-0.5 \baselineskip}}]{(\S \ref{s:regalloc})}
  \begin{pgfscope}
    \pgfpathmoveto{\pgfpointxy{4.550}{-0.438}}
    \pgfpathlineto{\pgfpointxy{4.525}{-0.538}}
    \pgfpathlineto{\pgfpointxy{4.500}{-0.438}}
    \pgfpathlineto{\pgfpointxy{4.550}{-0.438}}
    \pgfsetfillcolor{black}
    \pgfsetlinewidth{0.100pt}
    \pgfusepath{fill,stroke}
  \end{pgfscope}
  \begin{pgfscope}
    \pgfpathmoveto{\pgfpointxy{4.525}{0.012}}
    \pgfpathlineto{\pgfpointxy{4.525}{-0.438}}
    \pgfusepath{stroke}
  \end{pgfscope}
  \pgftext[at=\pgfpointadd{\pgfpointxy{4.525}{-0.263}}{\pgfpoint{0pt}{0.5 \baselineskip}},right]{code linearization~}
  \pgftext[at=\pgfpointadd{\pgfpointxy{4.525}{-0.263}}{\pgfpoint{0pt}{-0.5 \baselineskip}},right]{(\S \ref{s:linearize})~}
  \begin{pgfscope}
    \pgfpathmoveto{\pgfpointxy{3.550}{-0.738}}
    \pgfpathlineto{\pgfpointxy{3.450}{-0.713}}
    \pgfpathlineto{\pgfpointxy{3.550}{-0.688}}
    \pgfpathlineto{\pgfpointxy{3.550}{-0.738}}
    \pgfsetfillcolor{black}
    \pgfsetlinewidth{0.100pt}
    \pgfusepath{fill,stroke}
  \end{pgfscope}
  \begin{pgfscope}
    \pgfpathmoveto{\pgfpointxy{4.200}{-0.713}}
    \pgfpathlineto{\pgfpointxy{3.550}{-0.713}}
    \pgfusepath{stroke}
  \end{pgfscope}
  \pgftext[at=\pgfpointadd{\pgfpointxy{3.825}{-0.713}}{\pgfpoint{0pt}{1.5 \baselineskip}}]{spilling,}
  \pgftext[at=\pgfpointadd{\pgfpointxy{3.825}{-0.713}}{\pgfpoint{0pt}{0.5 \baselineskip}}]{calling}
  \pgftext[at=\pgfpointadd{\pgfpointxy{3.825}{-0.713}}{\pgfpoint{0pt}{-0.5 \baselineskip}}]{conventions}
  \pgftext[at=\pgfpointadd{\pgfpointxy{3.825}{-0.713}}{\pgfpoint{0pt}{-1.5 \baselineskip}}]{(\S \ref{s:reload})}
  \begin{pgfscope}
    \pgfpathmoveto{\pgfpointxy{2.150}{-0.738}}
    \pgfpathlineto{\pgfpointxy{2.050}{-0.713}}
    \pgfpathlineto{\pgfpointxy{2.150}{-0.688}}
    \pgfpathlineto{\pgfpointxy{2.150}{-0.738}}
    \pgfsetfillcolor{black}
    \pgfsetlinewidth{0.100pt}
    \pgfusepath{fill,stroke}
  \end{pgfscope}
  \begin{pgfscope}
    \pgfpathmoveto{\pgfpointxy{2.800}{-0.713}}
    \pgfpathlineto{\pgfpointxy{2.150}{-0.713}}
    \pgfusepath{stroke}
  \end{pgfscope}
  \pgftext[at=\pgfpointadd{\pgfpointxy{2.425}{-0.713}}{\pgfpoint{0pt}{1.5 \baselineskip}}]{layout of}
  \pgftext[at=\pgfpointadd{\pgfpointxy{2.425}{-0.713}}{\pgfpoint{0pt}{0.5 \baselineskip}}]{activation}
  \pgftext[at=\pgfpointadd{\pgfpointxy{2.425}{-0.713}}{\pgfpoint{0pt}{-0.5 \baselineskip}}]{records}
  \pgftext[at=\pgfpointadd{\pgfpointxy{2.425}{-0.713}}{\pgfpoint{0pt}{-1.5 \baselineskip}}]{(\S \ref{s:stacking})}
  \begin{pgfscope}
    \pgfpathmoveto{\pgfpointxy{0.750}{-0.738}}
    \pgfpathlineto{\pgfpointxy{0.650}{-0.713}}
    \pgfpathlineto{\pgfpointxy{0.750}{-0.688}}
    \pgfpathlineto{\pgfpointxy{0.750}{-0.738}}
    \pgfsetfillcolor{black}
    \pgfsetlinewidth{0.100pt}
    \pgfusepath{fill,stroke}
  \end{pgfscope}
  \begin{pgfscope}
    \pgfpathmoveto{\pgfpointxy{1.400}{-0.713}}
    \pgfpathlineto{\pgfpointxy{0.750}{-0.713}}
    \pgfusepath{stroke}
  \end{pgfscope}
  \pgftext[at=\pgfpointadd{\pgfpointxy{1.025}{-0.713}}{\pgfpoint{0pt}{1.5 \baselineskip}}]{PowerPC}
  \pgftext[at=\pgfpointadd{\pgfpointxy{1.025}{-0.713}}{\pgfpoint{0pt}{0.5 \baselineskip}}]{generation}
  \pgftext[at=\pgfpointadd{\pgfpointxy{1.025}{-0.713}}{\pgfpoint{0pt}{-0.5 \baselineskip}}]{(\S \ref{s:ppcgen})}
  \begin{pgfscope}
    \pgfpathmoveto{\pgfpointxy{3.100}{0.462}}
    \pgfpathlineto{\pgfpointxy{3.075}{0.362}}
    \pgfpathlineto{\pgfpointxy{3.050}{0.462}}
    \pgfpathlineto{\pgfpointxy{3.100}{0.462}}
    \pgfsetfillcolor{black}
    \pgfsetlinewidth{0.100pt}
    \pgfusepath{fill,stroke}
  \end{pgfscope}
  \begin{pgfscope}
    \pgfpathmoveto{\pgfpointxy{2.875}{0.362}}
    \pgfpathlineto{\pgfpointxy{2.825}{0.462}}
    \pgfpathcurveto{\pgfpointxy{2.792}{0.529}}{\pgfpointxy{2.825}{0.562}}{\pgfpointxy{2.925}{0.562}}
    \pgfpathcurveto{\pgfpointxy{3.025}{0.562}}{\pgfpointxy{3.075}{0.531}}{\pgfpointxy{3.075}{0.467}}
    \pgfpathlineto{\pgfpointxy{3.075}{0.372}}
    \pgfusepath{stroke}
  \end{pgfscope}
  \pgftext[at=\pgfpointadd{\pgfpointxy{3.025}{0.712}}{\pgfpoint{0pt}{0.5 \baselineskip}},right]{constant propagation}
  \pgftext[at=\pgfpointadd{\pgfpointxy{3.025}{0.712}}{\pgfpoint{0pt}{-0.5 \baselineskip}},right]{(\S \ref{s:constprop})}
  \begin{pgfscope}
    \pgfpathmoveto{\pgfpointxy{3.442}{0.441}}
    \pgfpathlineto{\pgfpointxy{3.375}{0.362}}
    \pgfpathlineto{\pgfpointxy{3.397}{0.463}}
    \pgfpathlineto{\pgfpointxy{3.442}{0.441}}
    \pgfsetfillcolor{black}
    \pgfsetlinewidth{0.100pt}
    \pgfusepath{fill,stroke}
  \end{pgfscope}
  \begin{pgfscope}
    \pgfpathmoveto{\pgfpointxy{3.379}{0.371}}
    \pgfpathlineto{\pgfpointxy{3.427}{0.467}}
    \pgfpathcurveto{\pgfpointxy{3.459}{0.531}}{\pgfpointxy{3.425}{0.562}}{\pgfpointxy{3.325}{0.562}}
    \pgfpathcurveto{\pgfpointxy{3.225}{0.562}}{\pgfpointxy{3.175}{0.529}}{\pgfpointxy{3.175}{0.462}}
    \pgfpathlineto{\pgfpointxy{3.175}{0.362}}
    \pgfusepath{stroke}
  \end{pgfscope}
  \pgftext[at=\pgfpointadd{\pgfpointxy{3.225}{0.712}}{\pgfpoint{0pt}{0.5 \baselineskip}},left]{CSE}
  \pgftext[at=\pgfpointadd{\pgfpointxy{3.225}{0.712}}{\pgfpoint{0pt}{-0.5 \baselineskip}},left]{(\S \ref{s:CSE})}
  \begin{pgfscope}
    \pgfpathmoveto{\pgfpointxy{4.692}{0.441}}
    \pgfpathlineto{\pgfpointxy{4.625}{0.362}}
    \pgfpathlineto{\pgfpointxy{4.647}{0.463}}
    \pgfpathlineto{\pgfpointxy{4.692}{0.441}}
    \pgfsetfillcolor{black}
    \pgfsetlinewidth{0.100pt}
    \pgfusepath{fill,stroke}
  \end{pgfscope}
  \begin{pgfscope}
    \pgfpathmoveto{\pgfpointxy{4.425}{0.362}}
    \pgfpathlineto{\pgfpointxy{4.375}{0.462}}
    \pgfpathcurveto{\pgfpointxy{4.342}{0.529}}{\pgfpointxy{4.392}{0.562}}{\pgfpointxy{4.525}{0.562}}
    \pgfpathcurveto{\pgfpointxy{4.658}{0.562}}{\pgfpointxy{4.709}{0.531}}{\pgfpointxy{4.677}{0.467}}
    \pgfpathlineto{\pgfpointxy{4.629}{0.371}}
    \pgfusepath{stroke}
  \end{pgfscope}
  \pgftext[at=\pgfpointadd{\pgfpointxy{4.625}{0.712}}{\pgfpoint{0pt}{0.5 \baselineskip}},right]{branch tunneling}
  \pgftext[at=\pgfpointadd{\pgfpointxy{4.625}{0.712}}{\pgfpoint{0pt}{-0.5 \baselineskip}},right]{(\S \ref{s:tunneling})}
\end{pgfpicture}
\ifx\Setlineno\undefined\else\Setlineno=59\fi
\end{gpic}

\caption{The passes and intermediate languages of Compcert.}
\label{f:compcert-passes}

\end{figure}

The remainder of this article is organized as follows.
Section~\ref{s:framework} formalizes various approaches to
establishing trust in the results of compilation.
Section~\ref{s:infrastructure} presents the main aspects of the
development that are shared between all passes of the compiler:
the value and memory models, labeled transition semantics, proofs by
simulation diagrams.  Sections~\ref{s:cminor} and~\ref{s:ppc} define
the semantics of our source language \lang{Cminor} and our target
language \lang{PPC}, respectively.  The bulk of this article 
(sections~\ref{s:selection} to~\ref{s:ppcgen}) is devoted to the 
description of the successive passes of the compiler, the intermediate
languages they operate on, and their soundness proofs.
(Figure~\ref{f:compcert-passes} summarizes the passes and the
intermediate languages.)
Experimental data on the Coq development and on the executable
compiler extracted from it are presented in sections~\ref{s:coq-devel}
and~\ref{s:experimental}.  Section~\ref{s:perspectives} discusses some
of the design choices and possible extensions.  Related work is
discussed in section~\ref{s:related}, followed by concluding remarks
in section~\ref{s:conclusions}.

\section{General framework} \label{s:framework}

\subsection{Notions of semantic preservation} \label{s:semantic-preservation}

Consider a source program $S$ and a compiled program $C$ produced by a
compiler.  Our aim is to prove that the semantics of $S$ was preserved
during compilation.  To make this notion of semantic preservation
precise, we assume given semantics for the source language $L_s$ and
the target language $L_t$.  These semantics associate one or several
observable behaviors $B$ to $S$ and $C$.  Typically, observable
behaviors include termination, divergence, and ``going wrong'' on
executing an undefined computation.  (In the remainder of this work,
behaviors also contain traces of input-output operations performed
during program execution.)  We write $S \Downarrow B$ to mean that
program $S$ executes with observable behavior $B$, and likewise for $C$.

The strongest notion of semantic preservation during compilation is
that the source program $S$ and the compiled code $C$ have exactly the
same sets of observable behaviors---a standard bisimulation property:

\begin{definition}[Bisimulation] \label{d:bisimulation}
$ \forall B, ~ S \Downarrow B \Longleftrightarrow C \Downarrow B$.
\end{definition}

Definition~\ref{d:bisimulation} is too strong to be usable as our
notion of semantic preservation.  If the source language is not
deterministic, compilers are allowed to select one of the possible
behaviors of the source program.  (For instance, C compilers choose
one particular evaluation order for expressions among the several
orders allowed by the C specifications.)  In this case, $C$ will have
fewer behaviors than $S$.  To account for this degree of freedom, we
can consider a backward simulation, or refinement, property:

\begin{definition}[Backward simulation] \label{d:backward-sim}
$ \forall B, ~ C \Downarrow B \Longrightarrow S \Downarrow B$.
\end{definition}

Definitions~\ref{d:bisimulation} and~\ref{d:backward-sim} imply that
if $S$ always goes wrong, so does $C$.
Several desirable optimizations violate this
requirement.  For instance, if $S$ contains an integer division whose
result is unused, and this division can cause $S$ to go wrong because
its second argument is zero, dead code elimination will result in a
compiled program $C$ that does not go wrong on this division.  To
leave more flexibility to the compiler, we can therefore restrict the
backward simulation property to {\em safe} source programs.  A program
$S$ is safe, written ${\it Safe}(S)$, if none of its possible behaviors is
in the set {\tt Wrong} of ``going wrong'' behaviors ($S \Downarrow B
\Longrightarrow B \notin {\tt Wrong}$).

\begin{definition}[Backward simulation for safe programs]
\label{d:safe-backward-sim}
If ${\it Safe}(S)$, then 
$ \forall B, ~ C \Downarrow B \Longrightarrow S \Downarrow B$.
\end{definition}

In other words, if $S$ cannot go wrong (a fact that can be established
by formal verification or static analysis of $S$), then neither does
$C$; moreover, all observable behaviors of $C$ are acceptable
behaviors of $S$. 

An alternative to backward simulation
(definitions~\ref{d:backward-sim} and~\ref{d:safe-backward-sim})
is forward simulation properties, showing that all possible behaviors
of the source program are also possible behaviors of the compiled
program:

\begin{definition}[Forward simulation] \label{d:forward-sim}
$ \forall B, ~ S \Downarrow B \Longrightarrow C \Downarrow B$.
\end{definition}

\begin{definition}[Forward simulation for safe programs]
\label{d:safe-forward-sim}
$ \forall B \notin {\tt Wrong}, ~ S \Downarrow B \Longrightarrow C \Downarrow B$.
\end{definition}

\begin{figure}

\begin{gpic}

\begin{pgfpicture}
  \pgfsetxvec{\pgfpoint{1.000in}{0in}}
  \pgfsetyvec{\pgfpoint{0in}{1.000in}}
  \begin{pgfscope}
    \pgfpathellipse{\pgfpointxy{1.500}{0.700}}{\pgfpointxy{0.500}{0}}{\pgfpointxy{0}{0.200}}
    \pgfusepath{stroke}
  \end{pgfscope}
  \pgftext[at=\pgfpointadd{\pgfpointxy{1.500}{0.700}}{\pgfpoint{0pt}{-0.0 \baselineskip}}]{Bisimulation}
  \begin{pgfscope}
    \pgfpathellipse{\pgfpointxy{2.500}{0.000}}{\pgfpointxy{0.500}{0}}{\pgfpointxy{0}{0.200}}
    \pgfusepath{stroke}
  \end{pgfscope}
  \pgftext[at=\pgfpointadd{\pgfpointxy{2.500}{0.000}}{\pgfpoint{0pt}{0.5 \baselineskip}}]{Backward}
  \pgftext[at=\pgfpointadd{\pgfpointxy{2.500}{0.000}}{\pgfpoint{0pt}{-0.5 \baselineskip}}]{simulation}
  \begin{pgfscope}
    \pgfpathellipse{\pgfpointxy{2.500}{-0.700}}{\pgfpointxy{0.500}{0}}{\pgfpointxy{0}{0.200}}
    \pgfusepath{stroke}
  \end{pgfscope}
  \pgftext[at=\pgfpointadd{\pgfpointxy{2.500}{-0.700}}{\pgfpoint{0pt}{0.5 \baselineskip}}]{Safe backward}
  \pgftext[at=\pgfpointadd{\pgfpointxy{2.500}{-0.700}}{\pgfpoint{0pt}{-0.5 \baselineskip}}]{simulation}
  \begin{pgfscope}
    \pgfpathellipse{\pgfpointxy{4.000}{-0.700}}{\pgfpointxy{0.500}{0}}{\pgfpointxy{0}{0.200}}
    \pgfusepath{stroke}
  \end{pgfscope}
  \pgftext[at=\pgfpointadd{\pgfpointxy{4.000}{-0.700}}{\pgfpoint{0pt}{0.5 \baselineskip}}]{Preservation of}
  \pgftext[at=\pgfpointadd{\pgfpointxy{4.000}{-0.700}}{\pgfpoint{0pt}{-0.5 \baselineskip}}]{specifications}
  \begin{pgfscope}
    \pgfpathellipse{\pgfpointxy{0.500}{0.000}}{\pgfpointxy{0.500}{0}}{\pgfpointxy{0}{0.200}}
    \pgfusepath{stroke}
  \end{pgfscope}
  \pgftext[at=\pgfpointadd{\pgfpointxy{0.500}{0.000}}{\pgfpoint{0pt}{0.5 \baselineskip}}]{Forward}
  \pgftext[at=\pgfpointadd{\pgfpointxy{0.500}{0.000}}{\pgfpoint{0pt}{-0.5 \baselineskip}}]{simulation}
  \begin{pgfscope}
    \pgfpathellipse{\pgfpointxy{0.500}{-0.700}}{\pgfpointxy{0.500}{0}}{\pgfpointxy{0}{0.200}}
    \pgfusepath{stroke}
  \end{pgfscope}
  \pgftext[at=\pgfpointadd{\pgfpointxy{0.500}{-0.700}}{\pgfpoint{0pt}{0.5 \baselineskip}}]{Safe forward}
  \pgftext[at=\pgfpointadd{\pgfpointxy{0.500}{-0.700}}{\pgfpoint{0pt}{-0.5 \baselineskip}}]{simulation}
  \begin{pgfscope}
    \pgfpathmoveto{\pgfpointxy{2.425}{0.270}}
    \pgfpathlineto{\pgfpointxy{2.500}{0.200}}
    \pgfpathlineto{\pgfpointxy{2.400}{0.227}}
    \pgfpathlineto{\pgfpointxy{2.425}{0.270}}
    \pgfsetfillcolor{black}
    \pgfsetlinewidth{0.100pt}
    \pgfusepath{fill,stroke}
  \end{pgfscope}
  \begin{pgfscope}
    \pgfpathmoveto{\pgfpointxy{1.854}{0.559}}
    \pgfpathlineto{\pgfpointxy{2.413}{0.249}}
    \pgfusepath{stroke}
  \end{pgfscope}
  \begin{pgfscope}
    \pgfpathmoveto{\pgfpointxy{0.600}{0.227}}
    \pgfpathlineto{\pgfpointxy{0.500}{0.200}}
    \pgfpathlineto{\pgfpointxy{0.575}{0.270}}
    \pgfpathlineto{\pgfpointxy{0.600}{0.227}}
    \pgfsetfillcolor{black}
    \pgfsetlinewidth{0.100pt}
    \pgfusepath{fill,stroke}
  \end{pgfscope}
  \begin{pgfscope}
    \pgfpathmoveto{\pgfpointxy{1.146}{0.559}}
    \pgfpathlineto{\pgfpointxy{0.587}{0.249}}
    \pgfusepath{stroke}
  \end{pgfscope}
  \begin{pgfscope}
    \pgfpathmoveto{\pgfpointxy{2.525}{-0.400}}
    \pgfpathlineto{\pgfpointxy{2.500}{-0.500}}
    \pgfpathlineto{\pgfpointxy{2.475}{-0.400}}
    \pgfpathlineto{\pgfpointxy{2.525}{-0.400}}
    \pgfsetfillcolor{black}
    \pgfsetlinewidth{0.100pt}
    \pgfusepath{fill,stroke}
  \end{pgfscope}
  \begin{pgfscope}
    \pgfpathmoveto{\pgfpointxy{2.500}{-0.200}}
    \pgfpathlineto{\pgfpointxy{2.500}{-0.400}}
    \pgfusepath{stroke}
  \end{pgfscope}
  \begin{pgfscope}
    \pgfpathmoveto{\pgfpointxy{3.400}{-0.675}}
    \pgfpathlineto{\pgfpointxy{3.500}{-0.700}}
    \pgfpathlineto{\pgfpointxy{3.400}{-0.725}}
    \pgfpathlineto{\pgfpointxy{3.400}{-0.675}}
    \pgfsetfillcolor{black}
    \pgfsetlinewidth{0.100pt}
    \pgfusepath{fill,stroke}
  \end{pgfscope}
  \begin{pgfscope}
    \pgfpathmoveto{\pgfpointxy{3.000}{-0.700}}
    \pgfpathlineto{\pgfpointxy{3.400}{-0.700}}
    \pgfusepath{stroke}
  \end{pgfscope}
  \begin{pgfscope}
    \pgfpathmoveto{\pgfpointxy{0.525}{-0.400}}
    \pgfpathlineto{\pgfpointxy{0.500}{-0.500}}
    \pgfpathlineto{\pgfpointxy{0.475}{-0.400}}
    \pgfpathlineto{\pgfpointxy{0.525}{-0.400}}
    \pgfsetfillcolor{black}
    \pgfsetlinewidth{0.100pt}
    \pgfusepath{fill,stroke}
  \end{pgfscope}
  \begin{pgfscope}
    \pgfpathmoveto{\pgfpointxy{0.500}{-0.200}}
    \pgfpathlineto{\pgfpointxy{0.500}{-0.400}}
    \pgfusepath{stroke}
  \end{pgfscope}
  \begin{pgfscope}
    \pgfpathmoveto{\pgfpointxy{1.908}{0.097}}
    \pgfpathlineto{\pgfpointxy{2.000}{0.050}}
    \pgfpathlineto{\pgfpointxy{1.897}{0.048}}
    \pgfpathlineto{\pgfpointxy{1.908}{0.097}}
    \pgfsetfillcolor{black}
    \pgfsetlinewidth{0.100pt}
    \pgfusepath{fill,stroke}
  \end{pgfscope}
  \begin{pgfscope}
    \pgfpathmoveto{\pgfpointxy{1.903}{0.073}}
    \pgfpatharc{78}{104}{2.000in}
    \pgfusepath{stroke}
  \end{pgfscope}
  \begin{pgfscope}
    \pgfpathmoveto{\pgfpointxy{1.092}{-0.097}}
    \pgfpathlineto{\pgfpointxy{1.000}{-0.050}}
    \pgfpathlineto{\pgfpointxy{1.103}{-0.048}}
    \pgfpathlineto{\pgfpointxy{1.092}{-0.097}}
    \pgfsetfillcolor{black}
    \pgfsetlinewidth{0.100pt}
    \pgfusepath{fill,stroke}
  \end{pgfscope}
  \begin{pgfscope}
    \pgfpathmoveto{\pgfpointxy{1.097}{-0.073}}
    \pgfpatharc{-101}{-75}{2.000in}
    \pgfusepath{stroke}
  \end{pgfscope}
  \pgftext[at=\pgfpointadd{\pgfpointxy{1.500}{0.200}}{\pgfpoint{0pt}{-0.0 \baselineskip}}]{if $C$ deterministic}
  \pgftext[at=\pgfpointadd{\pgfpointxy{1.500}{-0.200}}{\pgfpoint{0pt}{-0.0 \baselineskip}}]{if $S$ deterministic}
  \begin{pgfscope}
    \pgfpathmoveto{\pgfpointxy{1.908}{-0.603}}
    \pgfpathlineto{\pgfpointxy{2.000}{-0.650}}
    \pgfpathlineto{\pgfpointxy{1.897}{-0.652}}
    \pgfpathlineto{\pgfpointxy{1.908}{-0.603}}
    \pgfsetfillcolor{black}
    \pgfsetlinewidth{0.100pt}
    \pgfusepath{fill,stroke}
  \end{pgfscope}
  \begin{pgfscope}
    \pgfpathmoveto{\pgfpointxy{1.903}{-0.627}}
    \pgfpatharc{78}{104}{2.000in}
    \pgfusepath{stroke}
  \end{pgfscope}
  \begin{pgfscope}
    \pgfpathmoveto{\pgfpointxy{1.092}{-0.797}}
    \pgfpathlineto{\pgfpointxy{1.000}{-0.750}}
    \pgfpathlineto{\pgfpointxy{1.103}{-0.748}}
    \pgfpathlineto{\pgfpointxy{1.092}{-0.797}}
    \pgfsetfillcolor{black}
    \pgfsetlinewidth{0.100pt}
    \pgfusepath{fill,stroke}
  \end{pgfscope}
  \begin{pgfscope}
    \pgfpathmoveto{\pgfpointxy{1.097}{-0.773}}
    \pgfpatharc{-101}{-75}{2.000in}
    \pgfusepath{stroke}
  \end{pgfscope}
  \pgftext[at=\pgfpointadd{\pgfpointxy{1.500}{-0.500}}{\pgfpoint{0pt}{-0.0 \baselineskip}}]{if $C$ deterministic}
  \pgftext[at=\pgfpointadd{\pgfpointxy{1.500}{-0.900}}{\pgfpoint{0pt}{-0.0 \baselineskip}}]{if $S$ deterministic}
\end{pgfpicture}
\ifx\Setlineno\undefined\else\Setlineno=37\fi
\end{gpic}

\caption{Various semantic preservation properties and their
  relationships.  An arrow from $A$ to $B$ means that $A$ logically
  implies $B$.}
\label{f:sem-pres}

\end{figure}

In general, forward simulations are easier to prove than backward
simulations (by structural induction on an execution of $S$), but less
informative: even if forward simulation holds, the compiled code $C$
could have additional, undesirable behaviors beyond those of $S$.
However, this cannot happen if $C$ is deterministic, that is, if it
admits only one observable behavior ($C \Downarrow B_1 \wedge C
\Downarrow B_2 \Longrightarrow B_1 = B_2$).  This is the case if the
target language $L_t$ has no internal non-determinism (programs
change their behaviors only in response to different inputs but not
because of internal choices) and the execution environment is
deterministic (inputs given to programs are uniquely determined by
their previous outputs).\footnote{Section~\ref{s:ppc-determinism}
formalizes this notion of deterministic execution environment by, in
effect, restricting the set of behaviors $B$ to those generated by a
transition function that responds to the outputs of the program.}
In this case, it is easy to show that 
``forward simulation'' implies ``backward simulation'', and
``forward simulation for safe programs'' implies ``backward
simulation for safe programs''.  The reverse implications hold if the source
program is deterministic.  Figure~\ref{f:sem-pres} summarizes the logical
implications between the various notions of semantic preservation.

From a formal methods perspective, what we are really interested in is
whether the compiled code satisfies the functional specifications of
the application.  Assume that such a specification is given as a
predicate ${\it Spec}(B)$ of the observable behavior.  Further assume
that the specification rules out ``going wrong'' behaviors:
${\it Spec}(B) \Longrightarrow B \notin {\tt Wrong}$.  
We say that $C$
satisfies the specification, and write $C \models {\it Spec}$,
if all behaviors of $C$ satisfy $\it Spec$
($\forall B,~C \Downarrow B \Longrightarrow {\it Spec}(B)$).  The expected
soundness property of the compiler is that it preserves the fact
that the source code $S$ satisfies the specification, a fact that has
been established separately by formal verification of $S$.

\begin{definition}[Preservation of a specification] \label{d:pres-spec}
$ S \models {\it Spec} ~\Longrightarrow~ C \models {\it Spec}$.
\end{definition}

It is easy to show that ``backward simulation for safe programs''
implies ``preservation of a specification'' for all specifications {\it
Spec}.  In general, the latter property is weaker than the
former property.  For instance, if the specification of the
application is ``print a prime number'', and $S$ prints 7, and $C$
prints 11, the specification is preserved but backward simulation does
not hold.  Therefore, definition~\ref{d:pres-spec} leaves more liberty
for compiler optimizations that do not preserve semantics in general,
but are correct for specific programs.  However, it has the marked
disadvantage of depending on the specifications of the application, so
that changes in the latter can require the proof of preservation to be
redone.

A special case of preservation of a specification, of considerable
historical importance, is the preservation of type and memory safety,
which we can summarize as ``if $S$ does not go wrong, neither does
$C$'':

\begin{definition}[Preservation of safety] \label{d:pres-safety}
$ {\it Safe}(S) ~\Longrightarrow~ {\it Safe}(C)$.
\end{definition}

Combined with a separate check that $S$ is well-typed in a sound type
system, this property implies that $C$ executes without memory
violations.  Type-preserving compilation
\cite{Morrisett-TAL-99,Morrisett-STAL-02,Chen-PLDI-08}
obtains this guarantee by different means: under the assumption that
$S$ is well typed, $C$ is proved to be well-typed in a sound type
system, ensuring that it cannot go wrong.  Having proved
a semantic preservation property such as
definitions~\ref{d:safe-backward-sim} or~\ref{d:pres-spec}
provides the same guarantee without having to
equip the target and intermediate languages with sound type systems
and to prove type preservation for the compiler.

In summary, the approach we follow in this work is to prove a ``forward
simulation for safe programs'' property (sections~\ref{s:selection}
to~\ref{s:ppcgen}), and combine it with a separate proof of
determinism for the target language (section~\ref{s:ppc-determinism}),
the latter proof being particularly easy since the target is a
single-threaded assembly language.  Combining these two proofs, we obtain
that all specifications are preserved, in the sense of
definition~\ref{d:pres-spec}, which is the result that matters for
users of the compiler who practice formal verification at the source level.

\subsection{Verified compilers, validated compilers, and certifying compilers}

We now discuss several approaches to establishing that a compiler
preserves semantics of the compiled programs, in the sense of
section~\ref{s:semantic-preservation}.  In the following,
we write $S \approx C$, where $S$ is a source program and $C$ is
compiled code, to denote one of the semantic preservation
properties~\ref{d:bisimulation} to~\ref{d:pres-safety}
of section~\ref{s:semantic-preservation}.

\subsubsection{Verified compilers}

We model the compiler as a total function {\it Comp} from source
programs to either compiled code (written ${\it Comp}(S) = {\tt OK}(C)$)
or a compile-time error (written ${\it Comp}(S) = {\tt Error}$).
Compile-time errors correspond to cases where the compiler is unable
to produce code, for instance if the source program is incorrect (syntax
error, type error, etc.), but also if it exceeds the capacities of the
compiler (see section~\ref{s:stacking} for an example).

\begin{definition}[Verified compiler] \label{def:verified}
A compiler {\it Comp} is said to be verified if it is accompanied
with a formal proof of the following property:
$$
  \forall S,C,~~{\it Comp}(S) = {\tt OK}(C) \Longrightarrow S \approx C
  \eqno(i)
$$
\end{definition}

In other words, a verified compiler either reports an error or
produces code that satisfies the desired semantic preservation
property.  Notice that a compiler that always fails (${\it Comp}(S) =
{\tt Error}$ for all $S$) is indeed verified, although useless.  Whether
the compiler succeeds to compile the source programs of interest is
not a soundness issue, but a quality of implementation issue, which
is addressed by non-formal methods such as testing.  The important
feature, from a formal methods standpoint, is that the compiler never
silently produces incorrect code.

Verifying a compiler in the sense of definition~\ref{def:verified}
amounts to applying program proof technology to the compiler sources,
using one of the properties defined in section~\ref{s:framework}
as the high-level specification of the compiler.

\subsubsection{Translation validation with verified validators}
\label{s:translation-validation}

In the translation validation approach
\cite{Pnueli-SS-98,Necula-00}
the compiler does not need to be verified.  Instead, the compiler is
complemented by a {\em validator}: a boolean-valued function 
${\it Validate}(S, C)$ that verifies the property $S \approx C$
{\em a posteriori}.  If ${\it Comp}(S) = {\tt OK}(C)$ and ${\it Validate}(S, C) =
{\tt true}$, the compiled code $C$ is deemed trustworthy.  Validation can
be performed in several ways, ranging from symbolic interpretation and
static analysis of $S$ and $C$
\cite{Necula-00,Rival-04,Huang-Childers-Soffa-06,Tristan-Leroy-scheduling,Tristan-Leroy-LCM}
to the generation of verification conditions followed by model
checking or automatic theorem proving
\cite{Pnueli-SS-98,Zuck-VOC-02,Zuck-TVOC-05}.  The property $S \approx
C$ being undecidable in general, validators must err on the side
of caution and should reply {\tt false} if they cannot establish
$S \approx C$.\footnote{This conservatism doesn't necessarily
render validators incomplete: a validator can be complete with respect
to a particular code transformation or family of transformations.}

Translation validation generates additional confidence in the
correctness of the compiled code but by itself does not provide
formal guarantees as strong as those provided by a verified
compiler: the validator could itself be unsound.

\begin{definition}[Verified validator] \label{def:validator}
A validator {\it Validate} is said to be verified if it is accompanied
with a formal proof of the following property:
$$
   \forall S,C,~~{\it Validate}(S, C) = {\tt true} \Longrightarrow S \approx C
   \eqno(ii)
$$
\end{definition}

The combination of a verified validator $\it Validate$ with an
unverified compiler $\it Comp$ does provide formal guarantees as
strong as those provided by a verified compiler.  Such a combination
calls the validator after each run of the compiler, reporting a
compile-time error if validation fails:
$$\begin{array}{l}
{\it Comp}'(S) = \\
~~~~ {\tt match\ }{\it Comp}(S) {\tt \ with} \\
~~~~ {\tt {\char124}\ Error} \rightarrow {\tt Error} \\
~~~~ {\tt {\char124}\ OK}(C) \rightarrow {\tt if\ }{\it Validate}(S, C) {\tt \ then\ OK}(C) {\tt \ else\ Error}
\end{array}$$
If the source and target languages are identical, as is often the case
for optimization passes, we also have the option to return the source
code unchanged if validation fails, in effect turning off a
potentially incorrect optimization:
$$\begin{array}{l}
{\it Comp}''(S) = \\
~~~~ {\tt match\ }{\it Comp}(S) {\tt \ with} \\
~~~~ {\tt {\char124}\ Error} \rightarrow {\tt OK}(S) \\
~~~~ {\tt {\char124}\ OK}(C) \rightarrow {\tt if\ }{\it Validate}(S, C) {\tt \ then\ OK}(C) {\tt \ else\ OK}(S)
\end{array}$$

\begin{theorem}
If $\it Validate$ is a verified validator in the sense of
definition~\ref{def:validator}, 
$\it Comp'$ and $\it Comp''$ are verified compilers in the sense of
definition~\ref{def:verified}.
\end{theorem}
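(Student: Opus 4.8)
The plan is to prove the claim separately for each of the two constructions, unfolding the definitions of $\it Comp'$ and $\it Comp''$ and checking property $(i)$ of definition~\ref{def:verified} by case analysis on the outcome of the underlying compiler ${\it Comp}(S)$. In both cases the assumption we get to exploit is that $\it Validate$ is a verified validator, i.e.\ property $(ii)$: whenever ${\it Validate}(S,C) = {\tt true}$ we may conclude $S \approx C$.

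First I would treat $\it Comp'$. Fix $S$ and $C$ and assume ${\it Comp}'(S) = {\tt OK}(C)$; the goal is $S \approx C$. Inspecting the definition of $\it Comp'$, the result ${\tt OK}(C)$ can only arise from the ${\tt OK}$ branch of the inner match, since the ${\tt Error}$ branch returns ${\tt Error}$. So there must be some $C'$ with ${\it Comp}(S) = {\tt OK}(C')$, and moreover the conditional ${\tt if}~{\it Validate}(S,C')~\ldots$ must have returned ${\tt OK}(C')$, which forces both ${\it Validate}(S,C') = {\tt true}$ and $C' = C$. Applying property $(ii)$ to $S$ and $C = C'$ then yields $S \approx C$ directly.

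Next I would treat $\it Comp''$, which differs only in what it does on validation failure: instead of signalling an error it returns the source program unchanged. Again fix $S,C$ and assume ${\it Comp}''(S) = {\tt OK}(C)$. Now ${\tt OK}(C)$ can arise in three ways: from the ${\tt Error}$ branch (giving $C = S$), or from the ${\tt OK}(C')$ branch when validation succeeds (giving $C = C'$ and ${\it Validate}(S,C') = {\tt true}$), or from that same branch when validation fails (giving $C = S$). In the success sub-case we finish exactly as for $\it Comp'$, via property $(ii)$. In the two failure sub-cases we have $C = S$, so the goal reduces to $S \approx S$; this is where the only genuine obstacle lies, since the claim implicitly requires that $\approx$ be \emph{reflexive}. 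Reflexivity is immediate for each of the concrete semantic-preservation relations of definitions~\ref{d:bisimulation} to~\ref{d:pres-safety}---each is stated by a biconditional or implication between $S \Downarrow B$ and $C \Downarrow B$ that becomes a tautology when $C = S$---and it also relies on the side condition, noted in the excerpt, that $\it Comp''$ is used only when the source and target languages coincide, so that $S \approx S$ is even well-typed. I would make this reflexivity observation explicit and then conclude $S \approx C$ in every case, which establishes that both $\it Comp'$ and $\it Comp''$ satisfy definition~\ref{def:verified}.
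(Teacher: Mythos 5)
Your proof is correct and is essentially the argument the paper intends: the paper states this theorem without proof, and the intended justification is exactly your unfolding of ${\it Comp}'$ and ${\it Comp}''$ followed by a case analysis on ${\it Comp}(S)$ and the validator's verdict, invoking property $(ii)$ in the success case. Your explicit identification of the one subtle point --- that the validation-failure branches of ${\it Comp}''$ require $\approx$ to be \emph{reflexive}, which indeed holds for every preservation property of definitions~\ref{d:bisimulation} to~\ref{d:pres-safety} and is well-formed only because ${\it Comp}''$ presupposes identical source and target languages --- correctly fills in the detail the paper leaves implicit.
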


Verification of a translation validator is therefore an attractive
alternative to the verification of a compiler, provided the validator
is smaller and simpler than the compiler.

In the presentation above, the validator receives unadorned source and
compiled codes as arguments.  In practice, the validator can also take
advantage of additional information generated by the compiler and
transmitted to the validator as part of~$C$ or separately.  For
instance, the validator of \cite{Rival-04} exploits debugging
information to suggest a correspondence between program points and
between variables of $S$ and $C$.  Credible compilation \cite{Rinard-99}
carries this approach to the extreme: the compiler is supposed to
annotate $C$ with a full proof of $S \approx C$, so that
translation validation reduces to proof checking.

\subsubsection{Proof-carrying code and certifying compilers}

The proof-carrying code (PCC) approach
\cite{Necula-97,Appel-FPCC,Shao-FPCC-07} does not
attempt to establish semantic preservation between a source program
and some compiled code.  Instead, PCC focuses on the generation of
independently-checkable evidence that the compiled code $C$ satisfies
a behavioral specification ${\it Spec}$ such as type and memory
safety.  PCC makes use of a {\em certifying compiler}, which is a
function {\em CComp} that either fails or
returns both a compiled code $C$ and a proof $\pi$ of the property
$C \models {\it Spec}$.
The proof $\pi$, also called a \emph{certificate}, can be checked
independently by the code user; there is no need to trust the code
producer, nor to formally verify the compiler itself.

In a naive view of PCC, the certificate $\pi$ generated by the
compiler is a full proof term and the client-side verifier is a
general-purpose proof checker.  In practice, it is sufficient to
generate enough hints so that such a full proof can be reconstructed
cheaply on the client side by a specialized checker \cite{Necula-01}.
If the property of interest is type safety, PCC can reduce to
type-checking of compiled code, as in Java bytecode verification
\cite{Staerk-01} or typed assembly language \cite{Morrisett-TAL-99}:
the certificate $\pi$ reduces to type annotations, and the client-side
verifier is a type checker.

In the original PCC design, the certifying compiler is specialized for
a fixed property of programs (\eg type and memory safety), and this
property is simple enough to be established by the compiler itself.
For richer properties, it becomes necessary to provide the certifying
compiler with a certificate that the source program $S$ satisfies the
property.  It is also possible to make the compiler generic with
respect to a family of program properties.  This extension of PCC is
called proof-preserving compilation in \cite{Shao-TOPLAS-05} and
certificate translation in \cite{Barthe-Gregoire-06,Barthe-Kunz-08}.

In all cases, it suffices to formally verify the client-side checker
to obtain guarantees as strong as those obtained from compiler
verification.
Symmetrically, a certifying compiler can
be constructed (at least theoretically) from a
verified compiler.  Assume that $\it Comp$ is a verified compiler,
using definition~\ref{d:pres-spec} as our notion of semantic preservation,
and further assume that the verification was conducted with
a proof assistant that produces proof terms, such as
Coq.  Let~$\Pi$ be a proof term for the semantic preservation theorem
of $\it Comp$, namely
$$ \Pi :  \forall S,C,~~{\it Comp}(S) = {\tt OK}(C) \Longrightarrow
                        S \models {\it Spec} \Longrightarrow
                        C \models {\it Spec} $$
Via the Curry-Howard isomorphism, $\Pi$ is a function that takes $S$, $C$,
a proof of ${\it Comp}(S) = {\tt OK}(C)$ and a proof of 
$S \models {\it Spec}$, and returns a proof of $C \models {\it Spec}$.
A certifying compiler of the proof-preserving kind can then be defined
as follows:
$$\begin{array}{l}
{\it CComp}(S: {\it Source}, \pi_s: S \models {\it Spec}) = \\
~~~~ {\tt match\ }{\it Comp}(S) {\tt \ with} \\
~~~~ {\tt {\char124}\ Error} \rightarrow {\tt Error} \\
~~~~ {\tt {\char124}\ OK}(C) \rightarrow {\tt OK}(C, \Pi~S~C~\pi_{eq}~\pi_s)
\end{array}$$
(Here, $\pi_{eq}$ is a proof term for the proposition
${\it Comp}(S) = {\tt OK}(C)$, which trivially holds in the context of
the {\tt match} above.  Actually building this proof term in Coq requires
additional baggage in the definition above that we omitted for
simplicity.)  The accompanying client-side checker is the Coq proof
checker.  While the certificate produced by {\it CComp} is huge (it
contains a proof of soundness for the compilation of all source
programs, not just for $S$), it could perhaps be specialized for $S$
and $C$ using partial evaluation techniques.

\subsection{Composition of compilation passes}

Compilers are naturally decomposed into several passes that
communicate through intermediate languages.  It is fortunate that
verified compilers can also be decomposed in this manner.

Let ${\it Comp}_1$ and ${\it Comp}_2$ be compilers from
languages $L_1$ to $L_2$ and $L_2$ to $L_3$, respectively.
Assume that the semantic preservation property $\approx$ is
transitive.  (This is true for all properties considered
in section~\ref{s:semantic-preservation}.)  Consider the monadic
composition of  ${\it Comp}_1$ and ${\it Comp}_2$:
$$\begin{array}{l}
{\it Comp}(S) = \\
~~~~ {\tt match\ }{\it Comp}_1(S) {\tt \ with} \\
~~~~ {\tt {\char124}\ Error} \rightarrow {\tt Error} \\
~~~~ {\tt {\char124}\ OK}(I) \rightarrow {\it Comp}_2(I)
\end{array}$$

\begin{theorem}
If the compilers  ${\it Comp}_1$ and ${\it Comp}_2$ are verified,
so is their monadic composition $\it Comp$.
\end{theorem}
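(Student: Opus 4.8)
The plan is to verify property $(i)$ of definition~\ref{def:verified} directly for the composed compiler $\it Comp$, reducing it to the corresponding properties of ${\it Comp}_1$ and ${\it Comp}_2$ by a case analysis on the inner compilation followed by an appeal to transitivity of $\approx$. Concretely, I would fix an arbitrary source program $S$ in $L_1$ and target code $C$ in $L_3$, assume ${\it Comp}(S) = {\tt OK}(C)$, and aim to conclude $S \approx C$.

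First I would inspect the definition of the monadic composition. Evaluating ${\it Comp}(S)$ begins by computing ${\it Comp}_1(S)$ and pattern-matching on the result. If ${\it Comp}_1(S) = {\tt Error}$, the composition returns ${\tt Error}$, contradicting the assumption that it returns ${\tt OK}(C)$. Hence the only branch consistent with ${\it Comp}(S) = {\tt OK}(C)$ is the one in which ${\it Comp}_1(S) = {\tt OK}(I)$ for some intermediate program $I$ in $L_2$, and in that branch ${\it Comp}(S) = {\it Comp}_2(I)$. Since this equals ${\tt OK}(C)$ by hypothesis, we obtain both ${\it Comp}_1(S) = {\tt OK}(I)$ and ${\it Comp}_2(I) = {\tt OK}(C)$.

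Next I would apply the two verification hypotheses. Because ${\it Comp}_1$ is verified, property $(i)$ instantiated at $S$ and $I$ together with ${\it Comp}_1(S) = {\tt OK}(I)$ gives $S \approx I$. Likewise, because ${\it Comp}_2$ is verified, property $(i)$ instantiated at $I$ and $C$ together with ${\it Comp}_2(I) = {\tt OK}(C)$ gives $I \approx C$. Finally, invoking the transitivity of $\approx$ (which holds for all the preservation notions of section~\ref{s:semantic-preservation}, as noted just before the statement) on $S \approx I$ and $I \approx C$ yields $S \approx C$, completing the argument.

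There is no genuine mathematical difficulty here: the proof is a single case split on the {\tt match} expression plus one use of transitivity. The only point demanding a little care---and the one I would expect to account for most of the effort in a mechanized Coq development---is extracting the intermediate witness $I$ and the two equations ${\it Comp}_1(S) = {\tt OK}(I)$ and ${\it Comp}_2(I) = {\tt OK}(C)$ cleanly from the hypothesis ${\it Comp}(S) = {\tt OK}(C)$, \ie reasoning about the monadic bind so that the {\tt Error} branch is discharged and the surviving branch exposes the right existential witness.
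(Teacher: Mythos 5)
Your proof is correct and is exactly the argument the paper intends: the paper states this theorem without an explicit proof, but its preamble (assuming transitivity of $\approx$, which it notes holds for all the preservation notions of section~\ref{s:semantic-preservation}) signals precisely your argument---case analysis on the {\tt match} to extract the intermediate program $I$ and the two equations, apply the two verification hypotheses, and conclude by transitivity. No gap; the remark about witness extraction being the main mechanization effort is a fair (if minor) observation and does not affect correctness.
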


\subsection{Summary}

The conclusions of this discussion are simple and define the
methodology we have followed to verify the Compcert compiler back-end.

\begin{enumerate}
\item Provided the target language of the compiler has deterministic
semantics, an appropriate specification for the soundness proof of
the compiler is the combination of
definitions~\ref{d:safe-forward-sim}
(forward simulation for safe source programs) and~\ref{def:verified} 
(verified compiler), namely
$$ \forall S, C, B \notin {\tt Wrong}, ~~
  {\it Comp}(S) = {\tt OK}(C) ~\wedge~ S \Downarrow B
  \Longrightarrow
  C \Downarrow B \eqno(i) $$

\item A verified compiler can be structured as a composition of
compilation passes, as is commonly done for conventional compilers.
Each pass can be proved sound independently.  However, all
intermediate languages must be given appropriate formal semantics.

\item For each pass, we have a choice between
proving the code that implements this pass or performing the
transformation via untrusted code, then verifying its results using a
verified validator.  The latter approach can reduce the amount
of code that needs to be proved.  In our experience, the verified
validator approach is particularly effective for advanced
optimizations, but less so for nonoptimizing translation passes and basic
dataflow optimizations.  Therefore, we did not use this approach for
the compilation passes presented in this article, but elected to prove
directly the soundness of these passes.
\footnote{However, a posteriori
validation with a verified validator is used for some auxiliary
heuristics such as graph coloring during register allocation
(section~\ref{s:regalloc-transf}) and node enumeration during CFG
linearization (section~\ref{s:linearize-transf}).}

\item Finally, provided the proof of $(i)$ is
carried out in a prover such as Coq that generates proof terms and
follows the Curry-Howard isomorphism, it is at least theoretically
possible to use the verified compiler in a context of proof-carrying
code.
\end{enumerate}

\section{Infrastructure} \label{s:infrastructure}

This section describes elements of syntax, semantics and proofs that
are used throughout the Compcert development.

\subsection{Programs} \label{s:programs}

The syntax of programs in the source, intermediate and target
languages share the following common shape.

\begin{syntax}
\syntaxclass{Programs:}
\prog & ::=  & \{~{\tt vars} = \id_1 = \data_1^*; \ldots \id_n = \data_n^*;
             & global variables  \\
      &      & ~~~{\tt functs} = \id_1 = \fd_1; \ldots \id_n = \fd_n;
             & functions \\
      &      & ~~~{\tt main} = \id ~\}
             & entry point
\syntaxclass{Function definitions:}
\fd & ::=  & {\tt internal}(\fn) \alt {\tt external}(\ef)
\syntaxclass{Definitions of internal functions:}
\fn & ::=  & \{~ {\tt sig} = \sig; {\tt body} = \ldots; \ldots ~\}
             & (language-dependent)
\syntaxclass{Declarations of external functions:}
\ef & ::=  & \{~ {\tt tag} = \id; {\tt sig} = \sig ~ \}
\syntaxclass{Initialization data for global variables:}
\data & ::=  & {\tt reserve}(n) \alt {\tt int8}(n) \alt {\tt int16}(n) \\
      & \alt & {\tt int32}(n) \alt {\tt float32}(f) \alt {\tt float64}(f)
\syntaxclass{Function signatures:}
\sig  & ::=  & \{~{\tt args} = \vec\tau; ~ {\tt res} = (\tau \alt {\tt void}) \}
\syntaxclass{Types:}
\tau  & ::=  & {\tt int} & integers and pointers \\
      & \alt & {\tt float} & floating-point numbers
\end{syntax}

A program is composed of a list of global variables with their
initialization data, a list of functions, and the name of a
distinguished function that constitutes the program entry point
(like {\tt main} in C).  Initialization data is a sequence of integer or
floating-point constants in various sizes, or ${\tt reserve}(n)$ to denote
$n$ bytes of uninitialized storage.  

Two kinds of function definitions $\fd$ are supported.
Internal functions $\fn$ are defined within the program.  The precise
contents of an internal function depends on the language considered,
but include at least a signature $\sig$ giving the number and types of
parameters and results and a body defining the computation (e.g.~as a
statement in \lang{Cminor} or a list of instructions in \lang{PPC}).
An external function $\ef$ is not defined within the program, but
merely declared with an external name and a signature.  
External functions are
intended to model input/output operations or other kinds of system
calls.  The observable behavior of the program will be defined in
terms of a trace of invocations of external functions (see
section~\ref{s:traces}).

The types $\tau$ used in function signatures and in other parts of
Compcert are extremely coarse: we only distinguish between integers or
pointers on the one hand (type {\tt int}) and floating-point numbers on
the other hand (type {\tt float}).  In particular, we make no attempt to
track the type of data pointed to by a pointer.  These ``types'' are
best thought of as hardware register classes.  Their main purpose is
to guide register allocation and help determine calling conventions
from the signature of the function being called.

Each compilation pass is presented as a total function
${\tt transf}: F_1 \rightarrow ({\tt OK}(F_2) \alt {\tt Error}({\it msg}))$
where $F_1$ and $F_2$ are the types of internal functions for the
source and target languages (respectively) of the compilation pass.
Such transformation functions are generically extended to function
definitions by taking ${\tt transf}(\ef) = {\tt OK}(\ef)$, then to whole
programs as a monadic ``map'' operation over function definitions: 
$$ {\tt transf}(P) = {\tt OK\ } \{ {\tt vars} = P.{\tt vars};
      {\tt functs} = (\ldots~ \id_i = \fd'_i; \ldots);
      {\tt main} = P.{\tt main} \} $$
if and only if $P.{\tt functs} = (\ldots~ \id_i = \fd_i; \ldots)$ and 
${\tt transf}(\fd_i) = {\tt OK}(\fd'_i)$ for all $i$.

\subsection{Values and memory states} \label{s:values-mem}

The dynamic semantics of the Compcert languages manipulate values that
are the discriminated union of 32-bit integers, 64-bit IEEE
double precision floats, pointers, and a special
{\tt undef} value denoting in particular the contents of uninitialized
memory.  Pointers are composed of a block identifier $b$ and a
signed byte offset $\delta$ within this block.

\begin{syntaxleft}
\syntaxclass{Values:}
v & ::=  & {\tt int}(n) & 32-bit machine integer \\
  & \alt & {\tt float}(f) & 64-bit floating-point number \\
  & \alt & {\tt ptr}(b, \ofs) & pointer \\
  & \alt & {\tt undef}
\syntaxclass{Memory blocks:}
b & \in~ & \mathbb{Z} & block identifiers
\syntaxclass{Block offsets:}
\ofs & ::= & n & byte offset within a block (signed)
\end{syntaxleft}

Values are assigned types in the obvious manner:
$$ {\tt int}(n) : {\tt int} \qquad
   {\tt float}(f) : {\tt float} \qquad
   {\tt ptr}(b,\ofs) : {\tt int} \qquad
   {\tt undef} : \tau \mbox{ for all $\tau$} $$

The memory model used in our semantics is detailed in
\cite{Leroy-Blazy-memory-model}.  Memory states $M$ are modeled as
collections of blocks separated by construction and identified by
(mathematical) integers $b$.  Each block has lower and upper bounds
${\cal L}(M, b)$ and ${\cal H}(M, b)$, fixed at allocation time, and
associates values to byte offsets $\ofs \in [{\cal L}(M, b), {\cal H}(M, b))$.
The basic operations over memory states are:
\begin{itemize}
\item ${\tt alloc}(M, l, h) = (b, M')$: 
  allocate a fresh block with bounds $[l, h)$, of size $(h-l)$ bytes;
  return its identifier $b$ and the updated memory state $M'$.
\item ${\tt store}(M, \chunk, b, \ofs, v) = \Some{M'}$:
  store value $v$ in the memory quantity $\chunk$ of block $b$ at
  offset $\ofs$; return update memory state $M'$.
\item ${\tt load}(M, \chunk, b, \ofs) = \Some{v}$:
  read the value $v$ contained in the memory quantity $\chunk$
  of block $b$ at offset $\ofs$.
\item ${\tt free}(M, b) = M'$:
  free (invalidate) the block $b$ and return the updated memory $M'$.
\end{itemize}

The memory quantities $\chunk$ involved in {\tt load} and {\tt store}
operations represent the kind, size and signedness of the datum being
accessed:
\begin{syntaxleft}
\syntaxclass{Memory quantities:}
\chunk & ::=  & {\tt int8signed} \alt {\tt int8unsigned} \alt
                {\tt int16signed} \\
       & \alt & {\tt int16unsigned} \alt {\tt int32} \alt {\tt float32} \alt {\tt float64}
\end{syntaxleft}

The {\tt load} and {\tt store} operations may fail when given an invalid block
$b$ or an out-of-bounds offset $\ofs$.  Therefore, they return
option types, with $\Some{v}$ (read: ``some~$v$'') denoting success
with result $v$, and $\None$ (read: ``none'') denoting failure.
In this particular instance of the memory model of
\cite{Leroy-Blazy-memory-model}, {\tt alloc} and {\tt free} never fail.  In
particular, this means that we assume an infinite memory.  This design
decision is discussed further in section~\ref{s:on-memory}.

The four operations of the memory model satisfy a number of algebraic
properties stated and proved in \cite{Leroy-Blazy-memory-model}.
The following ``load-after-store'' property gives the general flavor
of the memory model.  Assume ${\tt store}(M_1, \chunk, b, \ofs, v) = \Some{M_2}$
and ${\tt load}(M_1, \chunk', b', \ofs') = \Some{v'}$. Then,
$$ {\tt load}(M_2, \chunk', b', \ofs') = \threecases
{\Some{{\tt cast}(v, \chunk')}}
  {if $b' = b$ and $\ofs' = \ofs$ and $\size{\chunk'} = \size{\chunk}$;}
{\Some{v'}}
  {if $b' \not= b$ or $\ofs + \size{\chunk} \le \ofs'$ or
   $\ofs' + \size{\chunk'} \le \ofs$;}
{\Some{{\tt undef}}}
  {otherwise.}
$$
The ${\tt cast}(v, \chunk')$ function performs truncation or
sign-extension of value $v$ as prescribed by the quantity $\chunk'$.
Note that {\tt undef} is returned (instead of a machine-dependent value)
in cases where the quantities $\chunk$ and $\chunk'$ used for writing
and reading disagree, or in cases where the ranges of bytes written
$[\ofs, \ofs + \size{\chunk})$ and read $[\ofs', \ofs' + \size{\chunk'})$
partially overlap.  This way, the memory model hides the endianness
and bit-level representations of integers and floats and makes it
impossible to forge pointers from sequences of bytes 
\cite[section 7]{Leroy-Blazy-memory-model}.

\subsection{Global environments} \label{s:global-envs}

The Compcert languages support function pointers but follow a
``Harvard'' model where functions and data reside in different memory
spaces, and the memory space for functions is read-only (no
self-modifying code).  We use positive block identifiers $b$ to refer
to data blocks and negative $b$ to refer to functions via pointers.
The operational semantics for the
Compcert languages are parameterized by a global environment
$G$ that does not change during execution.  A global environment $G$
maps function blocks $b < 0$ to function definitions.  Moreover, it
maps global identifiers (of functions or global variables) to blocks
$b$.  The basic operations over global environments are:
\begin{itemize}
\item ${\tt funct}(G, b) = \Some{\fd}$: return the function definition $\fd$
  corresponding to the block $b < 0$, if any.
\item ${\tt symbol}(G, \id) = \Some{b}$: return the block $b$
  corresponding to the global variable or function name $\id$, if any.
\item ${\tt globalenv}(\prog) = G$: construct the global environment $G$
  associated with the program $\prog$.
\item ${\tt initmem}(\prog)= M$: construct the initial memory state $M$
  for executing the program $\prog$.
\end{itemize}

The ${\tt globalenv}(\prog)$ and ${\tt initmem}(\prog)$ functions model (at a
high level of abstraction) the operation of a linker and a program
loader. Unique, positive blocks $b$ are allocated and
associated to each global variable $(\id = \data^*)$ of $P$, and the
contents of these blocks are initialized according to $\data^*$.
Likewise, unique, negative blocks~$b$ are associated to each function
definition $(\id = \fd)$ of $\prog$.  In particular, if the functions
of $\prog$ have unique names, the following equivalence holds:
$$ (\id, \fd) \in \prog.{\tt functs} ~\Longleftrightarrow~
   \exists b < 0. ~
   \begin{array}[t]{l}
     {\tt symbol}({\tt globalenv}(\prog), \id) = \Some{b} \\
     {} \wedge  {\tt funct}({\tt globalenv}(\prog), b) = \Some{\fd}
   \end{array}
$$

The allocation of blocks for functions and global variables is
deterministic so that convenient commutation properties hold between
operations on global environments and per-function transformations of
programs as defined in section~\ref{s:programs}.

\begin{lemma}
Assume ${\tt transf}(\prog) = {\tt OK}(\prog')$.
\begin{itemize}
\item ${\tt initmem}(\prog') = {\tt initmem}(\prog)$.
\item If ${\tt symbol}({\tt globalenv}(\prog), \id) = \Some{b}$,
      then ${\tt symbol}({\tt globalenv}(\prog'), \id) = \Some{b}$.
\item If ${\tt funct}({\tt globalenv}(\prog), b) = \Some{\fd}$,
      then there exists a function definition $\fd'$ such that
      ${\tt funct}({\tt globalenv}(\prog'), b) = \Some{\fd'}$
      and ${\tt transf}(\fd) = {\tt OK}(\fd')$.
\end{itemize}
\end{lemma}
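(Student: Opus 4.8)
The plan is to exploit the fact that the monadic map defining ${\tt transf}(\prog)$ alters only the bodies of function definitions while leaving untouched the three ingredients on which ${\tt globalenv}$ and ${\tt initmem}$ actually depend: the global-variable list $\prog.{\tt vars}$ (with its initialization data), the entry point $\prog.{\tt main}$, and the \emph{sequence of identifiers} $\id_1,\ldots,\id_n$ naming the functions. Indeed, from ${\tt transf}(\prog) = {\tt OK}(\prog')$ one first inverts the monadic map: since the whole computation succeeded, we have $\prog.{\tt functs} = (\ldots~\id_i = \fd_i;\ldots)$ and $\prog'.{\tt functs} = (\ldots~\id_i = \fd'_i;\ldots)$ with the \emph{same} identifiers $\id_i$ in the same order, and ${\tt transf}(\fd_i) = {\tt OK}(\fd'_i)$ for every $i$ (recalling that ${\tt transf}(\ef) = {\tt OK}(\ef)$ on external functions, so this covers both kinds of definition). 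Moreover $\prog'.{\tt vars} = \prog.{\tt vars}$ and $\prog'.{\tt main} = \prog.{\tt main}$ hold by construction.

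First I would dispatch the initial-memory claim. Since ${\tt initmem}$ allocates and initializes one positive block per global variable, reading only the pairs $(\id = \data^*)$ of $\prog.{\tt vars}$, and since $\prog'.{\tt vars} = \prog.{\tt vars}$, the two construction processes are pointwise identical; hence ${\tt initmem}(\prog') = {\tt initmem}(\prog)$. The second claim is equally direct once we appeal to determinacy of block allocation: ${\tt globalenv}$ assigns blocks to identifiers using only the variable-name list and the function-name list, both of which are preserved by ${\tt transf}$. Consequently the ${\tt symbol}$ maps of $\prog$ and $\prog'$ agree on every identifier, so in particular ${\tt symbol}({\tt globalenv}(\prog), \id) = \Some{b}$ yields ${\tt symbol}({\tt globalenv}(\prog'), \id) = \Some{b}$.

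For the third claim I would proceed by induction on the function list, carrying the two partial global environments for $\prog$ and $\prog'$ in parallel and maintaining the invariant that, at each step, a negative block $b$ already allocated in the $\prog$-environment is allocated to the same slot in the $\prog'$-environment, with its two definitions related by ${\tt transf}$. Because the allocation of negative blocks is deterministic and driven only by the (common) position in the function list, the block $b$ witnessing ${\tt funct}({\tt globalenv}(\prog), b) = \Some{\fd}$ points in $\prog'$ to the definition $\fd'$ occupying the same slot, and the inversion above gives ${\tt transf}(\fd) = {\tt OK}(\fd')$. This produces the required $\fd'$ with ${\tt funct}({\tt globalenv}(\prog'), b) = \Some{\fd'}$.

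The main obstacle is not any single computation but setting up the induction correctly: I must phrase a joint invariant over the two environments being built from the two index-aligned function lists, and formalize precisely the ``deterministic allocation'' property stated informally in the text—namely that the block identifier produced at each step is a function of the position alone and is insensitive to the function body. Once that generalized statement is in place, all three bullets follow uniformly, and the whole argument can be carried out once and for all for any ${\tt transf}$ of the monadic-map shape of section~\ref{s:programs}, then reused verbatim for every compilation pass.
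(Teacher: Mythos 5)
Your proposal is correct and follows exactly the argument the paper intends: the paper omits the detailed proof (deferring to the Coq development), but the justification it gives---deterministic, position-driven allocation of blocks together with the fact that the monadic map preserves {\tt vars}, {\tt main}, and the names and order of function definitions---is precisely what you formalize, with the third bullet handled by the expected induction over the index-aligned function lists. No gaps; your generalized ``same-slot'' invariant is the right way to make the informal determinism claim precise.
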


\subsection{Traces} \label{s:traces}

We express the observable behaviors of programs in terms of traces
of input-output events, each such event corresponding to an invocation
of an external function.  An event records the external name of the
external function, the values of the arguments provided by the
program, and the return value provided by the environment (\eg the
operating system).  

\begin{syntaxleft}
\syntaxclass{Events:}
\ev & ::= & \event{\id}{\vec{v_\ev}}{v_\ev}
\syntaxclass{Event values:}
v_\ev & ::= & {\tt int}(n) \alt {\tt float}(f)
\syntaxclass{Traces:}
t  & ::= & \epsilon \alt \ev.t & finite traces (inductive) \\
T  & ::= & \epsilon \alt \ev.T & finite or infinite traces (coinductive)
\syntaxclass{Behaviors:}
B  & ::= & {\tt converges}(t, n) & termination with trace $t$ and exit
code $n$ \\
   &\alt & {\tt diverges}(T)     & divergence with trace $T$ \\
   &\alt & {\tt goeswrong}(t)    & going wrong with trace $t$
\end{syntaxleft}

We consider two types of traces: finite traces $t$ for terminating
or ``going wrong'' executions and finite or infinite traces $T$ for diverging
executions.  Note that a diverging program can generate an empty or
finite trace of input-output events (think infinite empty loop).

Concatenation of a finite trace $t$ and a finite trace $t'$ or
infinite trace $T$ is written $t.t'$ or $t.T$.  It is associative and
admits the empty trace $\epsilon$ as neutral element.

The values that are arguments and results of input-output events are
required to be integers or floats.  Since external functions cannot
modify the memory state, passing them pointer values would be useless.
Even with this restriction, events and traces can still model
character-based input-output.  We encapsulate these restrictions
in the following inference rule that defines the effect of
applying an external function $\ef$ to arguments $\vec v$.

\begin{pannel}
\irule{
\mbox{$\vec v$ and $v$ are integers or floats} \\
\mbox{$\vec v$ and $v$ agree in number and types with $\ef.{\tt sig}$} \\
t = \event{\ef.{\tt tag}}{\vec v}{v}
}{
\vdash \ef(\vec v) \evaltrace{t} v
}
\end{pannel}
Note that the result value $v$ and therefore the trace $t$ are not
completely determined by this rule.  We return to this point in
section~\ref{s:ppc-determinism}.

\subsection{Transition semantics} \label{s:trans-sem}
The operational semantics for the source, target and intermediate
languages of the Compcert back-end are defined as labeled transition
systems.  The transition relation for each language is written
$\transtrace{\st}{t}{\st'}$ and denotes one execution step
from state $\st$ to state $\st'$ in global environment $G$.  The
trace $t$ denotes the observable events generated by this execution
step.  Transitions corresponding to an invocation of an external
function record the associated event in $t$.  Other transitions have
$t = \epsilon$.
In addition to the type of states $\st$ and the transition relation 
$\transtrace{\st}{t}{\st'}$, each language defines two predicates:
\begin{itemize}
\item $\initialstate{\prog}{\st}$: the state $\st$ is an initial state
for the program $\prog$.  Typically, $\st$ corresponds to an
invocation of the main function of $\prog$ in the initial memory state
${\tt initmem}(\prog)$.
\item $\finalstate{\st}{n}$: the state $\st$ is a final state with
  exit code $n$.  Typically, this means that the program is returning
  from the initial invocation of its main function, with return value
  ${\tt int}(n)$. 
\end{itemize}
Executions are modeled classically as sequences of transitions from an
initial state to a final state.  We write
$\transp{\st}{t}{\st'}$ to denote one or several transitions
(transitive closure),
$\transm{\st}{t}{\st'}$ to denote zero, one or several transitions
(reflexive transitive closure), and
$\transinf{\st}{T}$ to denote an infinite sequence of
transitions starting with $\st$.  The traces $t$ (finite) and $T$
(finite or infinite) are formed by concatenating the traces of elementary
transitions.  Formally:
\begin{pannel}
\srule{  \transm{\st}{\epsilon}{\st}  }
\irule{
        \transtrace{\st}{t_1}{\st'} \amper \transm{\st'}{t_2}{\st''}
        }{
        \transm{\st}{t_1.t_2}{\st''}
}
\\
\irule{
        \transtrace{\st}{t_1}{\st'} \amper \transm{\st'}{t_2}{\st''}
        }{
        \transp{\st}{t_1.t_2}{\st''}
}
\iruledouble{
        \transtrace{\st}{t}{\st'} \amper \transinf{\st'}{T}
        }{
        \transinf{\st}{t.T}
}
\end{pannel}
As denoted by the double horizontal bar, the inference rule defining
$\transinf{\st}{T}$ is to be interpreted coinductively, as a greatest
fixpoint.
The observable behavior of a program $\prog$ is defined as follows.
Starting from an initial state, if a finite sequence of reductions
with trace $t$ leads to a final state with exit code $n$, the
program has observable behavior ${\tt converges}(t, n)$.  If an infinite
sequence of reductions with trace $T$ is possible, the observable
behavior of the program is ${\tt diverges}(T)$.  Finally, if the
program gets stuck on a non-final state after performing a sequence of
reductions with trace $t$, the behavior is ${\tt goeswrong}(t)$.
\begin{pannel}
\irule{
       \initialstate{\prog}{\st} \amper
       {\tt globalenv}(\prog) \vdash \st \redm{t} \st' \amper
       \finalstate{\st'}{n}
       }{
       \prog \Downarrow {\tt converges}(t, n)
}
\irule{
       \initialstate{\prog}{\st} \amper
       {\tt globalenv}(\prog) \vdash \st \redinf{T}
       }{
       \prog \Downarrow {\tt diverges}(T)
}
\irule{
       \initialstate{\prog}{\st} \amper
       {\tt globalenv}(\prog) \vdash \st \redm{t} \st' \amper
       \st' \not\rightarrow {} \amper \forall n,~\neg\finalstate{\st'}{n}
       }{
       \prog \Downarrow {\tt goeswrong}(t)
}
\end{pannel}
The set of ``going wrong'' behaviors is defined in the obvious manner:
${\tt Wrong} = \{ {\tt goeswrong}(t) \mid t\mbox{ a finite trace} \}$.
\subsection{Program states} \label{s:state-kinds}
The contents of a program state vary from language to language.  For the
assembly language \lang{PPC}, a state is just a pair of a memory state
and a mapping from processor registers to values
(section~\ref{s:ppc-semantics}).  For the other languages of the
Compcert back-end, states come in three kinds written
$\state$, $\callstate$ and $\returnstate$.
\begin{itemize}
\item Regular states $\state$ correspond to an execution point within
an internal function.  They carry the function in question and a
program point within this function, possibly along with additional
language-specific components such as environments giving values to
function-local variables.
\item Call states
$\callstate$ materialize parameter passing from the caller to the
callee.  They carry the function definition $\fd$ being invoked and
either a list of argument values or an environment where the argument
values can be found at conventional locations.  
\item Return states $\returnstate$ correspond to returning from a
function to its caller.  They carry at least the return value or an
environment where this value can be found.
\end{itemize}
All three kinds of states also carry the current memory state as well
as a call stack: a list of frames describing the functions in
the call chain, with the corresponding program points 
where execution should be resumed 
on return, possibly along with function-local environments.

\begin{figure}

\begin{gpic}

\begin{pgfpicture}
  \pgfsetxvec{\pgfpoint{0.833in}{0in}}
  \pgfsetyvec{\pgfpoint{0in}{0.833in}}
  \begin{pgfscope}
    \pgfpathellipse{\pgfpointxy{2.050}{-0.150}}{\pgfpointxy{0.150}{0}}{\pgfpointxy{0}{0.150}}
    \pgfusepath{stroke}
  \end{pgfscope}
  \pgftext[at=\pgfpointadd{\pgfpointxy{2.050}{-0.150}}{\pgfpoint{0pt}{-0.0 \baselineskip}}]{$\state$}
  \begin{pgfscope}
    \pgfpathellipse{\pgfpointxy{0.550}{-0.150}}{\pgfpointxy{0.150}{0}}{\pgfpointxy{0}{0.150}}
    \pgfusepath{stroke}
  \end{pgfscope}
  \pgftext[at=\pgfpointadd{\pgfpointxy{0.550}{-0.150}}{\pgfpoint{0pt}{-0.0 \baselineskip}}]{$\callstate$}
  \begin{pgfscope}
    \pgfpathellipse{\pgfpointxy{3.550}{-0.150}}{\pgfpointxy{0.150}{0}}{\pgfpointxy{0}{0.150}}
    \pgfusepath{stroke}
  \end{pgfscope}
  \pgftext[at=\pgfpointadd{\pgfpointxy{3.550}{-0.150}}{\pgfpoint{0pt}{-0.0 \baselineskip}}]{$\returnstate$}
  \begin{pgfscope}
    \pgfpathmoveto{\pgfpointxy{2.219}{0.063}}
    \pgfpathlineto{\pgfpointxy{2.156}{-0.044}}
    \pgfpathlineto{\pgfpointxy{2.161}{0.080}}
    \pgfpathlineto{\pgfpointxy{2.219}{0.063}}
    \pgfsetfillcolor{black}
    \pgfsetlinewidth{0.100pt}
    \pgfusepath{fill,stroke}
  \end{pgfscope}
  \begin{pgfscope}
    \pgfpathmoveto{\pgfpointxy{1.944}{-0.044}}
    \pgfpathlineto{\pgfpointxy{1.894}{0.106}}
    \pgfpathcurveto{\pgfpointxy{1.861}{0.206}}{\pgfpointxy{1.911}{0.256}}{\pgfpointxy{2.044}{0.256}}
    \pgfpathcurveto{\pgfpointxy{2.177}{0.256}}{\pgfpointxy{2.230}{0.208}}{\pgfpointxy{2.202}{0.112}}
    \pgfpathlineto{\pgfpointxy{2.159}{-0.032}}
    \pgfusepath{stroke}
  \end{pgfscope}
  \pgftext[at=\pgfpointadd{\pgfpointxy{2.050}{0.350}}{\pgfpoint{0pt}{-0.0 \baselineskip}}]{other instructions}
  \begin{pgfscope}
    \pgfpathmoveto{\pgfpointxy{1.780}{-0.120}}
    \pgfpathlineto{\pgfpointxy{1.900}{-0.150}}
    \pgfpathlineto{\pgfpointxy{1.780}{-0.180}}
    \pgfpathlineto{\pgfpointxy{1.780}{-0.120}}
    \pgfsetfillcolor{black}
    \pgfsetlinewidth{0.100pt}
    \pgfusepath{fill,stroke}
  \end{pgfscope}
  \begin{pgfscope}
    \pgfpathmoveto{\pgfpointxy{0.700}{-0.150}}
    \pgfpathlineto{\pgfpointxy{1.780}{-0.150}}
    \pgfusepath{stroke}
  \end{pgfscope}
  \pgftext[at=\pgfpointadd{\pgfpointxy{1.300}{-0.150}}{\pgfpoint{0pt}{0.5 \baselineskip}}]{internal}
  \pgftext[at=\pgfpointadd{\pgfpointxy{1.300}{-0.150}}{\pgfpoint{0pt}{-0.5 \baselineskip}}]{function}
  \begin{pgfscope}
    \pgfpathmoveto{\pgfpointxy{3.280}{-0.120}}
    \pgfpathlineto{\pgfpointxy{3.400}{-0.150}}
    \pgfpathlineto{\pgfpointxy{3.280}{-0.180}}
    \pgfpathlineto{\pgfpointxy{3.280}{-0.120}}
    \pgfsetfillcolor{black}
    \pgfsetlinewidth{0.100pt}
    \pgfusepath{fill,stroke}
  \end{pgfscope}
  \begin{pgfscope}
    \pgfpathmoveto{\pgfpointxy{2.200}{-0.150}}
    \pgfpathlineto{\pgfpointxy{3.280}{-0.150}}
    \pgfusepath{stroke}
  \end{pgfscope}
  \pgftext[at=\pgfpointadd{\pgfpointxy{2.800}{-0.150}}{\pgfpoint{0pt}{0.5 \baselineskip}}]{return}
  \pgftext[at=\pgfpointadd{\pgfpointxy{2.800}{-0.150}}{\pgfpoint{0pt}{-0.5 \baselineskip}}]{instruction}
  \begin{pgfscope}
    \pgfpathmoveto{\pgfpointxy{0.520}{-0.420}}
    \pgfpathlineto{\pgfpointxy{0.550}{-0.300}}
    \pgfpathlineto{\pgfpointxy{0.580}{-0.420}}
    \pgfpathlineto{\pgfpointxy{0.520}{-0.420}}
    \pgfsetfillcolor{black}
    \pgfsetlinewidth{0.100pt}
    \pgfusepath{fill,stroke}
  \end{pgfscope}
  \begin{pgfscope}
    \pgfpathmoveto{\pgfpointxy{0.550}{-0.420}}
    \pgfpathlineto{\pgfpointxy{0.550}{-0.600}}
    \pgfpathlineto{\pgfpointxy{1.800}{-0.600}}
    \pgfpathlineto{\pgfpointxy{1.944}{-0.256}}
    \pgfusepath{stroke}
  \end{pgfscope}
  \pgftext[at=\pgfpointadd{\pgfpointxy{1.300}{-0.700}}{\pgfpoint{0pt}{-0.0 \baselineskip}}]{call instruction}
  \begin{pgfscope}
    \pgfpathmoveto{\pgfpointxy{2.175}{-0.378}}
    \pgfpathlineto{\pgfpointxy{2.156}{-0.256}}
    \pgfpathlineto{\pgfpointxy{2.230}{-0.355}}
    \pgfpathlineto{\pgfpointxy{2.175}{-0.378}}
    \pgfsetfillcolor{black}
    \pgfsetlinewidth{0.100pt}
    \pgfusepath{fill,stroke}
  \end{pgfscope}
  \begin{pgfscope}
    \pgfpathmoveto{\pgfpointxy{3.550}{-0.300}}
    \pgfpathlineto{\pgfpointxy{3.550}{-0.600}}
    \pgfpathlineto{\pgfpointxy{2.300}{-0.600}}
    \pgfpathlineto{\pgfpointxy{2.202}{-0.367}}
    \pgfusepath{stroke}
  \end{pgfscope}
  \pgftext[at=\pgfpointadd{\pgfpointxy{2.800}{-0.700}}{\pgfpoint{0pt}{-0.0 \baselineskip}}]{non-empty call stack}
  \begin{pgfscope}
    \pgfpathmoveto{\pgfpointxy{3.580}{0.120}}
    \pgfpathlineto{\pgfpointxy{3.550}{0.000}}
    \pgfpathlineto{\pgfpointxy{3.520}{0.120}}
    \pgfpathlineto{\pgfpointxy{3.580}{0.120}}
    \pgfsetfillcolor{black}
    \pgfsetlinewidth{0.100pt}
    \pgfusepath{fill,stroke}
  \end{pgfscope}
  \begin{pgfscope}
    \pgfpathmoveto{\pgfpointxy{0.550}{0.000}}
    \pgfpathlineto{\pgfpointxy{0.550}{0.600}}
    \pgfpathlineto{\pgfpointxy{3.550}{0.600}}
    \pgfpathlineto{\pgfpointxy{3.550}{0.120}}
    \pgfusepath{stroke}
  \end{pgfscope}
  \pgftext[at=\pgfpointadd{\pgfpointxy{2.050}{0.700}}{\pgfpoint{0pt}{-0.0 \baselineskip}}]{external function}
  \begin{pgfscope}
    \pgfpathmoveto{\pgfpointxy{0.280}{-0.120}}
    \pgfpathlineto{\pgfpointxy{0.400}{-0.150}}
    \pgfpathlineto{\pgfpointxy{0.280}{-0.180}}
    \pgfpathlineto{\pgfpointxy{0.280}{-0.120}}
    \pgfsetfillcolor{black}
    \pgfsetlinewidth{0.100pt}
    \pgfusepath{fill,stroke}
  \end{pgfscope}
  \begin{pgfscope}
    \pgfpathmoveto{\pgfpointxy{0.280}{-0.150}}
    \pgfpathlineto{\pgfpointxy{0.000}{-0.150}}
    \pgfusepath{stroke}
  \end{pgfscope}
  \pgftext[at=\pgfpointadd{\pgfpointxy{0.000}{-0.150}}{\pgfpoint{0pt}{0.5 \baselineskip}},right]{Program~}
  \pgftext[at=\pgfpointadd{\pgfpointxy{0.000}{-0.150}}{\pgfpoint{0pt}{-0.5 \baselineskip}},right]{starts~}
  \begin{pgfscope}
    \pgfpathmoveto{\pgfpointxy{4.380}{-0.120}}
    \pgfpathlineto{\pgfpointxy{4.500}{-0.150}}
    \pgfpathlineto{\pgfpointxy{4.380}{-0.180}}
    \pgfpathlineto{\pgfpointxy{4.380}{-0.120}}
    \pgfsetfillcolor{black}
    \pgfsetlinewidth{0.100pt}
    \pgfusepath{fill,stroke}
  \end{pgfscope}
  \begin{pgfscope}
    \pgfpathmoveto{\pgfpointxy{3.700}{-0.150}}
    \pgfpathlineto{\pgfpointxy{4.380}{-0.150}}
    \pgfusepath{stroke}
  \end{pgfscope}
  \pgftext[at=\pgfpointadd{\pgfpointxy{4.100}{-0.150}}{\pgfpoint{0pt}{0.5 \baselineskip}}]{empty}
  \pgftext[at=\pgfpointadd{\pgfpointxy{4.100}{-0.150}}{\pgfpoint{0pt}{-0.5 \baselineskip}}]{call stack}
  \pgftext[at=\pgfpointadd{\pgfpointxy{4.500}{-0.150}}{\pgfpoint{0pt}{0.5 \baselineskip}},left]{~Program}
  \pgftext[at=\pgfpointadd{\pgfpointxy{4.500}{-0.150}}{\pgfpoint{0pt}{-0.5 \baselineskip}},left]{~ends~}
\end{pgfpicture}
\ifx\Setlineno\undefined\else\Setlineno=46\fi
\end{gpic}

\caption{Transitions between the three kinds of program states.}
\label{f:state-auto}

\end{figure}

If we project the transition relation on the three-element set
$\{ \state, \callstate, \returnstate \}$, abstracting away the
components carried by the states, we obtain the finite automaton
depicted in figure~\ref{f:state-auto}.  This automaton is shared by
all languages of the Compcert back-end except \lang{PPC}, and
it illustrates the interplay between the three kinds of states.
Initial states are call states with empty call stacks.  A call state
where the called function is external transitions directly to a return state after
generating the appropriate event in the trace.  A call state where the
called function is internal transitions to a regular state corresponding to the
function entry point, possibly after binding the argument values to
the parameter variables.  Non-call, non-return instructions go from
regular states to regular states.  A non-tail call instruction
resolves the called function, pushes a return frame on the call stack
and transitions to the corresponding call state.  A tail call is
similar but does not push a return frame.  A return instruction
transitions to a return state.  A return state with a 
non-empty call stack pops the top return frame and moves to the
corresponding regular state.  A return state with an empty call stack
is a final state.
\subsection{Generic simulation diagrams} \label{s:simulation-diagrams}
Consider two languages $L_1$ and $L_2$ defined by their transition
semantics as described in section~\ref{s:trans-sem}.  Let $\prog_1$ be
a program in $L_1$ and $\prog_2$ a program in $L_2$ obtained by
applying a transformation to $\prog_1$.  We wish to show that
$\prog_2$ preserves the semantics of $\prog_1$, that is,
$\prog_1 \Downarrow B \Longrightarrow \prog_2 \Downarrow B$ for all
behaviors $B \notin {\tt Wrong}$.
The approach we use throughout this work is to construct a
relation $\st_1 \match \st_2 $ between states of $L_1$ and states of
$L_2$ and show that it is a forward simulation.  First, initial states and
final states should be related by $\match$ in the following sense:
\begin{itemize}
\item Initial states: if $\initialstate{\prog_1}{\st_1}$ and
  $\initialstate{\prog_2}{\st_2}$, then $\st_1 \match \st_2$.
\item Final states: if $\st_1 \match \st_2$ and $\finalstate{\st_1}{n}$,
  then $\finalstate{\st_2}{n}$.
\end{itemize}
Second, assuming $\st_1 \match \st_2$, we need to relate transitions
starting from $\st_1$ in $L_1$ with transitions starting from $S_2$ in
$L_2$.  The simplest property that guarantees semantic preservation is
the following lock-step simulation property:
\begin{definition} Lock-step simulation:
if $\st_1 \match \st_2$ and $G_1 \vdash \st_1
\redtrace{t} \st_1'$, there exists $\st_2'$ such that $G_2 \vdash \st_2
\redtrace{t} \st_2'$ and $\st_1' \match \st_2'$.
\end{definition}
($G_1$ and $G_2$ are the global environments corresponding to
$\prog_1$ and $\prog_2$, respectively.)  Figure~\ref{f:simu-diagrams},
top left, shows the corresponding diagram.

\begin{figure}

\begin{gpic}

\begin{pgfpicture}
  \pgfsetxvec{\pgfpoint{1.000in}{0in}}
  \pgfsetyvec{\pgfpoint{0in}{1.000in}}
  \pgftext[at=\pgfpointadd{\pgfpointxy{1.290}{0.750}}{\pgfpoint{0pt}{-0.0 \baselineskip}}]{$\st_1$}
  \pgftext[at=\pgfpointadd{\pgfpointxy{1.890}{0.750}}{\pgfpoint{0pt}{-0.0 \baselineskip}}]{$\st_2$}
  \pgftext[at=\pgfpointadd{\pgfpointxy{1.290}{0.250}}{\pgfpoint{0pt}{-0.0 \baselineskip}}]{$\st_1'$}
  \pgftext[at=\pgfpointadd{\pgfpointxy{1.890}{0.250}}{\pgfpoint{0pt}{-0.0 \baselineskip}}]{$\st_2'$}
  \begin{pgfscope}
    \pgfpathmoveto{\pgfpointxy{1.410}{0.750}}
    \pgfpathlineto{\pgfpointxy{1.770}{0.750}}
    \pgfusepath{stroke}
  \end{pgfscope}
  \pgftext[at=\pgfpointadd{\pgfpointxy{1.590}{0.750}}{\pgfpoint{0pt}{0.5 \baselineskip}}]{$\match$}
  \begin{pgfscope}
    \pgfpathmoveto{\pgfpointxy{1.315}{0.470}}
    \pgfpathlineto{\pgfpointxy{1.290}{0.370}}
    \pgfpathlineto{\pgfpointxy{1.265}{0.470}}
    \pgfpathlineto{\pgfpointxy{1.315}{0.470}}
    \pgfsetfillcolor{black}
    \pgfsetlinewidth{0.100pt}
    \pgfusepath{fill,stroke}
  \end{pgfscope}
  \begin{pgfscope}
    \pgfpathmoveto{\pgfpointxy{1.290}{0.630}}
    \pgfpathlineto{\pgfpointxy{1.290}{0.470}}
    \pgfusepath{stroke}
  \end{pgfscope}
  \pgftext[at=\pgfpointadd{\pgfpointxy{1.290}{0.500}}{\pgfpoint{0pt}{-0.0 \baselineskip}}]{$t$~~~~}
  \begin{pgfscope}
    \pgfpathmoveto{\pgfpointxy{1.410}{0.250}}
    \pgfpathlineto{\pgfpointxy{1.770}{0.250}}
    \pgfsetdash{{0.050in}{0.025in}}{0cm}
    \pgfusepath{stroke}
  \end{pgfscope}
  \pgftext[at=\pgfpointadd{\pgfpointxy{1.590}{0.250}}{\pgfpoint{0pt}{0.5 \baselineskip}}]{$\match$}
  \begin{pgfscope}
    \pgfpathmoveto{\pgfpointxy{1.915}{0.470}}
    \pgfpathlineto{\pgfpointxy{1.890}{0.370}}
    \pgfpathlineto{\pgfpointxy{1.865}{0.470}}
    \pgfpathlineto{\pgfpointxy{1.915}{0.470}}
    \pgfsetfillcolor{black}
    \pgfsetlinewidth{0.100pt}
    \pgfusepath{fill,stroke}
  \end{pgfscope}
  \begin{pgfscope}
    \pgfpathmoveto{\pgfpointxy{1.890}{0.630}}
    \pgfpathlineto{\pgfpointxy{1.890}{0.470}}
    \pgfsetdash{{0.050in}{0.025in}}{0cm}
    \pgfusepath{stroke}
  \end{pgfscope}
  \pgftext[at=\pgfpointadd{\pgfpointxy{1.890}{0.500}}{\pgfpoint{0pt}{-0.0 \baselineskip}}]{$t$~~~~~}
  \pgftext[at=\pgfpointadd{\pgfpointxy{3.130}{0.750}}{\pgfpoint{0pt}{-0.0 \baselineskip}}]{$\st_1$}
  \pgftext[at=\pgfpointadd{\pgfpointxy{3.730}{0.750}}{\pgfpoint{0pt}{-0.0 \baselineskip}}]{$\st_2$}
  \pgftext[at=\pgfpointadd{\pgfpointxy{3.130}{0.250}}{\pgfpoint{0pt}{-0.0 \baselineskip}}]{$\st_1'$}
  \pgftext[at=\pgfpointadd{\pgfpointxy{3.730}{0.250}}{\pgfpoint{0pt}{-0.0 \baselineskip}}]{$\st_2'$}
  \begin{pgfscope}
    \pgfpathmoveto{\pgfpointxy{3.250}{0.750}}
    \pgfpathlineto{\pgfpointxy{3.610}{0.750}}
    \pgfusepath{stroke}
  \end{pgfscope}
  \pgftext[at=\pgfpointadd{\pgfpointxy{3.430}{0.750}}{\pgfpoint{0pt}{0.5 \baselineskip}}]{$\match$}
  \begin{pgfscope}
    \pgfpathmoveto{\pgfpointxy{3.155}{0.470}}
    \pgfpathlineto{\pgfpointxy{3.130}{0.370}}
    \pgfpathlineto{\pgfpointxy{3.105}{0.470}}
    \pgfpathlineto{\pgfpointxy{3.155}{0.470}}
    \pgfsetfillcolor{black}
    \pgfsetlinewidth{0.100pt}
    \pgfusepath{fill,stroke}
  \end{pgfscope}
  \begin{pgfscope}
    \pgfpathmoveto{\pgfpointxy{3.130}{0.630}}
    \pgfpathlineto{\pgfpointxy{3.130}{0.470}}
    \pgfusepath{stroke}
  \end{pgfscope}
  \pgftext[at=\pgfpointadd{\pgfpointxy{3.130}{0.500}}{\pgfpoint{0pt}{-0.0 \baselineskip}}]{$t$~~~~}
  \begin{pgfscope}
    \pgfpathmoveto{\pgfpointxy{3.250}{0.250}}
    \pgfpathlineto{\pgfpointxy{3.610}{0.250}}
    \pgfsetdash{{0.050in}{0.025in}}{0cm}
    \pgfusepath{stroke}
  \end{pgfscope}
  \pgftext[at=\pgfpointadd{\pgfpointxy{3.430}{0.250}}{\pgfpoint{0pt}{0.5 \baselineskip}}]{$\match$}
  \begin{pgfscope}
    \pgfpathmoveto{\pgfpointxy{3.755}{0.470}}
    \pgfpathlineto{\pgfpointxy{3.730}{0.370}}
    \pgfpathlineto{\pgfpointxy{3.705}{0.470}}
    \pgfpathlineto{\pgfpointxy{3.755}{0.470}}
    \pgfsetfillcolor{black}
    \pgfsetlinewidth{0.100pt}
    \pgfusepath{fill,stroke}
  \end{pgfscope}
  \begin{pgfscope}
    \pgfpathmoveto{\pgfpointxy{3.730}{0.630}}
    \pgfpathlineto{\pgfpointxy{3.730}{0.470}}
    \pgfsetdash{{0.050in}{0.025in}}{0cm}
    \pgfusepath{stroke}
  \end{pgfscope}
  \pgftext[at=\pgfpointadd{\pgfpointxy{3.730}{0.500}}{\pgfpoint{0pt}{-0.0 \baselineskip}}]{$t$~~~~~$+$}
  \pgftext[at=\pgfpointadd{\pgfpointxy{0.120}{-0.250}}{\pgfpoint{0pt}{-0.0 \baselineskip}}]{$\st_1$}
  \pgftext[at=\pgfpointadd{\pgfpointxy{0.720}{-0.250}}{\pgfpoint{0pt}{-0.0 \baselineskip}}]{$\st_2$}
  \pgftext[at=\pgfpointadd{\pgfpointxy{0.120}{-0.750}}{\pgfpoint{0pt}{-0.0 \baselineskip}}]{$\st_1'$}
  \pgftext[at=\pgfpointadd{\pgfpointxy{0.720}{-0.750}}{\pgfpoint{0pt}{-0.0 \baselineskip}}]{$\st_2'$}
  \begin{pgfscope}
    \pgfpathmoveto{\pgfpointxy{0.240}{-0.250}}
    \pgfpathlineto{\pgfpointxy{0.600}{-0.250}}
    \pgfusepath{stroke}
  \end{pgfscope}
  \pgftext[at=\pgfpointadd{\pgfpointxy{0.420}{-0.250}}{\pgfpoint{0pt}{0.5 \baselineskip}}]{$\match$}
  \begin{pgfscope}
    \pgfpathmoveto{\pgfpointxy{0.145}{-0.530}}
    \pgfpathlineto{\pgfpointxy{0.120}{-0.630}}
    \pgfpathlineto{\pgfpointxy{0.095}{-0.530}}
    \pgfpathlineto{\pgfpointxy{0.145}{-0.530}}
    \pgfsetfillcolor{black}
    \pgfsetlinewidth{0.100pt}
    \pgfusepath{fill,stroke}
  \end{pgfscope}
  \begin{pgfscope}
    \pgfpathmoveto{\pgfpointxy{0.120}{-0.370}}
    \pgfpathlineto{\pgfpointxy{0.120}{-0.530}}
    \pgfusepath{stroke}
  \end{pgfscope}
  \pgftext[at=\pgfpointadd{\pgfpointxy{0.120}{-0.500}}{\pgfpoint{0pt}{-0.0 \baselineskip}}]{$t$~~~~}
  \begin{pgfscope}
    \pgfpathmoveto{\pgfpointxy{0.240}{-0.750}}
    \pgfpathlineto{\pgfpointxy{0.600}{-0.750}}
    \pgfsetdash{{0.050in}{0.025in}}{0cm}
    \pgfusepath{stroke}
  \end{pgfscope}
  \pgftext[at=\pgfpointadd{\pgfpointxy{0.420}{-0.750}}{\pgfpoint{0pt}{0.5 \baselineskip}}]{$\match$}
  \begin{pgfscope}
    \pgfpathmoveto{\pgfpointxy{0.745}{-0.530}}
    \pgfpathlineto{\pgfpointxy{0.720}{-0.630}}
    \pgfpathlineto{\pgfpointxy{0.695}{-0.530}}
    \pgfpathlineto{\pgfpointxy{0.745}{-0.530}}
    \pgfsetfillcolor{black}
    \pgfsetlinewidth{0.100pt}
    \pgfusepath{fill,stroke}
  \end{pgfscope}
  \begin{pgfscope}
    \pgfpathmoveto{\pgfpointxy{0.720}{-0.370}}
    \pgfpathlineto{\pgfpointxy{0.720}{-0.530}}
    \pgfsetdash{{0.050in}{0.025in}}{0cm}
    \pgfusepath{stroke}
  \end{pgfscope}
  \pgftext[at=\pgfpointadd{\pgfpointxy{0.720}{-0.500}}{\pgfpoint{0pt}{-0.0 \baselineskip}}]{$t$~~~~~$+$}
  \pgftext[at=\pgfpointadd{\pgfpointxy{1.090}{-0.500}}{\pgfpoint{0pt}{-0.0 \baselineskip}}]{or}
  \pgftext[at=\pgfpointadd{\pgfpointxy{1.460}{-0.250}}{\pgfpoint{0pt}{-0.0 \baselineskip}}]{$\st_1$}
  \pgftext[at=\pgfpointadd{\pgfpointxy{2.060}{-0.250}}{\pgfpoint{0pt}{-0.0 \baselineskip}}]{$\st_2$}
  \pgftext[at=\pgfpointadd{\pgfpointxy{1.460}{-0.750}}{\pgfpoint{0pt}{-0.0 \baselineskip}}]{$\st_1'$}
  \pgftext[at=\pgfpointadd{\pgfpointxy{2.060}{-0.750}}{\pgfpoint{0pt}{-0.0 \baselineskip}}]{$\st_2'$}
  \begin{pgfscope}
    \pgfpathmoveto{\pgfpointxy{1.580}{-0.250}}
    \pgfpathlineto{\pgfpointxy{1.940}{-0.250}}
    \pgfusepath{stroke}
  \end{pgfscope}
  \pgftext[at=\pgfpointadd{\pgfpointxy{1.760}{-0.250}}{\pgfpoint{0pt}{0.5 \baselineskip}}]{$\match$}
  \begin{pgfscope}
    \pgfpathmoveto{\pgfpointxy{1.485}{-0.530}}
    \pgfpathlineto{\pgfpointxy{1.460}{-0.630}}
    \pgfpathlineto{\pgfpointxy{1.435}{-0.530}}
    \pgfpathlineto{\pgfpointxy{1.485}{-0.530}}
    \pgfsetfillcolor{black}
    \pgfsetlinewidth{0.100pt}
    \pgfusepath{fill,stroke}
  \end{pgfscope}
  \begin{pgfscope}
    \pgfpathmoveto{\pgfpointxy{1.460}{-0.370}}
    \pgfpathlineto{\pgfpointxy{1.460}{-0.530}}
    \pgfusepath{stroke}
  \end{pgfscope}
  \pgftext[at=\pgfpointadd{\pgfpointxy{1.460}{-0.500}}{\pgfpoint{0pt}{-0.0 \baselineskip}}]{$t$~~~~}
  \begin{pgfscope}
    \pgfpathmoveto{\pgfpointxy{1.580}{-0.750}}
    \pgfpathlineto{\pgfpointxy{1.940}{-0.750}}
    \pgfsetdash{{0.050in}{0.025in}}{0cm}
    \pgfusepath{stroke}
  \end{pgfscope}
  \pgftext[at=\pgfpointadd{\pgfpointxy{1.760}{-0.750}}{\pgfpoint{0pt}{0.5 \baselineskip}}]{$\match$}
  \begin{pgfscope}
    \pgfpathmoveto{\pgfpointxy{2.085}{-0.530}}
    \pgfpathlineto{\pgfpointxy{2.060}{-0.630}}
    \pgfpathlineto{\pgfpointxy{2.035}{-0.530}}
    \pgfpathlineto{\pgfpointxy{2.085}{-0.530}}
    \pgfsetfillcolor{black}
    \pgfsetlinewidth{0.100pt}
    \pgfusepath{fill,stroke}
  \end{pgfscope}
  \begin{pgfscope}
    \pgfpathmoveto{\pgfpointxy{2.060}{-0.370}}
    \pgfpathlineto{\pgfpointxy{2.060}{-0.530}}
    \pgfsetdash{{0.050in}{0.025in}}{0cm}
    \pgfusepath{stroke}
  \end{pgfscope}
  \pgftext[at=\pgfpointadd{\pgfpointxy{2.060}{-0.500}}{\pgfpoint{0pt}{-0.0 \baselineskip}}]{$t$~~~~~$*$}
  \pgftext[at=\pgfpointadd{\pgfpointxy{1.460}{-0.870}}{\pgfpoint{0pt}{-0.5 \baselineskip}}]{(with $\size{\st_1'} < \size{\st_1}$)}
  \pgftext[at=\pgfpointadd{\pgfpointxy{2.800}{-0.250}}{\pgfpoint{0pt}{-0.0 \baselineskip}}]{$\st_1$}
  \pgftext[at=\pgfpointadd{\pgfpointxy{3.400}{-0.250}}{\pgfpoint{0pt}{-0.0 \baselineskip}}]{$\st_2$}
  \pgftext[at=\pgfpointadd{\pgfpointxy{2.800}{-0.750}}{\pgfpoint{0pt}{-0.0 \baselineskip}}]{$\st_1'$}
  \pgftext[at=\pgfpointadd{\pgfpointxy{3.400}{-0.750}}{\pgfpoint{0pt}{-0.0 \baselineskip}}]{$\st_2'$}
  \begin{pgfscope}
    \pgfpathmoveto{\pgfpointxy{2.920}{-0.250}}
    \pgfpathlineto{\pgfpointxy{3.280}{-0.250}}
    \pgfusepath{stroke}
  \end{pgfscope}
  \pgftext[at=\pgfpointadd{\pgfpointxy{3.100}{-0.250}}{\pgfpoint{0pt}{0.5 \baselineskip}}]{$\match$}
  \begin{pgfscope}
    \pgfpathmoveto{\pgfpointxy{2.825}{-0.530}}
    \pgfpathlineto{\pgfpointxy{2.800}{-0.630}}
    \pgfpathlineto{\pgfpointxy{2.775}{-0.530}}
    \pgfpathlineto{\pgfpointxy{2.825}{-0.530}}
    \pgfsetfillcolor{black}
    \pgfsetlinewidth{0.100pt}
    \pgfusepath{fill,stroke}
  \end{pgfscope}
  \begin{pgfscope}
    \pgfpathmoveto{\pgfpointxy{2.800}{-0.370}}
    \pgfpathlineto{\pgfpointxy{2.800}{-0.530}}
    \pgfusepath{stroke}
  \end{pgfscope}
  \pgftext[at=\pgfpointadd{\pgfpointxy{2.800}{-0.500}}{\pgfpoint{0pt}{-0.0 \baselineskip}}]{$t$~~~~}
  \begin{pgfscope}
    \pgfpathmoveto{\pgfpointxy{2.920}{-0.750}}
    \pgfpathlineto{\pgfpointxy{3.280}{-0.750}}
    \pgfsetdash{{0.050in}{0.025in}}{0cm}
    \pgfusepath{stroke}
  \end{pgfscope}
  \pgftext[at=\pgfpointadd{\pgfpointxy{3.100}{-0.750}}{\pgfpoint{0pt}{0.5 \baselineskip}}]{$\match$}
  \begin{pgfscope}
    \pgfpathmoveto{\pgfpointxy{3.425}{-0.530}}
    \pgfpathlineto{\pgfpointxy{3.400}{-0.630}}
    \pgfpathlineto{\pgfpointxy{3.375}{-0.530}}
    \pgfpathlineto{\pgfpointxy{3.425}{-0.530}}
    \pgfsetfillcolor{black}
    \pgfsetlinewidth{0.100pt}
    \pgfusepath{fill,stroke}
  \end{pgfscope}
  \begin{pgfscope}
    \pgfpathmoveto{\pgfpointxy{3.400}{-0.370}}
    \pgfpathlineto{\pgfpointxy{3.400}{-0.530}}
    \pgfsetdash{{0.050in}{0.025in}}{0cm}
    \pgfusepath{stroke}
  \end{pgfscope}
  \pgftext[at=\pgfpointadd{\pgfpointxy{3.400}{-0.500}}{\pgfpoint{0pt}{-0.0 \baselineskip}}]{$t$~~~~~~}
  \pgftext[at=\pgfpointadd{\pgfpointxy{3.770}{-0.500}}{\pgfpoint{0pt}{-0.0 \baselineskip}}]{or}
  \pgftext[at=\pgfpointadd{\pgfpointxy{4.140}{-0.250}}{\pgfpoint{0pt}{-0.0 \baselineskip}}]{$\st_1$}
  \pgftext[at=\pgfpointadd{\pgfpointxy{4.740}{-0.250}}{\pgfpoint{0pt}{-0.0 \baselineskip}}]{$\st_2$}
  \pgftext[at=\pgfpointadd{\pgfpointxy{4.140}{-0.750}}{\pgfpoint{0pt}{-0.0 \baselineskip}}]{$\st_1'$}
  \begin{pgfscope}
    \pgfpathmoveto{\pgfpointxy{4.260}{-0.250}}
    \pgfpathlineto{\pgfpointxy{4.620}{-0.250}}
    \pgfusepath{stroke}
  \end{pgfscope}
  \pgftext[at=\pgfpointadd{\pgfpointxy{4.440}{-0.250}}{\pgfpoint{0pt}{0.5 \baselineskip}}]{$\match$}
  \begin{pgfscope}
    \pgfpathmoveto{\pgfpointxy{4.165}{-0.530}}
    \pgfpathlineto{\pgfpointxy{4.140}{-0.630}}
    \pgfpathlineto{\pgfpointxy{4.115}{-0.530}}
    \pgfpathlineto{\pgfpointxy{4.165}{-0.530}}
    \pgfsetfillcolor{black}
    \pgfsetlinewidth{0.100pt}
    \pgfusepath{fill,stroke}
  \end{pgfscope}
  \begin{pgfscope}
    \pgfpathmoveto{\pgfpointxy{4.140}{-0.370}}
    \pgfpathlineto{\pgfpointxy{4.140}{-0.530}}
    \pgfusepath{stroke}
  \end{pgfscope}
  \pgftext[at=\pgfpointadd{\pgfpointxy{4.140}{-0.500}}{\pgfpoint{0pt}{-0.0 \baselineskip}}]{$\epsilon$~~~~}
  \begin{pgfscope}
    \pgfpathmoveto{\pgfpointxy{4.225}{-0.665}}
    \pgfpathlineto{\pgfpointxy{4.655}{-0.335}}
    \pgfsetdash{{0.050in}{0.025in}}{0cm}
    \pgfusepath{stroke}
  \end{pgfscope}
  \pgftext[at=\pgfpointadd{\pgfpointxy{4.440}{-0.500}}{\pgfpoint{0pt}{0.5 \baselineskip}}]{$\match$}
  \pgftext[at=\pgfpointadd{\pgfpointxy{4.140}{-0.870}}{\pgfpoint{0pt}{-0.5 \baselineskip}}]{(with $\size{\st_1'} < \size{\st_1}$)}
  \pgftext[at=\pgfpointadd{\pgfpointxy{1.590}{0.870}}{\pgfpoint{0pt}{0.5 \baselineskip}}]{Lock-step simulation}
  \pgftext[at=\pgfpointadd{\pgfpointxy{3.430}{0.870}}{\pgfpoint{0pt}{0.5 \baselineskip}}]{``Plus'' simulation}
  \pgftext[at=\pgfpointadd{\pgfpointxy{1.090}{-0.130}}{\pgfpoint{0pt}{0.5 \baselineskip}}]{``Star'' simulation}
  \pgftext[at=\pgfpointadd{\pgfpointxy{3.770}{-0.130}}{\pgfpoint{0pt}{0.5 \baselineskip}}]{``Option'' simulation}
\end{pgfpicture}
\ifx\Setlineno\undefined\else\Setlineno=66\fi
\end{gpic}

\caption{Four kinds of simulation diagrams that imply semantic
  preservation.  Solid lines denote hypotheses; dashed lines denote
  conclusions.}
\label{f:simu-diagrams}

\end{figure}

\begin{theorem} \label{th:lockstep-simulation}
Under hypotheses ``initial states'', ``final states'' and ``lock-step
simulation'', $\prog_1 \Downarrow B$ and $B \notin {\tt Wrong}$
imply $\prog_2 \Downarrow B$.
\end{theorem}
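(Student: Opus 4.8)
The plan is to reduce the statement to two multi-step simulation lemmas---one for finite executions, one for infinite ones---and then dispatch on the shape of the behavior $B$. Since $B \notin {\tt Wrong}$, only the cases $B = {\tt converges}(t, n)$ and $B = {\tt diverges}(T)$ arise; the ${\tt goeswrong}$ case is excluded by hypothesis. In both cases $\prog_1 \Downarrow B$ supplies an initial state $\st_1$ with $\initialstate{\prog_1}{\st_1}$. I first need an initial state $\st_2$ for $\prog_2$: its existence follows from the preservation of ${\tt initmem}$ and of the entry point guaranteed by the commutation lemma of Section~\ref{s:global-envs}, so that $\initialstate{\prog_2}{\st_2}$ holds. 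With both initial states in hand, the ``initial states'' hypothesis yields $\st_1 \match \st_2$ directly, and this relatedness of the starting states is what I then propagate through the execution.

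Next I would prove the finite-execution lemma: if $\st_1 \match \st_2$ and $G_1 \vdash \st_1 \redm{t} \st_1'$, there is $\st_2'$ with $G_2 \vdash \st_2 \redm{t} \st_2'$ and $\st_1' \match \st_2'$. The proof is by induction on the derivation of the reflexive--transitive closure $\redm{t}$. The base case takes $\st_2' = \st_2$ and $t = \epsilon$. In the inductive step the source run decomposes as one step $G_1 \vdash \st_1 \redtrace{t_1} \st_1''$ followed by $G_1 \vdash \st_1'' \redm{t_2} \st_1'$ with $t = t_1.t_2$; the ``lock-step simulation'' hypothesis produces a matching single step $G_2 \vdash \st_2 \redtrace{t_1} \st_2''$ with $\st_1'' \match \st_2''$, and the induction hypothesis disposes of the tail, traces being recombined by associativity of concatenation. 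For $B = {\tt converges}(t, n)$ this lemma transports the source run to $G_2 \vdash \st_2 \redm{t} \st_2'$ with $\st_1' \match \st_2'$; the ``final states'' hypothesis then gives $\finalstate{\st_2'}{n}$ from $\finalstate{\st_1'}{n}$, so $\prog_2 \Downarrow {\tt converges}(t, n)$.

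For divergence I would establish a companion lemma by coinduction: if $\st_1 \match \st_2$ and $G_1 \vdash \st_1 \redinf{T}$, then $G_2 \vdash \st_2 \redinf{T}$. Inverting the coinductive definition of $\redinf{\cdot}$ gives $T = t.T'$, a source step $G_1 \vdash \st_1 \redtrace{t} \st_1'$, and $G_1 \vdash \st_1' \redinf{T'}$. Lock-step simulation supplies a single target step $G_2 \vdash \st_2 \redtrace{t} \st_2'$ with $\st_1' \match \st_2'$, and the coinduction hypothesis applied to $\st_1' \match \st_2'$ yields $G_2 \vdash \st_2' \redinf{T'}$; one use of the coinductive rule reassembles $G_2 \vdash \st_2 \redinf{t.T'}$, i.e.\ $G_2 \vdash \st_2 \redinf{T}$. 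Combined with $\initialstate{\prog_2}{\st_2}$ this gives $\prog_2 \Downarrow {\tt diverges}(T)$.

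The main obstacle is the divergence case, specifically ensuring the coinductive argument is properly guarded: the appeal to the coinduction hypothesis must sit beneath a genuine constructor of $\redinf{\cdot}$. This goes through cleanly precisely because the diagram is \emph{lock-step}---each source step is matched by exactly one target step, so no stuttering can occur and productivity is immediate. This is exactly the point at which the weaker ``star'' and ``option'' diagrams of Figure~\ref{f:simu-diagrams} require more care: there the target may take zero steps, and the well-founded measure $\size{\st_1}$ becomes necessary to forbid an infinite run of silent target non-moves that would break the coinduction. For the present theorem no such measure is needed.
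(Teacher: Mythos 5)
Your proposal is correct and takes essentially the same route as the paper's proof: a straightforward induction showing that $\match$ is preserved along finite multi-step executions $\redm{t}$, a guarded coinduction showing it is preserved along infinite executions $\redinf{T}$, and then a case split on the behavior $B$ using the definition of $\Downarrow$. Your additional remark on the existence of an initial state for $\prog_2$ (via the commutation properties of section~\ref{s:global-envs}) is a detail the paper leaves implicit, but it does not change the structure of the argument.
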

\begin{proof}
A trivial induction shows that $\st_1 \match \st_2$ and $G_1 \vdash \st_1
\redm{t} \st_1'$ implies the existence of $\st_2'$ such that $G_2 \vdash \st_2
\redm{t} \st_2'$ and $\st_1' \match \st_2'$.  Likewise, a trivial
coinduction shows that $\st_1 \match \st_2$ and $G_1 \vdash \st_1 \redinf{T}$
implies $G_2 \vdash \st_2 \redinf{T}$.  The result follows from the
definition of $\Downarrow$.
\end{proof}
The lock-step simulation hypothesis is too strong for many program
transformations of interest, however.  Some transformations cause
transitions in $\prog_1$ to disappear in $\prog_2$, \eg removal of
no-operations, elimination of redundant computations, or branch
tunneling.  Likewise, some transformations introduce additional
transitions in $\prog_2$, \eg insertion of spilling and reloading
code.  Naively, we could try to relax the simulation hypothesis as
follows:
\begin{definition} Naive ``star'' simulation:
if $\st_1 \match \st_2$ and $G_1 \vdash \st_1 \redtrace{t} \st_1'$,
there exists $\st_2'$ such that $G_2 \vdash \st_2 \redm{t} \st_2'$
and $\st_1' \match \st_2'$.
\end{definition}
This hypothesis suffices to show the preservation of terminating
behaviors, but does not guarantee that diverging behaviors are
preserved because of the classic ``infinite stuttering'' problem.  The
original program $\prog_1$ could perform infinitely many silent
transitions $\st_1 \red \st_2 \red \ldots \red \st_n \red \ldots$
while the transformed program $\prog_2$ is stuck in a state $\st'$
such that $\st_i \match \st'$ for all $i$.  In this case, $\prog_1$
diverges while $\prog_2$ does not, and semantic preservation does not
hold.
To rule out the infinite stuttering problem, assume we are given a measure
$\size{\st_1}$ over the states of language $L_1$.  This measure ranges
over a type $\cal M$ equipped with a well-founded ordering $<$ (that
is, there are no infinite decreasing chains of elements of $\cal M$).
We require that the measure strictly decreases in cases where
stuttering could occur, making it impossible for stuttering to occur
infinitely. 
\begin{definition} ``Star'' simulation:
if $\st_1 \match \st_2$ and $G_1 \vdash \st_1 \redtrace{t} \st_1'$, either
\begin{enumerate}
\item there exists $\st_2'$ such that $G_2 \vdash \st_2
  \redp{t} \st_2'$ and $\st_1' \match \st_2'$,
\item or $\size{\st_1'} < \size{\st_1}$ and 
  there exists $\st_2'$ such that $G_2 \vdash \st_2 \redm{t} \st_2'$
  and $\st_1' \match \st_2'$.
\end{enumerate}
\end{definition}
Diagrammatically, this hypothesis corresponds to the bottom left part
of figure~\ref{f:simu-diagrams}.
(Equivalently, part 2 of the definition could be replaced by
``or $\size{\st_1'} < \size{\st_1}$ and $t = \epsilon$
and $\st_2 \match \st_1'$'', but the formulation above is more
convenient in practice.)
\begin{theorem} \label{th:star-simulation}
Under hypotheses ``initial states'', ``final states'' and ``star
simulation'', $\prog_1 \Downarrow B$  and $B \notin {\tt Wrong}$
imply $\prog_2 \Downarrow B$.
\end{theorem}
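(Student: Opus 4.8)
The plan is to mirror the structure of the proof of Theorem~\ref{th:lockstep-simulation}, splitting on the shape of the behavior $B$. Since $B \notin {\tt Wrong}$, only two cases survive, $B = {\tt converges}(t,n)$ and $B = {\tt diverges}(T)$; the ${\tt goeswrong}$ clause of $\Downarrow$ never applies. In both cases I first invoke the ``initial states'' hypothesis: from $\initialstate{\prog_1}{\st_1}$, taking the initial state $\st_2$ of $\prog_2$, I obtain $\st_1 \match \st_2$.

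For the converging case the argument is essentially that of the lock-step theorem. I would prove, by induction on the length of the finite source reduction, that $\st_1 \match \st_2$ and $G_1 \vdash \st_1 \redm{t} \st_1'$ entail the existence of $\st_2'$ with $G_2 \vdash \st_2 \redm{t} \st_2'$ and $\st_1' \match \st_2'$. The only difference from the lock-step case is that each individual source step now produces, by the ``star'' simulation hypothesis, either a strictly positive reduction $\redp{t_i}$ or a possibly-empty reduction $\redm{t_i}$ on the target side; both are finite, so concatenating them over the finitely many source steps still yields a finite target reduction $G_2 \vdash \st_2 \redm{t} \st_2'$. The ``final states'' hypothesis then turns $\finalstate{\st_1'}{n}$ into $\finalstate{\st_2'}{n}$, and the definition of $\Downarrow$ gives $\prog_2 \Downarrow {\tt converges}(t,n)$. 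I expect this part to be routine.

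The diverging case is the crux, and is where the well-founded measure $\size{\cdot}$ earns its keep: a direct coinduction fails precisely because of the infinite-stuttering scenario flagged before the theorem, in which infinitely many source steps each collapse to zero target steps and $\prog_2$ never makes progress. To rule this out I would first establish a \emph{productivity} lemma: if $\st_1 \match \st_2$ and $G_1 \vdash \st_1 \redinf{T}$, then there are a splitting $T = t.T'$ and states $\st_1', \st_2'$ with $G_2 \vdash \st_2 \redp{t} \st_2'$ (note: \emph{strictly positive}), $\st_1' \match \st_2'$, and $G_1 \vdash \st_1' \redinf{T'}$. I would prove this by well-founded induction on $\size{\st_1}$: invert the divergence of $\st_1$ to expose its first step $G_1 \vdash \st_1 \redtrace{t_1} \st_1''$ with $G_1 \vdash \st_1'' \redinf{T_1}$ and $T = t_1.T_1$, then apply ``star'' simulation to this step. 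In alternative~1, or in alternative~2 when the matching target reduction is nonempty, the target has already advanced by at least one step and the lemma holds directly. In the remaining sub-case of alternative~2 the target reduction is empty, forcing $t_1 = \epsilon$ and leaving the target at $\st_2$; but then $\size{\st_1''} < \size{\st_1}$, so the induction hypothesis applies to the pair $\st_1'' \match \st_2$ and discharges the goal. Well-foundedness guarantees this inner recursion terminates, i.e.\ that we cannot stutter forever before the target moves.

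With the productivity lemma in hand, I would conclude by coinduction on the rule defining $\redinf{}$. The relation ``$\st_2$ is matched by some diverging source state'' is, by the lemma, closed under taking a strictly positive target reduction to a state in the same relation; using an easy auxiliary fact that prepending a finite reduction to an infinite one yields an infinite one (to re-express the leading $\redp{t}$ as a single step followed by divergence), this closure property is exactly what the coinductive rule requires, and the strict positivity furnishes the guardedness that makes the corecursion well-defined. Hence $G_2 \vdash \st_2 \redinf{T}$, and $\prog_2 \Downarrow {\tt diverges}(T)$. The main obstacle throughout is this divergence case: getting the nesting right, with a well-founded induction (to accumulate enough source steps to force target progress) sitting inside the coinduction (to iterate this forever), and checking that the productivity lemma's strict $\redp{t}$ keeps the coinductive construction productive.
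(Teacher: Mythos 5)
Your proposal is correct, and its treatment of divergence is organized differently from the paper's. The paper introduces an auxiliary \emph{coinductive} relation, the ``measured'' divergence $G_2 \vdash \st_2, \mu \redinf{T}$ (one rule taking $\redp{t}$ and resetting the measure, one taking $\redm{t}$ with $\mu' < \mu$), and then splits the work into (a) a straightforward coinduction showing $G_1 \vdash \st_1 \redinf{T}$ and $\st_1 \match \st_2$ imply $G_2 \vdash \st_2, \size{\st_1} \redinf{T}$, and (b) a proof that measured divergence implies ordinary divergence, by coinduction whose productive step is an inversion lemma established by Noetherian induction on $\mu$. Your productivity lemma is, in essence, that inversion lemma transported to the matched source/target pairs: you run the well-founded induction on $\size{\st_1}$ \emph{before} entering the coinduction, so no auxiliary relation is needed and the paper-level argument is shorter and arguably more transparent. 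What the paper's extra scaffolding buys is mechanizability: in each of its two coinductions the corecursive call sits directly under a single constructor, so Coq's guardedness condition is met syntactically, whereas your single coinduction emits a whole $\redp{t}$ (one step followed by $\redm{}$) before the corecursive call, and thus needs coinduction ``up to finite prefixes'' --- you handle this correctly with the enlarged relation implicit in your ``prepend a finite reduction to an infinite one'' fact, but that is precisely the step a raw \texttt{cofix} would reject, which explains why the Coq development (and hence the paper) routes through the measured relation instead. Your termination case coincides with the paper's (a trivial induction on the finite source reduction), and your handling of the empty-reduction subcase --- noting that zero target steps force $t_1 = \epsilon$ so the trace is unchanged before invoking the induction hypothesis --- is exactly the point that makes the well-founded induction go through.
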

\begin{proof}
A trivial induction shows that $\st_1 \match \st_2$ and $G_1 \vdash \st_1
\redm{t} \st_1'$ implies the existence of $\st_2'$ such that $G_2 \vdash
\st_2 \redm{t} \st_2'$ and $\st_1' \match \st_2'$.  This implies the
desired result if $B$ is a terminating behavior.
For diverging behaviors, we first define (coinductively) the following
``measured'' variant of the $G_2 \vdash \st_2 \redinf{T}$ relation:
\begin{pannel}
\iruledouble{
G_2 \vdash \st_2 \redp{t} \st_2' \amper G_2 \vdash \st_2', \mu' \redinf{T}
}{
G_2 \vdash \st_2, \mu \redinf{t.T}
}
\iruledouble{
G_2 \vdash \st_2 \redm{t} \st_2' \amper \mu' < \mu \amper G_2 \vdash \st_2', \mu' \redinf{T}
}{
G_2 \vdash \st_2, \mu \redinf{t.T}
}
\end{pannel}
The second rule permits a number of potentially stuttering steps to be
taken, provided the measure $\mu$ strictly decreases.  After a finite
number of invocations of this rule, it becomes non applicable and the
first rule must be applied, forcing at least one transition to be taken
and resetting the measure to an arbitrarily-chosen value.  
A straightforward coinduction shows that $G_1 \vdash \st_1 \redinf{T}$
and $\st_1 \match \st_2$ implies $G_2 \vdash \st_2, \size{\st_1} \redinf{T}$.
To conclude, it suffices to prove that
$G_2 \vdash \st_2, \mu \redinf{T}$ implies
$G_2 \vdash \st_2 \redinf{T}$.  This follows by coinduction and the
following inversion lemma, proved by Noetherian induction over $\mu$:
if $G_2 \vdash \st_2, \mu \redinf{T}$, there exists $\st_2'$, $\mu'$, $t$
and $T'$ such that $G_2 \vdash \st_2 \redtrace{t} \st_2'$ and
$G_2 \vdash \st_2', \mu' \redinf{T'}$ and $T = t.T'$.
\end{proof}
Here are two stronger variants of the ``star'' simulation
hypothesis that are convenient in practice. (See
figure~\ref{f:simu-diagrams} for the corresponding diagrams.) 
\begin{definition} ``Plus'' simulation:
if $\st_1 \match \st_2$ and $G_1 \vdash \st_1
  \redtrace{t} \st_1'$, there exists $\st_2'$ such that $G_2 \vdash \st_2
  \redp{t} \st_2'$ and $\st_1' \match \st_2'$.
\end{definition}
\begin{definition} ``Option'' simulation:
if $\st_1 \match \st_2$ and $G_1 \vdash \st_1
  \redtrace{t} \st_1'$, either
\begin{enumerate}
\item there exists $\st_2'$ such that $G_2 \vdash \st_2
  \redtrace{t} \st_2'$ and $\st_1' \match \st_2'$,
\item or $\size{\st_1'} < \size{\st_1}$ and $t = \epsilon$ and
  $\st_1' \match \st_2$.
\end{enumerate}
\end{definition}
Either simulation hypothesis implies the ``star'' simulation property
and therefore semantic preservation per
theorem~\ref{th:star-simulation}.

\section{The source language: Cminor} \label{s:cminor}

The input language of our back-end is called \lang{Cminor}.  It is a
simple, low-level imperative language, comparable to a stripped-down,
typeless variant of C.  Another source of inspiration was the C{\tt --}
intermediate language of Peyton Jones et al. \cite{Peyton-Jones-Ramsey-99}.  
In the CompCert compilation chain, \lang{Cminor} is the
lowest-level language that is still processor independent; it is
therefore an appropriate language to start the back-end part of the
compiler.

\subsection{Syntax}

\lang{Cminor} is, classically, structured in expressions, statements,
functions and whole programs.
\begin{syntax}
\syntaxclass{Expressions:}
a & ::=  & \id  & reading a local variable \\
  & \alt & \cst & constant \\
  & \alt & \unop(a_1) & unary arithmetic operation \\
  & \alt & \binop(a_1,a_2) & binary arithmetic operation \\
  & \alt & \chunk[a_1] & memory read at address $a_1$ \\
  & \alt & a_1 {\tt \ ?\ } a_2 {\tt \ :\ } a_3 & conditional expression
\syntaxclass{Constants:}
\cst & ::=  & n \alt f & integer or float literal \\
     & \alt & {\tt addrsymbol}(\id) & address of a global symbol \\
     & \alt & {\tt addrstack}(\ofs) & address within stack data
\syntaxclass{Unary operators:}
\unop & ::= & {\tt negint} \alt {\tt notint} \alt {\tt notbool} & integer arithmetic \\
      & \alt & {\tt negf} \alt {\tt absf} & float arithmetic \\
      & \alt & {\tt cast8u} \alt {\tt cast8s} \alt {\tt cast16u} \alt {\tt cast16s} &
                           zero and sign extensions \\
      & \alt & {\tt singleoffloat} & float truncation \\
      & \alt & {\tt intoffloat} \alt {\tt intuoffloat} & float-to-int conversions \\
      & \alt & {\tt floatofint} \alt {\tt floatofintu} & int-to-float conversions
\syntaxclass{Binary operators:}
\binop & ::= & {\tt add} \alt {\tt sub} \alt {\tt mul} \alt {\tt div} \alt {\tt divu} \alt {\tt mod} \alt {\tt modu}
         & integer arithmetic \\
       & \alt& {\tt and} \alt {\tt or} \alt {\tt xor} \alt {\tt shl} \alt {\tt shr} \alt {\tt shru}
         & integer bit operation \\
       & \alt& {\tt addf} \alt {\tt subf} \alt {\tt mulf} \alt {\tt divf}
          & float arithmetic \\
       & \alt& {\tt cmp}(c) \alt {\tt cmpu}(c) \alt {\tt cmpf}(c) 
          & comparisons
\syntaxclass{Comparisons:}
c & ::= & {\tt eq} \alt {\tt ne} \alt {\tt gt} \alt {\tt ge} \alt {\tt lt} \alt {\tt le}
\end{syntax}

Expressions are pure: side-effecting operations such as assignment and
function calls are statements, not expressions.  All arithmetic and
logical operators of C are supported.  Unlike in C, there is no
overloading nor implicit conversions between types: distinct arithmetic
operators are provided over integers and over floats; likewise,
explicit conversion operators are provided to convert between floats
and integers and perform zero and sign extensions.  Memory loads are
explicit and annotated with the memory quantity $\chunk$ being
accessed.

\begin{syntax}
\syntaxclass{Statements:}
s & ::=  & {\tt skip}    & no operation \\
  & \alt & \id = a   & assignment to a local variable \\
  & \alt & \chunk[a_1] = a_2 & memory write at address $a_1$ \\
  & \alt & \id^? = a(\vec a) : \sig & function call \\
  & \alt & {\tt tailcall\ }a(\vec a) : \sig & function tail call \\
  & \alt & {\tt return}(a^?) & function return \\
  & \alt & s_1; s_2  & sequence \\
  & \alt & {\tt if}(a) \, \{ s_1 \} {\tt \ else\ } \{ s_2 \} & conditional \\
  & \alt & {\tt loop} \, \{ s_1 \} & infinite loop \\
  & \alt & {\tt block} \, \{ s_1 \} & block delimiting {\tt exit} constructs \\
  & \alt & {\tt exit}(n) & terminate the $(n+1)\th$ enclosing block \\
  & \alt & {\tt switch}(a) \, \{ {\it tbl} \} & multi-way test and exit \\
  & \alt & \lbl: s & labeled statement \\
  & \alt & {\tt goto\ }\lbl & jump to a label
\syntaxclass{Switch tables:}
{\it tbl} & ::= & {\tt default}: {\tt exit}(n) \\
  & \alt & {\tt case\ }i: {\tt exit}(n); {\it tbl}
\end{syntax}

Base statements are {\tt skip}, assignment $\id = a$ to a local variable,
memory store $\chunk[a_1] = a_2$ (of the value of $a_2$ in the
quantity $\chunk$ at address $a_1$), function call (with optional
assignment of the return value to a local variable), function tail
call, and function return (with an optional result).  Function calls
are annotated with the signatures $\sig$ expected for the called function.
A tail call ${\tt tailcall\ }a(\vec a)$ is almost equivalent to
a regular call immediately followed by a {\tt return},
except that the tail call deallocates the current stack data block
before invoking the function.  This enables tail recursion to execute
in constant stack space.

Besides the familiar sequence $s_1;s_2$ and {\tt if}/{\tt then}/{\tt else}
constructs, control flow can be expressed either in an unstructured
way using {\tt goto} and labeled statements or in a structured way
using infinite loops and the {\tt block}/{\tt exit} construct.  ${\tt exit}(n)$
where $n \ge 0$ branches to the end of the $(n+1)\th$ enclosing
{\tt block} construct.  The ${\tt switch}(a) \,\{ {\it tbl} \}$ construct matches
the integer value of $a$ against the cases listed in $\it tbl$ and
performs the corresponding {\tt exit}.  Appropriate nesting of a {\tt switch}
within several {\tt block} constructs suffices to express C-like
structured {\tt switch} statements with fall-through behavior.

\begin{syntaxleft}
\syntaxclass{Internal functions:}
\fn & ::= & \{ ~ {\tt sig} = \sig; & function signature \\
    &  &       ~~{\tt params} = \vec\id; & parameters \\
    &  &       ~~{\tt vars} = \vec\id; & local variables \\
    &  &       ~~{\tt stacksize} = n; & size of stack data in bytes \\
    &  &       ~~{\tt body} = s ~ \} & function body
\end{syntaxleft}

In addition to a parameter list, local variable declarations and a function
body (a statement), a function definition comprises a type
signature $\sig$ and a declaration of how many bytes of
stack-allocated data it needs.  A \lang{Cminor} local variable does
not reside in memory, and its address cannot be taken.  However, the
\lang{Cminor} producer can explicitly stack-allocate some data (such
as, in C, arrays and scalar variables whose addresses are taken).
A fresh memory block of size $\fn.{\tt stacksize}$ is allocated each time
$\fn$ is invoked and automatically freed when it returns.
The ${\tt addrstack}(\delta)$ nullary operator returns a pointer within
this block at byte offset $\delta$.

\begin{figure}

\begin{alltt}
double average(int arr[], int sz)   "average"(arr, sz) : int, int -> float
\{                                   \{
  double s; int i;                    vars s, i;  stacksize 0;
  for (i = 0, s = 0; i < sz; i++)     s = 0.0; i = 0;
    s += arr[i];                      block \{ loop \{
  return s / sz;                        if (i >= sz) exit(0);
\}                                       s = s +\(\sb{f}\) floatofint(int32[arr + i*4]);
                                        i = i + 1;
                                      \} \}
                                      return s /\(\sb{f}\) floatofint(sz);
                                    \} 
\end{alltt}

\caption{An example of a \lang{Cminor} function (right) and the
  corresponding C code (left).}
\label{f:cminor-expl}

\end{figure}

Figure~\ref{f:cminor-expl} shows a simple C function and the
corresponding \lang{Cminor} function, using an ad-hoc concrete syntax
for \lang{Cminor}.  Both functions compute the average value of an array of
integers, using float arithmetic.  Note the explicit address
computation {\tt int32[tbl\ +\ i*4]} to access element {\tt i} of the array, as
well as the explicit {\tt floatofint} conversions.  The {\tt for} loop is
expressed as an infinite loop wrapped in a {\tt block}, so that
${\tt exit}(0)$ in the \lang{Cminor} code behaves like the {\tt break}
statement in C.

\subsection{Dynamic semantics}

The dynamic semantics of \lang{Cminor} is defined using a
combination of natural semantics for the evaluation of expressions
and a labeled transition system in the style of
section~\ref{s:trans-sem} for the execution of statements and
functions.

\subsubsection{Evaluation of expressions}

\begin{figure}

\begin{pannel}
\irule{
E(\id) = \Some v
}{
G, \sp, E, M \vdash \id \eval v
}
\irule{
{\tt eval{\char95}constant}(G, \sp, \cst) = \Some v
}{
G, \sp, E, M \vdash \cst \eval v
}
\irule{
G, \sp, E, M \vdash a_1 \eval v_1 \amper {\tt eval{\char95}unop}(\unop, v_1) = \Some v
}{
G, \sp, E, M \vdash \unop(a_1) \eval v
}
\irule{
G, \sp, E, M \vdash a_1 \eval v_1 \amper
G, \sp, E, M \vdash a_2 \eval v_2 \amper
{\tt eval{\char95}binop}(\binop, v_1, v_2) = \Some v
}{
G, \sp, E, M \vdash \binop(a_1,a_2) \eval v
}
\irule{
G, \sp, E, M \vdash a_1 \eval {\tt ptr}(b, \ofs) \amper
{\tt load}(M, \chunk, b, \ofs) = \Some v
}{
G, \sp, E, M \vdash \chunk[a_1] \eval v
}
\irule{
G, \sp, E, M \vdash a_1 \eval v_1 \amper {\tt istrue}(v_1) \amper
G, \sp, E, M \vdash a_2 \eval v_2
}{
G, \sp, E, M \vdash (a_1 {\tt \ ?\ } a_2 {\tt \ :\ } a_3) \eval v_2
}
\irule{
G, \sp, E, M \vdash a_1 \eval v_1 \amper {\tt isfalse}(v_1) \amper
G, \sp, E, M \vdash a_3 \eval v_3
}{
G, \sp, E, M \vdash (a_1 {\tt \ ?\ } a_2 {\tt \ :\ } a_3) \eval v_3
}
\\
\srule{
G, \sp, E, M \vdash \epsilon: \epsilon
}
\irule{
G, \sp, E, M \vdash a \eval v \amper
G, \sp, E, M \vdash \vec a \eval \vec v
}{
G, \sp, E, M \vdash a.\vec a \eval v.\vec v
}
\end{pannel}
Evaluation of constants:
\begin{eqnarray*}
  {\tt eval{\char95}constant}(G, \sp, i) & = & \Some{{\tt int}(i)} \\
  {\tt eval{\char95}constant}(G, \sp, f) & = & \Some{{\tt float}(f)} \\
  {\tt eval{\char95}constant}(G, \sp, {\tt addrsymbol}(\id)) & = & {\tt symbol}(G, \id) \\
  {\tt eval{\char95}constant}(G, \sp, {\tt addrstack}(\ofs)) & = & \Some{{\tt ptr}(\sp, \ofs)}
\end{eqnarray*}
Evaluation of unary operators (selected cases):
\begin{eqnarray*}
  {\tt eval{\char95}unop}({\tt negf}, {\tt float}(f)) & = & \Some{{\tt float}(- f)} \\
  {\tt eval{\char95}unop}({\tt notbool}, v) & = & \Some{{\tt int}(0)} \mbox{ if ${\tt istrue}(v)$} \\
  {\tt eval{\char95}unop}({\tt notbool}, v) & = & \Some{{\tt int}(1)} \mbox{ if ${\tt isfalse}(v)$}
\end{eqnarray*}
Evaluation of binary operators (selected cases):
\begin{eqnarray*}
  {\tt eval{\char95}binop}({\tt add}, {\tt int}(n_1), {\tt int}(n_2)) & = & \Some{{\tt int}(n_1+n_2)}
                                                     \pmod{2^{32}}
\\
  {\tt eval{\char95}binop}({\tt add}, {\tt ptr}(b,\ofs), {\tt int}(n)) & = & \Some{{\tt ptr}(b, \ofs + n)}
                                                     \pmod{2^{32}}
\\
  {\tt eval{\char95}binop}({\tt addf}, {\tt float}(f_1), {\tt float}(f_2)) & = & \Some{{\tt float}(f_1+f_2)}
\end{eqnarray*}
Truth values:
\begin{eqnarray*}
{\tt istrue}(v) & \defequal & \mbox{$v$ is ${\tt ptr}(b,\ofs)$ or ${\tt int}(n)$ with $n \not= 0$}
\\
{\tt isfalse}(v) & \defequal & \mbox{$v$ is ${\tt int}(0)$}
\end{eqnarray*}
\caption{Natural semantics for Cminor expressions.}
\label{f:cminor-expr}
\end{figure}

Figure~\ref{f:cminor-expr} defines the big-step evaluation of
expressions as the judgment $G, \sp, E, M \vdash a \eval v$, where $a$ is
the expression to evaluate, $v$ its value, $\sp$ the
stack data block, $E$ an environment mapping local variables to
values, and $M$ the current memory state.  The evaluation rules are
straightforward.  Most of the semantics is in the definition of the
auxiliary functions {\tt eval{\char95}constant}, {\tt eval{\char95}unop} and {\tt eval{\char95}binop}, for
which some representative cases are shown.  These functions can return
$\None$, causing the expression to be undefined, if for instance an
argument of an operator is {\tt undef} or of the wrong type.  Some
operators ({\tt add}, {\tt sub} and {\tt cmp}) operate both on integers and on
pointers.  

\subsubsection{Execution of statements and functions}

The labeled transition system that defines the small-step semantics
for statements and function invocations follows the general pattern
shown in sections~\ref{s:trans-sem}~and~\ref{s:state-kinds}.  Program
states have the following shape:
\begin{syntaxleft}
\syntaxclass{Program states:}
\st & ::= & \state(\fn, s, k, \sp, E, M) & regular state \\
    & \alt& \callstate(\fd, \vec v, k, M) & call state \\
    & \alt& \returnstate(v, k, M) & return state
\syntaxclass{Continuations:}
k & ::= & {\tt stop}  & initial continuation \\
  & \alt& s; k    & continue with $s$, then do as $k$ \\
  & \alt& {\tt endblock}(k) & leave a {\tt block}, then do as $k$ \\
  & \alt& {\tt returnto}(\id^?, \fn, \sp, E, k) & return to caller
\end{syntaxleft}
Regular states $\state$ carry the currently-executing function $\fn$,
the statement under consideration $s$, the block identifier for the
current stack data $\sp$, and the values $E$ of local variables.

Following a proposal by Appel and Blazy \cite{Appel-Blazy-07}, we use continuation
terms $k$ to encode both the call stack and the program point within
$\fn$ where the statement $s$ under consideration resides.  A
continuation $k$ records what needs to be done once $s$ reduces to
{\tt skip}, {\tt exit} or {\tt return}.  The {\tt returnto} parts of $k$ represent the
call stack: they record the local states of the calling functions.  The
top part of $k$ up to the first {\tt returnto} corresponds to an execution
context for $s$, represented inside-out in the style of a zipper
\cite{Huet-zipper}.  For example, the continuation $s; {\tt endblock}(\ldots)$
corresponds to the context ${\tt block}\,  \{ \, [\,\,]; s \, \}$.

\begin{figure}

\begin{pannel}
\srule{
\trans{\state(\fn,{\tt skip},(s;k),\sp,E,M)}
      {\state(\fn,s,k,\sp,E,M)}
}
\srule{
\trans{\state(\fn,{\tt skip},{\tt endblock}(k),\sp,E,M)}
      {\state(\fn,{\tt skip},k,\sp,E,M)}
}
\irule{
G, \sp, E, M \vdash a \Rightarrow v
}{
\trans {\state(\fn, (\id = a), k, \sp, E, M)}
       {\state(\fn, {\tt skip}, k, \sp, E\{\id \becomes v\}, M)}
}
\irule{
G, \sp, E, M \vdash a_1 \Rightarrow {\tt ptr}(b,\delta) \amper
G, \sp, E, M \vdash a_2 \Rightarrow v \amper
{\tt store}(M, \chunk, b, \delta, v) = \Some{M'}
}{
\trans {\state(\fn, (\chunk[a_1] = [a_2]), k, \sp, E, M)}
       {\state(\fn, {\tt skip}, k, \sp, E, M')}
}
\srule{
\trans {\state(\fn, (s_1; s_2), k, \sp, E, M)}
       {\state(\fn, s_1, (s_2; k), \sp, E, M)}
}
\irule{
G, \sp, E, M \vdash a \Rightarrow v \amper {\tt istrue}(v)
}{
\trans {\state(\fn, ({\tt if}(a)\{s_1\}{\tt \ else\ }\{s_2\}), k, \sp, E, M)}
       {\state(\fn, s_1, k, \sp, E, M)}
}
\irule{
G, \sp, E, M \vdash a \Rightarrow v \amper {\tt isfalse}(v)
}{
\trans {\state(\fn, ({\tt if}(a)\{s_1\}{\tt \ else\ }\{s_2\}), k, \sp, E, M)}
       {\state(\fn, s_2, k, \sp, E, M)}
}
\srule{
\trans {\state(\fn, {\tt loop}\{s\}, k, \sp, E, M)}
       {\state(\fn, s, ({\tt loop}\{s\}; k), \sp, E, M)}
}
\srule{
\trans {\state(\fn, {\tt block}\{s\}, k, \sp, E, M)}
       {\state(\fn, s, {\tt endblock}(k), \sp, E, M)}
}
\srule{
\trans {\state(\fn, {\tt exit}(n), (s; k), \sp, E, M)}
       {\state(\fn, {\tt exit}(n), k, \sp, E, M)}
}
\srule{
\trans {\state(\fn, {\tt exit}(0), {\tt endblock}(k), \sp, E, M)}
       {\state(\fn, {\tt skip}, k, \sp, E, M)}
}
\srule{
\trans {\state(\fn, {\tt exit}(n+1), {\tt endblock}(k), \sp, E, M)}
       {\state(\fn, {\tt exit}(n), k, \sp, E, M)}
}
\irule{
G, \sp, E, M \vdash a \Rightarrow {\tt int}(n)
}{
\trans {\state(\fn, ({\tt switch}(a)\{ {\it tbl} \}), k, \sp, E, M)}
       {\state(\fn, {\tt exit}({\it tbl}(n)), k, \sp, E, M)}
}
\srule{
\trans {\state(\fn, (\lbl: s), k, \sp, E, M)}
       {\state(\fn, s, k, \sp, E, M)}
}
\irule{
{\tt findlabel}(\lbl, \fn.{\tt body}, {\tt callcont}(k)) = \Some{s', k'}
}{
\trans {\state(\fn, {\tt goto\ }\lbl, k, \sp, E, M)}
       {\state(\fn, s', k', \sp, E, M)}
}
\end{pannel}
$$ \displaylines{{\tt callcont}(s;k) = {\tt callcont}(k) \qquad
   {\tt callcont}({\tt endblock}(k)) = {\tt callcont}(k) \cr
   {\tt callcont}(k) = k \mbox{ otherwise}
} $$
\begin{eqnarray*}
{\tt findlabel}(\lbl, (s_1; s_2), k) & = & 
  \twocases{{\tt findlabel}(\lbl, s_1, (s_2;k))}{if not $\None$;}
           {{\tt findlabel}(\lbl, s_2, k)}{otherwise}
\\
{\tt findlabel}(\lbl, {\tt if}(a)\{s_1\}{\tt \ else\ }\{s_2\}, k) & = &
  \twocases{{\tt findlabel}(\lbl, s_1, k)}{if not $\None$;}
           {{\tt findlabel}(\lbl, s_2, k)}{otherwise}
\\
{\tt findlabel}(\lbl, {\tt loop}\{s\}, k) & = &
  {\tt findlabel}(\lbl, s, ({\tt loop}\{s\}; k))
\\
{\tt findlabel}(\lbl, {\tt block}\{s\}, k) & = &
  {\tt findlabel}(\lbl, s, {\tt endblock}(k))
\\
{\tt findlabel}(\lbl, (\lbl: s), k) & = & \Some{s, k}
\\
{\tt findlabel}(\lbl, (\lbl': s), k) & = &
  {\tt findlabel}(\lbl, s, k) \mbox{ if $\lbl' \not= \lbl$}
\end{eqnarray*}

\caption{Transition semantics for Cminor, part 1: statements.}
\label{f:cminor-sem-1}
\end{figure}
\begin{figure}
\begin{pannel}
\irule{
G, \sp, E, M \vdash a_1 \Rightarrow {\tt ptr}(b,0) \amper
G, \sp, E, M \vdash \vec a \Rightarrow \vec v \amper
{\tt funct}(G, b) = \Some{\fd} \amper \fd.{\tt sig} = \sig
}{
\trans {\state(\fn, (id^? = a_1(\vec a) : \sig), k, \sp, E, M)}
       {\callstate(\fd, \vec v, {\tt returnto}(\id^?, \fn, \sp, E, k), M)}
}
\irule{
G, \sp, E, M \vdash a_1 \Rightarrow {\tt ptr}(b,0) \amper
G, \sp, E, M \vdash \vec a \Rightarrow \vec v \amper
{\tt funct}(G, b) = \Some{\fd} \amper \fd.{\tt sig} = \sig
}{
\trans {\state(\fn, ({\tt tailcall\ }a_1(\vec a) : \sig), k, \sp, E, M)}
       {\callstate(\fd, \vec v, {\tt callcont}(k), M)}
}
\irule{
\fn.{\tt sig}.{\tt res} = {\tt void} \amper k = {\tt returnto}(\ldots) \mbox{ or } k = {\tt stop}
}{
\trans {\state(\fn,{\tt skip},k,\sp,E,M)}
       {\returnstate({\tt undef}, k, {\tt free}(M,\sp))}
}
\irule{
\fn.{\tt sig}.{\tt res} = {\tt void}
}{
\trans {\state(\fn, {\tt return}, k, \sp, E, M)}
       {\returnstate({\tt undef}, {\tt callcont}(k), {\tt free}(M, \sp))}
}
\irule{
\fn.{\tt sig}.{\tt res} \not= {\tt void} \amper G, \sp, E, M \vdash a \Rightarrow v
}{
\trans {\state(\fn, {\tt return}(a), k, \sp, E, M)}
       {\returnstate(v, {\tt callcont}(k), {\tt free}(M, \sp))}
}
\irule{
{\tt alloc}(M, 0, \fn.{\tt stackspace}) = (\sp, M') \amper
E = [\fn.{\tt params} \becomes \vec v; \fn.{\tt vars} \becomes {\tt undef}]
}{
\trans {\callstate({\tt internal}(\fn), \vec v, k, M)}
       {\state(\fn, \fn.{\tt body}, k, \sp, E, M')}
}
\irule{
\vdash \ef(\vec v) \evaltrace{t} v \mbox{ (see section~\ref{s:traces})}
}{
\transtrace{\callstate({\tt external}(\ef), \vec v, k, M)}
       {t} {\returnstate(v, k, M)}
}
\srule{
\trans {\returnstate(v, {\tt returnto}(\id^?, \fn, \sp, E, k), M)}
       {\state(\fn, {\tt skip}, k, \sp, E\{\id^? \becomes v\}, M)}
}

\irule{
{\tt symbol}({\tt globalenv}(\prog), \prog.{\tt main}) = \Some b \amper
{\tt funct}({\tt globalenv}(\prog), b) = \Some \fd
}{
\initialstate{\prog}{\callstate(\fd, \epsilon, {\tt stop}, {\tt initmem}(\prog))}
}
\srule{
\finalstate{\returnstate({\tt int}(n), {\tt stop}, M)}{n}
}
\end{pannel}
\caption{Transition semantics for Cminor, part 2: functions, initial
  states, final states.}
\label{f:cminor-sem-2}
\end{figure}

Figures~\ref{f:cminor-sem-1}~and~\ref{f:cminor-sem-2} list the rules
defining the transition relation $\transtrace{\st}{t}{\st'}$.
The rules in figure~\ref{f:cminor-sem-1} address transitions within
the currently-executing function.  They are roughly of three kinds: 
\begin{itemize}
\item Execution of an atomic computation step.  For example, the rule for
  assignments transitions from $\id = a$ to {\tt skip}.
\item Focusing on the active part of the current statement.  For example,
  the rule for sequences transitions from $(s_1; s_2)$ with
  continuation $k$ to $s_1$ with continuation $s_2; k$.
\item Resuming a continuation that was set apart in the continuation.
  For instance, one of the rules for {\tt skip} transitions
  from {\tt skip} with continuation $s;k$ to $s$ with continuation $k$.
\end{itemize}
Two auxiliary functions over continuations are defined: ${\tt callcont}(k)$
discards the local context part of the continuation $k$, and
${\tt findlabel}(\lbl, s, k)$ returns a pair $(s', k')$ of the
leftmost sub-statement of $s$ labeled $\lbl$ and of a continuation $k'$ that
extends $k$ with the context surrounding $s'$. The combination of these
two functions in the rule for {\tt goto} suffices to implement the
branching behavior of {\tt goto} statements.

Figure~\ref{f:cminor-sem-2} lists the transitions involving call
states and return states, and defines initial states and final
states.  The definitions follow the general pattern depicted in
figure~\ref{f:state-auto}.  In particular, initial states are call
states to the ``main'' function of the program, with no arguments and
the {\tt stop} continuation; symmetrically, final states are return states
with the {\tt stop} continuation.  The rules for function calls require
that the signature of the called function matches exactly the
signature annotating the call; otherwise execution gets stuck.  A
similar requirement exists in the C standard and is essential to
support signature-dependent calling conventions later in the compiler.
Taking again a leaf from the C standard, functions are allowed to
terminate by {\tt return} without an argument or by falling through their
bodies only if their return signatures are {\tt void}.

\subsubsection{Alternate natural semantics for statements and functions}

For some applications,
it is convenient to have an alternate natural (big-step) operational
semantics for \lang{Cminor}.  We have developed such a semantics for
the fragment of \lang{Cminor} that excludes {\tt goto} and labeled
statements.  The big-step judgments for terminating
executions have the following form:
$$\begin{array}{rcll}
G, \sp & \vdash & s, E, M \evaltrace{t} out, E', M' & \mbox{(statements)} \\
G & \vdash & \fd(\vec v), M \evaltrace{t} v, M' & \mbox{(function calls)}
\end{array}$$
$E'$ and $M'$ are the local environment and the memory state at the
end of the execution; $t$ is the trace of events generated during
execution.  Following Huisman and Jacobs \cite{Huisman-Jacobs-00}, the outcome $out$ indicates how the statement $s$
terminated: either normally by running to completion ($out =
{\tt Normal}$); or prematurely by executing an {\tt exit} statement ($out =
{\tt Exit}(n)$), {\tt return} statement ($out = {\tt Return}(v^?)$ where $v^?$
is the value of the optional argument to {\tt return}), or {\tt tailcall}
statement ($out = {\tt Tailreturn}(v)$).
Additionally, we followed the coinductive approach to natural
semantics of Leroy and Grall \cite{Leroy-Grall-coindsem} to define (coinductively)
big-step judgments for diverging executions, of the form
$$\begin{array}{rcll}
G, \sp & \vdash & s, E, M \evalinf{T} & \mbox{(diverging statements)} \\
G & \vdash & \fd(\vec v), M \evalinf{T} & \mbox{(diverging function calls)}
\end{array}$$
The definitions of these judgments can be found in the Coq
development.

\begin{theorem} The natural semantics of \lang{Cminor} is correct with
  respect to its transition semantics:
\begin{enumerate}
\item If $G \vdash \fd(\vec v), M \evaltrace{t} v, M'$, then 
$G \vdash \callstate(\fd, \vec v, k, M) \redm{t} \returnstate(v, k, M')$
for all continuations $k$ such that $k = {\tt callcont}(k)$.
\item If $G \vdash \fd(\vec v), M \evalinf{T}$, then 
$G \vdash \callstate(\fd, \vec v, k, M) \redinf{T}$
for all continuations $k$.  
\end{enumerate}
\end{theorem}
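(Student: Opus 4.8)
The plan is to reduce both parts to a single stronger lemma about the execution of \emph{arbitrary} statements, and to prove that lemma by induction (for part~1) and coinduction (for part~2) over the big-step derivations, which are mutually defined for statements and function calls. For the terminating case I would first introduce an \emph{outcome state} $\mathcal{O}$ that records where the small-step machine should land once a statement finishes with a given outcome: a ${\tt Normal}$ outcome corresponds to $\state(\fn,{\tt skip},k,\sp,E',M')$; an ${\tt Exit}(n)$ outcome to $\state(\fn,{\tt exit}(n),k,\sp,E',M')$; and the ${\tt Return}$ and ${\tt Tailreturn}$ outcomes to the return state $\returnstate(v,{\tt callcont}(k),M'')$ reached after freeing the stack block $\sp$. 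The central lemma then reads: if $G,\sp \vdash s,E,M \evaltrace{t} out,E',M'$ then $\transm{\state(\fn,s,k,\sp,E,M)}{t}{\mathcal{O}(\fn,out,k,\sp,E',M')}$ for every $\fn$ and every $k$.

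I would prove this lemma by structural induction on the evaluation derivation. The atomic cases (assignment, store, ${\tt skip}$) are immediate. The focusing cases are where the continuation bookkeeping matters: for a sequence $s_1;s_2$ I take the focusing step into $\state(\fn,s_1,(s_2;k),\sp,E,M)$ and apply the induction hypothesis to $s_1$ with the enlarged continuation $s_2;k$; the four possible outcomes of $s_1$ line up with $\mathcal{O}$ precisely because ${\tt callcont}(s_2;k) = {\tt callcont}(k)$, so a ${\tt Return}$ out of $s_1$ already lands in the correct return state. Loops and blocks are handled analogously: the ${\tt exit}$-transition pops the loop continuation unchanged, while the ${\tt endblock}$ transitions turn ${\tt exit}(0)$ into ${\tt skip}$ and decrement ${\tt exit}(n+1)$, matching how the big-step block rules adjust exit counts. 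The call case invokes part~1 recursively on the sub-derivation for $\fd(\vec v)$ with continuation ${\tt returnto}(\id^?,\fn,\sp,E,k)$---which satisfies $k' = {\tt callcont}(k')$---and finishes with the return transition that binds the result. Part~1 itself then follows by applying the statement lemma to $\fn.{\tt body}$: the hypothesis $k = {\tt callcont}(k)$ is exactly what makes the ${\tt callcont}(k)$ produced by the return rule equal to the $k$ demanded in the conclusion.

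For part~2 I would prove, coinductively, the companion statement ``$G,\sp \vdash s,E,M \evalinf{T}$ implies $\transinf{\state(\fn,s,k,\sp,E,M)}{T}$ for every $k$'', from which the diverging-call statement follows through the entry transition of an internal call or the divergence of a nested call. The proof proceeds by case analysis on the coinductive divergence derivation, and an auxiliary ``finite-then-infinite'' composition lemma---if $\transm{\st}{t_1}{\st'}$ and $\transinf{\st'}{T'}$ then $\transinf{\st}{t_1.T'}$, proved by induction on the finite prefix---lets me glue a terminating sub-execution supplied by part~1 in front of a diverging continuation, for instance when $s_1$ terminates normally and $s_2$ diverges inside a sequence.

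The main obstacle is guardedness of this coinduction. Because the terminating prefix supplied by part~1 may consist of zero transitions (for instance when a loop body is ${\tt skip}$), simply chaining ``run the prefix, then corecurse'' is not a priori productive, and a naive coinductive argument would be rejected as unguarded. The resolution is to observe that every divergence case is preceded by at least one genuine single-step transition that does \emph{not} come from part~1---the focusing step into a sequence, loop, or block, or the entry step into a called function---so I can always emit one real transition through the coinductive rule defining $\transinf{\st}{T}$ before invoking the coinduction hypothesis, keeping the corecursive call syntactically guarded. Verifying that such a guarding step is available in each case, and arranging the composition lemma so that it is applied \emph{after} rather than around the guard, is the delicate bookkeeping on which the whole argument turns; the terminating case, by contrast, is routine once $\mathcal{O}$ and the continuation invariants are set up correctly.
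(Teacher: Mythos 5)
Your overall plan---an outcome-indexed strengthening of the statement claim, proved by induction mutually with the function-call claim, followed by a coinduction for divergence in which every case emits at least one genuine transition---is structurally the proof used in the Compcert Coq development, to which the paper defers this theorem. However, one step fails as you have set it up. Your map $\mathcal{O}$ sends ${\tt Return}$ outcomes to the return state reached \emph{after} executing the small-step {\tt return} transition and freeing $\sp$. In the paper's transition semantics those transitions carry signature side conditions ($\fn.{\tt sig}.{\tt res} = {\tt void}$ for a bare {\tt return}, $\fn.{\tt sig}.{\tt res} \not= {\tt void}$ for ${\tt return}(a)$), whereas the big-step statement judgment $G, \sp \vdash s, E, M \evaltrace{t} out, E', M'$ does not mention the enclosing function $\fn$ at all and so cannot discharge them. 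Since your central lemma quantifies over \emph{every} $\fn$, it is false: take $s = {\tt return}(a)$ and $\fn$ with result type {\tt void}; the big-step derivation yields ${\tt Return}(\Some{v})$, yet $\state(\fn, {\tt return}(a), k, \sp, E, M)$ has no outgoing transition, so no finite execution reaches the claimed return state. The repair is to let ${\tt Return}$ outcomes correspond to states still \emph{at} a return statement, $\state(\fn, {\tt return}(a^?), k', \sp, E', M')$ for \emph{some} $k'$ with ${\tt callcont}(k') = {\tt callcont}(k)$---a relation rather than a function, since the local context $k'$ accumulated around the return varies---and to fire the actual return transition only in the function-call case of the mutual induction, where the big-step rule for $\fd(\vec v)$ supplies exactly the signature agreement the transition needs (alternatively, thread a signature-compatibility hypothesis on the outcome through the induction). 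Note that you already treat ${\tt Normal}$ outcomes this way, stopping at $\state(\fn,{\tt skip},k,\sp,E',M')$ instead of firing the fall-through rule, which carries the same kind of premise; ${\tt Return}$ needs the same care. Only ${\tt Tailreturn}$ may land directly in a return state, because there the return state is produced by the callee's own execution, not by a rule constrained by $\fn$'s signature.

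On part 2, your observation that every divergence case offers at least one real transition before the corecursive call is exactly the right productivity argument, but the mechanism you describe---emit one transition via the single-step coinductive rule, then apply the finite-then-infinite composition lemma ``after the guard''---does not restore guardedness: the corecursive call still sits under the composition lemma, which is not a constructor. The standard way to cash in your observation is to introduce an auxiliary coinductive relation whose single constructor packages one-or-more transitions followed by the corecursive premise, prove once and for all that it implies $\transinf{\st}{T}$ (coinduction outside, induction on the finite prefix inside), and then run your case analysis as a coinduction on that auxiliary relation, placing the part-1 prefix inside the constructor's finite-execution argument. With these two adjustments your plan goes through and coincides with the development's proof.
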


\subsection{Static typing} \label{s:cminor-typing}

\lang{Cminor} is equipped with a trivial type
system having only two types: {\tt int} and {\tt float}.  (Pointers
have static type {\tt int}.)  Function definitions and
function calls are annotated with signatures $\sig$ giving the number
and types of arguments, along with optional result types.  All
operators are monomorphic; therefore, the types of local variables can
be inferred from their uses and are not declared.  

The primary purpose of this trivial type system is to facilitate later
transformations (see sections~\ref{s:regalloc} and~\ref{s:stacking});
for this purpose, all intermediate languages of Compcert are equipped
with similar {\tt int}-or-{\tt float} type systems.  By themselves, these type
systems are too weak to give type soundness properties (absence of
run-time type errors).  For example, performing an integer addition of
two pointers or two {\tt undef} values is statically well-typed but causes
the program to get stuck.  Likewise, calling a function whose
signature differs from that given at the {\tt call} site is a run-time
error, undetected by the type system; its semantics are not defined
and the compiler can (and does) generate incorrect code for this call.
It is the responsibility of the \lang{Cminor} producer to avoid these
situations, \eg\ by using a richer type system.
Nevertheless, the \lang{Cminor} type system enjoys a type preservation
property: values of static type {\tt int} are always integers, pointers or
{\tt undef}, and values of static type {\tt float} are always floating-point
numbers or {\tt undef}.  This weak soundness property plays a role
in the correctness proofs of section~\ref{s:stacking-proof}.

\section{Instruction selection} \label{s:selection}

The first compilation pass of Compcert rewrites expressions to exploit
the combined arithmetic operations and addressing modes of the target
processor.  To take better advantage of the processor's capabilities,
reassociation of integer additions and multiplications is also
performed, as well as a small amount of constant propagation.

\subsection{The target language: CminorSel} \label{s:cminorsel}

The target language for this pass is \lang{CminorSel}, a variant of
\lang{Cminor} that uses a different, processor-specific set of
operators.  Additionally, a syntactic class of condition expressions $ce$
(expressions used only for their truth values) is introduced.
\begin{syntax}
\syntaxclass{Expressions:}
a & ::=  & \id  & reading a local variable \\
  & \alt & \op(\vec a) & operator application \\
  & \alt & {\tt load}(\chunk, \mode, \vec a) & memory read \\
  & \alt & ce {\tt \ ?\ } a_1 {\tt \ :\ } a_2 & conditional expression
\syntaxclass{Condition expressions:}
ce & ::=  & {\tt true} \alt {\tt false} \\
  & \alt & \cond(\vec a) & elementary test \\
  & \alt & ce_1 {\tt \ ?\ } ce_2 {\tt \ :\ } ce_3 & conditional condition
\syntaxclass{Operators (machine-specific):}
\op & ::= & n \alt f \alt {\tt move} \alt \ldots & most of \lang{Cminor} operators \\
    & \alt& {\tt addi}_n \alt {\tt rolm}_{n,m} \alt \ldots & PPC combined operators
\syntaxclass{Addressing modes (machine-specific):}
\mode & ::= & {\tt indexed}(n) & indexed, immediate displacement \\
      & \alt& {\tt indexed2}   & indexed, register displacement \\
      & \alt& {\tt global}(\id,\delta) & address is $\id + \delta$ \\
      & \alt& {\tt based}(\id,\delta) & indexed, displacement is $\id + \delta$ \\
      & \alt& {\tt stack}(\delta) & address is $\mbox{stack pointer} + \delta$
\syntaxclass{Conditions (machine-specific):}
\cond & ::= & {\tt comp}(c) \alt {\tt compimm}(c,n) & signed integer / pointer comparison \\
      & \alt& {\tt compu}(c) \alt {\tt compuimm}(c,n) & unsigned integer
comparison \\
      & \alt& {\tt compf}(c) & float comparison
\syntaxclass{Statements:}
s & ::= & {\tt store}(\chunk, \mode, \vec a, a) & memory write \\
  & \alt & {\tt if}(ce) \, \{ s_1 \} {\tt \ else\ } \{ s_2 \} & conditional statement\\
  & \alt & \ldots & as in \lang{Cminor}
\end{syntax}

For the PowerPC, the machine-specific operators $\op$ include all
\lang{Cminor} nullary, unary and binary operators except {\tt notint},
{\tt mod} and {\tt modu} (these need to be synthesized from other operators)
and adds immediate forms of many integer operators, as well as a
number of combined operators such as not-or, not-and, and
rotate-and-mask. (${\tt rolm}_{n,m}$ is a left rotation by $n$ bits
followed by a logical ``and'' with $m$.)  A memory load or store now
carries an addressing mode $\mode$ and a list of expressions $\vec a$,
from which the address being addressed is computed.  Finally,
conditional expressions and conditional statements now take
condition expressions~$ce$ as arguments instead of normal expressions~$a$.

The dynamic semantics of \lang{CminorSel} resembles that of
\lang{Cminor}, with the addition of a new evaluation judgment for
condition expressions $G, \sp, E, M \vdash ce \Rightarrow ({\tt false} \mid
{\tt true})$.  Figure~\ref{f:cminorsel-sem} shows the main differences
with respect to the \lang{Cminor} semantics.

\begin{figure}

Evaluation of expressions:
\begin{pannel}
\irule{
G, \sp, E, M \vdash \vec a \eval \vec v \amper {\tt eval{\char95}op}(G, \sp, \op, \vec v) = \Some v
}{
G, \sp, E, M \vdash \op(\vec a) \eval v
}
\irule{
G, \sp, E, M \vdash \vec a \eval \vec v \amper
{\tt eval{\char95}mode}(G, \sp, \mode, \vec v) = \Some{{\tt ptr}(b, \ofs)} \amper
{\tt load}(M, \chunk, b, \ofs) = \Some v
}{
G, \sp, E, M \vdash {\tt load}(\chunk, \mode, \vec a) \eval v
}
\irule{
G, \sp, E, M \vdash c \eval {\tt true} \amper G, \sp, E, M \vdash a_1 \eval v_1
}{
G, \sp, E, M \vdash (c {\tt \ ?\ } a_1 {\tt \ :\ } a_2) \eval v_1
}
\hspace*{-4mm}
\irule{
G, \sp, E, M \vdash c \eval {\tt false} \amper G, \sp, E, M \vdash a_2 \eval v_2
}{
G, \sp, E, M \vdash (c {\tt \ ?\ } a_1 {\tt \ :\ } a_2) \eval v_2
}
\end{pannel}
Evaluation of condition expressions:
\begin{pannel}
\srule{
G, \sp, E, M \vdash {\tt true} \eval {\tt true}
}
\srule{
G, \sp, E, M \vdash {\tt false} \eval {\tt false}
}
\irule{
G, \sp, E, M \vdash \vec a \eval \vec v \amper
{\tt eval{\char95}cond}(\cond, \vec v) = \Some{b}
}{
G, \sp, E, M \vdash \cond(\vec a) \eval b
}
\irule{
G, \sp, E, M \vdash ce_1 \eval {\tt true} \amper G, \sp, E, M \vdash ce_2 \eval b
}{
G, \sp, E, M \vdash (ce_1 {\tt \ ?\ } ce_2 {\tt \ :\ } ce_3) \eval b
}
\hspace*{-4mm}
\irule{
G, \sp, E, M \vdash ce_1 \eval {\tt false} \amper G, \sp, E, M \vdash ce_3 \eval b
}{
G, \sp, E, M \vdash (ce_1 {\tt \ ?\ } ce_2 {\tt \ :\ } ce_3) \eval b
}
\end{pannel}
Execution of statements:
\begin{pannel}
\irule{
G, \sp, E, M \vdash \vec a \Rightarrow \vec v \amper
{\tt eval{\char95}mode}(G, \sp, \mode, \vec v) = \Some{{\tt ptr}(b,\delta)} \\
G, \sp, E, M \vdash a \Rightarrow v \amper
{\tt store}(M, \chunk, b, \delta, v) = \Some{M'}
}{
\trans {\state(\fn, {\tt store}(\chunk, \mode, \vec a, a), k, \sp, E, M)}
       {\state(\fn, {\tt skip}, k, \sp, E, M')}
}
\irule{
G, \sp, E, M \vdash ce \Rightarrow {\tt true}
}{
\trans {\state(\fn, ({\tt if}(ce)\{s_1\}{\tt \ else\ }\{s_2\}), k, \sp, E, M)}
       {\state(\fn, s_1, k, \sp, E, M)}
}
\irule{
G, \sp, E, M \vdash ce \Rightarrow {\tt false}
}{
\trans {\state(\fn, ({\tt if}(ce)\{s_1\}{\tt \ else\ }\{s_2\}), k, \sp, E, M)}
       {\state(\fn, s_2, k, \sp, E, M)}
}
\end{pannel}
\caption{Semantics of \lang{CminorSel}.  Only the rules that differ
  from those of \lang{Cminor} are shown.}
\label{f:cminorsel-sem}
\end{figure}

\subsection{The code transformation}

Instruction selection is performed by a bottom-up rewriting of
expressions.  For each \lang{Cminor} operator $\op$, we define a ``smart
constructor'' function written $\overline\op$ that takes
\lang{CminorSel} expressions as arguments, performs shallow
pattern-matching over them to recognize combined or immediate
operations, and returns the corresponding \lang{CminorSel} expression.
For example, here are the smart constructor $\overline{{\tt add}}$ for
integer addition and its helper $\overline{{\tt addi}}_n$ for immediate
integer addition:
\begin{eqnarray*}
\overline{{\tt add}}({\tt addi}_{n_1}(a_1), {\tt addi}_{n_2}(a_2)) & = & 
   \overline{{\tt addi}}_{n_1+n_2}({\tt add}(a_1, a_2)) \\
\overline{{\tt add}}({\tt addi}_{n}(a_1), a_2) & = & 
   \overline{{\tt addi}}_{n}({\tt add}(a_1, a_2)) \\
\overline{{\tt add}}(a_1, {\tt addi}_{n}(a_2)) & = & 
   \overline{{\tt addi}}_{n}({\tt add}(a_1, a_2)) \\
\overline{{\tt add}}(n_1, n_2) & = & n_1+n_2 \\
\overline{{\tt add}}(a_1', a_2') & = & {\tt add}(a_1', a_2') \mbox{ otherwise} \\
\overline{{\tt addi}}_{n_1}(n_2) & = & n_1 + n_2 \\
\overline{{\tt addi}}_{n_1}({\tt addi}_{n_2}(a)) & = & 
   {\tt addi}_{n_1+n_2}(a) \\
\overline{{\tt addi}}_n({\tt addrsymbol}(\id+\delta)) & = &
   {\tt addrsymbol}(\id+(\delta+n)) \\
\overline{{\tt addi}}_n({\tt addrstack}(\delta)) & = &
   {\tt addrstack}(\delta+n) \\
\overline{{\tt addi}}_n(a) & = & {\tt addi}_n(a) \mbox{ otherwise}
\end{eqnarray*}
Here are some cases from other smart constructors that illustrate
reassociation of immediate multiplication and immediate addition, as
well as the recognition of the rotate-and-mask instruction:
\begin{eqnarray*}
\overline{{\tt muli}}_m({\tt addi}_n(a)) & = &
  \overline{{\tt addi}}_{m \times n}({\tt muli}_m(a)) \\
\overline{{\tt shl}}(a, n) & = & \overline{{\tt rolm}}_{n, (-1) \ll n}(a) \\
\overline{{\tt shru}}(a, n) & = & \overline{{\tt rolm}}_{32-n, (-1) \gg n}(a) \\
\overline{{\tt and}}(a, n) & = & \overline{{\tt rolm}}_{0,n}(a) \\
\overline{{\tt rolm}}_{n_1,m_1}({\tt rolm}_{n_2,m_2}(a)) & = &
   {\tt rolm}_{n_1+n_2,m}(a)
 \mbox{with $m = {\tt rol}(m_1, n_2) \wedge m_2$} \\
\overline{{\tt or}}({\tt rolm}_{n,m_1}(a), {\tt rolm}_{n,m_2}(a)) & = & 
   {\tt rolm}_{n, m_1 \vee m_2}(a)
\end{eqnarray*}
While innocuous-looking, these smart constructors are powerful enough
to, for instance, reduce $8 + ({\tt x} + 1) \times 4$ to ${\tt x} \times 4 + 12$,
and to recognize a ${\tt rolm}_{3,-1}({\tt x})$ instruction for
{\tt (x\ <<\ 3)\ {\char124}\ (x\ >>\ 29)}, a C encoding of bit rotation commonly used in
cryptography.

The recognition of conditions and addressing modes is performed by two
functions $\overline{{\tt cond}}(a) = c'$ and
$\overline{{\tt mode}}(a) = (\mode, \vec{a})$.
The translation of expressions is, then, a straightforward bottom-up
traversal, applying the appropriate smart constructors at each step:
\begin{eqnarray*}
\tr{\cst} & = & \overline{\cst} \\
\tr{\unop(a)} & = & \overline{\unop}(\tr{a}) \\
\tr{\binop(a_1,a_2)} & = & \overline{\binop}(\tr{a_1}, \tr{a_2}) \\
\tr{\chunk[a]} & = & {\tt load}(\chunk, \mode, \vec{a}) 
  \mbox{ where $(\mode, \vec{a}) = \overline{{\tt mode}}(\tr{a})$} \\
\tr{a_1{\tt \ ?\ } a_2{\tt \ :\ }a_3} & = & 
  \overline{{\tt cond}}(\tr{a_1}){\tt \ ?\ } \tr{a_2} {\tt \ :\ } \tr{a_3}
\end{eqnarray*}
We omit the translation of statements and functions, which is similar.

\subsection{Semantic preservation}

The first part of the proof that instruction selection preserves
semantics is to show the correctness of the smart constructor
functions.

\begin{lemma} \label{l:sel-smart-constr} ~
\begin{enumerate}
\item If $G, \sp, E, M \vdash a_1 \eval v_1$ 
  and ${\tt eval{\char95}unop}(\unop, v_1) = \Some v$,
  then $G, \sp, E, M \vdash \overline{\unop}(a_1) \eval v$.
\item If $G, \sp, E, M \vdash a_1 \eval v_1$
  and $G, \sp, E, M \vdash a_2 \eval v_2$
  and ${\tt eval{\char95}binop}(\binop, v_1, v_2) = \Some v$,
  then $G, \sp, E, M \vdash \overline{\binop}(a_1,a_2) \eval v$.
\item If $G, \sp, E, M \vdash a \eval v$ and ${\tt istrue}(v)$,
  then $G, \sp, E, M \vdash \overline{{\tt cond}}(a) \eval {\tt true}$.
\item If $G, \sp, E, M \vdash a \eval v$ and ${\tt isfalse}(v)$,
  then $G, \sp, E, M \vdash \overline{{\tt cond}}(a) \eval {\tt false}$.
\item If $G, \sp, E, M \vdash a \eval v$ and
  $\overline{{\tt mode}}(a) = (\mode, \vec{a})$.
  then there exists $\vec v$ such that
  $G, \sp, E, M \vdash \vec{a} \eval \vec v$
  and ${\tt eval{\char95}mode}(\mode, \vec v) = \Some v$.
\end{enumerate}
\end{lemma}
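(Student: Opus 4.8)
The plan is to reduce each of the five parts to a collection of auxiliary correctness lemmas, one per individual smart constructor ($\overline{{\tt add}}$, $\overline{{\tt addi}}_n$, $\overline{{\tt muli}}_m$, $\overline{{\tt rolm}}_{n,m}$, $\overline{{\tt or}}$, and so on). Parts~1 and~2 dispatch on the specific operator: each \lang{Cminor} unary or binary operator is handled by its own smart constructor, so once the correctness of every individual constructor is established, parts~1 and~2 follow by a routine case analysis on $\unop$ (resp.\ $\binop$) together with the defining equations $\tr{\unop(a)} = \overline{\unop}(\tr{a})$ and $\tr{\binop(a_1,a_2)} = \overline{\binop}(\tr{a_1},\tr{a_2})$. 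The single tool used throughout is inversion of the \lang{CminorSel} big-step evaluation judgment $G,\sp,E,M \vdash a \eval v$.

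For each auxiliary lemma I would proceed by case analysis following the pattern-matching structure of the constructor's definition. In the default (``otherwise'') case the constructor simply re-emits $\op(\vec{a})$, so the conclusion is immediate from the evaluation rule for $\op(\vec{a})$, using the fact that on the operators inherited from \lang{Cminor} the \lang{CminorSel} function ${\tt eval{\char95}op}$ agrees with ${\tt eval{\char95}unop}$ and ${\tt eval{\char95}binop}$. In each special case the argument expressions have a known shape (e.g.\ $a_1 = {\tt addi}_{n_1}(a_1')$); I first invert the evaluation hypothesis to recover the value of the inner subexpression and the constraint imposed by the outer operator, then rebuild an evaluation derivation for the rewritten expression, leaving a single residual obligation that is a purely algebraic identity between the two computed values. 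Since the helper constructors are invoked by the main ones (for instance $\overline{{\tt add}}$ calls $\overline{{\tt addi}}$, and $\overline{{\tt shl}}$ calls $\overline{{\tt rolm}}$), I would establish the helpers first and appeal to their correctness when treating the delegating cases.

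Parts~3 and~4 for $\overline{{\tt cond}}$, and part~5 for $\overline{{\tt mode}}$, follow the same template but target different judgments. For $\overline{{\tt cond}}$ I would show, by case analysis on the recognized comparison patterns and inversion of the evaluation hypothesis, that the synthesized condition expression evaluates to ${\tt true}$ exactly when ${\tt istrue}(v)$ holds for the original value $v$ (and symmetrically for ${\tt false}$ and ${\tt isfalse}$). For $\overline{{\tt mode}}$, the existential witness $\vec v$ is read off directly from the subexpressions exposed by the recognized addressing pattern; the obligation is then to check that ${\tt eval{\char95}mode}(\mode,\vec v) = \Some{v}$ reproduces the original pointer, which is immediate once those subexpression values are in hand.

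The main obstacle will be the algebraic identities in the special cases, not the logical scaffolding. For the additive reassociations the obligations reduce to commutativity and associativity of arithmetic modulo $2^{32}$ and to the offset arithmetic on pointers and on the ${\tt addrstack}$ and ${\tt addrsymbol}$ constants, all routine. The genuinely delicate obligations are the bitwise ones: verifying that $\overline{{\tt shl}}$, $\overline{{\tt shru}}$ and $\overline{{\tt and}}$ are faithfully expressed as rotate-and-mask operations, and in particular that the fusion $\overline{{\tt rolm}}_{n_1,m_1}({\tt rolm}_{n_2,m_2}(a)) = {\tt rolm}_{n_1+n_2,m}(a)$ with $m = {\tt rol}(m_1,n_2) \wedge m_2$ is a semantic identity, requires a careful bit-level argument about how rotations commute with masking over 32-bit integers. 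A further point of care throughout is the handling of {\tt undef} and ill-typed arguments: a special case is sound only when the inner subexpression genuinely produces a value of the expected kind, so each inversion must confirm the guarding ${\tt eval{\char95}op}$ side conditions, ensuring the optimization never converts an undefined computation into a defined one or conversely.
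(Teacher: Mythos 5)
Your proposal matches the paper's own proof: the paper likewise proceeds by ``copious case analysis on the operators and their arguments and inversion on the evaluations of the arguments,'' reducing everything to algebraic identities over ground machine integers, with the $\overline{{\tt rolm}}$ fusion identity ${\tt rol}(({\tt rol}(x,n_1) \wedge m_1), n_2) \wedge m_2 = {\tt rol}(x, n_1+n_2) \wedge ({\tt rol}(m_1,n_2) \wedge m_2)$ singled out as the paradigmatic hard case, exactly as you identify. Your additional organizational points (proving helper constructors like $\overline{{\tt addi}}$ and $\overline{{\tt rolm}}$ before the constructors that delegate to them, and watching the {\tt undef} side conditions) are sound refinements of the same argument, which in the paper indeed rested on a large formalization of $N$-bit machine integer arithmetic.
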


After copious case analysis on the operators and their arguments and
inversion on the evaluations of the arguments, the proof reduces to
showing that the defining equations for the smart constructors are
valid when interpreted over ground machine integers.  For instance, in
the case for $\overline{{\tt rolm}}$ shown above, we have to prove that
$$ {\tt rol}(({\tt rol}(x, n_1) \wedge m_1), n_2) \wedge m_2 ~=~
   {\tt rol}(x, n_1 + n_2) \wedge ({\tt rol}(m_1, n_2) \wedge m_2) $$
which follows from the algebraic properties of rotate-and-left and
   bitwise ``and''.  Completing the
proof of the lemma above required the development of a rather large
and difficult formalization of $N$-bit machine integers and of the
algebraic properties of their arithmetic and logical operations.

Let $\prog$ be the original \lang{Cminor} program and $\prog'$ be the
\lang{CminorSel} program produced by instruction selection.  Let $G,
G'$ be the corresponding global environments.  Semantic preservation
for the evaluation of expressions follows from
lemma~\ref{l:sel-smart-constr} by induction on the \lang{Cminor}
evaluation derivation.

\begin{lemma} \label{l:sel-expr}
If $G, \sp, E, M \vdash a \eval v$, then $G', \sp, E, M \vdash \tr{a} \eval v$.
\end{lemma}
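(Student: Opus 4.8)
The plan is to proceed by induction on the derivation of $G, \sp, E, M \vdash a \eval v$ in the \lang{Cminor} big-step semantics, analysing the last rule applied and discharging each case with the corresponding clause of Lemma~\ref{l:sel-smart-constr}. Because the translation $\tr{\cdot}$ is defined compositionally, the shape of $\tr{a}$ always matches the \lang{Cminor} rule that produced $v$, so the induction hypotheses supply exactly the \lang{CminorSel} sub-evaluations that the smart-constructor lemmas consume.

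Two base cases arise. For a variable, $\tr{\id} = \id$ and the \lang{CminorSel} rule for reading a local variable coincides with the \lang{Cminor} one, so $E(\id) = \Some v$ gives the result immediately. For a constant, Lemma~\ref{l:sel-smart-constr} offers no clause, so I would argue directly that $\overline{\cst}$ evaluates to the same value as ${\tt eval{\char95}constant}(G, \sp, \cst) = \Some v$; the only delicate subcase is ${\tt addrsymbol}(\id)$, whose value is determined by ${\tt symbol}(G, \id)$. Here I would invoke the commutation property of Section~\ref{s:global-envs}, which ensures ${\tt symbol}(G', \id) = {\tt symbol}(G, \id)$, so that evaluation under $G'$ yields the same pointer.

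The inductive cases are direct applications of Lemma~\ref{l:sel-smart-constr}. For $\unop(a_1)$, the hypothesis gives $G', \sp, E, M \vdash \tr{a_1} \eval v_1$, and clause~(1), together with ${\tt eval{\char95}unop}(\unop, v_1) = \Some v$, yields $G', \sp, E, M \vdash \overline{\unop}(\tr{a_1}) \eval v$, which is $\tr{\unop(a_1)}$; the binary case is symmetric via clause~(2) and two hypotheses. For a memory read $\chunk[a_1]$, the hypothesis gives $G', \sp, E, M \vdash \tr{a_1} \eval {\tt ptr}(b,\ofs)$, and clause~(5) applied to $\overline{{\tt mode}}(\tr{a_1}) = (\mode, \vec a)$ produces $\vec v$ with $G', \sp, E, M \vdash \vec a \eval \vec v$ and ${\tt eval{\char95}mode}(\mode, \vec v) = \Some{{\tt ptr}(b,\ofs)}$; since $M$ is unchanged the original load still returns $\Some v$, so the \lang{CminorSel} load rule concludes. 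The two conditional cases use clauses~(3) and~(4) to obtain $G', \sp, E, M \vdash \overline{{\tt cond}}(\tr{a_1}) \eval {\tt true}$ (resp.\ ${\tt false}$), after which the hypothesis on the taken branch finishes.

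The main difficulty is bookkeeping rather than depth. One must observe that Lemma~\ref{l:sel-smart-constr}, though notated with a single global environment, applies equally with $G'$: the smart constructors for arithmetic operators introduce no symbol references, so their evaluation is independent of the choice of global environment, and the one genuine dependence---in the constant case---is handled by the commutation property above. All the substantive content (the algebraic identities over machine integers) has already been absorbed into Lemma~\ref{l:sel-smart-constr}; what remains here is to line up the compositional structure of $\tr{\cdot}$ with the evaluation rules.
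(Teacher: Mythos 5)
Your proof is correct and follows exactly the paper's route: the paper proves Lemma~\ref{l:sel-expr} by induction on the \lang{Cminor} evaluation derivation, discharging each case with the corresponding clause of Lemma~\ref{l:sel-smart-constr}. Your extra care about the change of global environment from $G$ to $G'$ (the constant/\texttt{addrsymbol} case via the commutation property of section~\ref{s:global-envs}, and the observation that Lemma~\ref{l:sel-smart-constr} may be instantiated at $G'$) fills in details the paper's one-sentence proof leaves implicit, and does so correctly.
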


The last part of the semantic preservation proof is a simulation
argument of the form outlined in section~\ref{s:simulation-diagrams}.
Since the structure of statements is preserved by the translation,
transitions in the original and transformed programs match one-to-one,
resulting in a ``lock-step'' simulation diagram.  The relation
$\match$ between \lang{Cminor} and \lang{CminorSel} execution states
is defined as follows:
\begin{eqnarray*}
\state(\fn, s, k, \sp, E, M) & \match &
\state(\tr{\fn}, \tr{s}, \tr{k}, \sp, E, M) \\
\callstate(\fd, \vec v, k, M) & \match &
\callstate(\tr{\fd}, \vec v, \tr{k}, M) \\
\returnstate(v, k, M) & \match &
\returnstate(v, \tr{k}, M)
\end{eqnarray*}
Since the transformed code computes exactly the same values as the
original, environments $E$ and memory states $M$ are identical in
matching states.  Statements and functions appearing in states must be
the translation of one another.  For continuations, we extend
(isomorphically) the translation of statements and functions.

\begin{lemma} \label{l:sel-simu}
If $G \vdash \st_1 \redtrace{t} \st_2$ and $\st_1 \match \st_1'$,
there exists $\st_2'$ such that $G' \vdash \st_1' \redtrace{t} \st_2'$
and $\st_1' \match \st_2'$.
\end{lemma}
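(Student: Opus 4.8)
The plan is to prove the lemma by case analysis on the \lang{Cminor} transition $G \vdash \st_1 \redtrace{t} \st_2$, inverting at the same time the matching hypothesis $\st_1 \match \st_1'$. Because $\match$ is defined componentwise through the translation $\tr{\cdot}$ and preserves the three-way classification of states into $\state$, $\callstate$ and $\returnstate$, every \lang{Cminor} rule corresponds to exactly one \lang{CminorSel} rule applied to the translated state; this is what makes the diagram lock-step ($\redtrace{t}$ on both sides) rather than ``plus'' or ``star''. In each case I would exhibit the single matching \lang{CminorSel} transition out of $\st_1'$ and check that the resulting state is again a componentwise translation, hence related by $\match$. Since instruction selection rewrites only pure expressions and leaves the control structure, the memory state $M$, the environment $E$ and the stack block $\sp$ untouched, the emitted trace $t$ coincides on the two sides in every rule, which is precisely what the lock-step conclusion demands.

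For the rules that evaluate expressions---assignment, memory store, the two {\tt if} rules, {\tt switch}, the {\tt return}$(a)$ rule, and the call rules---I would first use Lemma~\ref{l:sel-expr} to transport each premise $G, \sp, E, M \vdash a \eval v$ to $G', \sp, E, M \vdash \tr{a} \eval v$. The store case additionally needs Lemma~\ref{l:sel-smart-constr}(5): from $\tr{a_1} \eval {\tt ptr}(b,\delta)$ and $\overline{{\tt mode}}(\tr{a_1}) = (\mode, \vec a)$ it yields values $\vec v$ with $G', \sp, E, M \vdash \vec a \eval \vec v$ and ${\tt eval{\char95}mode}(\mode, \vec v) = \Some{{\tt ptr}(b,\delta)}$, which feeds the \lang{CminorSel} store rule; the {\tt if} cases use parts (3) and (4) to turn $\tr{a} \eval v$ into $\overline{{\tt cond}}(\tr{a}) \eval {\tt true}$ or ${\tt false}$. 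For the call and tailcall rules I would invoke the global-environment commutation lemma of section~\ref{s:global-envs} to pass from ${\tt funct}(G, b) = \Some{\fd}$ to ${\tt funct}(G', b) = \Some{\fd'}$ with ${\tt transf}(\fd) = {\tt OK}(\fd')$, noting that the translation preserves function signatures so the side-condition $\fd.{\tt sig} = \sig$ still holds; the external-function rule is then immediate because $\tr{\ef} = \ef$ and the argument and result values are unchanged, so the very same event $t$ is produced.

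The remaining rules are the purely structural ones---sequencing, {\tt loop}, {\tt block}/{\tt endblock}, the {\tt skip} and {\tt exit} resumption rules, labelled statements and {\tt goto}. These require auxiliary commutation facts stating that the translation, suitably extended to continuations, commutes with the continuation operators: in particular $\tr{{\tt callcont}(k)} = {\tt callcont}(\tr{k})$, used by the tailcall and {\tt return} rules, and a corresponding statement for {\tt findlabel} asserting that ${\tt findlabel}(\lbl, \tr{s}, \tr{k})$ returns $\Some{\tr{s'}, \tr{k'}}$ whenever ${\tt findlabel}(\lbl, s, k) = \Some{s', k'}$. I expect the {\tt goto} case to be the main obstacle: establishing the {\tt findlabel} commutation requires an induction over the statement $s$ that tracks how the translation distributes over sequences, {\tt loop} and {\tt block}, and how it inserts the {\tt endblock} and {\tt loop} frames into the continuation in exactly the way {\tt findlabel} does. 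Everything else is routine bookkeeping, and all the genuinely hard algebraic content about machine integers has already been discharged in Lemmas~\ref{l:sel-smart-constr} and~\ref{l:sel-expr}.
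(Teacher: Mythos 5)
Your proposal is correct and follows essentially the same route as the paper, whose proof is exactly the ``straightforward case analysis on the transition from $\st_1$ to $\st_2$'' that you carry out, relying on lemma~\ref{l:sel-expr} (and through it lemma~\ref{l:sel-smart-constr}) for the expression-evaluating rules and on the structure-preserving nature of the translation for the rest. The auxiliary commutation facts you identify for ${\tt callcont}$ and ${\tt findlabel}$, including the induction needed for the {\tt goto} case, are precisely the bookkeeping the paper leaves implicit (and discharges in the Coq development).
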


The proof is a straightforward case analysis on the transition from
$\st_1$ to $\st_2$.  Semantic preservation for instruction selection
then follows from theorem~\ref{th:lockstep-simulation} and
lemma~\ref{l:sel-simu}.

\section{RTL generation} \label{s:rtlgen}

The second compilation pass translates \lang{CminorSel} to a simple
intermediate language of the RTL kind, with control represented as a
control-flow graph instead of structured statements.  This
intermediate language is convenient for performing optimizations later.

\subsection{The target language: RTL} \label{s:rtl}

The \lang{RTL} language represents functions as a control-flow graph
(CFG) of abstract instructions, corresponding roughly to machine
instructions but operating over pseudo-registers (also called
``temporaries'').  Every function has an unlimited supply of
pseudo-registers, and their values are preserved across function call.
In the following, $r$~ranges over pseudo-registers and~$l$~over labels
of CFG nodes.
\begin{syntaxleft}
\syntaxclass{RTL instructions:}
i & ::=  & {\tt nop}(l) & no operation (go to $l$) \\
  & \alt & {\tt op}(\op, \vec r, r, l) & arithmetic operation \\
  & \alt & {\tt load}(\chunk, \mode, \vec r, r, l) & memory load \\
  & \alt & {\tt store}(\chunk, \mode, \vec r, r, l) & memory store \\
  & \alt & {\tt call}(\sig, (r \alt \id), \vec r, r, l) & function call \\
  & \alt & {\tt tailcall}(\sig, (r \alt \id), \vec r) & function tail call \\
  & \alt & {\tt cond}(\cond, \vec r, \ltrue, \lfalse) &
                  conditional branch \\
  & \alt & {\tt return} \alt {\tt return}(r) & function return
\syntaxclass{RTL control-flow graph:}
g & ::=  & l \mapsto i               & finite map
\syntaxclass{RTL functions:}
\fn & ::= & \{~{\tt sig} = \sig; \\
    &     & ~~~{\tt params} = \vec r;    & parameters\\
    &     & ~~~{\tt stacksize} = n;      & size of stack data block \\
    &     & ~~~{\tt entrypoint} = l;     & label of first instruction\\
    &     & ~~~{\tt code} = g \}         & control-flow graph
\end{syntaxleft}%
Each instruction takes its arguments in a list of pseudo-registers~$\vec r$
and stores its result, if any, in a pseudo-register~$r$.
Additionally, it carries the labels of its possible successors.  We use
instructions rather than basic blocks as nodes of the control-flow
graph because this simplifies semantics and reasoning over static
analyses without significantly slowing compilation \cite{Knoop-dinosaurs}.

The dynamic semantics of \lang{RTL} is defined by the labeled
transition system shown in figure~\ref{f:rtl-sem}.  Program states
have the following form:
\begin{syntaxleft}
\syntaxclass{Program states:}
\st & ::= & \state(\stk, g, \sp, l, R, M) & regular state \\
    & \alt& \callstate(\stk, \fd, \vec v, M) & call state \\
    & \alt& \returnstate(\stk, v, M)         & return state
\syntaxclass{Call stacks:}
\stk & ::= & (\stackframe(r, \fn, \sp, l, R))^* & list of frames
\syntaxclass{Register states:}
R & ::= & r \mapsto v
\end{syntaxleft}%
In regular states, $g$ is the CFG of the function currently executing,
$l$ a program point (CFG node label) within this function, $\sp$ 
its stack data block, and $R$ an assignment of values for the
pseudo-registers of $\fn$.  All three states carry a call stack~$\stk$,
which is a list of frames $\stackframe$ representing pending function
calls and containing the corresponding per-function state
$\fn, \sp, l, R$.

\begin{figure}

\begin{pannel}
\irule{
g(l) = \Some{{\tt nop}(l')}
}{
\trans {\state(\stk, g, \sp, l, R, M)}
       {\state(\stk, g, \sp, l', R, M)}
}
\irule{
g(l) = \Some{{\tt op}(\op, \vec r, r, l')} \amper
{\tt eval{\char95}op}(G, \sp, \op, R(\vec r)) = \Some v
}{
\trans {\state(\stk, g, \sp, l, R, M)}
       {\state(\stk, g, \sp, l', R\{r \becomes v\}, M)}
}
\irule{
g(l) = \Some{{\tt load}(\chunk, \mode, \vec r, r, l')} \\
{\tt eval{\char95}mode}(G, \sp, \mode, R(\vec r)) = \Some{{\tt ptr}(b,\ofs)} \amper
{\tt load}(M, \chunk, b, \ofs) = \Some v
}{
\trans {\state(\stk, g, \sp, l, R, M)}
       {\state(\stk, g, \sp, l', R\{r \becomes v\}, M)}
}
\irule{
g(l) = \Some{{\tt store}(\chunk, \mode, \vec r, r, l')} \\
{\tt eval{\char95}mode}(G, \sp, \mode, R(\vec r)) = \Some{{\tt ptr}(b,\ofs)} \amper
{\tt store}(M, \chunk, b, \ofs, R(r)) = \Some{M'}
}{
\trans {\state(\stk, g, \sp, l, R, M)}
       {\state(\stk, g, \sp, l', R, M')}
}
\irule{
g(l) = \Some{{\tt call}(\sig, r_f, \vec r, r, l')} \amper
R(r_f) = {\tt ptr}(b, 0) \amper {\tt funct}(G, b) = \Some \fd \amper \fd.{\tt sig} = \sig
}{
\trans {\state(\stk, g, \sp, l, R, M)}
       {\callstate(\stackframe(r, g, \sp, l', R).\stk, \fd, R(\vec r), M)}
}
\irule{
g(l) = \Some{{\tt tailcall}(\sig, r_f, \vec r)} \amper
R(r_f) = {\tt ptr}(b, 0) \amper {\tt funct}(G, b) = \Some \fd \amper \fd.{\tt sig} = \sig
}{
\trans {\state(\stk, g, \sp, l, R, M)}
       {\callstate(\stk, \fd, R(\vec r), {\tt free}(M, \sp))}
}
\irule{
g(l) = \Some{{\tt cond}(\cond, \vec r, \ltrue, \lfalse)} \amper
{\tt eval{\char95}cond}(\cond, R(\vec r)) = \Some{{\tt true}}
}{
\trans {\state(\stk, g, \sp, l, R, M)}
       {\state(\stk, g, \sp, \ltrue, R, M)}
}
\irule{
g(l) = \Some{{\tt cond}(\cond, \vec r, \ltrue, \lfalse)} \amper
{\tt eval{\char95}cond}(\cond, R(\vec r)) = \Some{{\tt false}}
}{
\trans {\state(\stk, g, \sp, l, R, M)}
       {\state(\stk, g, \sp, \lfalse, R, M)}
}
\irule{
g(l) = \Some{{\tt return}}
}{
\trans {\state(\stk, g, \sp, l, R, M)}
       {\returnstate(\stk, {\tt undef}, {\tt free}(M, \sp))}
}
\irule{
g(l) = \Some{{\tt return}(r)}
}{
\trans {\state(\stk, g, \sp, l, R, M)}
       {\returnstate(\stk, R(r), {\tt free}(M, \sp))}
}
\irule{
{\tt alloc}(M, 0, \fn.{\tt stacksize}) = (\sp, M')
}{
\trans {\callstate(\stk, {\tt internal}(\fn), \vec v, M)}
       {\state(\stk, \fn.{\tt code}, \sp, \fn.{\tt entrypoint}, [\fn.{\tt params} \mapsto \vec v], M')}
}
\irule{
\vdash \ef(\vec v) \evaltrace{t} v \mbox{ (see section~\ref{s:traces})}
}{
\transtrace{\callstate(\stk, {\tt external}(\ef), \vec v, M)}
       {t} {\returnstate(\stk, v, M)}
}
\srule{
\trans {\returnstate(\stackframe(r, g, \sp, l, R).\stk, v, M)}
       {\state(\stk, g, \sp, l, R[r \becomes v], M)}
}

\irule{
{\tt symbol}({\tt globalenv}(\prog), \prog.{\tt main}) = \Some b \amper
{\tt funct}({\tt globalenv}(\prog), b) = \Some \fd
}{
\initialstate{\prog}{\callstate(\epsilon, \fd, \epsilon, {\tt initmem}(\prog))}
}
\srule{
\finalstate{\returnstate(\epsilon, {\tt int}(n), M)}{n}
}
\end{pannel}
\caption{Semantics of \lang{RTL}.} \label{f:rtl-sem}
\end{figure}

The transition system in figure~\ref{f:rtl-sem} is
unsurprising.  Transitions from a regular state discriminate on the
instruction found at the current program point.  To interpret
arithmetic operations, conditions and addressing modes, we reuse the
functions {\tt eval{\char95}op}, {\tt eval{\char95}cond} and {\tt eval{\char95}mode} of the
\lang{CminorSel} semantics.  Other transitions follow the pattern
described in section~\ref{s:state-kinds}.

\subsection{Relational specification of the translation} \label{s:rtlgen-spec}

The translation from \lang{CminorSel} to \lang{RTL} is conceptually
simple: the structured control is encoded as a CFG\footnote{%
Since \lang{RTL} currently has no instructions performing $N$-way
branches (\ie jump tables), this translation of control includes the
generation of binary decision trees for \lang{Cminor} {\tt switch}
statements.  We do not describe this part of the translation in this
article.}; expressions
are decomposed into sequences of \lang{RTL} instructions;
pseudo-registers are generated to hold the values of \lang{CminorSel}
variables and intermediate results of expression evaluations.
The decomposition of expressions is made trivial by the prior conversion
to \lang{CminorSel}: every operation becomes exactly one {\tt op}
instruction.  However, implementing this translation in Coq is
delicate: since Coq is a pure functional language, we cannot use
imperative updates to build the CFG and generate fresh
pseudo-registers and CFG nodes.  Section~\ref{s:rtlgen-impl} 
describes a solution, based (unsurprisingly) on the use of a monad.

\begin{figure}

\begin{gpic}

\begin{pgfpicture}
  \pgfsetxvec{\pgfpoint{0.500in}{0in}}
  \pgfsetyvec{\pgfpoint{0in}{0.500in}}
  \begin{pgfscope}
    \pgfpathellipse{\pgfpointxy{0.850}{1.888}}{\pgfpointxy{0.100}{0}}{\pgfpointxy{0}{0.100}}
    \pgfusepath{stroke}
  \end{pgfscope}
  \pgftext[at=\pgfpointadd{\pgfpointxy{0.850}{1.288}}{\pgfpoint{0pt}{-0.0 \baselineskip}}]{$\vdots$}
  \begin{pgfscope}
    \pgfpathellipse{\pgfpointxy{0.850}{0.688}}{\pgfpointxy{0.100}{0}}{\pgfpointxy{0}{0.100}}
    \pgfusepath{stroke}
  \end{pgfscope}
  \begin{pgfscope}
    \pgfpathmoveto{\pgfpointxy{0.900}{1.738}}
    \pgfpathlineto{\pgfpointxy{0.850}{1.538}}
    \pgfpathlineto{\pgfpointxy{0.800}{1.738}}
    \pgfpathlineto{\pgfpointxy{0.900}{1.738}}
    \pgfsetfillcolor{black}
    \pgfsetlinewidth{0.100pt}
    \pgfusepath{fill,stroke}
  \end{pgfscope}
  \begin{pgfscope}
    \pgfpathmoveto{\pgfpointxy{0.850}{1.788}}
    \pgfpathlineto{\pgfpointxy{0.850}{1.738}}
    \pgfusepath{stroke}
  \end{pgfscope}
  \begin{pgfscope}
    \pgfpathmoveto{\pgfpointxy{0.900}{0.988}}
    \pgfpathlineto{\pgfpointxy{0.850}{0.788}}
    \pgfpathlineto{\pgfpointxy{0.800}{0.988}}
    \pgfpathlineto{\pgfpointxy{0.900}{0.988}}
    \pgfsetfillcolor{black}
    \pgfsetlinewidth{0.100pt}
    \pgfusepath{fill,stroke}
  \end{pgfscope}
  \begin{pgfscope}
    \pgfpathmoveto{\pgfpointxy{0.850}{1.038}}
    \pgfpathlineto{\pgfpointxy{0.850}{0.988}}
    \pgfusepath{stroke}
  \end{pgfscope}
  \begin{pgfscope}
    \pgfpathellipse{\pgfpointxy{0.850}{1.288}}{\pgfpointxy{0.300}{0}}{\pgfpointxy{0}{0.800}}
    \pgfsetdash{{0.050in}{0.025in}}{0cm}
    \pgfusepath{stroke}
  \end{pgfscope}
  \pgftext[at=\pgfpointadd{\pgfpointxy{1.200}{1.288}}{\pgfpoint{0pt}{-0.0 \baselineskip}},left]{$e_1$ in $r_1$}
  \begin{pgfscope}
    \pgfpathellipse{\pgfpointxy{0.850}{0.087}}{\pgfpointxy{0.100}{0}}{\pgfpointxy{0}{0.100}}
    \pgfusepath{stroke}
  \end{pgfscope}
  \pgftext[at=\pgfpointadd{\pgfpointxy{0.850}{-0.513}}{\pgfpoint{0pt}{-0.0 \baselineskip}}]{$\vdots$}
  \begin{pgfscope}
    \pgfpathellipse{\pgfpointxy{0.850}{-1.113}}{\pgfpointxy{0.100}{0}}{\pgfpointxy{0}{0.100}}
    \pgfusepath{stroke}
  \end{pgfscope}
  \begin{pgfscope}
    \pgfpathmoveto{\pgfpointxy{0.900}{-0.063}}
    \pgfpathlineto{\pgfpointxy{0.850}{-0.263}}
    \pgfpathlineto{\pgfpointxy{0.800}{-0.063}}
    \pgfpathlineto{\pgfpointxy{0.900}{-0.063}}
    \pgfsetfillcolor{black}
    \pgfsetlinewidth{0.100pt}
    \pgfusepath{fill,stroke}
  \end{pgfscope}
  \begin{pgfscope}
    \pgfpathmoveto{\pgfpointxy{0.850}{-0.013}}
    \pgfpathlineto{\pgfpointxy{0.850}{-0.063}}
    \pgfusepath{stroke}
  \end{pgfscope}
  \begin{pgfscope}
    \pgfpathmoveto{\pgfpointxy{0.900}{-0.813}}
    \pgfpathlineto{\pgfpointxy{0.850}{-1.013}}
    \pgfpathlineto{\pgfpointxy{0.800}{-0.813}}
    \pgfpathlineto{\pgfpointxy{0.900}{-0.813}}
    \pgfsetfillcolor{black}
    \pgfsetlinewidth{0.100pt}
    \pgfusepath{fill,stroke}
  \end{pgfscope}
  \begin{pgfscope}
    \pgfpathmoveto{\pgfpointxy{0.850}{-0.763}}
    \pgfpathlineto{\pgfpointxy{0.850}{-0.813}}
    \pgfusepath{stroke}
  \end{pgfscope}
  \begin{pgfscope}
    \pgfpathellipse{\pgfpointxy{0.850}{-0.513}}{\pgfpointxy{0.300}{0}}{\pgfpointxy{0}{0.800}}
    \pgfsetdash{{0.050in}{0.025in}}{0cm}
    \pgfusepath{stroke}
  \end{pgfscope}
  \pgftext[at=\pgfpointadd{\pgfpointxy{1.200}{-0.513}}{\pgfpoint{0pt}{-0.0 \baselineskip}},left]{$e_2$ in $r_2$}
  \begin{pgfscope}
    \pgfpathmoveto{\pgfpointxy{0.900}{0.388}}
    \pgfpathlineto{\pgfpointxy{0.850}{0.188}}
    \pgfpathlineto{\pgfpointxy{0.800}{0.388}}
    \pgfpathlineto{\pgfpointxy{0.900}{0.388}}
    \pgfsetfillcolor{black}
    \pgfsetlinewidth{0.100pt}
    \pgfusepath{fill,stroke}
  \end{pgfscope}
  \begin{pgfscope}
    \pgfpathmoveto{\pgfpointxy{0.850}{0.587}}
    \pgfpathlineto{\pgfpointxy{0.850}{0.388}}
    \pgfusepath{stroke}
  \end{pgfscope}
  \begin{pgfscope}
    \pgfpathmoveto{\pgfpointxy{0.000}{-1.738}}
    \pgfpatharc{180}{269}{0.075in}
    \pgfusepath{stroke}
  \end{pgfscope}
  \begin{pgfscope}
    \pgfpathmoveto{\pgfpointxy{0.150}{-1.538}}
    \pgfpatharc{90}{180}{0.075in}
    \pgfusepath{stroke}
  \end{pgfscope}
  \begin{pgfscope}
    \pgfpathmoveto{\pgfpointxy{1.700}{-1.688}}
    \pgfpatharc{0}{89}{0.075in}
    \pgfusepath{stroke}
  \end{pgfscope}
  \begin{pgfscope}
    \pgfpathmoveto{\pgfpointxy{1.550}{-1.888}}
    \pgfpatharc{-90}{0}{0.075in}
    \pgfusepath{stroke}
  \end{pgfscope}
  \begin{pgfscope}
    \pgfpathmoveto{\pgfpointxy{0.000}{-1.738}}
    \pgfpathlineto{\pgfpointxy{0.000}{-1.688}}
    \pgfusepath{stroke}
  \end{pgfscope}
  \begin{pgfscope}
    \pgfpathmoveto{\pgfpointxy{0.150}{-1.538}}
    \pgfpathlineto{\pgfpointxy{1.550}{-1.538}}
    \pgfusepath{stroke}
  \end{pgfscope}
  \begin{pgfscope}
    \pgfpathmoveto{\pgfpointxy{1.700}{-1.688}}
    \pgfpathlineto{\pgfpointxy{1.700}{-1.738}}
    \pgfusepath{stroke}
  \end{pgfscope}
  \begin{pgfscope}
    \pgfpathmoveto{\pgfpointxy{1.550}{-1.888}}
    \pgfpathlineto{\pgfpointxy{0.150}{-1.888}}
    \pgfusepath{stroke}
  \end{pgfscope}
  \pgftext[at=\pgfpointadd{\pgfpointxy{0.850}{-1.713}}{\pgfpoint{0pt}{-0.0 \baselineskip}}]{$r = r_1+r_2$}
  \begin{pgfscope}
    \pgfpathmoveto{\pgfpointxy{0.900}{-1.338}}
    \pgfpathlineto{\pgfpointxy{0.850}{-1.538}}
    \pgfpathlineto{\pgfpointxy{0.800}{-1.338}}
    \pgfpathlineto{\pgfpointxy{0.900}{-1.338}}
    \pgfsetfillcolor{black}
    \pgfsetlinewidth{0.100pt}
    \pgfusepath{fill,stroke}
  \end{pgfscope}
  \begin{pgfscope}
    \pgfpathmoveto{\pgfpointxy{0.850}{-1.213}}
    \pgfpathlineto{\pgfpointxy{0.850}{-1.338}}
    \pgfusepath{stroke}
  \end{pgfscope}
  \begin{pgfscope}
    \pgfpathmoveto{\pgfpointxy{0.900}{-2.088}}
    \pgfpathlineto{\pgfpointxy{0.850}{-2.288}}
    \pgfpathlineto{\pgfpointxy{0.800}{-2.088}}
    \pgfpathlineto{\pgfpointxy{0.900}{-2.088}}
    \pgfsetfillcolor{black}
    \pgfsetlinewidth{0.100pt}
    \pgfusepath{fill,stroke}
  \end{pgfscope}
  \begin{pgfscope}
    \pgfpathmoveto{\pgfpointxy{0.850}{-1.888}}
    \pgfpathlineto{\pgfpointxy{0.850}{-2.088}}
    \pgfusepath{stroke}
  \end{pgfscope}
  \pgftext[at=\pgfpointadd{\pgfpointxy{0.850}{2.288}}{\pgfpoint{0pt}{0.5 \baselineskip}}]{$e_1+e_2$ in $r$}
  \begin{pgfscope}
    \pgfpathellipse{\pgfpointxy{2.825}{1.888}}{\pgfpointxy{0.100}{0}}{\pgfpointxy{0}{0.100}}
    \pgfusepath{stroke}
  \end{pgfscope}
  \pgftext[at=\pgfpointadd{\pgfpointxy{2.825}{1.288}}{\pgfpoint{0pt}{-0.0 \baselineskip}}]{$\vdots$}
  \begin{pgfscope}
    \pgfpathellipse{\pgfpointxy{2.825}{0.688}}{\pgfpointxy{0.100}{0}}{\pgfpointxy{0}{0.100}}
    \pgfusepath{stroke}
  \end{pgfscope}
  \begin{pgfscope}
    \pgfpathmoveto{\pgfpointxy{2.875}{1.738}}
    \pgfpathlineto{\pgfpointxy{2.825}{1.538}}
    \pgfpathlineto{\pgfpointxy{2.775}{1.738}}
    \pgfpathlineto{\pgfpointxy{2.875}{1.738}}
    \pgfsetfillcolor{black}
    \pgfsetlinewidth{0.100pt}
    \pgfusepath{fill,stroke}
  \end{pgfscope}
  \begin{pgfscope}
    \pgfpathmoveto{\pgfpointxy{2.825}{1.788}}
    \pgfpathlineto{\pgfpointxy{2.825}{1.738}}
    \pgfusepath{stroke}
  \end{pgfscope}
  \begin{pgfscope}
    \pgfpathmoveto{\pgfpointxy{2.875}{0.988}}
    \pgfpathlineto{\pgfpointxy{2.825}{0.788}}
    \pgfpathlineto{\pgfpointxy{2.775}{0.988}}
    \pgfpathlineto{\pgfpointxy{2.875}{0.988}}
    \pgfsetfillcolor{black}
    \pgfsetlinewidth{0.100pt}
    \pgfusepath{fill,stroke}
  \end{pgfscope}
  \begin{pgfscope}
    \pgfpathmoveto{\pgfpointxy{2.825}{1.038}}
    \pgfpathlineto{\pgfpointxy{2.825}{0.988}}
    \pgfusepath{stroke}
  \end{pgfscope}
  \begin{pgfscope}
    \pgfpathellipse{\pgfpointxy{2.825}{1.288}}{\pgfpointxy{0.300}{0}}{\pgfpointxy{0}{0.800}}
    \pgfsetdash{{0.050in}{0.025in}}{0cm}
    \pgfusepath{stroke}
  \end{pgfscope}
  \pgftext[at=\pgfpointadd{\pgfpointxy{3.175}{1.288}}{\pgfpoint{0pt}{-0.0 \baselineskip}},left]{$s_1$}
  \begin{pgfscope}
    \pgfpathellipse{\pgfpointxy{2.825}{0.088}}{\pgfpointxy{0.100}{0}}{\pgfpointxy{0}{0.100}}
    \pgfusepath{stroke}
  \end{pgfscope}
  \pgftext[at=\pgfpointadd{\pgfpointxy{2.825}{-0.512}}{\pgfpoint{0pt}{-0.0 \baselineskip}}]{$\vdots$}
  \begin{pgfscope}
    \pgfpathellipse{\pgfpointxy{2.825}{-1.112}}{\pgfpointxy{0.100}{0}}{\pgfpointxy{0}{0.100}}
    \pgfusepath{stroke}
  \end{pgfscope}
  \begin{pgfscope}
    \pgfpathmoveto{\pgfpointxy{2.875}{-0.062}}
    \pgfpathlineto{\pgfpointxy{2.825}{-0.262}}
    \pgfpathlineto{\pgfpointxy{2.775}{-0.062}}
    \pgfpathlineto{\pgfpointxy{2.875}{-0.062}}
    \pgfsetfillcolor{black}
    \pgfsetlinewidth{0.100pt}
    \pgfusepath{fill,stroke}
  \end{pgfscope}
  \begin{pgfscope}
    \pgfpathmoveto{\pgfpointxy{2.825}{-0.012}}
    \pgfpathlineto{\pgfpointxy{2.825}{-0.062}}
    \pgfusepath{stroke}
  \end{pgfscope}
  \begin{pgfscope}
    \pgfpathmoveto{\pgfpointxy{2.875}{-0.812}}
    \pgfpathlineto{\pgfpointxy{2.825}{-1.012}}
    \pgfpathlineto{\pgfpointxy{2.775}{-0.812}}
    \pgfpathlineto{\pgfpointxy{2.875}{-0.812}}
    \pgfsetfillcolor{black}
    \pgfsetlinewidth{0.100pt}
    \pgfusepath{fill,stroke}
  \end{pgfscope}
  \begin{pgfscope}
    \pgfpathmoveto{\pgfpointxy{2.825}{-0.762}}
    \pgfpathlineto{\pgfpointxy{2.825}{-0.812}}
    \pgfusepath{stroke}
  \end{pgfscope}
  \begin{pgfscope}
    \pgfpathellipse{\pgfpointxy{2.825}{-0.512}}{\pgfpointxy{0.300}{0}}{\pgfpointxy{0}{0.800}}
    \pgfsetdash{{0.050in}{0.025in}}{0cm}
    \pgfusepath{stroke}
  \end{pgfscope}
  \pgftext[at=\pgfpointadd{\pgfpointxy{3.175}{-0.512}}{\pgfpoint{0pt}{-0.0 \baselineskip}},left]{$s_2$}
  \begin{pgfscope}
    \pgfpathmoveto{\pgfpointxy{2.875}{0.388}}
    \pgfpathlineto{\pgfpointxy{2.825}{0.188}}
    \pgfpathlineto{\pgfpointxy{2.775}{0.388}}
    \pgfpathlineto{\pgfpointxy{2.875}{0.388}}
    \pgfsetfillcolor{black}
    \pgfsetlinewidth{0.100pt}
    \pgfusepath{fill,stroke}
  \end{pgfscope}
  \begin{pgfscope}
    \pgfpathmoveto{\pgfpointxy{2.825}{0.588}}
    \pgfpathlineto{\pgfpointxy{2.825}{0.388}}
    \pgfusepath{stroke}
  \end{pgfscope}
  \begin{pgfscope}
    \pgfpathmoveto{\pgfpointxy{2.875}{-1.412}}
    \pgfpathlineto{\pgfpointxy{2.825}{-1.612}}
    \pgfpathlineto{\pgfpointxy{2.775}{-1.412}}
    \pgfpathlineto{\pgfpointxy{2.875}{-1.412}}
    \pgfsetfillcolor{black}
    \pgfsetlinewidth{0.100pt}
    \pgfusepath{fill,stroke}
  \end{pgfscope}
  \begin{pgfscope}
    \pgfpathmoveto{\pgfpointxy{2.825}{-1.212}}
    \pgfpathlineto{\pgfpointxy{2.825}{-1.412}}
    \pgfusepath{stroke}
  \end{pgfscope}
  \pgftext[at=\pgfpointadd{\pgfpointxy{2.850}{2.288}}{\pgfpoint{0pt}{0.5 \baselineskip}}]{$s_1; s_2$}
  \begin{pgfscope}
    \pgfpathellipse{\pgfpointxy{5.225}{1.888}}{\pgfpointxy{0.100}{0}}{\pgfpointxy{0}{0.100}}
    \pgfusepath{stroke}
  \end{pgfscope}
  \pgftext[at=\pgfpointadd{\pgfpointxy{5.225}{1.288}}{\pgfpoint{0pt}{-0.0 \baselineskip}}]{$\vdots$}
  \begin{pgfscope}
    \pgfpathellipse{\pgfpointxy{5.225}{0.688}}{\pgfpointxy{0.100}{0}}{\pgfpointxy{0}{0.100}}
    \pgfusepath{stroke}
  \end{pgfscope}
  \begin{pgfscope}
    \pgfpathmoveto{\pgfpointxy{5.275}{1.738}}
    \pgfpathlineto{\pgfpointxy{5.225}{1.538}}
    \pgfpathlineto{\pgfpointxy{5.175}{1.738}}
    \pgfpathlineto{\pgfpointxy{5.275}{1.738}}
    \pgfsetfillcolor{black}
    \pgfsetlinewidth{0.100pt}
    \pgfusepath{fill,stroke}
  \end{pgfscope}
  \begin{pgfscope}
    \pgfpathmoveto{\pgfpointxy{5.225}{1.788}}
    \pgfpathlineto{\pgfpointxy{5.225}{1.738}}
    \pgfusepath{stroke}
  \end{pgfscope}
  \begin{pgfscope}
    \pgfpathmoveto{\pgfpointxy{5.275}{0.988}}
    \pgfpathlineto{\pgfpointxy{5.225}{0.788}}
    \pgfpathlineto{\pgfpointxy{5.175}{0.988}}
    \pgfpathlineto{\pgfpointxy{5.275}{0.988}}
    \pgfsetfillcolor{black}
    \pgfsetlinewidth{0.100pt}
    \pgfusepath{fill,stroke}
  \end{pgfscope}
  \begin{pgfscope}
    \pgfpathmoveto{\pgfpointxy{5.225}{1.038}}
    \pgfpathlineto{\pgfpointxy{5.225}{0.988}}
    \pgfusepath{stroke}
  \end{pgfscope}
  \begin{pgfscope}
    \pgfpathellipse{\pgfpointxy{5.225}{1.288}}{\pgfpointxy{0.300}{0}}{\pgfpointxy{0}{0.800}}
    \pgfsetdash{{0.050in}{0.025in}}{0cm}
    \pgfusepath{stroke}
  \end{pgfscope}
  \pgftext[at=\pgfpointadd{\pgfpointxy{5.575}{1.288}}{\pgfpoint{0pt}{-0.0 \baselineskip}},left]{$c$}
  \begin{pgfscope}
    \pgfpathellipse{\pgfpointxy{4.625}{0.087}}{\pgfpointxy{0.100}{0}}{\pgfpointxy{0}{0.100}}
    \pgfusepath{stroke}
  \end{pgfscope}
  \pgftext[at=\pgfpointadd{\pgfpointxy{4.625}{-0.513}}{\pgfpoint{0pt}{-0.0 \baselineskip}}]{$\vdots$}
  \begin{pgfscope}
    \pgfpathellipse{\pgfpointxy{4.625}{-1.113}}{\pgfpointxy{0.100}{0}}{\pgfpointxy{0}{0.100}}
    \pgfusepath{stroke}
  \end{pgfscope}
  \begin{pgfscope}
    \pgfpathmoveto{\pgfpointxy{4.675}{-0.063}}
    \pgfpathlineto{\pgfpointxy{4.625}{-0.263}}
    \pgfpathlineto{\pgfpointxy{4.575}{-0.063}}
    \pgfpathlineto{\pgfpointxy{4.675}{-0.063}}
    \pgfsetfillcolor{black}
    \pgfsetlinewidth{0.100pt}
    \pgfusepath{fill,stroke}
  \end{pgfscope}
  \begin{pgfscope}
    \pgfpathmoveto{\pgfpointxy{4.625}{-0.013}}
    \pgfpathlineto{\pgfpointxy{4.625}{-0.063}}
    \pgfusepath{stroke}
  \end{pgfscope}
  \begin{pgfscope}
    \pgfpathmoveto{\pgfpointxy{4.675}{-0.813}}
    \pgfpathlineto{\pgfpointxy{4.625}{-1.013}}
    \pgfpathlineto{\pgfpointxy{4.575}{-0.813}}
    \pgfpathlineto{\pgfpointxy{4.675}{-0.813}}
    \pgfsetfillcolor{black}
    \pgfsetlinewidth{0.100pt}
    \pgfusepath{fill,stroke}
  \end{pgfscope}
  \begin{pgfscope}
    \pgfpathmoveto{\pgfpointxy{4.625}{-0.763}}
    \pgfpathlineto{\pgfpointxy{4.625}{-0.813}}
    \pgfusepath{stroke}
  \end{pgfscope}
  \begin{pgfscope}
    \pgfpathellipse{\pgfpointxy{4.625}{-0.513}}{\pgfpointxy{0.300}{0}}{\pgfpointxy{0}{0.800}}
    \pgfsetdash{{0.050in}{0.025in}}{0cm}
    \pgfusepath{stroke}
  \end{pgfscope}
  \pgftext[at=\pgfpointadd{\pgfpointxy{4.975}{-0.513}}{\pgfpoint{0pt}{-0.0 \baselineskip}},left]{$s_1$}
  \begin{pgfscope}
    \pgfpathellipse{\pgfpointxy{5.825}{0.087}}{\pgfpointxy{0.100}{0}}{\pgfpointxy{0}{0.100}}
    \pgfusepath{stroke}
  \end{pgfscope}
  \pgftext[at=\pgfpointadd{\pgfpointxy{5.825}{-0.513}}{\pgfpoint{0pt}{-0.0 \baselineskip}}]{$\vdots$}
  \begin{pgfscope}
    \pgfpathellipse{\pgfpointxy{5.825}{-1.113}}{\pgfpointxy{0.100}{0}}{\pgfpointxy{0}{0.100}}
    \pgfusepath{stroke}
  \end{pgfscope}
  \begin{pgfscope}
    \pgfpathmoveto{\pgfpointxy{5.875}{-0.063}}
    \pgfpathlineto{\pgfpointxy{5.825}{-0.263}}
    \pgfpathlineto{\pgfpointxy{5.775}{-0.063}}
    \pgfpathlineto{\pgfpointxy{5.875}{-0.063}}
    \pgfsetfillcolor{black}
    \pgfsetlinewidth{0.100pt}
    \pgfusepath{fill,stroke}
  \end{pgfscope}
  \begin{pgfscope}
    \pgfpathmoveto{\pgfpointxy{5.825}{-0.013}}
    \pgfpathlineto{\pgfpointxy{5.825}{-0.063}}
    \pgfusepath{stroke}
  \end{pgfscope}
  \begin{pgfscope}
    \pgfpathmoveto{\pgfpointxy{5.875}{-0.813}}
    \pgfpathlineto{\pgfpointxy{5.825}{-1.013}}
    \pgfpathlineto{\pgfpointxy{5.775}{-0.813}}
    \pgfpathlineto{\pgfpointxy{5.875}{-0.813}}
    \pgfsetfillcolor{black}
    \pgfsetlinewidth{0.100pt}
    \pgfusepath{fill,stroke}
  \end{pgfscope}
  \begin{pgfscope}
    \pgfpathmoveto{\pgfpointxy{5.825}{-0.763}}
    \pgfpathlineto{\pgfpointxy{5.825}{-0.813}}
    \pgfusepath{stroke}
  \end{pgfscope}
  \begin{pgfscope}
    \pgfpathellipse{\pgfpointxy{5.825}{-0.513}}{\pgfpointxy{0.300}{0}}{\pgfpointxy{0}{0.800}}
    \pgfsetdash{{0.050in}{0.025in}}{0cm}
    \pgfusepath{stroke}
  \end{pgfscope}
  \pgftext[at=\pgfpointadd{\pgfpointxy{6.175}{-0.513}}{\pgfpoint{0pt}{-0.0 \baselineskip}},left]{$s_2$}
  \begin{pgfscope}
    \pgfpathmoveto{\pgfpointxy{4.812}{0.275}}
    \pgfpathlineto{\pgfpointxy{4.625}{0.188}}
    \pgfpathlineto{\pgfpointxy{4.749}{0.352}}
    \pgfpathlineto{\pgfpointxy{4.812}{0.275}}
    \pgfsetfillcolor{black}
    \pgfsetlinewidth{0.100pt}
    \pgfusepath{fill,stroke}
  \end{pgfscope}
  \begin{pgfscope}
    \pgfpathmoveto{\pgfpointxy{5.154}{0.617}}
    \pgfpathlineto{\pgfpointxy{4.780}{0.313}}
    \pgfusepath{stroke}
  \end{pgfscope}
  \begin{pgfscope}
    \pgfpathmoveto{\pgfpointxy{5.701}{0.352}}
    \pgfpathlineto{\pgfpointxy{5.825}{0.188}}
    \pgfpathlineto{\pgfpointxy{5.638}{0.275}}
    \pgfpathlineto{\pgfpointxy{5.701}{0.352}}
    \pgfsetfillcolor{black}
    \pgfsetlinewidth{0.100pt}
    \pgfusepath{fill,stroke}
  \end{pgfscope}
  \begin{pgfscope}
    \pgfpathmoveto{\pgfpointxy{5.296}{0.617}}
    \pgfpathlineto{\pgfpointxy{5.670}{0.313}}
    \pgfusepath{stroke}
  \end{pgfscope}
  \begin{pgfscope}
    \pgfpathmoveto{\pgfpointxy{5.103}{-1.546}}
    \pgfpathlineto{\pgfpointxy{5.225}{-1.713}}
    \pgfpathlineto{\pgfpointxy{5.039}{-1.623}}
    \pgfpathlineto{\pgfpointxy{5.103}{-1.546}}
    \pgfsetfillcolor{black}
    \pgfsetlinewidth{0.100pt}
    \pgfusepath{fill,stroke}
  \end{pgfscope}
  \begin{pgfscope}
    \pgfpathmoveto{\pgfpointxy{4.625}{-1.213}}
    \pgfpathlineto{\pgfpointxy{5.071}{-1.584}}
    \pgfusepath{stroke}
  \end{pgfscope}
  \begin{pgfscope}
    \pgfpathmoveto{\pgfpointxy{5.411}{-1.623}}
    \pgfpathlineto{\pgfpointxy{5.225}{-1.713}}
    \pgfpathlineto{\pgfpointxy{5.347}{-1.546}}
    \pgfpathlineto{\pgfpointxy{5.411}{-1.623}}
    \pgfsetfillcolor{black}
    \pgfsetlinewidth{0.100pt}
    \pgfusepath{fill,stroke}
  \end{pgfscope}
  \begin{pgfscope}
    \pgfpathmoveto{\pgfpointxy{5.825}{-1.213}}
    \pgfpathlineto{\pgfpointxy{5.379}{-1.584}}
    \pgfusepath{stroke}
  \end{pgfscope}
  \pgftext[at=\pgfpointadd{\pgfpointxy{5.250}{2.288}}{\pgfpoint{0pt}{0.5 \baselineskip}}]{\texttt{if ($c$) \{$s_1$\} else \{$s_2$\}}}
  \begin{pgfscope}
    \pgfpathellipse{\pgfpointxy{7.875}{1.888}}{\pgfpointxy{0.100}{0}}{\pgfpointxy{0}{0.100}}
    \pgfusepath{stroke}
  \end{pgfscope}
  \pgftext[at=\pgfpointadd{\pgfpointxy{7.875}{1.288}}{\pgfpoint{0pt}{-0.0 \baselineskip}}]{$\vdots$}
  \begin{pgfscope}
    \pgfpathellipse{\pgfpointxy{7.875}{0.688}}{\pgfpointxy{0.100}{0}}{\pgfpointxy{0}{0.100}}
    \pgfusepath{stroke}
  \end{pgfscope}
  \begin{pgfscope}
    \pgfpathmoveto{\pgfpointxy{7.925}{1.738}}
    \pgfpathlineto{\pgfpointxy{7.875}{1.538}}
    \pgfpathlineto{\pgfpointxy{7.825}{1.738}}
    \pgfpathlineto{\pgfpointxy{7.925}{1.738}}
    \pgfsetfillcolor{black}
    \pgfsetlinewidth{0.100pt}
    \pgfusepath{fill,stroke}
  \end{pgfscope}
  \begin{pgfscope}
    \pgfpathmoveto{\pgfpointxy{7.875}{1.788}}
    \pgfpathlineto{\pgfpointxy{7.875}{1.738}}
    \pgfusepath{stroke}
  \end{pgfscope}
  \begin{pgfscope}
    \pgfpathmoveto{\pgfpointxy{7.925}{0.988}}
    \pgfpathlineto{\pgfpointxy{7.875}{0.788}}
    \pgfpathlineto{\pgfpointxy{7.825}{0.988}}
    \pgfpathlineto{\pgfpointxy{7.925}{0.988}}
    \pgfsetfillcolor{black}
    \pgfsetlinewidth{0.100pt}
    \pgfusepath{fill,stroke}
  \end{pgfscope}
  \begin{pgfscope}
    \pgfpathmoveto{\pgfpointxy{7.875}{1.038}}
    \pgfpathlineto{\pgfpointxy{7.875}{0.988}}
    \pgfusepath{stroke}
  \end{pgfscope}
  \begin{pgfscope}
    \pgfpathellipse{\pgfpointxy{7.875}{1.288}}{\pgfpointxy{0.300}{0}}{\pgfpointxy{0}{0.800}}
    \pgfsetdash{{0.050in}{0.025in}}{0cm}
    \pgfusepath{stroke}
  \end{pgfscope}
  \pgftext[at=\pgfpointadd{\pgfpointxy{8.225}{1.288}}{\pgfpoint{0pt}{-0.0 \baselineskip}},left]{$s$}
  \begin{pgfscope}
    \pgfpathmoveto{\pgfpointxy{7.525}{0.063}}
    \pgfpatharc{180}{270}{0.075in}
    \pgfusepath{stroke}
  \end{pgfscope}
  \begin{pgfscope}
    \pgfpathmoveto{\pgfpointxy{7.675}{0.262}}
    \pgfpatharc{90}{179}{0.075in}
    \pgfusepath{stroke}
  \end{pgfscope}
  \begin{pgfscope}
    \pgfpathmoveto{\pgfpointxy{8.225}{0.113}}
    \pgfpatharc{0}{90}{0.075in}
    \pgfusepath{stroke}
  \end{pgfscope}
  \begin{pgfscope}
    \pgfpathmoveto{\pgfpointxy{8.075}{-0.087}}
    \pgfpatharc{-90}{0}{0.075in}
    \pgfusepath{stroke}
  \end{pgfscope}
  \begin{pgfscope}
    \pgfpathmoveto{\pgfpointxy{7.525}{0.063}}
    \pgfpathlineto{\pgfpointxy{7.525}{0.113}}
    \pgfusepath{stroke}
  \end{pgfscope}
  \begin{pgfscope}
    \pgfpathmoveto{\pgfpointxy{7.675}{0.263}}
    \pgfpathlineto{\pgfpointxy{8.075}{0.263}}
    \pgfusepath{stroke}
  \end{pgfscope}
  \begin{pgfscope}
    \pgfpathmoveto{\pgfpointxy{8.225}{0.113}}
    \pgfpathlineto{\pgfpointxy{8.225}{0.063}}
    \pgfusepath{stroke}
  \end{pgfscope}
  \begin{pgfscope}
    \pgfpathmoveto{\pgfpointxy{8.075}{-0.087}}
    \pgfpathlineto{\pgfpointxy{7.675}{-0.087}}
    \pgfusepath{stroke}
  \end{pgfscope}
  \pgftext[at=\pgfpointadd{\pgfpointxy{7.875}{0.088}}{\pgfpoint{0pt}{-0.0 \baselineskip}}]{\texttt{nop}}
  \begin{pgfscope}
    \pgfpathmoveto{\pgfpointxy{7.925}{0.463}}
    \pgfpathlineto{\pgfpointxy{7.875}{0.263}}
    \pgfpathlineto{\pgfpointxy{7.825}{0.463}}
    \pgfpathlineto{\pgfpointxy{7.925}{0.463}}
    \pgfsetfillcolor{black}
    \pgfsetlinewidth{0.100pt}
    \pgfusepath{fill,stroke}
  \end{pgfscope}
  \begin{pgfscope}
    \pgfpathmoveto{\pgfpointxy{7.875}{0.588}}
    \pgfpathlineto{\pgfpointxy{7.875}{0.463}}
    \pgfusepath{stroke}
  \end{pgfscope}
  \begin{pgfscope}
    \pgfpathmoveto{\pgfpointxy{7.576}{1.939}}
    \pgfpathlineto{\pgfpointxy{7.775}{1.888}}
    \pgfpathlineto{\pgfpointxy{7.575}{1.840}}
    \pgfpathlineto{\pgfpointxy{7.576}{1.939}}
    \pgfsetfillcolor{black}
    \pgfsetlinewidth{0.100pt}
    \pgfusepath{fill,stroke}
  \end{pgfscope}
  \begin{pgfscope}
    \pgfpathmoveto{\pgfpointxy{7.875}{-0.087}}
    \pgfpathlineto{\pgfpointxy{7.875}{-0.287}}
    \pgfpathlineto{\pgfpointxy{7.275}{-0.287}}
    \pgfpathlineto{\pgfpointxy{7.275}{1.893}}
    \pgfpathlineto{\pgfpointxy{7.575}{1.889}}
    \pgfusepath{stroke}
  \end{pgfscope}
  \pgftext[at=\pgfpointadd{\pgfpointxy{7.750}{2.288}}{\pgfpoint{0pt}{0.5 \baselineskip}}]{\texttt{loop \{$s$\}}}
\end{pgfpicture}
\ifx\Setlineno\undefined\else\Setlineno=103\fi

\end{gpic}

\caption{Examples of correspondences between expressions/statements
  and sub-graphs of a CFG.}
\label{f:rtlgen-decomp}

\end{figure}

In the present section, we give a relational, non-executable
specification of the translation: syntactic conditions under which
a \lang{RTL} function is an acceptable translation of a
\lang{CminorSel} function.  The key intuition captured by this
specification is that each subexpression or substatement of the
\lang{CminorSel} function should correspond to a sub-graph of the
\lang{RTL} control-flow graph.  For an expression $a$, this sub-graph
is identified by a start node $l_1$ and an end node $l_e$.  The
instructions on the paths from $l_1$ to $l_e$ should compute the value
of expression $a$, deposit it in a given destination register $r_d$,
and preserve the values of a given set of registers.  For a statement
$s$, the sub-graph has one start node but several end nodes
corresponding to the multiple ways in which a statement can terminate
(normally, by {\tt exit}, or by {\tt goto}).  Figure~\ref{f:rtlgen-decomp}
illustrates this correspondence.  As depicted there, the sub-graph
for a compound expression or compound statement such as $s_1; s_2$
contains sub-graphs for its components $s_1$ and $s_2$, suitably
connected.  In other words, the relational specification of the
translation describes a hierarchical decomposition of the CFG along
the structure of statements and expressions of the original
\lang{CminorSel} code.

The specification for expressions is the predicate
$g, \gamma, \pi \vdash a{\tt \ in\ }r_d \match l_1, l_2$, where $g$ is the CFG,
$\gamma$ an injective mapping from \lang{CminorSel} variables to the registers
holding their values, $\pi$ a set of registers that the instructions
evaluating $a$ must preserve (in addition to those in
$\Range(\gamma)$), $a$ the \lang{CminorSel} expression, $r_d$ the
register where its value must be deposited, and $l_1, l_2$ the start
and end nodes in the CFG.  The following rules give the flavor of the
specification:

\begin{pannel}
\irule{
\gamma(\id) = \Some{r_d}
}{
g, \gamma, \pi \vdash \id{\tt \ in\ }r_d \match l_1, l_1
}
\irule{
\gamma(\id) = \Some{r} \amper
g(l_1) = \Some{{\tt op}({\tt move}, r, r_d, l_2)} \amper
r_d \notin \Range(\gamma) \union \pi
}{
g, \gamma, \pi \vdash \id{\tt \ in\ }r_d \match l_1, l_2
}
\irule{
g, \gamma, \pi \vdash \vec a{\tt \ in\ }\vec r \match l_1, l \amper
g(l) = \Some{{\tt op}(\op, \vec r, r_d, l_2)} \amper
r_d \notin \Range(\gamma) \union \pi
}{
g, \gamma, \pi \vdash \op(\vec a){\tt \ in\ }r_d \match l_1, l_2
}
\srule{
g, \gamma, \pi \vdash \epsilon{\tt \ in\ }\epsilon \match l_1, l_1
}
\irule{
g, \gamma, \pi \vdash a{\tt \ in\ }r \match l_1, l \amper
g, \gamma, \pi \union \{r\} \vdash \vec a{\tt \ in\ }\vec r \match l, l_2 \amper
r \notin \Range(\gamma) \union \pi
}{
g, \gamma, \pi \vdash a.\vec a{\tt \ in\ }r.\vec r \match l_1, l_2
}
\end{pannel}
The freshness side-conditions $r \notin \Range(\gamma) \union \pi$
ensure that the temporary registers used to hold the values of
subexpressions of $a$ do not interfere with registers holding values
of \lang{CminorSel} variables ($\Range(\gamma)$) nor with temporary
registers holding values of previously computed subexpressions
(e.g.~in the expression ${\tt add}(a_1, a_2)$, the value of $a_1$ during
the computation of the value of $a_2$).
The specification for conditional expressions is similar:
$g, \gamma, \pi \vdash c \match l_1, \ltrue, \lfalse$,
but with two exit nodes, $\ltrue$ and $\lfalse$.  The
instruction sequence starting at $l_1$ should terminate at 
$\ltrue$ if $c$ evaluates to {\tt true} and at $\lfalse$ if $c$
evaluates to {\tt false}.  
Finally, the translation of statements is specified by the predicate
$g, \gamma \vdash s \match l_1, l_2, \lexit, \lgoto, \lret, \rret$,
where $\lexit$ is a list of nodes and $\lgoto$ a
mapping from \lang{CminorSel} labels to nodes.
The contract expressed by this complicated predicate is that
the instruction sequence starting at $l_1$ should compute whatever $s$
computes and branch to node $l_2$ if $s$ terminates normally, to
$\lexit(n)$ if $s$ terminates by ${\tt exit}(n)$, to $\lgoto(\lbl)$ if
$s$ performs ${\tt goto\ }\lbl$, and to $\lret$ if $s$ performs a
{\tt return} (after depositing the return value, if any, in register
$\rret$).  For simplicity, figure~\ref{f:rtlgen-decomp} depicts
only the $l_2$ final node, and the following sample rules consider
only the $l_2$ and $\lexit$ final nodes.
\begin{pannel}
\irule{
\gamma(\id) = \Some{r_d} \amper
g, \gamma, \emptyset \vdash a{\tt \ in\ }r_d \match l_1, l_2
}{
g, \gamma \vdash (\id = a) \match l_1, l_2, \lexit
}
\irule{
\gamma(\id) = \Some{r} \amper
g, \gamma, \emptyset \vdash a{\tt \ in\ }r_d \match l_1, l \amper
g(l) = \Some{{\tt op}({\tt move}, r_d, r, l_2)} 
}{
g, \gamma \vdash (\id = a) \match l_1, l_2, \lexit
}
\irule{
g, \gamma \vdash s_1 \match l_1, l, \lexit \amper
g, \gamma \vdash s_2 \match l, l_2, \lexit
}{
g, \gamma \vdash (s_1; s_2) \match l_1, l_2, \lexit
}
\irule{
g, \gamma, \emptyset \vdash c \match l_1, \ltrue, \lfalse \\
g, \gamma \vdash s_1 \match \ltrue, l_2, \lexit \amper
g, \gamma \vdash s_2 \match \lfalse, l_2, \lexit
}{
g, \gamma \vdash {\tt if}(c)\{s_1\}{\tt \ else\ }\{s_2\} \match l_1, l_2, \lexit
}
\irule{
g(l_1) = \Some{{\tt nop}(l)} \amper
g, \gamma \vdash s \match l, l_1, \lexit 
}{
g, \gamma \vdash {\tt loop}\{s\} \match l_1, l_2, \lexit
}
\irule{
g, \gamma \vdash s \match l_1, l_2, l_2.\lexit
}{
g, \gamma \vdash {\tt block}\{s\} \match l_1, l_2, \lexit
}
\irule{
\lexit(n) = \Some{l_1}
}{
g, \gamma \vdash {\tt exit}(n) \match l_1, l_2, \lexit
}
\end{pannel}
The specification for the translation of \lang{CminorSel} functions to
\lang{RTL} functions is, then: if $\tr{\fn} = \Some{\fn'}$, it must be
the case that 
$$\fn'.{\tt code}, \gamma \vdash \fn.{\tt body} \match \fn'.{\tt entrypoint}, l,
\epsilon, \lgoto, l, \rret$$
for some $l$, $\lgoto$ and injective $\gamma$; moreover, the CFG node
at $l$ must contain a ${\tt return}(\rret)$ instruction appropriate
to the signature of $\fn$.
\subsection{Semantic preservation} \label{s:rtlgen-preservation}
We now outline the proof of semantic preservation for RTL generation.
Consider a \lang{CminorSel} program $\prog$ and an \lang{RTL} program
$\prog'$, with $G, G'$ being the corresponding global environments.
Assuming that $\prog'$ is an acceptable translation of $\prog$
according to the relational specification of
section~\ref{s:rtlgen-spec}, we show that executions of $\prog$ are
simulated by executions of $\prog'$.  
For the evaluation of \lang{CminorSel} expressions, the simulation
argument is of the form ``if expression $a$ evaluates to value $v$,
the generated \lang{RTL} code performs a sequence of transitions from
$l_1$ to $l_2$, where $l_1, l_2$ delimit the sub-graph of the CFG
corresponding to $a$, and leaves $v$ in the given destination register
$r_d$''.  Agreement between a \lang{CminorSel} environment $E$
and an \lang{RTL} register state $R$ is written $\gamma \vdash E \match R$
and defined as $R(\gamma(x)) = E(x)$ for all $x \in \Dom(\gamma)$.
\begin{lemma} \label{l:rtlgen-exprs}
Assume $\fn.{\tt code}, \gamma, \pi \vdash a{\tt \ in\ }r_d \match l_1, l_2$
and $\gamma$ is injective.  \\
If $G, \sp, E, M \vdash a \eval v$ and $\gamma \vdash E \match R$,
there exists $R'$ such that
\begin{enumerate}
\item The \lang{RTL} code executes from $l_1$ to $l_2$:
$$G' \vdash \state(\stk, \fn, \sp, l_1, R, M) \redm{\epsilon}
             \state(\stk, \fn, \sp, l_2, R', M)$$
\item Register $r_d$ contains the value of $v$ at the end of this
  execution: $R'(r_d) = v$
\item The values of preserved registers are unchanged:
  $R'(r) = R(r)$ for all $r \in \Range(\gamma) \union \pi$.
This implies $\gamma \vdash E \match R'$ in particular.
\end{enumerate}
\end{lemma}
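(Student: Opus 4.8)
The plan is to prove the lemma by induction on the derivation of $G, \sp, E, M \vdash a \eval v$, carried out mutually with a companion statement for lists of expressions, since the specification rule for $\op(\vec a)$ appeals both to the list evaluation $G, \sp, E, M \vdash \vec a \eval \vec v$ and to the list form $g, \gamma, \pi \vdash \vec a{\tt \ in\ }\vec r \match l_1, l_2$ of the translation relation. The companion statement reads: under $g, \gamma, \pi \vdash \vec a{\tt \ in\ }\vec r \match l_1, l_2$ with $\gamma$ injective, if $G, \sp, E, M \vdash \vec a \eval \vec v$ and $\gamma \vdash E \match R$, then there is an $R'$ with $G' \vdash \state(\stk, \fn, \sp, l_1, R, M) \redm{\epsilon} \state(\stk, \fn, \sp, l_2, R', M)$, with $R'(\vec r) = \vec v$ componentwise, and with $R'(r) = R(r)$ for every $r \in \Range(\gamma) \union \pi$. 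The conditional form requires folding in an analogous companion for the condition-expression relation $g, \gamma, \pi \vdash ce \match l_1, \ltrue, \lfalse$; the memory-load form is handled by the list companion followed by a single {\tt load} instruction, using that ${\tt eval{\char95}mode}$ agrees once $R_1(\vec r) = \vec v$ has been established. All of these belong to the same simultaneous induction. Note that the memory state $M$ and the trace stay fixed throughout (the trace is $\epsilon$), because \lang{CminorSel} expressions are pure and the emitted \lang{RTL} instructions ({\tt nop}, {\tt op}, {\tt move}, {\tt load}) neither modify memory nor call external functions.

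In each case I would invert the evaluation judgment to expose the evaluations of the immediate subexpressions, then case-analyze the specification relation to fix the shape of the generated code --- for instance a variable $\id$ admits the axiom with $\gamma(\id) = \Some{r_d}$ and $l_1 = l_2$ (take $R' = R$ and a zero-step execution), and the rule inserting ${\tt op}({\tt move}, r, r_d, l_2)$ into a fresh $r_d$ (take $R' = R\{r_d \becomes R(r)\}$ and one step). I would then apply the induction hypotheses to the subderivations and chain the resulting $\redm{\epsilon}$ sequences, appending the single terminal instruction at the end node that performs the operation proper. Conclusion (1) is immediate by transitivity of the multi-step relation; conclusion (2) holds because that terminal instruction deposits exactly $v$ in $r_d$, using for $\op(\vec a)$ the fact that ${\tt eval{\char95}op}(G, \sp, \op, R_1(\vec r)) = {\tt eval{\char95}op}(G, \sp, \op, \vec v) = \Some{v}$ once the list hypothesis has established $R_1(\vec r) = \vec v$; and conclusion (3) holds because the only register written by the appended instruction is $r_d$, which the freshness side-condition $r_d \notin \Range(\gamma) \union \pi$ places outside the preserved set. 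The agreement $\gamma \vdash E \match R'$ then follows from (3) restricted to $\Range(\gamma)$.

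The delicate point, and the step I expect to be the main obstacle, is propagating the non-interference invariant through the list case. Translating $a.\vec a$ into $r.\vec r$ evaluates $a$ into $r$ under preserved set $\pi$, then evaluates $\vec a$ under the \emph{enlarged} set $\pi \union \{r\}$, subject to $r \notin \Range(\gamma) \union \pi$. After the expression hypothesis yields $R_1$ with $R_1(r) = v$, I would invoke the list hypothesis with preserved set $\pi \union \{r\}$ to obtain $R'$; the essential observation is that $R'(r) = R_1(r) = v$ holds precisely \emph{because} $r$ was added to the preserved set, so the value computed for the head survives the computation of the tail and one genuinely recovers $R'(r.\vec r) = v.\vec v$. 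Keeping this register bookkeeping consistent with the monotonically growing $\pi$ across every case --- thereby guaranteeing that no temporary ever clobbers a variable register in $\Range(\gamma)$ or an already-computed intermediate result --- is where essentially all of the formal effort resides.
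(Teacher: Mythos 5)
Your proposal is correct and follows essentially the same route as the paper: the paper proves this lemma by induction on the \lang{CminorSel} evaluation derivation, together with a companion lemma for condition expressions (and, implicitly, for expression lists, which is where the preserved-set $\pi \union \{r\}$ bookkeeping you highlight does its work). The details you give — zero steps for a variable already in $\gamma(\id)$, one appended instruction writing only the fresh $r_d \notin \Range(\gamma) \union \pi$, and the head register surviving the tail's evaluation because it was added to $\pi$ — are exactly the content of the paper's (unspelled-out) inductive argument.
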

This lemma, along with a similar lemma for condition expressions, is
proved by induction on the \lang{CminorSel} evaluation derivation.
To relate \lang{CminorSel} and \lang{RTL} execution states, we need to
define a correspondence between \lang{CminorSel} continuations and
\lang{RTL} call stacks.  A continuation $k$ interleaves two aspects
that are handled separately in \lang{RTL}.  The parts of $k$ that lie
between {\tt returnto} markers, namely the continuations $s; k'$ and
${\tt endblock}(k')$, correspond to execution paths within the current
function.  These paths connect the several possible end points for the
current statement with the final {\tt return} of the function.  Just as
a statement $s$ is associated with a ``fan-out'' subgraph of the CFG
(one start point, several end points), this part of $k$ is associated
with a ``fan-in'' subgraph of the CFG (several start points, one end
point corresponding to a {\tt return} instruction).
The other parts of $k$, namely the {\tt returnto} markers, are in
one-to-one correspondence with frames on the \lang{RTL} call stack
$\stk$. 
We formalize these intuitions using two mutually
inductive predicates: $g, \gamma \vdash k \match l_2, \lexit, \lgoto,
\lret, \rret, \stk$ for the local part of the continuation $k$,
and $k \match S$ for call continuations $k$.  The definitions are
omitted for brevity.  The invariant between \lang{CminorSel} states
and \lang{RTL} states is, then, of the following form.
\begin{pannel}
\irule{
g, \gamma \vdash s \match l_1, l_2, \lexit, \lgoto, \lret, \rret \amper
g, \gamma \vdash k \match l_2, \lexit, \lgoto, \lret, \rret, \stk \amper
\gamma \vdash E \match R
}{
\state(\fn, s, k, \sp, E, M) \match \state(\stk, g, \sp, l_1, R, M)
}
\irule{
\tr{\fd} = \Some{\fd'} \amper k \match \stk
}{
\callstate(\fd, \vec v, k, M) \match \callstate(\stk, \fd', \vec v, M)
}
\irule{
k \match \stk
}{
\returnstate(v, k, M) \match \returnstate(\stk, v, M)
}
\end{pannel}
The proof of semantic preservation for statements and functions is
a simulation diagram of the ``star'' kind (see
section~\ref{s:simulation-diagrams}).  Several 
\lang{CminorSel} transitions become no-operations in the translated
\lang{RTL} code, such as self assignments $\id = \id$ and ${\tt exit}(n)$
constructs.  We therefore need to define a measure over
\lang{CminorSel} states that decreases on such potentially stuttering
steps.  After trial and error, an appropriate measure for regular
states is the lexicographically-ordered pair of nonnegative integers
$\size{\state(\fn, s, k, \sp, E, M)} = (\size{s} + \size{k}, \size{s})$
where $\size{s}$ is the number of nodes in the abstract syntax tree
for $s$, and
$$ \size{s;k} = 1 + \size{s} + \size{k}
\qquad
   \size{{\tt endblock}(k)} = 1 + \size{k}
\qquad
   \size{k} = 0 \mbox{ otherwise.} $$
The measure for call states and return states is $(0, 0)$.
\begin{lemma} \label{l:rtlgen-simu}
If $G \vdash \st_1 \redtrace{t} \st_2$ and $\st_1 \match \st_1'$,
either
there exists $\st_2'$ such that $G' \vdash \st_1' \redp{t} \st_2'$
and $\st_1' \match \st_2'$,
or
$\size{\st_2} < \size{\st_1}$ and
there exists $\st_2'$ such that $G' \vdash \st_1' \redm{t} \st_2'$
and $\st_1' \match \st_2'$.
\end{lemma}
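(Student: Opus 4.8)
The plan is to prove this star-simulation diagram by a case analysis on the \lang{CminorSel} transition $G \vdash \st_1 \redtrace{t} \st_2$, inverting the matching hypothesis $\st_1 \match \st_1'$ in each case to expose the relational specification that governs $\st_1'$. For a regular state $\st_1 = \state(\fn, s, k, \sp, E, M)$, inversion yields a CFG $g$, a register state $R$, a start node $l_1$ and continuation nodes such that $g, \gamma \vdash s \match l_1, l_2, \lexit, \lgoto, \lret, \rret$, the continuation match $g, \gamma \vdash k \match l_2, \ldots, \stk$, and the agreement $\gamma \vdash E \match R$. A second inversion on the structure of $s$ extracts the precise shape of the relevant sub-graph (e.g.\ for $s_1; s_2$ an intermediate node $l$ with $g, \gamma \vdash s_1 \match l_1, l, \lexit$ and $g, \gamma \vdash s_2 \match l, l_2, \lexit$). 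The disjunction in the conclusion is resolved case by case: genuine computation steps produce a ``plus'' execution, while pure focusing and stuttering steps produce zero \lang{RTL} transitions and must be discharged by the measure decrease.

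For the computational cases---assignment, memory store, conditional test, (tail)call and return---I would appeal to lemma~\ref{l:rtlgen-exprs} and its companions for lists of expressions and for condition expressions. These furnish an $\epsilon$-labelled \lang{RTL} execution from the sub-graph's start node $l_1$ to its end node, leaving the computed value(s) in the destination register(s) while preserving the registers in $\Range(\gamma) \union \pi$; where the relational rule demands it, a single {\tt op}, {\tt store}, {\tt cond} or {\tt call} instruction completes the step. Each case establishes the first (``plus'') alternative. Reconstructing $\st_1' \match \st_2'$ at the target is routine: the preservation clause of lemma~\ref{l:rtlgen-exprs} gives the updated agreement $\gamma \vdash E\{\id \becomes v\} \match R\{r_d \becomes v\}$, and the continuation match from the original invariant is exactly what is needed once the statement has reduced to {\tt skip} at node $l_2$, since the specification makes the end node of $s$ coincide with the start node recorded in the continuation match for $k$. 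The call case additionally requires exhibiting the call-continuation match $k \match \stk'$ for the pushed frame, which is built directly from the {\tt returnto} marker and the incoming continuation match.

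For the focusing and stuttering steps---sequencing, {\tt loop}, {\tt block}, the two {\tt skip} rules that resume a continuation, {\tt exit}, and self-assignments such as $\id = \id$---the translated code takes no transition (or at most a {\tt nop}), so I would select the second alternative and verify $\size{\st_2} < \size{\st_1}$. Here the lexicographic measure $(\size{s}+\size{k}, \size{s})$ is tailored to exactly these cases: pushing a statement into the continuation (for instance moving from $(s_1;s_2)$ with continuation $k$ to $s_1$ with continuation $s_2;k$) leaves the first component unchanged while strictly shrinking the second, whereas resuming a continuation (as in {\tt skip} under $s;k$) strictly shrinks the first. I would check the corresponding arithmetic inequality in each subcase directly against the definitions of $\size{s;k}$ and $\size{{\tt endblock}(k)}$.

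The main obstacle will be the {\tt goto} case, which performs a non-local jump. The transition uses ${\tt findlabel}(\lbl, \fn.{\tt body}, {\tt callcont}(k))$ to recover a statement/continuation pair $(s', k')$, and I must show that the \lang{RTL} code branches to $\lgoto(\lbl)$ and that the resulting state matches. This requires an auxiliary ``findlabel compatibility'' lemma: whenever the whole function body satisfies the relational specification with goto-mapping $\lgoto$ and {\tt findlabel} succeeds with $(s', k')$, then $g, \gamma \vdash s' \match \lgoto(\lbl), \ldots$ and the corresponding continuation match holds for $k'$---in effect, that the hierarchical decomposition of the CFG is globally consistent with the label map. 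Proving this lemma, by structural induction on the body following the recursive structure of {\tt findlabel} and {\tt callcont}, together with the correct threading of the {\tt returnto}/call-continuation markers, is the delicate part; the remaining cases (external calls generating the trace event $t$, internal-function entry, and the plain transitions between the three kinds of states per figure~\ref{f:state-auto}) follow fixed patterns and are comparatively mechanical.
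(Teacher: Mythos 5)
Your proposal is correct and follows essentially the same route as the paper, whose own proof of this lemma is simply ``a long case analysis on the transition $G \vdash \st_1 \redtrace{t} \st_2$'' relying on the infrastructure already in place: lemma~\ref{l:rtlgen-exprs} (and its companions for expression lists and conditions) for the computational cases, the lexicographic measure for the focusing/stuttering cases, and the mutually inductive state-matching relation for reassembling $\st_1' \match \st_2'$. Your identification of the {\tt goto} case as requiring a separate compatibility lemma between {\tt findlabel}/{\tt callcont} and the $\lgoto$ component of the relational specification is a faithful reconstruction of a detail the paper delegates to the Coq development.
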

The proof is a long case analysis on the transition
$G \vdash \st_1 \redtrace{t} \st_2$.  Semantic preservation for RTL
generation then follows from theorem~\ref{th:star-simulation}.
\subsection{Functional implementation of the translation} \label{s:rtlgen-impl}
We now return to the question left open at the beginning of
section~\ref{s:rtlgen-preservation}: how to define the generation of
\lang{RTL} as a Coq function?  Naturally, the translation proceeds by
a recursive traversal of \lang{CminorSel} expressions and statements,
incrementally adding the corresponding instructions to the CFG and
generating fresh temporary registers to hold intermediate results
within expressions.  Additionally, the translation may fail, e.g.~if
an undeclared local variable is referenced or a statement
label is defined several times.  This would cause no
programming difficulties in a language featuring mutation and
exceptions, but these luxuries are not available in Coq, which is a
pure functional language.
We therefore use a monadic programming style using the state-and-error
monad.  The compile-time state is a triple $(g, l, r)$ where
$g$ is the current state of the CFG, $l$ the next unused CFG node
label, and $r$ the next unused pseudo-register.  Every translation
that computes (imperatively) a result of type $\alpha$ becomes a pure
function with type
${\tt mon}(\alpha) = {\tt state} \fun {\tt Error} \mid {\tt OK}({\tt state}\times\alpha)$.
Besides the familiar {\tt ret} and {\tt bind} monadic combinators, the basic
operations of this monad are:
\begin{itemize}
\item ${\tt newreg} : {\tt mon}({\tt reg})$ generates a fresh temporary register
  (by incrementing the $r$ component of the state);
\item ${\tt add{\char95}instr}(i) : {\tt mon}({\tt node})$ allocates a fresh CFG node $l$,
  stores the instruction $i$ in this node, and returns $l$;
\item ${\tt reserve{\char95}instr} : {\tt mon}({\tt node})$ allocates and returns a fresh
  CFG node, leaving it empty;
\item ${\tt update{\char95}instr}(l, i) : {\tt mon}({\tt unit})$ stores instruction $i$
  in node $l$, raising an error if node $l$ is not empty.
\end{itemize}
(The latter two operations are used when compiling loops and labeled
statements.)  The translation functions, then, are of the following
form:
$$\begin{array}{lll}
{\tt transl{\char95}expr}(\gamma, a, r_d, l_2) & : & {\tt mon}({\tt node}) \\
{\tt transl{\char95}exprlist}(\gamma, \vec a, \vec{r_d}, l_2) & : & {\tt mon}({\tt node}) \\
{\tt transl{\char95}condition}(\gamma, c, \ltrue, \lfalse) & : & {\tt mon}({\tt node}) \\
{\tt transl{\char95}stmt}(\gamma, s, l_2, \lexit, \lgoto, \lret, \rret) & : & {\tt mon}({\tt node})
\end{array}$$
These functions recursively add to the CFG the instructions that
compute the given expression or statement and branch to the given
end nodes $l_2, \lexit, \ldots$.  Each call returns the node of the first
instruction in this sequence (the node written $l_1$ in the relational
specification).  The following Coq excerpts from {\tt transl{\char95}stmt}
should give the flavor of the translation functions:
\begin{alltt}
  match s with
  | Sskip =>
      ret l2
  | Sassign v a =>
      do rd <- new_reg;
      do rv <- find_var \(\gamma\) v;
      do l  <- add_instr (Iop Omove (rd::nil) rv l2);
      transl_expr \(\gamma\) a rd l
  | Sseq s1 s2 =>
      do l <- transl_stmt \(\gamma\) s2 l2 lexit lgoto lret rret;
      transl_stmt \(\gamma\) s1 l lexit lgoto lret rret
  | Sloop s =>
      do l <- reserve_instr;
      do l' <- transl_stmt \(\gamma\) s l lexit lgoto lret rret;
      do x  <- update_instr l (Inop l');
      ret l
\end{alltt}
Inspired by Haskell, {\tt do}~$x$~{\tt <-}~$a; b$ is a user-defined Coq
notation standing for ${\tt bind}~a~(\lambda x.b)$.
Two syntactic invariants of the state play a crucial role in proving
the correctness of the generated CFG against the relational
specification of section~\ref{s:rtlgen-spec}.
First, in a compile-time state $(g, l, r)$, all
CFG nodes above $l$ must be empty:
$\forall l' \ge l,\allowbreak\; g(l') = \None$.
Second, the state evolves in a monotone way: nodes are only added to
the CFG, but an already filled node is never modified; likewise,
temporary registers are never reused.  (If this were not the case,
correct sub-graphs constructed by recursive invocations to the {\tt transl}
functions could become incorrect after later modifications of the CFG.)
We define this monotonicity property as a partial order~$\preceq$ between
states:
$$ (g_1, l_1, r_1) \preceq (g_2, l_2, r_2) \defequal
   l_1 \le l_2 \wedge r_1 \le r_2 \wedge 
   (\forall l,i,~ g_1(l) = \Some{i} \Rightarrow g_2(l) = \Some{i}) $$
It is straightforward but tedious to show that these two invariants
are satisfied by the monadic translation functions, since they hold
for the basic operations of the monad and are preserved by monadic
{\tt bind} composition.  However, we can avoid much proof effort by
taking advantage of Coq's dependent types.  The first invariant (of
one state) can be made a part of the state itself, which
becomes a dependent record type:
\begin{verbatim}
Record state: Set := mkstate {
  st_nextreg: reg;
  st_nextnode: node;
  st_code: graph;
  st_wf: forall (l: node), l >= st_nextnode -> st_code!l = None
}.
\end{verbatim}
Note the $4\th$ field {\tt st{\char95}wf}, which is a proof term that the
first invariant holds.
The second invariant (the partial order $\preceq$) is more difficult, as
it involves two states.  However, it can be expressed in the
definition of the ${\tt mon}(\alpha)$ type by turning the function type
${\tt state} \rightarrow \ldots$ into a dependent function type of the shape
${\Pi (x:{\tt state}).} \allowbreak\; \{ y \mid P~x~y \}$.
Here is the Coq definition of the
dependently-typed monad:
\begin{alltt}
Inductive res (A: Set) (s: state): Set :=
  | Error: res A s
  | OK: A -> forall (s': state), s \(\preceq\) s' -> res A s.
Definition mon (A: Set) : Set := forall (s: state), res A s.
\end{alltt}
The result of a successful monadic computation of type ${\tt mon}(\alpha)$
starting in state $s$ is ${\tt OK}(x, s', \pi)$ where $x: \alpha$ is the
return value, $s'$ the final state, and $\pi$ a proof term for the
proposition $s \preceq s'$.  The two invariants need to be proved when
defining the basic operations of the monad; for instance, in the case
of {\tt ret} and {\tt bind}, the corresponding proofs amount to reflexivity
and transitivity of $\preceq$.  However, they then
automatically hold for all
computations in the dependently-typed monad.

\section{Optimizations based on dataflow analysis} \label{s:dataflow}

We now describe two optimization passes performed on the \lang{RTL}
form: constant propagation and common subexpression elimination.  Both
passes make use of a generic solver for dataflow inequations, which we
describe first.

\subsection{Generic solvers for dataflow inequations} \label{s:dataflow-solvers}

We formalize forward dataflow analyses as follows.  We are given a
control-flow graph (as a function ${\tt successors}: {\tt node} \fun
{\tt list}({\tt node})$) and a transfer function $T: {\tt node} \times {\cal A}
\fun {\cal A}$, where $\cal A$ is the type of abstract values (the results
of the analysis), equipped with a partial order $\ge$.  Intuitively,
$T(l)$ computes the abstract value ``after'' the instruction at point $l$
as a function of the abstract value ``before'' this instruction.
We are also given a set ${\tt cstrs}$ of pairs of a CFG node
and an abstract value $a : \cal A$, representing e.g.~requirements on the CFG
entry point.  The result of forward dataflow analysis is a solution
$A: {\tt node} \fun {\cal A}$ to the following forward dataflow inequations:
\begin{eqnarray*}
A(s) & \ge & T(l, A(l)) \mbox{ for all $s \in {\tt successors}(l)$} \\
A(l) & \ge & a          \mbox{ for all $(l, a) \in {\tt cstrs}$}
\end{eqnarray*}
We formalize dataflow analysis as inequations instead of the usual
equations $A(l) = \bigsqcup \, \{ T(p, A(p)) \mid l \in {\tt successors}(p) \}$
because we are interested only in the correctness of the solutions,
not in their optimality.

Two solvers for dataflow inequations are provided as Coq functors,
that is, modules parameterized by a module defining the type $\cal A$ and its
operations.  The first solver implements Kildall's
worklist algorithm \cite{Kildall-73}.  It is applicable if the type
$\cal A$ is equipped with a decidable equality, a
least element $\bot$ and an upper bound operation $\sqcup$.  (Again,
since we are not interested in optimality of the results, $\sqcup$ is
not required to compute the least upper bound.)  The second solver
performs propagation over extended basic blocks, setting $A(l) = \top$
in the solution for all points $l$ that have several predecessors.
The only requirement over the type $\cal A$ is that it possesses a
greatest element $\top$.  This propagation-based solver is useful in
cases where upper bounds do not always exist or are too expensive to
compute.

Several mechanical verifications of Kildall's worklist algorithm have
been published already 
\cite{Barthe-02,Coupet-Grimal-Delobel-05,Klein-Nipkow-jinja,%
Bertot-Gregoire-Leroy-05}.
Therefore, we do not detail the correctness proofs for our solvers,
referring the reader to the cited papers and to the Coq development
for more details.

The solvers actually return an option type, with $\None$ denoting
failure and $\Some A$ denoting success with solution $A$.  For
simplicity, we do not require that the CFG is finite, nor in the case
of Kildall's algorithm that the $\ge$ ordering over $A$ is well
founded (no infinite ascending chains).  Consequently, we cannot
guarantee termination of the solvers and must allow them to fail, at
least formally.  The implementations of the two solvers bound the
number of iterations performed to find a fixed point, returning
$\None$ if a solution cannot be found in $N$ iterations, where $N$ is
a very large constant.  Alternatively, unbounded iteration can be
implemented using the approach of Bertot and Komendantsky \cite{Bertot-Komendantsky}, which uses
classical logic and Tarski's theorem to model general recursion in
Coq.  Yet another alternative is to
use the ``verified validator'' approach, where the computation of the
solution is delegated to external, untrusted code, then verified a
posteriori to satisfy the dataflow inequations.
In all these approaches, if the static analysis fails, we can either
abort the compilation process or simply turn off the corresponding
optimization pass, returning the input code unchanged.

Solvers for backward dataflow inequations can be easily derived from
the forward solvers by reversing the edges of the control-flow graph.
In the backward case, the transfer function
$T(l)$ computes the abstract value ``before'' the instruction at point $l$
as a function of the abstract value ``after'' this instruction.  The
solution $A$ returned by the solvers satisfies the backward dataflow
inequations:
\begin{eqnarray*}
A(l) & \ge & T(s, A(s)) \mbox{ for all $s \in {\tt successors}(l)$} \\
A(l) & \ge & a          \mbox{ for all $(l, a) \in {\tt cstrs}$}
\end{eqnarray*}

\subsection{Constant propagation} \label{s:constprop}

\subsubsection{Static analysis}

Constant propagation for a given function starts with a forward
dataflow analysis using the following domain of abstract values:
$$ {\cal A} \defequal r \mapsto (\top \alt \bot \alt {\tt Int}(n) \alt
{\tt Float}(f) \alt {\tt Addrsymbol}(\id + \ofs) $$
That is, at each program point and for each register $r$, we record
whether its value at this point is statically known to be equal to an
integer $n$, a float $f$, or the address of a symbol $\id$ plus an
offset $\delta$, or is unknown ($\top$), or whether this
program point is unreachable ($\bot$).  Kildall's algorithm is used to
solve the dataflow inequations, with the additional constraint that
$A(\fn.{\tt entrypoint}) \ge (r \mapsto \top)$.
If it fails to find a proper solution $A$, we take the trivial
solution $A(l) = (r \mapsto \top)$, effectively turning off the
optimization.  For each function $\fn$ of the program, we write
${\tt analyze}(\fn)$ for the solution (proper or trivial) of the dataflow
equations for $\fn$.

The transfer function $T_\fn$ is the obvious abstract interpretation of
\lang{RTL}'s semantics on this domain:
$$
T_\fn(l, a) = \fourcases
  {a\{r \becomes \overline{{\tt eval{\char95}op}}(\op, a(\vec r))\}}
      {if $\fn.{\tt code}(l) = \Some{{\tt op}(\op, \vec r, r, l')}$}
  {a\{r \becomes \top\}}
      {if $\fn.{\tt code}(l) = \Some{{\tt load}(\chunk, \mode, \vec r, r, l')}$}
  {a\{r \becomes \top\}}
      {if $\fn.{\tt code}(l) = \Some{{\tt call}(\sig, \_, r, l')}$}
  {a}
      {otherwise}
$$
Here, $\overline{{\tt eval{\char95}op}}$ is the abstract interpretation over the
domain $\cal A$ of the {\tt eval{\char95}op} function defining the semantics of
operators.  By lack of an alias analysis, we do not attempt to track
constant values stored in memory; therefore, the abstract value of the
result of a {\tt load} is $\top$.

\subsubsection{Code transformation} 

The code transformation exploiting the results of this analysis is
straightforward: {\tt op} instructions become ``load constant''
instructions if the values of all argument registers are statically
known; {\tt cond} instructions where the condition can be statically
determined to be always {\tt true} or always {\tt false} are turned into {\tt nop}
instructions branching to the appropriate successor; finally,
operators, conditions and addressing modes are specialized to cheaper
immediate forms if the values of some of their arguments are
statically known.  The structure of the control-flow graph is
preserved (no node is inserted nor deleted), making this transformation
easy to express as a morphism over the CFG.  Parts of the CFG can
become unreachable as a consequence of statically resolving some
{\tt cond} instructions.  The corresponding instructions, as well as the
{\tt nop} instructions that replace the statically-resolved {\tt cond}
instructions, will be removed later during branch tunneling and CFG
linearization (sections~\ref{s:tunneling} and~\ref{s:linearize}).

\subsubsection{Semantic preservation} \label{s:constprop-preservation}

The proof of semantic preservation for constant propagation is based
on a ``lock-step'' simulation diagram.  The central invariant of the
diagram is the following: at every program point $l$ within a function
$\fn$, the concrete values $R(r)$ of registers $r$ must agree with the
abstract values ${\tt analyze}(\fn)(l)(r)$ predicted by the dataflow
analysis.  Agreement between a concrete and an abstract value is
written $\models a : v$ and defined as follows.
\begin{pannel}
\srule{ \models v : \top }
\srule{ \models {\tt int}(n) : {\tt Int}(n) }
\srule{ \models {\tt float}(f) : {\tt Float}(f) }
\irule{
{\tt symbol}(G, \id) = \Some b
}{
\models {\tt ptr}(b,\delta) : {\tt Addrsymbol}(\id + \delta) 
}
\end{pannel}
We write $\models R : A$ to mean $\models R(r) : A(r)$ for all registers $r$.
The first part of the proof shows the correctness of the abstract
interpretation with respect to this agreement relation.  For example,
if ${\tt eval{\char95}op}(\op, \vec v) = \Some v$ and
$\models \vec v: \vec a$, we show that $\models v : \overline{{\tt eval{\char95}op}}(\op, \vec a)$.
Likewise, we show that the specialized forms of operations, addressing
modes and conditions produced by the code transformation compute the
same values as the original forms, provided the concrete arguments
agree with the abstract values used to determine the specialized
forms.  These proofs are large but straightforward case analyses.
We then define the relation between pairs of \lang{RTL} states that
is invariant under execution steps.
\begin{pannel}
\irule{
\tr{\fn} = \Some{\fn'} \amper \stk \match \stk' \amper \models R : {\tt analyze}(\fn)(l)
}{
\state(\stk, \fn.{\tt code}, \sp, l, R, M) \match
\state(\stk', \fn'.{\tt code}, \sp, l, R, M)
}
\irule{
\tr{\fd} = \Some{\fd'} \amper \stk \match \stk'
}{
\callstate(\stk, \fd, \vec v, M) \match
\callstate(\stk', \fd', \vec v, M)
}
\irule{
\stk \match \stk'
}{
\callstate(\stk, v, M) \match
\callstate(\stk', v, M)
}
\\
\srule{ \epsilon \match \epsilon }
\irule{
\tr{\fn} = \Some{\fn'} \amper \stk \match \stk' \amper
\forall v,~ \models R\{r \becomes v\} : {\tt analyze}(\fn)(l)
}{
\stackframe(r, \fn.{\tt code}, \sp, l, R).\stk \match
\stackframe(r, \fn'.{\tt code}, \sp, l, R).\stk'
}
\end{pannel}
In the rule for stack frames, $v$ stands for the value
that will eventually be returned to the pending function call.  Since
it is unpredictable, we quantify universally over this return value.
Semantic preservation for constant propagation follows by
theorem~\ref{th:lockstep-simulation} from the simulation lemma below.
\begin{lemma} \label{l:constprop-simu}
If $G \vdash \st_1 \redtrace{t} \st_2$ and $\st_1 \match \st_1'$,
there exists $\st_2'$ such that $G' \vdash \st_1' \redtrace{t} \st_2'$
and $\st_1' \match \st_2'$,
\end{lemma}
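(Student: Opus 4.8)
The plan is to prove Lemma~\ref{l:constprop-simu} by a case analysis on the transition $G \vdash \st_1 \redtrace{t} \st_2$, matched against the form of $\st_1$ dictated by the invariant $\st_1 \match \st_1'$. Since the statement asserts a ``lock-step'' (single-step to single-step, same trace) simulation, for each \lang{RTL} instruction rule in figure~\ref{f:rtl-sem} I would exhibit the matching instruction in the transformed CFG $\fn'.{\tt code}$ and check that the conclusion $\st_1' \match \st_2'$ holds. The key enabling fact, to be established before the case analysis, is that the code transformation preserves the control-flow graph structure: no node is inserted or deleted (as noted in section~\ref{s:constprop}), so $\fn'.{\tt code}(l)$ is obtained from $\fn.{\tt code}(l)$ by a purely local rewriting that keeps the same successors. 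This means the single-step-to-single-step shape is automatic, and the real content is that the rewritten instruction computes a value consistent with the original under the agreement relation $\models R : {\tt analyze}(\fn)(l)$.

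First I would record the two auxiliary correctness facts about the abstract interpretation, stated informally in section~\ref{s:constprop-preservation}: (a) soundness of $\overline{{\tt eval{\char95}op}}$, namely if ${\tt eval{\char95}op}(\op, \vec v) = \Some v$ and $\models \vec v : \vec a$ then $\models v : \overline{{\tt eval{\char95}op}}(\op, \vec a)$ (and analogues for conditions and addressing modes); and (b) correctness of the specialized instruction forms, namely that a ``load constant'' or immediate-form operator produced by the transformation evaluates to the same concrete value as the original ${\tt op}$ instruction, provided the concrete register values agree with the abstract values used to drive the specialization. These are the ``large but straightforward case analyses'' the text refers to, and I would assume them available. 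The heart of each case of the lemma is then: from the invariant we know $\models R : {\tt analyze}(\fn)(l)$; the dataflow inequation $A(l') \ge T_\fn(l, A(l))$ (which holds because ${\tt analyze}(\fn)$ is a solution) guarantees that after updating $R$ at the destination register the new register state still agrees with ${\tt analyze}(\fn)(l')$; hence the invariant is re-established at the successor node $l'$.

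The most delicate cases are the ones where the transformation genuinely changes the instruction: an ${\tt op}$ whose arguments are all statically known becomes a load-constant, and a ${\tt cond}$ whose outcome is statically forced becomes a ${\tt nop}$. For the specialized ${\tt op}$ I must invoke fact (b) to see that the transformed instruction deposits the same $v$ in $r$, then use fact (a) together with the transfer function $T_\fn$ and the inequation to propagate agreement to $l'$. For the resolved ${\tt cond}$, I must check that the static determination of the branch (using $\models R : {\tt analyze}(\fn)(l)$ and the soundness of $\overline{{\tt eval{\char95}cond}}$) matches the actual runtime branch taken by the source, so that the ${\tt nop}$ lands at the correct successor; here the main risk is a mismatch between the abstractly-predicted and concretely-taken branch, which is exactly ruled out by agreement. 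The ${\tt call}$, ${\tt tailcall}$, and ${\tt return}$ cases require threading the invariant through the stack-frame relation $\stk \match \stk'$, and in particular the universally-quantified return-value clause in the stack-frame rule must be discharged when a ${\tt return}$ pops a frame; this is where I would spend the most care.

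\emph{The main obstacle} I anticipate is not any single case but the bookkeeping around the dataflow inequations at \emph{every} successor: I must consistently use that ${\tt analyze}(\fn)$ satisfies $A(s) \ge T_\fn(l, A(l))$ and that $\models R : a$ is monotone in $a$ with respect to the abstract ordering (so that agreement with a larger abstract value follows from agreement with $T_\fn(l, A(l))$). Establishing this monotonicity of the agreement relation, and the entry-point constraint $A(\fn.{\tt entrypoint}) \ge (r \mapsto \top)$ needed to start the induction at call states, is the glue that makes the otherwise-routine per-instruction arguments go through. Once the lemma is proved, semantic preservation for constant propagation follows immediately from theorem~\ref{th:lockstep-simulation}, after separately checking the ``initial states'' and ``final states'' conditions against the definition of $\match$.
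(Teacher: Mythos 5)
Your proposal is correct and follows essentially the same route as the paper's proof: a lock-step case analysis on the \lang{RTL} transition, supported by the two auxiliary facts (soundness of $\overline{{\tt eval{\char95}op}}$ and of the specialized instruction forms with respect to the agreement relation), with the invariant re-established at successors via the dataflow inequations and the universally-quantified return-value clause handling the call/return cases. The paper only sketches this argument (deferring details to the Coq development), and your added points — monotonicity of $\models$ with respect to the abstract ordering, and the entry-point constraint for call states — are exactly the bookkeeping that sketch leaves implicit.
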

\subsection{Common subexpression elimination} \label{s:CSE}
\subsubsection{Static analysis}
Common subexpression elimination is implemented via local value
numbering performed over extended basic blocks.  Value numbers $x$ are
identifiers representing runtime values abstractly.
A forward dataflow analysis associates to each program point $l, \fn$
a pair $(\phi,\eta)$ of a partial mapping $\phi$ from registers to
value numbers and a set $\eta$ of equations between value numbers of
the form $x = \op(\vec x)$ or $x = \chunk,\mode(\vec x)$.
In a sense that will be made semantically precise below, the first
equation form means that the operation $\op$ applied to concrete
values matching the value numbers $\vec x$ returns a concrete value
that matches $x$; likewise, the second equation means that computing
an address using the addressing mode $\mode$ applied to values
matching $\vec x$ and loading a quantity $\chunk$ from this address
in the current memory state returns a value matching $x$.
In addition to $\phi$ and $\eta$, the analysis result at each point
also contains a supply of fresh value numbers and a proof that value
numbers appearing in $\phi$ and $\eta$ are not fresh with respect to
this supply.  We omit these two additional components for simplicity.
The transfer function $T_\fn(l, A)$, where $A = (\phi,\eta)$ is the
analysis result ``before'' point $l$, is defined as follows.
If the instruction at $l$ is a move ${\tt op}({\tt move}, r_s, r_d, l')$, we
record the equality $r_d = r_s$ by returning $(\phi\{r_d \becomes
\phi(r_s)\}, \eta)$.  If the instruction at $l$ is ${\tt op}(\op, \vec r,
r, l')$, we determine the value numbers $\vec x = \phi(\vec r)$,
associating fresh value numbers to elements of $\vec r$ if needed.  We
then check whether $\eta$ contains an equation of the form $x =
\op(\vec x)$ for some $x$.  If so, the computation performed by this
{\tt op} operation has already been performed earlier in the program and
the transfer function returns $(\phi\{r \becomes x\}, \eta)$.  If not,
we allocate a fresh value number $x$ to stand for the result of this
operation and return $(\phi\{r \becomes x\}, \eta \union \{ x =
\op(\vec x) \})$ as the result of the transfer function.
The definition of the transfer function is similar in the case of
{\tt load} instructions, using memory equalities $x = \chunk,\mode(\vec x)$
instead of arithmetic equalities $x = \op(\vec x)$.  For {\tt store}
instructions, in the absence of nonaliasing information we must assume
that the store can invalidate any of the memory equations that
currently hold.  The transfer function therefore removes all such
equations from $\eta$. For {\tt call} instructions, we remove all
equations, since eliminating common subexpressions across function
calls is often detrimental to the quality of register allocation.
Other instructions keep $(\phi, \eta)$ unchanged.
We say that a value numbering $(\phi, \eta)$ is satisfiable in a
register state $R$ and a memory state $M$, and we write $R, M \models \phi,
\eta$, if there exists a valuation $V$ associating concrete values
to value numbers such that
\begin{enumerate}
\item If $\phi(r) = \Some x$, then $R(r) = V(x)$.
\item If $\eta$ contains the equation $x = \op(\vec x)$, then
  ${\tt eval{\char95}op}(\op, V(\vec x)) = \Some{V(x)}$.
\item If $\eta$ contains the equation $x = \chunk,\mode(\vec x)$, then
  there exists $b$ and $\delta$ such that
  ${\tt eval{\char95}addressing}(\mode, V(\vec x)) = \Some{{\tt ptr}(b,\delta)}$
  and ${\tt load}(\chunk, M, b, \delta) = \Some{V(x)}$.
\end{enumerate}
Using this notion of satisfaction, we order value numberings by
entailment: $(\phi', \eta') \ge (\phi, \eta)$ if 
$R, M \models \phi, \eta \Longrightarrow R, M \models \phi', \eta'$ for all $R, M$.
Least upper bounds for this ordering are known to be difficult to
compute efficiently \cite{Kildall-73,Gulwani-Necula-04}.  We sidestep
this issue by solving the dataflow inequation using the approximate
solver based on propagation described in
section~\ref{s:dataflow-solvers} instead of Kildall's algorithm.
\subsubsection{Code transformation}
The actual elimination of common subexpressions is simple.  Consider
an instruction $i$ at point $l$, and let $(\phi,\eta)$ be the results
of the static analysis at point $l$.  If $i$ is
${\tt op}(\op, \vec r, r, l')$ or ${\tt load}(\chunk,\mode, \vec r, r, l')$,
and there exists a register $r'$ such that 
the equation $\phi(r') = \op(\phi(\vec r))$ or respectively
$\phi(r') = \chunk,\mode(\phi(\vec r))$ is in $\eta$,
we rewrite the instruction as a move ${\tt op}({\tt move}, r, r', l')$ from
$r'$ to $r$.  This eliminates the redundant computation, reusing
instead the result of the previous equivalent computation, which is
still available in register $r'$.  In all other cases, the instruction
is unchanged.
\subsubsection{Semantic preservation}
The correctness proof for common subexpression elimination closely
follows the pattern of the proof for constant propagation (see
section~\ref{s:constprop-preservation}), using a lock-step simulation
diagram.  The only difference is the replacement of the hypothesis 
$\models R : A(l)$ in the invariant between states by the hypothesis
$R, M \models A(l)$, meaning that at every program point the current
register and memory states must satisfy the value numbering obtained
by static analysis.

\section{Register allocation} \label{s:regalloc}

The next compilation pass performs register allocation by coloring of
an interference graph.  

\subsection{The target language: LTL} \label{s:ltl}

The target language for register allocation is a variant of \lang{RTL}
called \lang{LTL} (Location Transfer Language).  Syntactically, the
only difference between \lang{LTL} and \lang{RTL} is that the
\lang{RTL} pseudo-registers $r$ appearing as arguments and results of
instructions are replaced by locations $\loc$.  A location is
either a hardware register or an abstract designation of a stack slot.
\begin{syntaxleft}
\syntaxclass{Locations:}
\loc & ::= & \mreg \alt \slot
\syntaxclass{Machine registers:}
\mreg & ::= & {\tt R3} \alt {\tt R4} \alt \ldots & PowerPC integer registers\\
      & \alt& {\tt F1} \alt {\tt F2} \alt \ldots & PowerPC float registers
\syntaxclass{Stack slots:}
\slot & ::= & {\tt local}(\ty, \ofs) & local variables \\
      & \alt& {\tt incoming}(\ty,\ofs) & incoming parameters \\
      & \alt& {\tt outgoing}(\ty,\ofs) & outgoing arguments
\end{syntaxleft}%
In stack slots, $\ty$ is the intended type of the slot ({\tt int} or
{\tt float}) and $\ofs$ an integer representing a word offset in one of
the three areas of the activation record ({\tt local}, {\tt incoming} and
{\tt outgoing}).  

In \lang{LTL}, stack slots are not yet mapped to actual memory
locations.  Their values, along with those of the machine registers,
are recorded in a mapping $L: {\tt loc} \fun {\tt val}$ similar to the mapping
$R: {\tt reg} \fun {\tt val}$ used in the \lang{RTL} semantics and disjoint
from the memory state $M$.  However, the \lang{LTL} semantics treats
locations in a way that anticipates their behaviors once they are later
mapped to actual memory locations and processor registers.
In particular, we account for the possible overlap between distinct stack
slots once they are mapped to actual memory areas.  For instance,
${\tt outgoing}({\tt float},0)$ overlaps with ${\tt outgoing}({\tt int},0)$ and
${\tt outgoing}({\tt int},1)$: assigning to one of these locations invalidates
the values of the other two.  This is reflected in the weak ``good
variable'' property for location maps $L$:
$$ (L\{\loc_1 \becomes v\})(\loc_2) =
   \threecases{v}{if $\loc_1 = \loc_2$;}
              {L(\loc_2)}{if $\loc_1$ and $\loc_2$ do not overlap;}
              {{\tt undef}}{if $\loc_1$ and $\loc_2$ partially overlap.} $$
Contrast with the standard ``good variable'' property for register
maps $R$:
$$ (R\{r_1 \becomes v\})(r_2) =
   \twocases{v}{if $r_1 = r_2$;}
               {R(r_2)}{if $r_1 \not= r_2$.} $$

\begin{figure}

\begin{pannel}
\irule{
\fn.{\tt code}(l) = \Some{{\tt call}(\sig, \loc_f, \vec \loc, \loc, l')} \amper
L(\loc_f) = {\tt ptr}(b, 0) \amper {\tt funct}(G, b) = \Some \fd \amper \fd.{\tt sig} = \sig
}{
\trans {\state(\stk, \fn, \sp, l, L, M)}
       {\callstate(\stackframe(\loc, \fn, \sp, l', {\tt postcall}(L)).\stk,
                   \fd, L(\vec \loc), M)}
}
\end{pannel}
\caption{Semantics of \lang{LTL}.  The transitions not shown are similar
  to those of \lang{RTL}. } \label{f:ltl-sem}
\end{figure}

The dynamic semantics of \lang{LTL} is illustrated in
figure~\ref{f:ltl-sem}.  Apart from the use of location maps $L$ and
overlap-aware update instead of register maps $R$ and normal updates, the only
significant difference with the semantics of \lang{RTL} is the
semantics of {\tt call} instructions.  In preparation for enforcing
calling conventions as described later in section~\ref{s:reload},
processor registers that are temporary or caller-save according to the
calling conventions are set to {\tt undef} in the location state of the
caller, using the function {\tt postcall} defined as
$$ {\tt postcall}(L)(\loc) = \twocases
      {{\tt undef}}{if $\loc$ is a temporary or caller-save register;}
      {L(\loc)}{otherwise.}
$$
This forces the \lang{LTL} producer, namely the register
allocation pass, to ensure that no value live across a function call
is stored in a caller-save register.

\subsection{Code transformation} \label{s:regalloc-transf}

For every function, register allocation is performed in four
steps, which we now outline.

\subsubsection{Type reconstruction for RTL}
The first step performs type reconstruction for the \lang{RTL} source
function in a trivial ``{\tt int}-or-{\tt float}'' type system similar to that
of \lang{Cminor} (see section~\ref{s:cminor-typing}).
To each \lang{RTL} pseudo-register we assign a type {\tt int}
or {\tt float}, based on its uses within the function.  The resulting type
assignment $\Gamma: r \mapsto \tau$ will guide the register allocator,
making sure a machine register or stack location of the correct kind
is assigned to each pseudo-register.

All operators, addressing modes and conditions are monomorphic
(each of their arguments has only one possible type, either {\tt int}
or {\tt float}), except the {\tt move} operation which is polymorphic with
type $\forall \tau. ~ \tau \fun \tau$.  Type reconstruction can
therefore be performed by trivial unification.  We use the verified
validator approach: a candidate type assignment $\Gamma$ is computed by
untrusted Caml code, using in-place unification, then verified for
correctness by a simple type-checker, written and proved correct in
Coq.

\subsubsection{Liveness analysis}
We compute the set $A(l)$ of pseudo-registers live
``after'' every program point $l$.  These sets are a solution of the
following backward dataflow inequations:
\begin{eqnarray*}
A(l) & \supseteq & T(s, A(s)) \mbox{ for all $s$ successor of $l$}
\end{eqnarray*}
The transfer function $T$ computes the set of live pseudo-registers
``before'' an instruction, as a function of this instruction and the
set of live pseudo-registers ``after'' it.  Classically, it removes
the registers defined by the instruction, then adds the registers
used.  A special case is made for {\tt op} or {\tt load} instructions whose
result register is not live ``after'' the instruction, since these
instructions will later be eliminated as dead code.
For instance, if the instruction at $l$ in $F$ is
${\tt op}(\op, \vec r, r, l')$, then
$$T(l, A) = \twocases
   {(A \setminus \{r\}) \union \vec r}
   {if $r \in A$;}
   {A}
   {if $r \notin A$.}
$$
Liveness information is computed by Kildall's algorithm, using the
generic dataflow solver described in section~\ref{s:dataflow-solvers}.

\subsubsection{Construction of the interference graph}
Based on the results of liveness analysis, an interference graph is
built following Chaitin's construction \cite{Chaitin-82}.  Two kinds
of interferences are recorded: between two pseudo-registers $(r, r')$
and between a pseudo-register $r$ and a machine register $\mreg$.  To
enable coalescing during allocation, register affinities arising from
moves (either explicit or implicit through calling conventions) are
also recorded.  (Affinities do not affect the correctness of the
generated code but have a considerable impact on its performance.)

The interference graph is represented by sets of edges: either
unordered pairs $(r, r')$ or ordered pairs $(r, \mreg)$.  The graph is
constructed incrementally by enumerating every \lang{RTL} instruction
and adding interference edges between the defined register $r$ (if
any) and the registers $r' \in A(l) \setminus \{r\}$ live ``across''
the instruction.  For move instructions
$l: {\tt op}({\tt move}, r_s, r_d, l')$, we avoid adding an edge between $r_s$
and $r_d$, as proposed by Chaitin \cite{Chaitin-82}.  Finally, for call
instructions, additional interference edges are introduced pairwise
between the registers live across the call and the caller-save machine
registers.

\subsubsection{Coloring of the interference graph}
To color the interference graph, we use the iterated coalescing
algorithm of George and Appel \cite{Appel-George}.  The result is a mapping
$\Phi: r \mapsto \loc$ from pseudo-registers to locations.  Following
once more the verified validator approach, the actual coloring is
performed by an untrusted implementation of the George-Appel algorithm
written in Caml and using imperative doubly-linked lists for
efficiency.  The candidate coloring returned is then verified by a
simple validator written and proved correct in Coq.  Like many
NP-complete problems, graph coloring is a paradigmatic example of an
algorithm that is significantly easier to validate a posteriori than
to prove correct.  Validation proceeds by enumerating the nodes and
edges of the interference graph, checking the following properties:
\begin{enumerate}
\item Correct colors: $\Phi(r) \not= \Phi(r')$ for all edges $(r, r')$
  of the interference graph; likewise, $\Phi(r) \not= \mreg$ for all
  interference edges $(r, m)$.
\item Register class preservation: the type of the location $\Phi(r)$
  is equal to $\Gamma(r)$ for all pseudo-registers $r$.
\item Validity of locations: for all $r$, the location $\Phi(r)$ is
  either a {\tt local} stack slot or a non-temporary machine register%
\footnote{We reserve 2 integer and 3 float machine registers as
  temporaries to be used later for spilling and reloading.  
  These temporary registers must therefore not be used by the register
  allocator.}.
\end{enumerate}

\subsubsection{LTL generation}
Finally, the actual code transformation from \lang{RTL} to \lang{LTL}
is a trivial per-instruction rewriting of the CFG where each mention
of a pseudo-register $r$ is replaced by the location $\Phi(r)$
allocated to $r$.  For instance, the \lang{RTL} instruction
$l: {\tt op}(\op, \vec r, r, l')$ becomes the \lang{LTL} instruction
$l: {\tt op}(\op, \Phi(\vec r), \Phi(r), l')$.  There are two exceptions
to this rule.   First,
a move instruction $l: {\tt op}({\tt move}, r_s, r_d, l')$ such that $\Phi(r_s) =
\Phi(r_d)$ is turned into a no-operation $l: {\tt nop}(l')$,
therefore performing one step of coalescing.
Second, an {\tt op} or {\tt load} instruction whose result
register is not live after the instruction is similarly turned into a
{\tt nop} instruction, therefore performing dead-code elimination.

\subsection{Semantic preservation}

The proof that register allocation preserves program behaviors is,
once more, based on a lock-step simulation diagram.  The invariant
between states is, however, more complex than those used for constant
propagation or common subexpression elimination.  For these two
optimizations, the register state $R$ was
identical between matching states, because the same values would be
computed (by possibly different instructions) in the original and
transformed program and would be stored in the same registers.  This
assumption no longer holds in the case of register allocation.

A value computed by the original \lang{RTL} program and stored
in pseudo-register $r$ is stored in location $\Phi(r)$ in the
transformed \lang{LTL} program.  Naively, we could relate the
\lang{RTL} register state $R$ and the \lang{LTL} location state $L$ by
$R(r) = L(\Phi(r))$ for all pseudo-registers $r$.  However, this
requirement is too strong, as it essentially precludes sharing a
location between several pseudo-registers.

To progress towards the
correct invariant, consider the semantic interpretation of live and
dead pseudo-registers.  If a pseudo-register $r$ is dead at point $l$
in the original \lang{RTL} code, then its value $R(r)$ has no impact
on the remainder of the program execution: either $r$ will never be
used again, or it will be redefined before being used; in either case,
its value $R(r)$ could be replaced by any other value without any harm.  
A better relation between the values of pseudo-registers $R$ and
locations $L$ at point $l$ is therefore
$$ R(r) = L(\Phi(r)) \mbox{ for all pseudo-registers $r \in T(l, A(l))$} $$
In other words, at each program point $l$, register allocation must
preserve the values of all registers live ``before'' executing the
instruction at this point.  This property that we have never seen
spelled out explicitly in compiler literature captures concisely and
precisely the essence of register allocation.

The invariant between \lang{RTL} and \lang{LTL} states is, then:
\begin{pannel}
\irule{
\stk \match \stk' \amper
\tr{\fn} = \Some{\fn'} \amper
R(r) = L(\Phi(r)) \mbox{ for all $r \in T(l, A(l))$} \\
{\tt typecheck}(\fn) = \Some{\Gamma} \amper {\tt analyze}(\fn) = \Some{A} \amper
{\tt regalloc}(\fn, A, \Gamma) = \Some{\Phi}
}{
\state(\stk, \fn.{\tt code}, \sp, l, R, M) \match
\state(\stk', \fn', \sp, l, L, M)
}
\irule{
\stk \match \stk' \amper
\tr{\fd} = \Some{\fd'} \amper
}{
\callstate(\stk, \fd, \vec v, M) \match
\callstate(\stk', \fd', \vec v, M)
}
\irule{
\stk \match \stk'
}{
\returnstate(\stk, v, M) \match
\returnstate(\stk', v, M)
}
\end{pannel}
The invariant relating frames in the call stacks is similar to that
relating regular states, with a universal quantification on the return
value, as in section~\ref{s:constprop-preservation}.
The proof of the lock-step simulation diagram makes heavy use of the
definition of the transfer function $T$ for liveness analysis,
combined with the following characterization of the register
allocation $\Phi$ with respect to the results $A$ of liveness analysis:
\begin{itemize}
\item For a move instruction $l: {\tt op}({\tt move}, r_s, r_d, l')$, we have
$\Phi(r_d) \not= \Phi(r')$ for all $r' \in A(l) \setminus \{r_s, r_d\}$.
\item For other {\tt op}, {\tt load} or {\tt call} instructions at $l$ with
  destination register $r$, we have
$\Phi(r) \not= \Phi(r')$ for all $r' \in A(l) \setminus \{r\}$.
\item For call instructions $l: {\tt call}(\sig, (r_f \alt \id), \vec r, r_d, l'$),
we additionally have $\Phi(r') \not= \mreg$ for all
$r' \in A(l) \setminus \{r_d\}$ and all callee-save registers $\mreg$.
\end{itemize}

\section{Branch tunneling and no-op elimination} \label{s:tunneling}

Register coalescing and dead-code elimination, performed in the course
of register allocation, generate a number of {\tt nop} instructions.  Now
is a good time to ``short-circuit'' them, rendering them
unreachable and making them candidates for removal during CFG linearization
(section~\ref{s:linearize}).  Since {\tt nop} in a CFG representation
also encodes unconditional branches, this transformation also performs
branch tunneling: the elimination of branches to branches.

\subsection{Code transformation}

Branch tunneling rewrites the \lang{LTL} control-flow graph, replacing
each successor, i.e.~the $l'$ and $l''$ in instructions such as
${\tt op}(\op, \vec \loc, \loc, l')$ or
${\tt cond}(\cond, \vec \loc, l', l'')$, by its effective
destination $D_\fn(l')$.  
Naively, the effective destination is
computed by chasing down sequences of {\tt nop} instructions, stopping at
the first non-{\tt nop} instruction:
\begin{eqnarray*}
D_\fn(l) & = &
    \twocases{D_\fn(l')}{if $\fn.{\tt code}(l) = \Some{{\tt nop}(l')}$;}
             {l}{otherwise.}
\end{eqnarray*}
This is not a proper definition: if the control-flow graph contains
cycles consisting only of {\tt nop} instructions, such as $l : {\tt nop}(l)$,
the computation of $D_\fn(l)$ fails to terminate.

A simple solution is to bound the recursion depth when computing
$D_\fn(l)$, returning~$l$ when the counter reaches~0.  The initial
value~$N$ can be chosen at will; the number of instructions in the
code of function $\fn$ is a good choice.

A more elegant solution, suggested by an anonymous reviewer, uses a
union-find data structure $U_\fn$.  A first scan of the control-flow graph
populates $U_\fn$ by adding an edge from $l$ to $l'$ for each instruction
$l: {\tt nop}(l')$, provided $l$ and $l'$ are not already in the same
equivalence class.  The effective destination $D_\fn(l)$, then,
is defined as the canonical representative of $l$ in the union-find
structure $U_\fn$.

\subsection{Semantic preservation}

Semantic preservation for branch tunneling follows from a simulation
diagram of the ``option'' kind (see figure~\ref{f:simu-diagrams}).
Intuitively, the execution of a non-call, non-return instruction that
causes a transition from point $l_1$ to point $l_2$ in the original
code corresponds to the execution of zero or one instructions in the
tunneled code, from point $D_\fn(l_1)$ to point $D_\fn(l_2)$.  The
``zero'' case can appear for example when the instruction at $l_1$
is ${\tt nop}(l_2)$.  The definition of the invariant between execution
states is therefore
\begin{pannel}
\irule{
\tr{\fn} = \Some{\fn'} \amper \stk \match \stk'
}{
\state(\stk, \fn, \sp, l, L, M) \match
\state(\stk, \fn', \sp, D_\fn(l), L, M)
}
\end{pannel}
for regular states, with the obvious definitions for call states,
return states, and stack frames.
For simulation diagrams of the ``option'' kind, we must provide a
measure to show that it is not possible to take infinitely many ``or
zero'' cases.  For branch tunneling, a suitable measure is the number
of {\tt nop} instructions that are skipped starting with the current
program point:
\begin{eqnarray*}
\size{\state(\stk, \fn, sp, l, L, M)} & = & \#{\tt nop}_\fn(l) \\
\#{\tt nop}_\fn(l) & = &
    \twocases{1 + \#{\tt nop}_\fn(l')}{if $\fn.{\tt code}(l) = \Some{{\tt nop}(l')}$
                                   and $D_\fn(l) \not= l$;}
             {0}{otherwise.}
\end{eqnarray*}
This definition is well founded because of the way $D_\fn$ is
constructed from the union-find structure $U_\fn$.  The crucial
property of the $D$ and $\#{\tt nop}$ functions is that, for all program
points $l$,
$$ D_\fn(l) = l ~\vee~
   \exists l',~ \fn.{\tt code}(l) =  \Some{{\tt nop}(l')}
                \wedge D_\fn(l') = D_\fn(l)
                \wedge \#{\tt nop}_\fn(l') < \#{\tt nop}_\fn(l)
$$
\begin{theorem}
If $\st_1 \match \st_2$ and $G \vdash \st_1 \redtrace{t} \st_1'$, either
there exists $\st_2'$ such that $G' \vdash \st_2
  \redtrace{t} \st_2'$ and $\st_1' \match \st_2'$,
or $\size{\st_1'} < \size{\st_1}$ and $t = \epsilon$ and
  $\st_1' \match \st_2$.
\end{theorem}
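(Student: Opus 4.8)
The plan is to prove this statement directly as the ``option'' simulation diagram for the pass, proceeding by case analysis on the transition $G \vdash \st_1 \redtrace{t} \st_1'$ and letting the displayed dichotomy on $D_\fn(l)$ organize the argument. The only states that can stutter are regular states $\st_1 = \state(\stk, \fn, \sp, l, L, M)$, whose matching counterpart $\st_2$ carries $\fn'$ with $\tr{\fn} = \Some{\fn'}$, a call stack $\stk'$ with $\stk \match \stk'$, and program point $D_\fn(l)$; the heart of the proof therefore concerns these. For call and return states I would take the measure to be $0$, since (as argued below) every transition out of those states is matched by a genuine step on the tunneled side, so the stuttering branch is never needed there.

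For a regular state I would first apply the crucial property displayed just above the theorem to the node $l$. Its two disjuncts are mutually exclusive---the second forces $\#{\tt nop}_\fn(l) > 0$ while $D_\fn(l) = l$ forces $\#{\tt nop}_\fn(l) = 0$ by the definition of $\#{\tt nop}_\fn$---so they partition the analysis. In the second disjunct we have $\fn.{\tt code}(l) = \Some{{\tt nop}(l')}$ with $D_\fn(l') = D_\fn(l)$ and $\#{\tt nop}_\fn(l') < \#{\tt nop}_\fn(l)$; the original step is the silent {\tt nop} transition to $\st_1' = \state(\stk, \fn, \sp, l', L, M)$ with $t = \epsilon$. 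Because $D_\fn(l') = D_\fn(l)$, the program point of $\st_2$ is already the effective destination of $l'$, so $\st_1' \match \st_2$ holds with $\st_2$ unchanged: this is exactly the ``or zero'' branch, and the measure decreases strictly by the supplied inequality $\#{\tt nop}_\fn(l') < \#{\tt nop}_\fn(l)$.

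In the first disjunct, $D_\fn(l) = l$, so $\st_2$ sits at the very same node $l$, and I would do a secondary case analysis on $\fn.{\tt code}(l)$. The load-bearing structural fact is that tunneling leaves every instruction untouched except for rewriting each successor $l''$ to $D_\fn(l'')$; in particular the operands and the parameters fed to {\tt eval{\char95}op}, {\tt eval{\char95}cond} and {\tt eval{\char95}mode} are identical in $\fn$ and $\fn'$, so the tunneled instruction fires under exactly the same side conditions and produces the same location map and memory, landing at a successor $l''$ on the original side and at $D_\fn(l'')$ on the tunneled side---precisely what the invariant demands. This gives a one-step match $G' \vdash \st_2 \redtrace{t} \st_2'$ with $\st_1' \match \st_2'$, covering {\tt op}, {\tt load}, {\tt store}, {\tt cond}, {\tt return} and the residual ``{\tt nop} inside an all-{\tt nop} cycle'' sub-case uniformly. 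For {\tt call} and {\tt tailcall} I additionally observe that the frame pushed by the tunneled call records the tunneled return successor $D_\fn(l')$, matching the stack-frame clause of the invariant, and that function resolution agrees between $G$ and $G'$ by the generic global-environment lemma of section~\ref{s:global-envs}. Transitions from call and return states---entering an internal function, performing an external call, or popping a frame on return---are handled the same way: each takes one genuine step per side, the global-environment lemma guarantees the callee and the external behavior coincide, and the return-to-frame step lands at the stored effective destination, so the one-step branch applies and the (zero) measure is irrelevant.

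The routine part is entirely confined to checking, instruction by instruction in the $D_\fn(l)=l$ case, that the tunneled transition reproduces the original one; none of this is delicate, precisely because tunneling never touches the computational content of an instruction. The single genuinely load-bearing ingredient is the crucial property, which simultaneously supplies the dichotomy and the strict measure decrease and whose well-foundedness rests on the union-find construction of $D_\fn$. I expect the main point of care to be establishing and correctly applying that property---in particular recognising that both non-{\tt nop} nodes and nodes lying on all-{\tt nop} cycles satisfy $D_\fn(l) = l$, so that the one-step branch is always available there---with everything else reducing to bookkeeping.
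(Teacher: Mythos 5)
Your proposal is correct and is essentially the paper's own argument: the paper states this theorem without further proof precisely because the three ingredients it has just set up---the matching invariant placing the tunneled state at $D_\fn(l)$, the measure $\#{\tt nop}_\fn(l)$, and the crucial dichotomy property of $D_\fn$ and $\#{\tt nop}_\fn$---are meant to be combined exactly as you do, with the second disjunct giving the stuttering (``or zero'') branch with strict measure decrease, and the first disjunct giving the one-step branch because tunneling changes nothing in an instruction except rewriting each successor to its effective destination. The only slip is your closing remark that every node on an all-{\tt nop} cycle satisfies $D_\fn(l) = l$: in fact only the canonical representative of the union-find class does, and the remaining cycle nodes fall under the second disjunct and are handled by the stuttering branch---but since your proof body is organized by the dichotomy itself rather than by this characterization, nothing in the argument breaks.
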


\section{Linearization of the control-flow graph} \label{s:linearize}

The next compilation step linearizes the control-flow graphs of
\lang{LTL}, replacing them by lists of instructions with
explicit labels and unconditional and conditional branches to labels,
in the style of assembly code.  While the CFG representation of
control is very convenient for performing dataflow analyses, the
linearized representation makes it easier to insert new
instructions, as needed by some of the subsequent passes.  

Discussions of linearization in the literature focus on trace picking
heuristics that reduce the number of jumps introduced, but consider
the actual production of linearized code trivial.  Our first attempts
at proving directly the correctness of a trace picking algorithm that
builds linearized code and performs branch tunneling on the fly showed
that this is not so trivial.  We therefore perform tunneling in a
separate pass (see section~\ref{s:tunneling}), then implement CFG
linearization in a way that clearly separates heuristics from the
actual production of linearized code.

\subsection{The target language: LTLin} \label{s:ltlin}

The target language for CFG linearization is \lang{LTLin}, a variant
of \lang{LTL} where control-flow graphs are replaced by lists of
instructions.
\begin{syntax}
\syntaxclass{LTLin instructions:}
i & ::=  & {\tt op}(\op, \vec \loc, \loc) & arithmetic operation \\
  & \alt & {\tt load}(\chunk, \mode, \vec \loc, \loc) & memory load \\
  & \alt & {\tt store}(\chunk, \mode, \vec \loc, \loc) & memory store \\
  & \alt & {\tt call}(\sig, (\loc \alt \id), \vec \loc, \loc) & function call \\
  & \alt & {\tt tailcall}(\sig, (\loc \alt \id), \vec \loc) & function tail call \\
  & \alt & {\tt cond}(\cond, \vec \loc, \ltrue) &
                  conditional branch \\
  & \alt & {\tt goto}(l) & unconditional branch \\
  & \alt & {\tt label}(l) & definition of the label $l$ \\
  & \alt & {\tt return} \alt {\tt return}(\loc) & function return
\syntaxclass{LTLin code sequences:}
c & ::=  & i_1 \ldots i_n  & list of instructions
\syntaxclass{LTLin functions:}
\fn & ::= & \{~{\tt sig} = \sig; \\
    &     & ~~~{\tt params} = \vec\loc;    & parameters\\
    &     & ~~~{\tt stacksize} = n;      & size of stack data block \\
    &     & ~~~{\tt code} = c \}         & instructions
\end{syntax}

The dynamic semantics of \lang{LTLin} is similar to that of \lang{LTL}
as far as the handling of data is concerned.  In execution states,
program points within a function $\fn$ are no longer represented by
CFG labels $l$, but instead are represented by code sequences $c$ that
are suffixes of $\fn.{\tt code}$.  The first element of $c$ is the
instruction to execute next.  As shown in figure~\ref{f:ltlin-sem},
most instructions have ``fall-through'' behavior: they transition
from $i.c$ to $c$.  To resolve branches to a label $l$, we use the
auxiliary function ${\tt findlabel}(\fn,l)$ that returns the maximal
suffix of $\fn.{\tt code}$ that starts with ${\tt label}(l)$, if it exists.

\begin{figure}

\begin{pannel}
\irule{
{\tt eval{\char95}op}(G, \sp, \op, L(\vec \loc)) = \Some v
}{
\trans {\state(\stk, \fn, \sp, {\tt op}(\op, \vec \loc, \loc).c, L, M)}
       {\state(\stk, \fn, \sp, c, L\{\loc \becomes v\}, M)}
}
\irule{
{\tt findlabel}(\fn, \ltrue) = \Some{c'}
}{
\trans {\state(\stk, \fn, \sp, {\tt goto}(l).c, L, M)}
       {\state(\stk, \fn, \sp, c', L, M)}
}

\irule{
{\tt eval{\char95}cond}(\cond, L(\vec \loc)) = \Some{{\tt true}} \amper
{\tt findlabel}(\fn, \ltrue) = \Some{c'}
}{
\trans {\state(\stk, \fn, \sp, {\tt cond}(\cond, \vec\loc, \ltrue).c, L, M)}
       {\state(\stk, \fn, \sp, c', L, M)}
}
\irule{
{\tt eval{\char95}cond}(\cond, L(\vec \loc)) = \Some{{\tt false}}
}{
\trans {\state(\stk, \fn, \sp, {\tt cond}(\cond, \vec\loc, \ltrue).c, L, M)}
       {\state(\stk, \fn, \sp, c, L, M)}
}
\end{pannel}
\caption{Semantics of \lang{LTLin} (selected rules).} \label{f:ltlin-sem}
\end{figure}

\subsection{Code transformation} \label{s:linearize-transf}

CFG linearization is performed in two steps that clearly separate the
heuristic, correctness-irrelevant part of linearization from the
actual, correctness-critical code generation part 
\cite[chap.~8]{Appel-tiger}.
We first compute an enumeration $l_1 \ldots l_n$ of the labels of the
CFG nodes reachable from the entry node.  The order of labels in this
enumeration dictates the positions of the corresponding instructions
in the list of \lang{LTLin} instructions that we will generate.
Following the verified validator approach, this enumeration of CFG
nodes is computed by untrusted code written in Caml.  It can implement
any of the textbook heuristics for picking ``hot'' traces that should
execute without branches, including static branch prediction.  This
enumeration is validated by a validator written in Coq that checks the
following two conditions:
\begin{enumerate}
\item No node $l$ appears twice in the enumeration.
\item All nodes $l$ reachable from the function entry point appear in
  the enumeration.
\end{enumerate}
For condition 2, the validator precomputes the set of reachable nodes
using a trivial forward dataflow analysis, where the abstract domain
is $\{{\tt false},{\tt true}\}$ and the transfer function is $T(l, a) = a$.
(By definition, all successors of a reachable node are reachable).

To generate \lang{LTLin} code, we then concatenate the instructions of
the control-flow graph in the order given by the enumeration $l_1,
\ldots, l_n$.  Each instruction is preceded by ${\tt label}(l_i)$ and
followed by a {\tt goto} to its successor, unless this {\tt goto} is
unnecessary because it would branch to an immediately following
{\tt label}.  Formally, the basic code generation function is of the form
$C(i, c)$ and returns a sequence $c'$ of instructions obtained by
prepending the translation of the \lang{LTL} instruction $i$ 
to the initial instruction sequence $c$.  Here are some representative
cases:
\begin{eqnarray*}
C({\tt op}(\op, \vec\loc, \loc, l'), c) & = &
  {\tt op}(\op, \vec \loc, \loc).c
\quad\mbox{if $c$ starts with ${\tt label}(l')$;} \\
C({\tt op}(\op, \vec\loc, \loc, l'), c) & = &
{\tt op}(\op, \vec \loc, \loc).{\tt goto}(l').c
\quad\mbox{otherwise.} \\
C({\tt cond}(\cond, \vec \loc, \ltrue, \lfalse), c) & = & 
  {\tt cond}(\cond, \vec \loc, \ltrue).c \\
&&\mbox{if $c$ starts with ${\tt label}(\lfalse)$;} \\
C({\tt cond}(\cond, \vec \loc, \ltrue, \lfalse), c) & = & 
  {\tt cond}(\neg\cond, \vec \loc, \lfalse).c \\
&&\mbox{if $c$ starts with ${\tt label}(\ltrue)$;} \\
C({\tt cond}(\cond, \vec \loc, \ltrue, \lfalse), c) & = & 
  {\tt cond}(\cond, \vec \loc, \ltrue).{\tt goto}(\lfalse).c \\
&&\mbox{otherwise.}
\end{eqnarray*}
This function is then iterated over the enumeration $\vec l$ of CFG
nodes, inserting the appropriate {\tt label} instructions:
\begin{eqnarray*}
C(\fn, \epsilon) & = & \epsilon \\
C(\fn, l.\vec l) & = & {\tt label}(l).C(i, C(\fn, \vec l))
\mbox{ if $\fn.{\tt code}(l) = \Some i$}
\end{eqnarray*}

\subsection{Semantic preservation}

Each intra-function transition in the original \lang{LTL} code
corresponds to 2 or 3 transitions in the generated \lang{LTLin} code:
one to skip the {\tt label} instruction, one to execute the actual
instruction, and possibly one to perform the {\tt goto} to the successor.
The proof of semantic preservation is therefore based on a simulation
diagram of the ``plus'' kind.  The main invariant is: whenever the
\lang{LTL} program is at program point $l$ in function $\fn$,
the \lang{LTLin} program is at the instruction sequence
${\tt findlabel}(\fn',l)$ in the translation $\fn'$ of $\fn$.
\begin{pannel}
\irule{
\tr{\fn} = \Some{\fn'} \amper \stk \match \stk' \\
{\tt findlabel}(\fn', l) = \Some c \amper
l\mbox{ is reachable from $\fn.{\tt entrypoint}$}
}{
\state(\stk, \fn, \sp, l, L, M) \match
\state(\stk', \fn', \sp, c, L, M)
}
\end{pannel}
A pleasant surprise is that the simulation proof goes through under
very weak assumptions about the enumeration of CFG nodes produced by
the external heuristics: conditions (1) and (2) of
section~\ref{s:linearize-transf} are all it takes to guarantee
semantic preservation.  This shows that our presentation of
linearization is robust: many trace picking heuristics can be
tried without having to redo any of the semantic preservation proofs.

\section{Spilling, reloading, and materialization of calling conventions} \label{s:reload}

The next compilation pass finishes the register allocation process
described in section~\ref{s:regalloc} by inserting explicit ``spill''
and ``reload'' operations around uses of pseudo-registers that have
been allocated stack slots.  Additionally, calling conventions are
materialized in the generated code by inserting moves to and from the
conventional locations used for parameter passing around function
calls.

\subsection{The target language: Linear} \label{s:linear}

\lang{Linear}, the target language for this pass, is a variant of
\lang{LTLin} where the operands of arithmetic operations, memory
accesses and conditional branches are restricted to machine registers
instead of arbitrary locations.  This is consistent with the
RISC instruction set of our target processor.
(Machine registers, written $\mreg$ so far, will now be written $r$
for simplicity.)  Two instructions {\tt getstack} and {\tt setstack} are
provided to move data between machine registers and stack slots
$\slot$.

\begin{syntax}
\syntaxclass{Linear instructions:}
i & ::=  & {\tt getstack}(\slot, r) \alt {\tt setstack}(r, \slot) &
              reading, writing a stack slot \\
  & \alt & {\tt op}(\op, \vec r, r) & arithmetic operation \\
  & \alt & {\tt load}(\chunk, \mode, \vec r, r) & memory load \\
  & \alt & {\tt store}(\chunk, \mode, \vec r, r) & memory store \\
  & \alt & {\tt call}(\sig, (r \alt \id)) & function call \\
  & \alt & {\tt tailcall}(\sig, (r \alt \id)) & function tail call \\
  & \alt & {\tt cond}(\cond, \vec r, \ltrue) &
                  conditional branch \\
  & \alt & {\tt goto}(l) & unconditional branch \\
  & \alt & {\tt label}(l) & definition of the label $l$ \\
  & \alt & {\tt return} & function return
\syntaxclass{Linear code sequences:}
c & ::=  & i_1 \ldots i_n  & list of instructions
\syntaxclass{Linear functions:}
\fn & ::= & \{~{\tt sig} = \sig; \\
    &     & ~~~{\tt stacksize} = n;      & size of stack data block \\
    &     & ~~~{\tt code} = c \}         & instructions
\end{syntax}

Another novelty of \lang{Linear} is that {\tt call}, {\tt tailcall} and
{\tt return} instructions, as well as function definitions, no longer
carry a list of locations for their parameters or results: the
generated \lang{Linear} code contains all the necessary move
instructions to ensure that these parameters and
results reside in the locations determined by the calling
conventions.  Correspondingly, call states and return states no longer
carry lists of values: instead, they carry
full location maps $L$ where the values of arguments and results can be
found at conventional locations, determined as a function
of the signature of the called function.
\begin{syntaxleft}
\syntaxclass{Program states:}
\st & ::= & \state(\stk, \fn, \sp, c, L, M) & regular state \\
    & \alt& \callstate(\stk, \fd, L, M) & call state \\
    & \alt& \returnstate(\stk, L, M)         & return state
\syntaxclass{Call stacks:}
\stk & ::= & (\stackframe(\fn, \sp, c, L))^* & list of frames
\end{syntaxleft}

\begin{figure}

\begin{pannel}
\irule{
{\tt eval{\char95}op}(G, \sp, \op, L(\vec r)) = \Some v
}{
\trans {\state(\stk, \fn, \sp, {\tt op}(\op, \vec r, r).c, L, M)}
       {\state(\stk, \fn, \sp, c, L\{r \becomes v\}, M)}
}
\irule{
L(r) = {\tt ptr}(b, 0) \amper {\tt funct}(G, b) = \Some \fd \amper \fd.{\tt sig} = \sig
}{
\trans {\state(\stk, \fn, \sp, {\tt call}(\sig, r).c, L, M)}
       {\callstate(\stackframe(\fn, \sp, c, L).\stk, \fd, L, M)}
}
\irule{
L(r) = {\tt ptr}(b, 0) \amper {\tt funct}(G, b) = \Some \fd \amper \fd.{\tt sig} = \sig \amper
L' = {\tt exitfun}(\parent{L}{\stk}, L)
}{
\trans {\state(\stk, \fn, \sp, {\tt tailcall}(\sig, r).c, L, M)}
       {\callstate(\stk, \fd, L', M)}
}
\irule{
L' = {\tt exitfun}(\parent{L}{\stk}, L)
}{
\trans {\state(\stk, \fn, \sp, {\tt return}.c, L, M)}
       {\returnstate(\stk, L', {\tt free}(M, \sp))}
}
\irule{
{\tt alloc}(M, 0, \fn.{\tt stacksize}) = (\sp, M') \amper
L' = {\tt entryfun}(L)
}{
\trans {\callstate(\stk, {\tt internal}(\fn), L, M)}
       {\state(\stk, \fn.{\tt code}, \sp, \fn.{\tt code}, L', M')}
}
\irule{
\vdash \ef(\vec v) \evaltrace{t} v \amper
\vec v = L({\tt loc{\char95}arguments}(\ef.{\tt sig})) \amper
L' = L\{{\tt loc{\char95}result}(\ef.{\tt sig}) \becomes v\}
}{
\transtrace {\callstate(\stk, {\tt external}(\ef), L, M)}
         {t}{\returnstate(\stk, L', M)}
}
\srule{
\trans {\returnstate(\stackframe(\fn, \sp, c, L_0).\stk, L, M)}
       {\state(\stk, \fn, \sp, c, L, M)}
}
\end{pannel}
\caption{Semantics of \lang{Linear}.  The transitions not shown are similar
  to those of \lang{LTLin}. } \label{f:linear-sem}
\end{figure}

As shown in the dynamic semantics for \lang{Linear} (see
figure~\ref{f:linear-sem}), the behavior of locations across function
calls is specified by two functions: ${\tt entryfun}(L)$ determines
the locations on entrance to the callee as a function of the locations
$L$ before the {\tt call}, and ${\tt exitfun}(L,L')$ determines the locations
in the callee when the {\tt call} returns as a function of the caller's
locations before the call, $L$, and the callee's locations before the
return, $L'$.  In summary, processor registers are global, but some
are preserved by the callee; {\tt local} and {\tt incoming} slots of the
caller are preserved across the call; and
the {\tt incoming} slots on entrance to the callee are the {\tt outgoing}
slots of the caller.
\begin{center}\normalbar
\begin{tabular}{@{}l|l|l@{}}
Location $l$           & ${\tt entryfun}(L)(l)$ & ${\tt exitfun}(L,L')(l)$ \\
\hline
$r$                    & $L(r)$             & $L(r)$ if $r$ is callee-save \\
                       &                    & $L'(r)$ if $r$ is caller-save \\
${\tt local}(\ty,\ofs)$    & {\tt undef}            & $L({\tt local}(\ty,\ofs))$ \\
${\tt incoming}(\ty,\ofs)$ & $L({\tt outgoing}(\ty,\ofs))$ & $L({\tt incoming}(\ty,\ofs))$  \\
${\tt outgoing}(\ty,\ofs)$ & {\tt undef}            & $L'({\tt incoming}(\ty,\ofs))$ 
\end{tabular}
\end{center}
In other words, the {\tt entryfun} and {\tt exitfun} anticipate, at the level
of the \lang{Linear} semantics, the effect of future transformations
(placement of stack slots in memory and insertion of function
prologues and epilogues to save and restore callee-save registers)
performed in the next compilation pass (section~\ref{s:stacking}).

In the rules for {\tt tailcall} and {\tt return} of figure~\ref{f:linear-sem},
the notation $\parent{L}{\stk}$ stands for the $L$ component of the
top frame in stack $\stk$.  A suitable default is defined for an empty
stack.
$$\parent{L}{(\stackframe(\fn, \sp, c, L).\stk)} = L
\qquad
  \parent{L}{\epsilon} = (\loc \mapsto {\tt undef})
$$

\subsection{Code transformation}

Our strategy for spilling and reloading is simplistic: each use of a
spilled pseudo-register is preceded by a {\tt getstack} instruction to
reload the pseudo-register in a machine register, and each definition
is followed by a {\tt setstack} instruction that spills the result.  No
attempt is made to reuse a reloaded value, nor to delay spilling.  We
reserve 3 integer registers and 3 float registers to hold reloaded
values and results of instructions before spilling.   (This
does not follow compiler textbooks, which prescribe re-running
register allocation to assign registers to reload and spill
temporaries.  However, it is difficult to prove termination for this
practice, and moreover it requires semantic reasoning about
partially-allocated code.)  
The following case should give the flavor of the transformation:
\begin{eqnarray*}
\tr{{\tt op}(\op, \vec \loc, \loc).c} & = &
  {\tt let\ }\vec r = {\tt regs{\char95}for}(\vec \loc) {\tt \ and\ }r = {\tt reg{\char95}for}(\loc){\tt \ in\ }\\
&&{\tt reloads}(\vec \loc, \vec r). {\tt op}(\op, \vec r, r).{\tt spill}(r, \loc). \tr{c}
\end{eqnarray*}
${\tt reg{\char95}for}(\loc)$ returns $r$ if the location $\loc$ is a machine
register $r$, or a temporary register of the appropriate type if
$\loc$ is a stack slot.  ${\tt regs{\char95}for}(\vec\loc)$ does the same for a
list of locations, using different temporary registers for each
location.  ${\tt spill}(r, \loc)$ generates the {\tt move} or
{\tt setstack} operation that sets $\loc$ to the value of register $r$.
Symmetrically, ${\tt reloads}(\vec\loc, \vec r)$ generates the {\tt move} or
{\tt getstack} operations that set $\vec r$ to the values of locations
$\vec \loc$.

For {\tt call} and {\tt tailcall} instructions with signature $\sig$ and
arguments $\vec \loc$, we insert moves from $\vec \loc$ to the
locations dictated by the calling conventions.  These locations
(a mixture of processor register and {\tt outgoing} stack slots)
are determined as a function ${\tt loc{\char95}arguments}(\sig)$ of the signature
$\sig$ of the called function.  Likewise, the result of the call,
which is passed back in the conventional location
${\tt loc{\char95}result}(\sig)$, is moved to the result location of the {\tt call}.
\begin{eqnarray*}
\tr{{\tt call}(\sig, \id, \vec \loc, \loc).c} & = &
  {\tt parallel{\char95}move}(\vec \loc, {\tt loc{\char95}arguments}(\sig)). \\
&& 
  {\tt call}(\sig, \id).
  {\tt spill}({\tt loc{\char95}result}(\sig), \loc).
  \tr{c}
\end{eqnarray*}
Symmetrically, at entry to a function $\fn$, we insert moves from
${\tt loc{\char95}parameters}(\fn.{\tt sig})$ to $\fn.{\tt params}$. (${\tt loc{\char95}parameters}(\sig)$ is 
${\tt loc{\char95}arguments}(\sig)$ where {\tt outgoing} slots are replaced by
{\tt incoming} slots.) 

The moves inserted for function arguments and for function parameters
must implement a parallel assignment semantics: some registers
can appear both as sources and destinations, as in $(r_1,r_2,r_3) :=
(r_2, r_1, r_4)$.  It is folklore that such parallel moves can
be implemented by a sequence of elementary moves using at most one
temporary register of each kind.  Formulating the parallel move
algorithm in Coq and proving its correctness was a particularly
difficult part of this development.  The proof is detailed in a
separate paper \cite{Rideau-Serpette-Leroy-parmov}.

\subsection{Semantic preservation} \label{s:reload-proof}

The correctness proof for the spilling pass is surprisingly involved
because it needs to account for the fact that the location states $L$
and memory states $M$ differ significantly between the original
\lang{LTLin} code and the generated \lang{Linear} code.  An
inessential source of difference is that the \lang{Linear} code makes
use of temporary registers and {\tt outgoing} and {\tt incoming} stack
locations while the \lang{LTLin} code does not.  A deeper difference
comes from the fact that, in \lang{LTLin}, locations other than
function parameters are consistently initialized to the {\tt undef} value,
while in \lang{Linear} some of these locations (\eg processor
registers) just keep whatever values they had in the caller before the
{\tt call} instruction.  Performing arithmetic over this {\tt undef} value
would cause the original \lang{LTLin} program to go wrong, but it 
can still pass this value around, store it in memory locations, and
read it back.  Therefore, some
{\tt undef} values found in \lang{LTLin} register and memory states can
become any other value in the corresponding \lang{Linear} location and
memory states.  To capture this fact, we use the ``less defined than''
ordering $\le$ between values defined by
$$ v \le v' \defequal v = v' \mbox{ or } v = {\tt undef} $$
and extended to memory states as follows:
$$ M \le M' \defequal
  \begin{array}[t]{rl}
  \forall \chunk,b,\ofs,v, & {\tt load}(\chunk,M,b,\ofs) = \Some v \\
     &  {} \Rightarrow \exists v', {\tt load}(\chunk,M',b,\ofs) = \Some{v'} \wedge v \le v'
  \end{array} $$
Leroy and Blazy \cite[section 5.3]{Leroy-Blazy-memory-model} study the properties of
this relation between memory states.  It commutes nicely with the
{\tt store}, {\tt alloc} and {\tt free} operations over memory states.

Putting it all together, we define agreement $L \le L'$ between a
\lang{LTLin} location state $L$ and a \lang{Linear} location state
$L'$ as
$$ L \le L' \defequal L(\loc) \le L'(\loc)
\mbox{ for all non-temporary registers or {\tt local} stack slots $\loc$}
$$
and define the invariant relating \lang{LTLin} and \lang{Linear}
execution states as follows.

\begin{pannel}
\irule{
\stk \match \stk' : \fn.{\tt sig} \amper L \le L' \amper M \le M'
}{
\state(\stk, \fn, \sp, c, L, M) \match
\state(\stk', \tr{\fn}, \sp, \tr{c}, L', M')
}
\irule{
\stk \match \stk' : \fd.{\tt sig} \amper 
\vec v \le L'({\tt loc{\char95}arguments}(\fd.{\tt sig})) \amper
L' \approx \parent{L}{\stk} \amper M \le M'
}{
\callstate(\stk, \fd, \vec v, M) \match
\callstate(\stk', \fd, L', M')
}
\irule{
\stk \match \stk' : \sig \amper 
v \le L'({\tt loc{\char95}result}(\sig)) \amper
L' \approx \parent{L}{\stk} \amper M \le M'
}{
\returnstate(\stk, v, M) \match
\returnstate(\stk', L', M')
}
\end{pannel}
For call states and return states, the second premises capture the
fact that the argument and return values can indeed be found in the
corresponding locations dictated by the calling conventions, modulo
the $\le$ relation between values.  The third premises
$L' \approx \parent{L}{\stk}$ say that the current location state $L'$
and that of the caller $\parent{L}{\stk}$ assign the same values to
any callee-save location.  Finally, in the invariant $\stk \match \stk' :
\sig$ relating call stacks, a call signature $\sig$ is threaded
through the call stack to make sure that the caller and the callee
agree on the result type of the call, and therefore on the location
used to pass the return value.
\begin{pannel}
\irule{
\sig.{\tt res} = \Some{{\tt int}}
}{
\epsilon \match \epsilon : \sig
}
\irule{
\stk \match \stk' : \fn.{\tt sig} \amper
c' = {\tt spill}({\tt loc{\char95}result}(\sig), \loc).\tr{c} \amper
{\tt postcall}(L) \le L'
}{
\stackframe(\loc, \fn, \sp, {\tt postcall}(L), c).\stk \match
\stackframe(\tr{\fn}, \sp, L', c').\stk' : \sig
}
\end{pannel}
Armed with these definitions and the proof of the parallel move
algorithm of \cite{Rideau-Serpette-Leroy-parmov}, we prove
semantic preservation for this pass using a simulation diagram of the
``star'' kind.  The only part of the transformation that could cause
stuttering is the elimination of a redundant move from $\loc$ to
$\loc$.  To prove that stuttering cannot happen, it suffices to note
that the length of the \lang{LTLin} instruction sequence currently
executing decreases in this case.

\section{Construction of the activation record} \label{s:stacking}

The penultimate compilation pass lays out the activation record for
each function, allocating space for stack slots and turning accesses
to slots into memory loads and stores.  Function prologues and
epilogues are added to preserve the values of callee-save registers.

\subsection{The target language: Mach} \label{s:mach}

The last intermediate language in our gentle descent towards assembly
language is called \lang{Mach}.  It is a variant of \lang{Linear}
where the three infinite supplies of stack slots (local, incoming and
outgoing) are mapped to actual memory locations in the stack frames
of the current function (for local and outgoing slots) or the calling
function (for incoming slots).
\begin{syntaxleft}
\syntaxclass{Mach instructions:}
i & ::=  & {\tt setstack}(r, \ty, \ofs) & register to stack move \\
  & \alt & {\tt getstack}(\ty, \ofs, r) & stack to register move \\
  & \alt & {\tt getparent}(\ty, \ofs, r) & caller's stack to register move\\
  & \alt & \ldots & as in \lang{Linear}
\end{syntaxleft}%
In the three new move instructions, $\ty$ is the type of the data
moved and $\ofs$ its word offset in the corresponding activation record.
\begin{syntaxleft}
\syntaxclass{Mach code:}
c & ::=  & i_1 \ldots i_n  & list of instructions
\syntaxclass{Mach functions:}
\fn & ::= & \{~{\tt sig} = \sig; \\
    &     & ~~~{\tt stack{\char95}high} = n;     & upper bound of stack data block \\
    &     & ~~~{\tt stack{\char95}low} = n;      & lower bound of stack data block \\
    &     & ~~~{\tt retaddr} = \delta;   & offset of saved return address \\
    &     & ~~~{\tt link} = \delta;      & offset of back link \\
    &     & ~~~{\tt code} = c \}         & instructions
\end{syntaxleft}%
Functions carry two byte offsets, {\tt retaddr} and {\tt link}, indicating
where in the activation record the function prologue should save the
return address into its caller and the back link to the activation
record of its caller, respectively.

\begin{syntaxleft}
\syntaxclass{Program states:}
\st & ::= & \state(\stk, \fn, \sp, c, R, M) & regular states \\
    & \alt& \callstate(\stk, \fd, R, M) & call states \\
    & \alt& \returnstate(\stk, R, M) & return states
\syntaxclass{Call stacks:}
\stk & ::= & (\stackframe(\fn, \sp, \retaddr, c))^* & list of frames
\end{syntaxleft}

Semantically, the main difference between \lang{Linear} and
\lang{Mach} is that, in \lang{Mach}, the register state $R$ mapping
processor registers to values is global and shared between caller and
callee.  In particular, $R$ is not saved in the call stack, and as
shown in figure~\ref{f:mach-sem}, there is no automatic restoration of
callee-save registers at function return; instead, the \lang{Mach}
code generator produces appropriate {\tt setstack} and {\tt getstack}
instructions to save and restore used callee-save registers at
function prologues and epilogues.

\begin{figure}

\begin{pannel}
\begin{eqnarray*}
{\tt loadstack}(\tau, M, \sp, \ofs) & = &
\twocases{{\tt load}(\chunk(\tau), M, b, \ofs' + \ofs)}{if $\sp = {\tt ptr}(b,\ofs')$}
         {\None}{otherwise} \\
{\tt storestack}(\tau, M, \sp, \ofs, v) & = &
\twocases{{\tt store}(\chunk(\tau), M, b, \ofs' + \ofs, v)}{if $\sp = {\tt ptr}(b,\ofs')$}
         {\None}{otherwise}
\end{eqnarray*}
\irule{
{\tt storestack}(\tau, M, \sp, \ofs, R(r)) = \Some{M'}
}{
\trans {\state(\stk, \fn, \sp, {\tt setstack}(r,\ty,\ofs).c, R, M)}
       {\state(\stk, \fn, \sp, c, R, M')}
}
\irule{
{\tt loadstack}(\tau, M, \sp, \ofs) = \Some v
}{
\trans {\state(\stk, \fn, \sp, {\tt getstack}(\ty,\ofs,r).c, R, M)}
       {\state(\stk, \fn, \sp, c, R\{r \becomes v\}, M)}
}
\irule{
{\tt loadstack}({\tt int}, M, \sp, \fn.{\tt link}) = \Some{\sp'} \amper
{\tt loadstack}(\tau, M, \sp', \ofs) = \Some v
}{
\trans {\state(\stk, \fn, \sp, {\tt getparent}(\ty,\ofs,r).c, R, M)}
       {\state(\stk, \fn, \sp, c, R\{r \becomes v\}, M)}
}
\irule{
R(r) = {\tt ptr}(b, 0) \amper {\tt funct}(G, b) = \Some \fd \amper \fd.{\tt sig} = \sig \amper
{\tt retaddr}(\fn, c, \retaddr)
}{
\trans {\state(\stk, \fn, \sp, {\tt call}(\sig, r).c, R, M)}
       {\callstate(\stackframe(\fn, \sp, \retaddr, c).\stk, \fd, R, M)}
}
\irule{
R(r) = {\tt ptr}(b, 0) \amper {\tt funct}(G, b) = \Some \fd \amper \fd.{\tt sig} = \sig \\
{\tt loadstack}({\tt int}, M, \sp, \fn.{\tt link}) = \Some{\parent{\sp}{\stk}} \amper
{\tt loadstack}({\tt int}, M, \sp, \fn.{\tt retaddr}) = \Some{\parent{\retaddr}{\stk}}
}{
\trans {\state(\stk, \fn, \sp, {\tt tailcall}(\sig, r).c, R, M)}
       {\callstate(\stk, \fd, R, M)}
}
\irule{
{\tt loadstack}({\tt int}, M, \sp, \fn.{\tt link}) = \Some{\parent{\sp}{\stk}} \amper
{\tt loadstack}({\tt int}, M, \sp, \fn.{\tt retaddr}) = \Some{\parent{\retaddr}{\stk}}
}{
\trans {\state(\stk, \fn, \sp, {\tt return}.c, R, M)}
       {\returnstate(\stk, R, M)}
}
\irule{
{\tt alloc}(M, \fn.{\tt stack{\char95}low}, \fn.{\tt stack{\char95}high}) = (b, M_1) \amper
\sp = {\tt ptr}(b, \fn.{\tt stack{\char95}low}) \\
{\tt storestack}({\tt int}, M_1, \sp, \fn.{\tt link}, \parent{\sp}{\stk}) = \Some{M_2} \\
{\tt storestack}({\tt int}, M_2, \sp, \fn.{\tt retaddr}, \parent{\retaddr}{\stk}) = \Some{M_3}
}{
\trans {\callstate(\stk, {\tt internal}(\fn), R, M)}
       {\state(\stk, \fn, \sp, \fn.{\tt code}, R, M_3)}
}
\end{pannel}
\caption{Semantics of \lang{Mach} (selected rules).} \label{f:mach-sem}
\end{figure}

The {\tt setstack} and {\tt getstack} instructions are interpreted as memory
stores and loads relative to the stack pointer.  We write
$\chunk(\tau)$ for the memory quantity appropriate for storing a value of type
$\tau$ without losing information, namely $\chunk({\tt int}) = {\tt int32}$
and $\chunk({\tt float}) = {\tt float64}$.  For the {\tt getparent} instruction,
we recover a pointer to the caller's stack frame by loading from the
{\tt link} location of our stack frame, then load from this pointer.  This
additional indirection is needed since, in our memory model, the
callee's stack frame is not necessarily adjacent to that of the
caller.  This linking of stack frames is implemented by the rule for
function entry. 

The rules for {\tt call} and function entry also make provisions for
saving a return address within the caller's function code in the
{\tt retaddr} location of the activation record.  This return address 
(a pointer to an instruction within a code block) becomes relevant in the next
compilation pass (generation of assembly code), but it is convenient
to reflect it in the semantics of \lang{Mach}.  To this end, the
semantics is parameterized by a predicate ${\tt retaddr}(\fn, c, \retaddr)$ that
relates this return address $\retaddr$ with the caller's function
$\fn$ and the code sequence $c$ that immediately follows the {\tt call}
instruction.  In section~\ref{s:ppcgen-proof}, we will see how to
define this predicate in a way that accurately characterizes the
return address.  The rules for {\tt return} and {\tt tailcall} contain
premises that check that the values contained in the {\tt retaddr} and
{\tt link} locations of the activation record were preserved during the
execution of the function.

\subsection{Code transformation} \label{s:stacking-transf}

\begin{figure}

\begin{gpic}

\begin{pgfpicture}
  \pgfsetxvec{\pgfpoint{1.000in}{0in}}
  \pgfsetyvec{\pgfpoint{0in}{1.000in}}
  \begin{pgfscope}
    \pgfpathmoveto{\pgfpointxy{1.825}{-0.512}}
    \pgfpathlineto{\pgfpointxy{3.075}{-0.512}}
    \pgfpathlineto{\pgfpointxy{3.075}{-0.212}}
    \pgfpathlineto{\pgfpointxy{1.825}{-0.212}}
    \pgfpathlineto{\pgfpointxy{1.825}{-0.512}}
    \pgfusepath{stroke}
  \end{pgfscope}
  \pgftext[at=\pgfpointadd{\pgfpointxy{2.450}{-0.362}}{\pgfpoint{0pt}{0.5 \baselineskip}}]{Return address}
  \pgftext[at=\pgfpointadd{\pgfpointxy{2.450}{-0.362}}{\pgfpoint{0pt}{-0.5 \baselineskip}}]{\& back link}
  \begin{pgfscope}
    \pgfpathmoveto{\pgfpointxy{1.825}{-0.212}}
    \pgfpathlineto{\pgfpointxy{3.075}{-0.212}}
    \pgfpathlineto{\pgfpointxy{3.075}{-0.012}}
    \pgfpathlineto{\pgfpointxy{1.825}{-0.012}}
    \pgfpathlineto{\pgfpointxy{1.825}{-0.212}}
    \pgfusepath{stroke}
  \end{pgfscope}
  \pgftext[at=\pgfpointadd{\pgfpointxy{2.450}{-0.112}}{\pgfpoint{0pt}{-0.0 \baselineskip}}]{Outgoing arguments}
  \begin{pgfscope}
    \pgfpathmoveto{\pgfpointxy{1.825}{-0.012}}
    \pgfpathlineto{\pgfpointxy{3.075}{-0.012}}
    \pgfpathlineto{\pgfpointxy{3.075}{0.288}}
    \pgfpathlineto{\pgfpointxy{1.825}{0.288}}
    \pgfpathlineto{\pgfpointxy{1.825}{-0.012}}
    \pgfusepath{stroke}
  \end{pgfscope}
  \pgftext[at=\pgfpointadd{\pgfpointxy{2.450}{0.138}}{\pgfpoint{0pt}{-0.0 \baselineskip}}]{Integer locals}
  \begin{pgfscope}
    \pgfpathmoveto{\pgfpointxy{1.825}{0.288}}
    \pgfpathlineto{\pgfpointxy{3.075}{0.288}}
    \pgfpathlineto{\pgfpointxy{3.075}{0.488}}
    \pgfpathlineto{\pgfpointxy{1.825}{0.488}}
    \pgfpathlineto{\pgfpointxy{1.825}{0.288}}
    \pgfusepath{stroke}
  \end{pgfscope}
  \pgftext[at=\pgfpointadd{\pgfpointxy{2.450}{0.388}}{\pgfpoint{0pt}{-0.0 \baselineskip}}]{Saved integer registers}
  \begin{pgfscope}
    \pgfpathmoveto{\pgfpointxy{1.825}{0.488}}
    \pgfpathlineto{\pgfpointxy{3.075}{0.488}}
    \pgfpathlineto{\pgfpointxy{3.075}{0.613}}
    \pgfpathlineto{\pgfpointxy{1.825}{0.613}}
    \pgfpathlineto{\pgfpointxy{1.825}{0.488}}
    \pgfsetfillcolor{black!20!white}
    \pgfusepath{fill,stroke}
  \end{pgfscope}
  \pgftext[at=\pgfpointadd{\pgfpointxy{2.450}{0.550}}{\pgfpoint{0pt}{-0.0 \baselineskip}}]{Padding}
  \begin{pgfscope}
    \pgfpathmoveto{\pgfpointxy{1.825}{0.613}}
    \pgfpathlineto{\pgfpointxy{3.075}{0.613}}
    \pgfpathlineto{\pgfpointxy{3.075}{0.913}}
    \pgfpathlineto{\pgfpointxy{1.825}{0.913}}
    \pgfpathlineto{\pgfpointxy{1.825}{0.613}}
    \pgfusepath{stroke}
  \end{pgfscope}
  \pgftext[at=\pgfpointadd{\pgfpointxy{2.450}{0.763}}{\pgfpoint{0pt}{-0.0 \baselineskip}}]{Float locals}
  \begin{pgfscope}
    \pgfpathmoveto{\pgfpointxy{1.825}{0.913}}
    \pgfpathlineto{\pgfpointxy{3.075}{0.913}}
    \pgfpathlineto{\pgfpointxy{3.075}{1.113}}
    \pgfpathlineto{\pgfpointxy{1.825}{1.113}}
    \pgfpathlineto{\pgfpointxy{1.825}{0.913}}
    \pgfusepath{stroke}
  \end{pgfscope}
  \pgftext[at=\pgfpointadd{\pgfpointxy{2.450}{1.013}}{\pgfpoint{0pt}{-0.0 \baselineskip}}]{Saved float registers}
  \begin{pgfscope}
    \pgfpathmoveto{\pgfpointxy{1.825}{1.113}}
    \pgfpathlineto{\pgfpointxy{3.075}{1.113}}
    \pgfpathlineto{\pgfpointxy{3.075}{1.512}}
    \pgfpathlineto{\pgfpointxy{1.825}{1.512}}
    \pgfpathlineto{\pgfpointxy{1.825}{1.113}}
    \pgfusepath{stroke}
  \end{pgfscope}
  \pgftext[at=\pgfpointadd{\pgfpointxy{2.450}{1.312}}{\pgfpoint{0pt}{-0.0 \baselineskip}}]{\lang{Cminor} stack data}
  \begin{pgfscope}
    \pgfpathmoveto{\pgfpointxy{3.175}{1.088}}
    \pgfpathlineto{\pgfpointxy{3.075}{1.113}}
    \pgfpathlineto{\pgfpointxy{3.175}{1.138}}
    \pgfpathlineto{\pgfpointxy{3.175}{1.088}}
    \pgfsetfillcolor{black}
    \pgfsetlinewidth{0.100pt}
    \pgfusepath{fill,stroke}
  \end{pgfscope}
  \begin{pgfscope}
    \pgfpathmoveto{\pgfpointxy{3.175}{1.113}}
    \pgfpathlineto{\pgfpointxy{3.325}{1.113}}
    \pgfusepath{stroke}
  \end{pgfscope}
  \pgftext[at=\pgfpointadd{\pgfpointxy{3.325}{1.113}}{\pgfpoint{0pt}{-0.0 \baselineskip}},left]{~Byte offset 0}
  \begin{pgfscope}
    \pgfpathmoveto{\pgfpointxy{2.675}{-1.512}}
    \pgfpathlineto{\pgfpointxy{3.925}{-1.512}}
    \pgfpathlineto{\pgfpointxy{3.925}{-1.212}}
    \pgfpathlineto{\pgfpointxy{2.675}{-1.212}}
    \pgfpathlineto{\pgfpointxy{2.675}{-1.512}}
    \pgfusepath{stroke}
  \end{pgfscope}
  \begin{pgfscope}
    \pgfpathmoveto{\pgfpointxy{2.675}{-1.212}}
    \pgfpathlineto{\pgfpointxy{3.925}{-1.212}}
    \pgfpathlineto{\pgfpointxy{3.925}{-0.812}}
    \pgfpathlineto{\pgfpointxy{2.675}{-0.812}}
    \pgfpathlineto{\pgfpointxy{2.675}{-1.212}}
    \pgfusepath{stroke}
  \end{pgfscope}
  \pgftext[at=\pgfpointadd{\pgfpointxy{3.300}{-1.012}}{\pgfpoint{0pt}{-0.0 \baselineskip}}]{Outgoing arguments}
  \begin{pgfscope}
    \pgfpathmoveto{\pgfpointxy{1.825}{-0.512}}
    \pgfpathlineto{\pgfpointxy{3.075}{-0.512}}
    \pgfpathlineto{\pgfpointxy{3.075}{1.513}}
    \pgfpathlineto{\pgfpointxy{1.825}{1.513}}
    \pgfsetlinewidth{1.000pt}
    \pgfusepath{stroke}
  \end{pgfscope}
  \begin{pgfscope}
    \pgfpathmoveto{\pgfpointxy{1.825}{1.513}}
    \pgfpathlineto{\pgfpointxy{1.825}{-0.512}}
    \pgfsetlinewidth{1.000pt}
    \pgfusepath{stroke}
  \end{pgfscope}
  \begin{pgfscope}
    \pgfpathmoveto{\pgfpointxy{3.925}{-1.512}}
    \pgfpathlineto{\pgfpointxy{2.675}{-1.512}}
    \pgfpathlineto{\pgfpointxy{2.675}{-0.512}}
    \pgfsetlinewidth{1.000pt}
    \pgfusepath{stroke}
  \end{pgfscope}
  \begin{pgfscope}
    \pgfpathmoveto{\pgfpointxy{3.925}{-1.512}}
    \pgfpathlineto{\pgfpointxy{3.925}{0.188}}
    \pgfpathlineto{\pgfpointxy{3.075}{0.188}}
    \pgfsetlinewidth{1.000pt}
    \pgfusepath{stroke}
  \end{pgfscope}
  \begin{pgfscope}
    \pgfpathmoveto{\pgfpointxy{2.575}{-1.487}}
    \pgfpathlineto{\pgfpointxy{2.675}{-1.512}}
    \pgfpathlineto{\pgfpointxy{2.575}{-1.537}}
    \pgfpathlineto{\pgfpointxy{2.575}{-1.487}}
    \pgfsetfillcolor{black}
    \pgfsetlinewidth{0.100pt}
    \pgfusepath{fill,stroke}
  \end{pgfscope}
  \begin{pgfscope}
    \pgfpathmoveto{\pgfpointxy{2.025}{-0.412}}
    \pgfpathlineto{\pgfpointxy{2.025}{-0.962}}
    \pgfpathcurveto{\pgfpointxy{2.025}{-1.329}}{\pgfpointxy{2.132}{-1.512}}{\pgfpointxy{2.345}{-1.512}}
    \pgfpathlineto{\pgfpointxy{2.665}{-1.512}}
    \pgfsetlinewidth{0.000pt}
    \pgfusepath{stroke}
  \end{pgfscope}
  \begin{pgfscope}
    \pgfpathmoveto{\pgfpointxy{3.175}{-0.537}}
    \pgfpathlineto{\pgfpointxy{3.075}{-0.512}}
    \pgfpathlineto{\pgfpointxy{3.175}{-0.487}}
    \pgfpathlineto{\pgfpointxy{3.175}{-0.537}}
    \pgfsetfillcolor{black}
    \pgfsetlinewidth{0.100pt}
    \pgfusepath{fill,stroke}
  \end{pgfscope}
  \begin{pgfscope}
    \pgfpathmoveto{\pgfpointxy{3.175}{-0.512}}
    \pgfpathlineto{\pgfpointxy{3.325}{-0.512}}
    \pgfsetlinewidth{0.000pt}
    \pgfusepath{stroke}
  \end{pgfscope}
  \begin{pgfscope}
    \pgfpathmoveto{\pgfpointxy{3.325}{-0.587}}
    \pgfpathlineto{\pgfpointxy{4.225}{-0.587}}
    \pgfpathlineto{\pgfpointxy{4.225}{-0.437}}
    \pgfpathlineto{\pgfpointxy{3.325}{-0.437}}
    \pgfpathlineto{\pgfpointxy{3.325}{-0.587}}
    \pgfsetfillcolor{white}
    \pgfusepath{fill}
  \end{pgfscope}
  \pgftext[at=\pgfpointadd{\pgfpointxy{3.775}{-0.512}}{\pgfpoint{0pt}{-0.0 \baselineskip}}]{Stack pointer $\sp$}
  \pgftext[at=\pgfpointadd{\pgfpointxy{2.450}{1.562}}{\pgfpoint{0pt}{0.5 \baselineskip}}]{{\it Current activation frame}}
  \pgftext[at=\pgfpointadd{\pgfpointxy{3.300}{-1.562}}{\pgfpoint{0pt}{-0.5 \baselineskip}}]{{\it Caller's activation frame}}
  \begin{pgfscope}
    \pgfpathellipse{\pgfpointxy{0.450}{0.538}}{\pgfpointxy{0.450}{0}}{\pgfpointxy{0}{0.300}}
    \pgfsetlinewidth{0.000pt}
    \pgfsetdash{{0.050in}{0.025in}}{0cm}
    \pgfusepath{stroke}
  \end{pgfscope}
  \pgftext[at=\pgfpointadd{\pgfpointxy{0.450}{0.538}}{\pgfpoint{0pt}{-0.0 \baselineskip}}]{{\tt local} slots}
  \begin{pgfscope}
    \pgfpathellipse{\pgfpointxy{0.450}{-0.162}}{\pgfpointxy{0.450}{0}}{\pgfpointxy{0}{0.200}}
    \pgfsetlinewidth{0.000pt}
    \pgfsetdash{{0.050in}{0.025in}}{0cm}
    \pgfusepath{stroke}
  \end{pgfscope}
  \pgftext[at=\pgfpointadd{\pgfpointxy{0.450}{-0.162}}{\pgfpoint{0pt}{-0.0 \baselineskip}}]{{\tt outgoing} slots}
  \begin{pgfscope}
    \pgfpathellipse{\pgfpointxy{0.450}{-0.762}}{\pgfpointxy{0.450}{0}}{\pgfpointxy{0}{0.200}}
    \pgfsetlinewidth{0.000pt}
    \pgfsetdash{{0.050in}{0.025in}}{0cm}
    \pgfusepath{stroke}
  \end{pgfscope}
  \pgftext[at=\pgfpointadd{\pgfpointxy{0.450}{-0.762}}{\pgfpoint{0pt}{-0.0 \baselineskip}}]{{\tt incoming} slots}
  \begin{pgfscope}
    \pgfpathmoveto{\pgfpointxy{0.768}{0.750}}
    \pgfpathlineto{\pgfpointxy{1.825}{0.912}}
    \pgfsetlinewidth{0.000pt}
    \pgfsetdash{{1pt}{0.050in}}{0cm}
    \pgfusepath{stroke}
  \end{pgfscope}
  \begin{pgfscope}
    \pgfpathmoveto{\pgfpointxy{0.900}{0.538}}
    \pgfpathlineto{\pgfpointxy{1.825}{0.613}}
    \pgfsetlinewidth{0.000pt}
    \pgfsetdash{{1pt}{0.050in}}{0cm}
    \pgfusepath{stroke}
  \end{pgfscope}
  \begin{pgfscope}
    \pgfpathmoveto{\pgfpointxy{0.900}{0.538}}
    \pgfpathlineto{\pgfpointxy{1.825}{0.288}}
    \pgfsetlinewidth{0.000pt}
    \pgfsetdash{{1pt}{0.050in}}{0cm}
    \pgfusepath{stroke}
  \end{pgfscope}
  \begin{pgfscope}
    \pgfpathmoveto{\pgfpointxy{0.768}{0.325}}
    \pgfpathlineto{\pgfpointxy{1.825}{-0.012}}
    \pgfsetlinewidth{0.000pt}
    \pgfsetdash{{1pt}{0.050in}}{0cm}
    \pgfusepath{stroke}
  \end{pgfscope}
  \begin{pgfscope}
    \pgfpathmoveto{\pgfpointxy{0.768}{-0.021}}
    \pgfpathlineto{\pgfpointxy{1.825}{-0.012}}
    \pgfsetlinewidth{0.000pt}
    \pgfsetdash{{1pt}{0.050in}}{0cm}
    \pgfusepath{stroke}
  \end{pgfscope}
  \begin{pgfscope}
    \pgfpathmoveto{\pgfpointxy{0.768}{-0.304}}
    \pgfpathlineto{\pgfpointxy{1.825}{-0.212}}
    \pgfsetlinewidth{0.000pt}
    \pgfsetdash{{1pt}{0.050in}}{0cm}
    \pgfusepath{stroke}
  \end{pgfscope}
  \begin{pgfscope}
    \pgfpathmoveto{\pgfpointxy{0.768}{-0.621}}
    \pgfpathlineto{\pgfpointxy{2.675}{-0.812}}
    \pgfsetlinewidth{0.000pt}
    \pgfsetdash{{1pt}{0.050in}}{0cm}
    \pgfusepath{stroke}
  \end{pgfscope}
  \begin{pgfscope}
    \pgfpathmoveto{\pgfpointxy{0.768}{-0.904}}
    \pgfpathlineto{\pgfpointxy{2.675}{-1.212}}
    \pgfsetlinewidth{0.000pt}
    \pgfsetdash{{1pt}{0.050in}}{0cm}
    \pgfusepath{stroke}
  \end{pgfscope}
\end{pgfpicture}
\ifx\Setlineno\undefined\else\Setlineno=69\fi
\end{gpic}

\caption{Layout of \lang{Mach} activation frames; mapping from
  \lang{Linear} stack slots to frame locations.}
\label{f:stack-layout}

\end{figure}

The translation from \lang{Linear} to \lang{Mach} proceeds in two
steps.  First, the \lang{Linear} code is scanned to determine which
stack slots and callee-save registers it uses.  Based on this
information, the activation record is laid out following the general
shape pictured in figure~\ref{f:stack-layout}.  The total size of the
record and the byte offsets for each of its areas are
determined.  From these offsets, we can define a function $\Delta$
mapping callee-save registers, {\tt local} and {\tt outgoing} stack slots to
byte offsets, as suggested in figure~\ref{f:stack-layout}.  
(Note that these offsets are negative, while positive offsets
within the activation frame correspond to \lang{Cminor} stack data.
This choice is compatible with our pointer and memory model, where
offsets in pointers are signed integers, and it simplifies the
soundness proof.)

The generation of \lang{Mach} code is straightforward.  {\tt getslot} and
{\tt setslot} \lang{Linear} instructions are rewritten as follows:
\begin{eqnarray*}
\tr{{\tt getslot}({\tt local}(\tau,\ofs), r)} & = &
{\tt getstack}(\tau, \Delta({\tt local}(\tau,\ofs)), r) \\
\tr{{\tt getslot}({\tt outgoing}(\tau,\ofs), r)} & = &
{\tt getstack}(\tau, \Delta({\tt outgoing}(\tau,\ofs)), r) \\
\tr{{\tt getslot}({\tt incoming}(\tau,\ofs), r)} & = &
{\tt getparent}(\tau, \Delta({\tt outgoing}(\tau,\ofs)), r) \\
\tr{{\tt setslot}(r, {\tt local}(\tau,\ofs))} & = &
{\tt setstack}(r, \tau, \Delta({\tt local}(\tau,\ofs))) \\
\tr{{\tt setslot}(r, {\tt outgoing}(\tau,\ofs))} & = &
{\tt setstack}(r, \tau, \Delta({\tt outgoing}(\tau,\ofs)))
\end{eqnarray*}
Moreover, instructions to save and restore the callee-save registers
$r_1, \ldots, r_n$ used in the function are inserted as function
epilogue and prologue:
\begin{eqnarray*}
\tr{{\tt return}} & = & \ldots {\tt getstack}(\tau(r_i), \Delta(r_i), r_i)
\ldots {\tt return} \\
\tr{\fn} & = & \{ {\tt code} = \ldots {\tt setstack}(r_i, \tau(r_i),
\Delta(r_i)) \ldots \tr{\fn.{\tt code}}; \ldots \}
\end{eqnarray*}
Moreover, the translation of a function $\tr{\fn}$ must fail if the
``frame'' part of the activation record (everything except the
\lang{Cminor} stack data) is bigger than $2^{31}$ bytes.  Indeed, in
this case the signed integer offsets used to access locations within the
activation record would overflow, making it impossible to access some
frame components.

\subsection{Semantic preservation} \label{s:stacking-proof}

While the code transformation outlined above is simple, its proof of
correctness is surprisingly difficult, mostly because it entails
reasoning about memory separation properties (between areas of the
activation record and between different activation records).  To
manage this complexity, we broke the proof in two sub-proofs, using an
alternate semantics for \lang{Mach} to connect them. 

In this alternate semantics, the ``frame'' part of activation records
does not reside in memory; instead, its contents are recorded
separately in a component $\Phi$ of regular states.  This environment
$\Phi$ maps (type, byte offset) pairs to values, taking overlap into
account in the style of the maps $L$ from locations to values
introduced in section~\ref{s:ltl}.  Each function activation has its
own frame environment $\Phi$, saved and restored from the call stack
$\Sigma$.  In the alternate semantics for \lang{Mach}, the {\tt getstack}
and {\tt setstack} instructions are reinterpreted as accesses and updates
to $\Phi$, and {\tt getparent} instructions as accesses to
$\parent{\Phi}{\stk}$.

The first part of the proof shows a simulation diagram of the ``plus''
kind between executions of the original \lang{Linear} code and
executions of the generated \lang{Mach} code using the alternate
semantics outlined above.  Thanks to the alternate semantics, memory
states are identical between pairs of matching \lang{Linear} and
\lang{Mach} states.  The main invariant to be enforced is agreement
between, on the \lang{Linear} side,  the location states $L$ and
$\parent{L}{\stk}$ of the current and calling functions and, on the
\lang{Mach} side, the register state $R$ and the frame states $\Phi$
and $\parent{\Phi}{\stk'}$.
This agreement captures the following conditions:
\begin{itemize}
\item $L(r) = R(r)$ for all registers $r$;
\item $L(s) = \Phi(\tau, \Delta(s))$
  for all {\tt local} and {\tt outgoing} stack slots $s$ of type $\tau$
  used in the current function;
\item $L(s) = \parent{\Phi}{\stk'}(\tau, \Delta(s))$
  for all {\tt incoming} slots $s$ of type $\tau$ used in the
  current function;
\item $\parent{L}{\stk}(r) = \Phi(\tau, \Delta(r))$ for all
  callee-save registers $r$ of type $\tau$ used in the current function;
\item $\parent{L}{\stk}(r) = R(r)$ for all callee-save registers $r$ not
  used in the current function.
\end{itemize}
The preservation of agreement during execution steps follows mainly
from separation properties between the various areas of
activation records.

The second part of the proof shows a lock-step simulation result
between executions of \lang{Mach} programs that use the alternate and
standard semantics, respectively.  Here, the memory states $M$ (of the
alternate semantics) and $M'$ (of the standard semantics) differ:
for each activation record $b$, the block $M'(b)$ is larger than the
block $M(b)$ (because it contains additional ``frame'' data), but the
two blocks have the same contents on byte offsets valid for $M(b)$
(these offsets correspond to the \lang{Cminor} stack data).  We
capture this connection between memory states by the ``memory
extension'' ordering $M \le M'$:
$$ M \le M' \defequal
  \forall \chunk,b,\ofs,v,~ {\tt load}(\chunk,M,b,\ofs) = \Some v
     \Longrightarrow {\tt load}(\chunk,M',b,\ofs) = \Some{v}
$$
Leroy and Blazy \cite[section 5.2]{Leroy-Blazy-memory-model} study the properties of
this relation between memory states and shows that it commutes nicely
with the {\tt store}, {\tt alloc} and {\tt free} operations over memory states.

Another invariant that we must maintain is that the contents of the
block $M'(b)$ agree with the frame state $\Phi$ of the alternate
\lang{Mach} semantics:
\begin{eqnarray*}
\fn, M, M' \models \Phi \match b, \delta \defequal
\begin{array}[t]{ll}
        & b \mbox{ is valid in } M \wedge {\cal L}(M, b) = 0 \\
\wedge & {\cal L}(M', b) = \fn.{\tt stack{\char95}low} \wedge  {\cal H}(M', b) \ge 0 \\
\wedge & \delta = \fn.{\tt stack{\char95}low} \mbox{ mod } 2^{32} \\
\wedge & \forall \tau, n, ~~
          \fn.{\tt stack{\char95}low} \le n \wedge n + \size{\tau} \le 0
          \Longrightarrow {} \\
       & ~~~~~~~~~~~{\tt load}(\chunk(\tau), M', b, n) =
                     \Some{\Phi(\ty,n)}
\end{array}
\end{eqnarray*}

The invariant between alternate \lang{Mach} states and standard
\lang{Mach} states is, then, of the following form:
\begin{pannel}
\irule{
\stk \match \stk' \amper \fn, M, M' \models \Phi \match \sp, \ofs \amper M \le M' \amper
\stk' \prec \sp
}{
\state(\stk, \fn, {\tt ptr}(\sp,\ofs), c, R, \Phi, M) \match
\state(\stk', \fn, {\tt ptr}(\sp,\ofs), c, R, M')
}
\end{pannel}
The notation $\stk' \prec \sp$ means that the stack blocks $\sp_1,
\ldots, \sp_n$ appearing in the call stack $\stk'$ are such that
$\sp > \sp_1 > \ldots > \sp_n$ and are therefore pairwise distinct.
To complete the proof of simulation between the alternate and standard
semantics for \lang{Mach}, we need to exploit well-typedness
properties.  The frame environments $\Phi$ used in the alternate
semantics of \lang{Mach} satisfy the classic ``good variable'' property
$\Phi\{(\tau,\delta) \becomes v\}(\tau,\delta) = v$ regardless of
whether the value~$v$ matches the claimed type~$\tau$.  However, once
frames are mapped to memory locations, writing \eg a float value with
memory quantity {\tt int32} and reading it back with the same memory
quantity results in {\tt undef}, not the stored value.  More precisely:
\begin{lemma}
Assume ${\tt storestack}(\tau, M, \sp, \ofs, v) = \Some{M'}$.
Then, \\ ${\tt loadstack}(\tau, M', \sp, \ofs) = \Some v$ if and only if 
$v : \tau$ (as defined in section~\ref{s:values-mem}).
\end{lemma}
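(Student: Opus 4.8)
The plan is to peel away the two stack-relative wrappers, reduce to the underlying memory model, and finally settle a small algebraic fact about the ${\tt cast}$ function. First I would dispose of the degenerate case on $\sp$: if $\sp$ is not of the form ${\tt ptr}(b,\ofs')$, then ${\tt storestack}(\tau, M, \sp, \ofs, v) = \None \ne \Some{M'}$, contradicting the hypothesis, so we may assume $\sp = {\tt ptr}(b,\ofs')$. Unfolding the definitions then rewrites the hypothesis as ${\tt store}(\chunk(\tau), M, b, \ofs' + \ofs, v) = \Some{M'}$ and the goal as ${\tt load}(\chunk(\tau), M', b, \ofs' + \ofs) = \Some{v}$ iff $v : \tau$. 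Applying the load-after-store property of section~\ref{s:values-mem} (proved in \cite{Leroy-Blazy-memory-model}) in its same-block, same-offset, same-quantity instance — legitimate since the successful store guarantees the cell is valid for reading, and the first of the three cases applies because $\size{\chunk(\tau)} = \size{\chunk(\tau)}$ — yields ${\tt load}(\chunk(\tau), M', b, \ofs' + \ofs) = \Some{{\tt cast}(v, \chunk(\tau))}$. Hence ${\tt loadstack}(\tau, M', \sp, \ofs) = \Some{{\tt cast}(v, \chunk(\tau))}$, and the whole biconditional collapses to the single equation ${\tt cast}(v, \chunk(\tau)) = v$ iff $v : \tau$.

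It remains to prove this equation by case analysis on $v$ and $\tau$, recalling that $\chunk({\tt int}) = {\tt int32}$, that $\chunk({\tt float}) = {\tt float64}$, and that ${\tt cast}$ performs truncation or sign-extension to the target quantity. When $v = {\tt int}(n)$ or $v = {\tt ptr}(b,\ofs')$, casting through the full-width ${\tt int32}$ quantity is the identity, so ${\tt cast}(v, \chunk({\tt int})) = v$, whereas ${\tt cast}(v, \chunk({\tt float})) = {\tt undef} \ne v$; this matches the fact that $v : {\tt int}$ holds (pointers have static type {\tt int}) while $v : {\tt float}$ fails. The case $v = {\tt float}(f)$ is symmetric: ${\tt cast}(v, \chunk({\tt float})) = v$ and ${\tt cast}(v, \chunk({\tt int})) = {\tt undef} \ne v$, matching $v : {\tt float}$ and $\neg(v : {\tt int})$. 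Finally, when $v = {\tt undef}$ we have ${\tt cast}({\tt undef}, \chunk(\tau)) = {\tt undef} = v$ for either $\tau$, and correspondingly ${\tt undef} : \tau$ holds for all $\tau$; so the load reports success with value ${\tt undef}$, which is consistent because ${\tt undef}$ is well-typed at every type. In each case ${\tt cast}(v, \chunk(\tau)) = v$ holds exactly when $v : \tau$, completing the proof.

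I expect the main obstacle to lie not in any one inference but in the faithful bookkeeping of the ${\tt cast}$ and $\chunk$ behavior against the precise definitions of \cite{Leroy-Blazy-memory-model}: one must confirm that casting an integer value through the full-width ${\tt int32}$ quantity really is the identity (including the pointer sub-case), and, crucially, that every cross-type cast degenerates to ${\tt undef}$ rather than to some bit-reinterpreted value — it is exactly this ``hiding'' of bit-level representations that makes the biconditional tight. The ${\tt undef}$ case deserves the explicit remark above, since there the load genuinely returns ${\tt undef}$ yet we still report $\Some{v}$, a situation reconciled only by the convention that ${\tt undef} : \tau$ for all $\tau$.
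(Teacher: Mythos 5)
Your proof is correct, and it is essentially the argument the paper intends (the paper itself defers the proof to the Coq development, but its surrounding discussion — ``writing a float value with memory quantity {\tt int32} and reading it back\ldots results in {\tt undef}'' — points to exactly your route): unfold {\tt loadstack}/{\tt storestack}, apply the load-after-store property in its same-block, same-offset, same-quantity instance, and reduce to the case analysis showing ${\tt cast}(v, \chunk(\tau)) = v$ iff $v : \tau$, with the store's success supplying the validity of the access needed to invoke that property. Your attention to the {\tt undef} case and to the fact that cross-type casts yield {\tt undef} rather than a bit-reinterpretation is precisely the point of the lemma.
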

Therefore, we need to exploit the well-typedness of the \lang{Mach}
code in a trivial {\tt int}-or-{\tt float} type system in the style of
section~\ref{s:cminor-typing} to ensure that the values $v$ stored in
a stack location of type $\tau$ always semantically belong to type
$\tau$.  We say that a \lang{Mach} alternate execution state $\st$ is
well typed if
\begin{itemize}
\item all functions and code sequences appearing in $\st$ are well typed;
\item all abstract frames $\Phi$ appearing in $\st$ are such
that $\forall \tau,\ofs, ~ \Phi(\tau,\ofs) : \tau$;
\item all register states $R$ appearing in $\st$ are such that
$\forall r, ~ R(r) : \tau(r)$.
\end{itemize}
We can then prove a ``subject reduction'' lemma showing that
well-typedness is preserved by transitions.
\begin{lemma}
If $\transtrace{\st}{t}{\st'}$ in the \lang{Mach} abstract semantics
and $\st$ is well typed, then $\st'$ is well typed.
\end{lemma}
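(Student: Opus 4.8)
The plan is to proceed by case analysis on the transition rule that justifies $\transtrace{\st}{t}{\st'}$ in the alternate \lang{Mach} semantics, checking that each of the three clauses in the definition of well-typedness is re-established in $\st'$ under the assumption that it holds in $\st$. Throughout I assume, as a standing hypothesis on the program, that every function definition reachable through the global environment $G$ is well typed in the \lang{Mach} {\tt int}-or-{\tt float} type system; this is what lets the {\tt call} and function-entry rules, which fetch $\fd$ via ${\tt funct}(G,b)$, produce well-typed callee states. With that in place each rule is treated independently, and the three clauses are essentially orthogonal, so the proof is a long but shallow enumeration.

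Clause (a), well-typedness of the functions and code sequences appearing in the state, is the easiest to maintain. Every intra-function transition replaces the current code sequence by one of its suffixes, and the {\tt goto}/{\tt cond} rules branch to ${\tt findlabel}(\fn,\ldots)$, which by construction returns a suffix of $\fn.{\tt code}$; suffixes of a well-typed instruction list are again well typed. The {\tt call} and function-entry rules introduce the callee's body (well typed by the standing hypothesis), while {\tt return} restores a frame whose code component was already present in the call stack and hence well typed by assumption. Clause (b), that $\Phi(\ty,\ofs) : \ty$ for every frame $\Phi$ occurring in the state, is only put at risk by the {\tt setstack} rule, the sole transition that updates a frame, since it performs $\Phi\{(\ty,\ofs) \becomes R(r)\}$. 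By the overlap-aware ``good variable'' property of $\Phi$, the updated frame agrees with $\Phi$ away from the written location (still typed by hypothesis), yields {\tt undef} on partially overlapping offsets (and ${\tt undef} : \ty$ holds at every $\ty$), and at the written slot holds $R(r)$; well-typedness of the {\tt setstack} instruction forces $\tau(r) = \ty$, so $R(r) : \ty$ follows from clause (c) applied to the source state. The fresh frame created at function entry is the constant {\tt undef} map, which satisfies the invariant trivially.

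Clause (c), that $R(r) : \tau(r)$ for the register state, carries the real content. The moves produced by {\tt getstack} and {\tt getparent} copy $\Phi(\ty,\ofs)$, respectively $\parent{\Phi}{\stk}(\ty,\ofs)$, into the destination register; both source frames satisfy clause (b), and well-typedness of the instruction gives $\tau(r) = \ty$, so the invariant is restored. For the external-call rule the result register receives the event value $v$, whose side-condition forces it to agree in type with $\ef.{\tt sig}$ and hence with the result location. The crux, and the \emph{main obstacle}, is the {\tt op} (and {\tt load}) case: I will need an auxiliary type-preservation lemma stating that if ${\tt eval{\char95}op}(\op, \vec v) = \Some v$ and the arguments carry the types demanded by $\op$, then $v$ has the result type of $\op$. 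Establishing this requires an exhaustive case analysis over every operator and addressing mode, exploiting that each arithmetic operator is monomorphic with a fixed result type (integer operations and all comparisons return {\tt int}, float operations return {\tt float}, conversions return their target type), with the one polymorphic operator {\tt move} handled by transferring the argument's type through clause (c). A little care is needed only because ${\tt undef} : \ty$ holds for every $\ty$: the cases in which ${\tt eval{\char95}op}$ returns {\tt undef} are immediate, and it is precisely the cases returning a genuine integer, float, or pointer where the result type must be matched against the operator's declared result type.
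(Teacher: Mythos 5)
Your proposal is correct and follows exactly the intended argument: the paper itself omits the proof (deferring to the Coq development, where this ``subject reduction'' lemma is proved by case analysis on the transition under a program-well-typedness hypothesis, using auxiliary type-preservation lemmas for ${\tt eval{\char95}op}$, addressing modes and memory loads). Your explicit standing hypothesis that all functions reachable through $G$ are well typed is indeed necessary for the {\tt call}, {\tt tailcall} and function-entry cases, and your treatment of the overlap-aware frame update (exploiting ${\tt undef} : \tau$ for all $\tau$) matches how the invariant is maintained there.
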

Combining this well-typedness property with the invariant between
alternate and standard \lang{Mach} states, we prove the following
lock-step simulation result between the two \lang{Mach} semantics.
\begin{lemma}
If $\transtrace{\st_1}{t}{\st_1'}$ in the \lang{Mach} abstract semantics
and $\st_1$ is well typed and $\st_1 \match \st_2$, there exists $\st_2'$
such that $\transtrace{\st_2}{t}{\st_2'}$ in the standard semantics
and $\st_1' \match \st_2'$.
\end{lemma}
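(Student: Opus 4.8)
The plan is to proceed by case analysis on the abstract transition $\transtrace{\st_1}{t}{\st_1'}$, matching each rule of the alternate \lang{Mach} semantics with the corresponding rule of the standard semantics and checking in each case that the invariant $\match$ is re-established. Since matching regular states share the same function $\fn$, the same stack pointer ${\tt ptr}(\sp,\ofs)$, the same code sequence, and the same register state $R$, the two semantics diverge only in how they represent frame data ($\Phi$ on the abstract side, the ``frame'' region of the block ${\tt ptr}(\sp,\ofs)$ of $M'$ on the standard side) and in the memory-extension gap $M \le M'$. I would first dispatch the memory-neutral and carry-over cases---${\tt op}$, ${\tt cond}$, ${\tt nop}$, ${\tt goto}$, ${\tt label}$, and the load/store of \lang{Cminor} data. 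For the arithmetic and branch instructions the standard transition is literally the same, $\Phi$ and the stacks are untouched, and the invariant holds trivially. For a ${\tt load}$, the loaded value transfers because $M \le M'$ preserves successful loads exactly; for a ${\tt store}$, I would invoke the Leroy--Blazy result that $\le$ commutes with {\tt store} to obtain $M_1 \le M_1'$, and then argue by the load-after-store separation property that writing at a (nonnegative) \lang{Cminor}-data offset leaves untouched the (negative) frame offsets mentioned in the frame-agreement predicate $\fn, M, M' \models \Phi \match \sp, \ofs$.

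The substantive cases are the frame-access instructions ${\tt setstack}$, ${\tt getstack}$, and ${\tt getparent}$, where the abstract semantics updates or reads $\Phi$ while the standard semantics performs a {\tt storestack}/{\tt loadstack} into the activation block at the offset $\Delta(\cdot)$. For ${\tt getstack}$, the frame-agreement predicate directly yields ${\tt loadstack}(\tau, M', \sp, \ofs) = \Some{\Phi(\tau,\ofs)}$, so the standard load produces the same value that the abstract rule reads from $\Phi$, and nothing else changes. For ${\tt setstack}$, I would show the standard {\tt storestack} succeeds (the block is valid and the offset is in range, from the agreement invariant and the frame layout), and then re-establish $\fn, M, M' \models \Phi' \match \sp, \ofs$ for the updated frame $\Phi' = \Phi\{(\tau,\ofs')\becomes v\}$: the newly written slot reads back correctly by the {\tt storestack}/{\tt loadstack} round-trip lemma---and this is exactly where the hypothesis that $\st_1$ is well typed is needed, since the round trip only returns $v$ when $v : \tau$, which is guaranteed because the abstract frame satisfies $\Phi(\tau,\ofs):\tau$; all other slots are preserved by load-after-store separation between distinct frame offsets. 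Crucially, $M \le M'$ is preserved because this store targets the frame region of $M'$, which lies outside the valid offsets of $M$, so it cannot disturb the load-agreement between $M$ and $M'$. The ${\tt getparent}$ case is analogous but reads from $\parent{\Phi}{\stk}$ versus the parent block $\parent{\sp}{\stk'}$, the correspondence coming from the $\stk \match \stk'$ premise carried in the call-stack invariant.

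Finally come the control-transfer cases ${\tt call}$, ${\tt tailcall}$, ${\tt return}$, and function entry. For ${\tt call}$ I would push matching frames on both stacks, leaving memory untouched and threading $\stk \match \stk'$ through; the return-address side conditions match because the predicate ${\tt retaddr}(\fn,c,\retaddr)$ is shared. For function entry, the standard semantics allocates a block larger than the abstract one and writes the back-link and return address; here I must build the frame-agreement predicate for the fresh frame, establish the freshly allocated $M \le M'$ from the Leroy--Blazy commutation of $\le$ with {\tt alloc}, and extend the block-ordering invariant $\stk' \prec \sp$ using the fact that allocation yields a strictly larger block identifier. For ${\tt return}$ and ${\tt tailcall}$ I would pop the frame and free the current block, using commutation of $\le$ with {\tt free} and the separation guaranteed by $\stk' \prec \sp$. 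I expect the main obstacle to be precisely the ${\tt setstack}$ and function-entry cases: they concentrate all the memory-model reasoning---separation between the distinct areas of one activation record and between the current frame and the \lang{Cminor} stack data, the type-dependent cast in the round-trip lemma, and the delicate bookkeeping needed to re-establish both $M \le M'$ and the full frame-agreement predicate after an allocation or an in-frame store.
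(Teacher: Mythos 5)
Your proposal matches the paper's own proof: the paper assembles exactly the ingredients you deploy---the frame-agreement predicate, the memory-extension ordering $M \le M'$ with its commutation properties for {\tt store}, {\tt alloc} and {\tt free}, the stack-ordering invariant $\stk' \prec \sp$, the {\tt storestack}/{\tt loadstack} round-trip lemma, and well-typedness (propagated by subject reduction) to discharge the $v : \tau$ side condition---and its proof is precisely the lock-step case analysis you describe, with the {\tt setstack} and function-entry cases concentrating the separation reasoning between frame offsets and \lang{Cminor} stack data. In particular, your identification of the {\tt setstack} case as the place where the well-typedness hypothesis is indispensable coincides with the paper's own stated reason for introducing the typing invariant.
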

Semantic preservation for the compiler pass that constructs activation
records then follows from the two sub-proofs of simulation outlined above.

\section{The output language: PowerPC assembly language} \label{s:ppc}

\begin{figure}

\noindent Machine instructions:

\smallskip

{
\def\column{\begin{minipage}[t]{1.9cm}\begin{alltt}}
\def\endcolumn{\end{alltt}\end{minipage}}

~~~\begin{column}
add
addi
addis
addze
and.
andc
andi.
andis.
b
bctr
bctrl
bf
bl
bs
\end{column}
\begin{column}
blr
bt
cmplw
cmplwi
cmpw
cmpwi
cror
divw
divwu
eqv
extsb
extsh
fabs
fadd
\end{column}
\begin{column}
fcmpu
fdiv
fmadd
fmr
fmsub
fmul
fneg
frsp
fsub
lbz
lbzx
lfd
lfdx
lfs
\end{column}
\begin{column}
lfsx
lha
lhax
lhz
lhzx
lwz
lwzx
mfcr
mflr
mr
mtctr
mtlr
mulli
mullw
\end{column}
\begin{column}
nand
nor
or
orc
ori
oris
rlwinm
slw
sraw
srawi
srw
stb
stbx
stfd
\end{column}
\begin{column}
stfdx
stfs
stfsx
sth
sthx
stw
stwx
subfc
subfic
xor
xori
xoris
\end{column}
}

\bigskip

Macro-instructions:
\begin{center}
\begin{tabular}[t]{l}
{\tt allocframe}, {\tt freeframe}: allocation and deallocation of a
  stack frame \\
{\tt lfi}: load a floating-point constant in a float register \\
{\tt fcti}, {\tt fctiu}: conversion from floats to integers \\
{\tt ictf}, {\tt iuctf}: conversion from integers to floats
\end{tabular}
\end{center}

\caption{The subset of PowerPC instructions used in Compcert.}
\label{f:ppc-instrs}

\end{figure}

The target language for our compiler is \lang{PPC}, an abstract syntax
for a subset of the PowerPC assembly language \cite{PowerPC-94},
comprising 82 of the 200+ instructions of this processor, plus 7
macro-instructions.  The supported instructions are listed in
figure~\ref{f:ppc-instrs}.

\subsection{Syntax} \label{s:ppc-syntax}

The syntax of \lang{PPC} has the following shape:

\begin{syntaxleft}
\syntaxclass{Integer registers:}
\ireg & ::= & {\tt R0} \alt {\tt R1} \alt \ldots \alt {\tt R31} 
\syntaxclass{Float registers:}
\freg & ::= & {\tt F0} \alt {\tt F1} \alt \ldots \alt {\tt F31} 
\syntaxclass{Condition bits:}
\crbit & ::= & {\tt CR0} \alt {\tt CR1} \alt {\tt CR2} \alt {\tt CR3}
\syntaxclass{Constants:}
\cst & ::= & n \alt {\tt lo16}(\id + \ofs) \alt {\tt hi16}(\id + \ofs)
\syntaxclass{Instructions:}
i & ::= & {\tt label}(\lbl) \alt {\tt bt}(\crbit, \lbl) \\
  & \alt & {\tt add}(\ireg, \ireg', \ireg'') \alt {\tt addi}(\ireg, \ireg', \cst) \\
  & \alt & {\tt fadd}(\freg, \freg', \freg'') \alt \ldots
\syntaxclass{Internal functions:}
\fn & ::= & i^*
\end{syntaxleft}

\lang{PPC} is an assembly language, not a machine language.  This is
apparent in the use of symbolic labels in branch instructions such as
{\tt bt}, and in the use of symbolic constants ${\tt lo16}(\id + \ofs)$ and
${\tt hi16}(\id + \ofs)$ as immediate operands of some instructions.
(These constants, resolved by the linker, denote the low-order and
high-order 16 bits of the memory address of symbol $\id$ plus offset
$\ofs$.)

Moreover, \lang{PPC} features a handful of macro-instructions that
expand to canned sequences of actual instructions during
pretty-printing of the abstract syntax to concrete assembly syntax.
These macro-instructions include allocation and deallocation of the
stack frame (mapped to arithmetic on the stack pointer register),
conversions between integers and floats (mapped to memory transfers
and complicated bit-level manipulations of IEEE floats), and loading
of a floating-point literal (mapped to a load from a memory-allocated
constant).  The reason for treating these operations as basic
instructions is that the memory model and the axiomatization of IEEE
float arithmetic that we use are too abstract to verify the
correctness of the corresponding canned instruction sequences.  (For
example, our memory model cannot express that two abstract stack
frames are adjacent in memory.)  We leave this verification to
future work, but note that these canned sequences of instructions are
identical to those used by GCC and other PowerPC compilers and
therefore have been tested extensively.

\subsection{Semantics} \label{s:ppc-semantics}

Program states in \lang{PPC} are pairs $\st ::= (R, M)$ of a memory
state $M$ and a register state $R$ associating values to the processor
registers that we model, namely integer registers $\ireg$,
floating-point registers $\freg$, bits 0 to 3 of the condition
register {\tt CR}, the program counter {\tt PC}, and the special ``link'' and
``counter'' registers {\tt LR} and {\tt CTR}.  

The core of \lang{PPC}'s operational semantics is a transition
function $T(i, \st) = \Some{\st'}$ that determines the state $\st'$
after executing instruction $i$ in initial state $\st$.  In
particular, the program counter {\tt PC} is incremented (for fall-through
instructions) or set to the branch target (for branching instructions).
We omit the definition of $T$ in this article, as it is a very large
case analysis, but refer the reader to the Coq development for more
details.  The semantics of \lang{PPC}, then, is defined by just two
transition rules:

\begin{pannel}
\irule{
R({\tt PC}) = {\tt ptr}(b,n) \amper G(b) = \Some{{\tt internal}(c)} \amper c\#n = \Some{i} \\
T(i, (R, M)) = \Some{(R', M')}
}{
\trans {(R, M)} {(R', M')}
}
\irule{
R({\tt PC}) = {\tt ptr}(b,0) \amper G(b) = \Some{{\tt external}(\ef)} \\
{\tt extcall{\char95}arguments}(R, M, \ef.{\tt sig}) = \Some{\vec v} \amper
\ef(\vec v) \evaltrace{t} v \\
R' = R\{{\tt PC} \becomes R({\tt LR}),~ {\tt extcall{\char95}result}(\ef.{\tt sig}) \becomes v\}
}{
\transtrace {(R, M)} {t} {(R', M)}
}
\end{pannel}
The first rule describes the execution of one \lang{PPC} instruction
within an internal function.  The notation $c\#n$ denotes the $n\th$
instruction in the list $c$.  The first three premises model
abstractly the act of reading and decoding the instruction pointed to
by the program counter {\tt PC}.  For simplicity, we consider that all
instructions occupy one byte in memory, so that incrementing the
program counter corresponds to branching to the next instruction.  It
would be easy to account for variable-length instructions (\eg 4 bytes for
regular instructions, 0 for labels, and $4n$ bytes for macro-instructions).
The second rule describes the big-step execution of an invocation of
an external function.  The {\tt extcall{\char95}arguments} function extracts the
arguments to the call from the registers and stack memory locations
prescribed by the signature of the external call; likewise,
{\tt extcall{\char95}result} denotes the register where the result must be stored.
Conventionally, the address of the instruction following the call is
found in register {\tt LR}; setting {\tt PC} to $R({\tt LR})$ therefore returns to
the caller.
\subsection{Determinism} \label{s:ppc-determinism}
Determinism of the target language plays an important role in the
general framework described in section~\ref{s:framework}.  It is
therefore time to see whether the semantics of \lang{PPC} is
deterministic.  In the general case, the answer is definitely ``no'':
in the rule for external function calls, the result value $v$ of the
call is unconstrained and can take any value, resulting in different
executions for the same \lang{PPC} program.  However, this is a form
of external nondeterminism: the choice of the result value $v$ is not
left to the program, but is performed by the ``world'' (operating
system, user interaction, \ldots) in which the program executes.  As
we now formalize, if the world is deterministic, so is the semantics
of \lang{PPC}.
A deterministic world is modeled as a partial function $W$ taking the
identifier of an external call and its argument values and returning
the result value of the call as well as an updated world $W'$.  A
finite trace $t$ or infinite trace $T$ is legal in a world $W$,
written $W \models t$ or $W \models T$, if the result values of external calls
recorded in the trace agree with what the world $W$ imposes:
\begin{pannel}
\srule{ W \models \epsilon }
\irule{
W(\id,\vec v) = \Some{v, W'} \amper W' \models t
}{
W \models \event{\id}{\vec v}{v}. t
}
\iruledouble{
W(\id,\vec v) = \Some{v, W'} \amper W' \models T
}{
W \models \event{\id}{\vec v}{v}. T
}
\end{pannel}
We extend this definition to program behaviors in the obvious way:
\begin{pannel}
\irule{
W \models t
}{
W \models {\tt converges}(t, n)
}
\irule{
W \models T
}{
W \models {\tt diverges}(T)
}
\irule{
W \models t
}{
W \models {\tt goeswrong}(t)
}
\end{pannel}
We could expect that a \lang{PPC} program has at most one behavior $B$
that is legal in a deterministic initial world $W$.  This is true for
terminating behaviors, but for diverging behaviors a second source of
apparent nondeterminism appears, caused by the coinductive definition
of the infinite closure relation $\transinf{\st}{T}$ in
section~\ref{s:trans-sem}.  Consider a program $P$ that diverges
silently, such as an infinite empty loop.  According to the
definitions in section~\ref{s:trans-sem}, this program has behaviors
${\tt diverges}(T)$ for any finite or infinite trace $T$, not just $T =
\epsilon$ as expected.  Indeed, no finite observation of the execution
of $P$ can disprove the claim that it executes with a trace $T \not=
\epsilon$.  However, using classical logic (the axiom of excluded
middle), it is easy to show that the set of possible traces admits a
minimal element for the prefix ordering between traces $T$.  This
minimal trace is infinite if the program is {\em reactive}
(performs infinitely many I/O operations separated by finite numbers
of internal computation steps) and finite otherwise (if the program
eventually loops without performing any I/O).  
By restricting observations to legal behaviors with minimal traces, we
finally obtain the desired determinism property for \lang{PPC}.
\begin{theorem}
Let $P$ be a \lang{PPC} program, $W$ be a deterministic initial world,
and $P \Downarrow B$ and $P \Downarrow B'$ be two executions of $P$.
If $W \models B$ and $W \models B'$ and moreover the traces of $B$ and $B'$ are
minimal, then $B' = B$ up to bisimilarity of infinite traces.
\end{theorem}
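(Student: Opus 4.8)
The plan is to reduce the whole statement to a single structural fact: once the world $W$ is fixed, the \lang{PPC} transition relation becomes a partial function on states, so that from the unique initial state there is exactly one maximal $W$-legal execution. First I would establish, by inspection of the two transition rules of section~\ref{s:ppc-semantics}, a \emph{determinacy} property of raw \lang{PPC} steps: every step emits a trace of length at most one; the two rules are mutually exclusive, since the program counter points either to an internal or to an external function but never to both; and if $\transtrace{\st}{t_1}{\st_1}$ and $\transtrace{\st}{t_2}{\st_2}$, then $t_1$ and $t_2$ are either both empty or both single events carrying the same external name and the same argument values, and moreover $t_1 = t_2$ implies $\st_1 = \st_2$. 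The only residual indeterminacy is the result value $v$ of an external call: its arguments are fixed by the state through the function extracting call arguments from the registers and memory, and the internal rule is deterministic because $T$ is a genuine function. I would also record the auxiliary facts that initial states are uniquely determined by $\prog$, that final states take no step, and that final exit codes are unique.

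Next I would use world-legality to remove the last source of indeterminacy. Define a determinized step $(\st, W) \to (\st', W')$ to hold when $\transtrace{\st}{t}{\st'}$ and the event $t$, if any, is legal in $W$, producing residual world $W'$. Combining determinacy with the observation that a legal event $\event{\id}{\vec v}{v}$ forces $v$ to be the value imposed by $W(\id,\vec v)$, one gets that $\to$ is a partial function: from each pair $(\st, W)$ there is at most one successor, because the shape of the step (silent versus event, and the event's name and arguments) is already pinned down by $\st$, and $W$ then pins down the result. Consequently, starting from the unique initial state $\st_0$ and the given $W$, there is a unique maximal $\to$-path, and every $W$-legal reduction of $P$ from $\st_0$ is an initial segment of it. A short induction shows that the trace and final state of any finite $W$-legal reduction $\transm{\st_0}{t}{\st'}$ are uniquely determined, and a coinductive argument gives the analogous statement for infinite reductions $\transinf{\st_0}{T}$.

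With this in hand the theorem follows by case analysis on $B$ and $B'$. If both are finite, i.e.\ of the form ${\tt converges}$ or ${\tt goeswrong}$, the witnessing reductions coincide by the uniqueness above, so they reach the same non-stepping state with the same trace; since a state cannot be simultaneously final and stuck and final exit codes are unique, this forces $B = B'$. The mixed case, one finite and one diverging, is impossible: the finite execution ends at a state with no $\to$-successor, which the diverging execution would have to step past, contradicting that $\to$ is functional. It remains to treat two diverging behaviors ${\tt diverges}(T)$ and ${\tt diverges}(T')$; here both $\transinf{\st_0}{T}$ and $\transinf{\st_0}{T'}$ follow the same unique infinite $\to$-path, so $T$ and $T'$ must record the same events in the same order.

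The main obstacle is precisely this last step, because the coinductive definition of $\transinf{\st_0}{T}$ is too liberal: as noted in section~\ref{s:ppc-determinism}, a silently diverging program admits ${\tt diverges}(T)$ for \emph{every} $T$, so $T$ and $T'$ need not be literally equal, which is exactly why the minimality hypothesis and the ``up to bisimilarity'' conclusion are needed. I would invoke the classical, excluded-middle argument sketched in the text to show that the set of legal traces of a diverging execution has a least element for the prefix ordering, namely the stream of events actually emitted along the unique $\to$-path, and that this minimal trace is determined up to bisimilarity of finite-or-infinite traces. Matching each of the minimal $T$ and $T'$ against this canonical emitted stream then yields that $T$ and $T'$ are bisimilar, and hence $B = B'$ up to bisimilarity of infinite traces. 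The finite cases are routine inductions; essentially all the care goes into the coinduction that extracts and compares the minimal divergent traces.
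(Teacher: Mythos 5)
Your proposal is correct and takes essentially the same route as the paper's own argument (which is only sketched in the text around the theorem, the full proof living in the Coq development): resolve the external nondeterminism by composing the step relation with the deterministic world $W$ so that execution becomes a partial function on (state, world) pairs, treat terminating and wrong behaviors by uniqueness of that maximal path, and handle the coinductive slack of $\transinf{\st}{T}$ via the classical minimal-trace argument, concluding only up to bisimilarity. One wording slip worth fixing: in the finite-versus-finite case you should say that the common end state cannot be both final (as ${\tt converges}$ requires) and non-final (as ${\tt goeswrong}$ requires) --- not that "a state cannot be simultaneously final and stuck," since final \lang{PPC} states are in fact stuck, exactly as your own auxiliary fact asserts.
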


\section{Generation of PowerPC assembly language} \label{s:ppcgen}

\subsection{Code generation}

The final compilation pass of Compcert translates from \lang{Mach} to
\lang{PPC} by expanding \lang{Mach} instructions into canned sequences
of \lang{PPC} instructions.  For example, a \lang{Mach} conditional
branch ${\tt cond}(\cond, \vec r, \ltrue)$ becomes a {\tt cmplw}, {\tt cmplwi},
{\tt cmpw}, {\tt cmpwi} or {\tt fcmp} instruction that sets condition bits,
followed in some cases by a {\tt cror} instruction to merge two condition
bits, followed by a {\tt bt} or {\tt bf} conditional branch.  Moreover,
\lang{Mach} registers are injected into \lang{PPC} integer or float
registers. 

The translation deals with various idiosyncrasies of the PowerPC
instruction set, such as the limited range for immediate arguments to
integer arithmetic operations, and the fact that register {\tt R0} reads
as 0 when used as argument to some instructions.  Two registers ({\tt R2}
and {\tt F13}) are reserved for these purposes.  The translation is
presented as a number of ``smart constructor'' functions that
construct and combine sequences of \lang{PPC} operations.  To give the
flavor of the translation, here are the smart constructors for ``load
integer immediate'' and ``add integer immediate''.  The functions
${\tt low}(n)$ and ${\tt high}(n)$ compute the signed 16-bit integers such that
$n = {\tt high}(n) \times 2^{16} + {\tt low}(n)$.
\begin{eqnarray*}
{\tt loadimm}(r,n) & = & \threecases
    {{\tt addi}(r, {\tt R0}, n)}{if ${\tt high}(n) = 0$;}
    {{\tt addis}(r, {\tt R0}, {\tt high}(n))}{if ${\tt low}(n) = 0$;}
    {{\tt addis}(r, {\tt R0}, {\tt high}(n)); {\tt ori}(r, r, {\tt low}(n))}{otherwise} \\
{\tt addimm}(r_d,r_s,n) & = & \fourcases
    {{\tt loadimm}({\tt R2}, n); {\tt add}(r_d,r_s,{\tt R2})
       \hbox to 0pt{~~~~~~~~~~if $r_d = {\tt R0}$ or $r_s = {\tt R0}$;\hss}}{}
    {{\tt addi}(r_d, r_s, n)}{if ${\tt high}(n) = 0$;}
    {{\tt addis}(r_d, r_s, {\tt high}(n))}{if ${\tt low}(n) = 0$;}
    {{\tt addis}(r_d, r_s, {\tt high}(n)); {\tt addi}(r_d, r_d, {\tt low}(n))}{otherwise}
\end{eqnarray*}

Just as the generation of \lang{Mach} code must fail if the activation
record is too large to be addressed by machine integers
(section~\ref{s:stacking-transf}), the \lang{PPC} generator must fail
if the translation of a function contains $2^{31}$ or more
instructions, since it would then be impossible to address some of the
instructions via a signed 32-bit offset from the beginning of the
function.

\subsection{Semantic preservation} \label{s:ppcgen-proof}

Semantic preservation for \lang{PPC} generation is proved using a
simulation diagram of the ``option'' type.  The two main invariants to
be preserved are:
\begin{enumerate}
\item The {\tt PC} register contains a pointer ${\tt ptr}(b,\ofs)$ that
  corresponds to the \lang{Mach} function $\fn$ and code sequence $c$
  currently executing:
$$ \fn, c \match {\tt ptr}(b,\ofs) \defequal
   G(b) = \Some{{\tt internal}(\fn)} ~\wedge~ \tr{c} = {\tt suffix}(\tr{\fn}, \ofs) $$
\item The \lang{Mach} register state $R$ and stack pointer $\sp$
  agree with the \lang{PPC} register state $R'$:
$$ R, \sp \match R' \defequal R'({\tt R1}) = \sp ~\wedge~
                              \forall r, R(r) = R'(\overline r) $$
where $\overline r$ denotes the \lang{PPC} register associated with
the \lang{Mach} register $r$.  (Conventionally, the {\tt R1} register is
used as the stack pointer.)
\end{enumerate}
More precisely, matching between a \lang{Mach} execution state and a
\lang{PPC} execution state is defined as follows:

\begin{pannel}
\irule{
\stk {\tt \ wf} \amper \fn, c \match R'({\tt PC}) \amper R, \sp \match R'
}{
\state(\stk, \fn, \sp, c, R, M) \match (R', M)
}
\irule{
\stk {\tt \ wf} \amper R'({\tt PC}) = {\tt ptr}(b, 0) \amper G(b) = \Some{\fd} \amper
R'({\tt LR}) = \parent{\retaddr}{\stk} \amper R, \parent{\sp}{\stk} \match R'
}{
\callstate(\stk, \fd, R, M) \match (R', M)
}
\irule{
\stk {\tt \ wf} \amper R'({\tt PC}) = \parent{\retaddr}{\stk} \amper
R, \parent{\sp}{\stk} \match R'
}{
\returnstate(\stk, R, M) \match (R', M)
}
\end{pannel}
The invariant $\stk {\tt \ wf}$ over \lang{Mach} call stacks is defined as
$\fn, c \match \retaddr$ for all $\stackframe(\fn,\sp,\retaddr,c) \in \stk$.
The case for call states reflects the convention that the caller saves
its return address in register {\tt LR} before jumping to the first
instruction of the callee.
As mentioned in section~\ref{s:mach}, the \lang{Mach} semantics is
parameterized by an oracle ${\tt retaddr}(\fn, c, \retaddr)$ that guesses
the code pointer $\retaddr$ pointing to the \lang{PPC} code
corresponding to the translation of \lang{Mach} code $c$ within
function $\fn$.  We construct a suitable oracle by noticing that the
translation of a \lang{Mach} code $i_1 \ldots i_n$ is simply the
concatenation $\tr{i_1} \ldots \tr{i_n}$ of the translations of the
instructions.  Therefore, the offset of the return address $\retaddr$
is simply the position of the suffix $\tr{c}$ of $\tr{\fn}$.  It is
always uniquely defined if $c$ is a suffix of $\fn.{\tt code}$, which the
\lang{Mach} semantics guarantees.
The simulation diagram is of the ``star'' kind because \lang{Mach}
transitions from return states $\returnstate$ to regular states
$\state$ correspond to zero transitions in the generated \lang{PPC}
code.  The absence of infinite stuttering is trivially shown by
associating measure 1 to return states and measure 0 to regular and
call states.
The proof of the simulation diagram is long because of the high number
of cases to consider but presents no particular difficulties.  
To reason more easily about the straight-line execution of a sequence of
non-branching \lang{PPC} instructions, we make heavy use of the
following derived execution rule:
\begin{pannel}
\irule{
R_0({\tt PC}) = {\tt ptr}(b,\ofs) \amper G(b) = \Some{{\tt internal}(c)} \amper
{\tt suffix}(c, \ofs) = i_1 \ldots i_n.c' \amper n > 0 \\
\mbox{for all $k \in \{1, \ldots, n\}$,~}
T(i_k, (R_{k-1}, M_{k-1})) = \Some{(R_k, M_k)} \mbox{ and }
R_k({\tt PC}) = R_{k-1}({\tt PC}) + 1
}{
\transp {(R_0, M_0)}{\epsilon}{(R_n, M_n)}
}
\end{pannel}

\section{The Coq development} \label{s:coq-devel}

We now discuss the Coq implementation of the algorithms and
proofs described in this paper.  

\subsection{Specifying in Coq}

The logic implemented by Coq is the Calculus of Inductive
and Coinductive Constructions (CICC), a very powerful constructive
logic.  It supports equally well three familiar styles of writing
specifications: by functions and pattern-matching, by inductive or
coinductive predicates representing inference rules, and by ordinary
predicates in first-order logic.  All three styles are used in the
Compcert development, resulting in specifications and statements of
theorems that remain quite close to what can be found in programming
language research papers. 

CICC also features higher-order logic, dependent types, a hierarchy of
universes to enforce predicativity, and an ML-style module system.
These advanced logical features are used sparingly in our development:
higher-order functions and predicates are used for iterators over data
structures and to define generic closure operations; simple forms of
dependent types are used to attach logical invariants to some data
structures and some monads (see section~\ref{s:rtlgen-impl});
coinduction is used to define and reason about infinite transition
sequences; parameterized modules are used to implement the generic
dataflow solvers of section~\ref{s:dataflow-solvers}.  Most of our
development is conducted in first-order logic, however.

Two logical axioms (not provable in Coq) are used: function
extensionality (if $\forall x,~f(x) = g(x)$, then the functions $f$
and $g$ are equal) and proof irrelevance (any two proof terms of the
same proposition $P$ are equal).  It might be possible to avoid using
these two axioms, but they make some proofs significantly easier.  The
proofs of the semantic preservation theorems are entirely
constructive.  The only place where classical logic is needed, under
the form of the axiom of excluded middle, is to show the existence of
minimal traces for infinite executions (see
section~\ref{s:ppc-determinism}).  This set of three axioms is
consistent with the predicative version of Coq's logic
that we use.

\subsection{Proving in Coq}

While Coq proofs can be checked a posteriori in batch mode, they are
developed interactively using a number of tactics as elementary proof
steps.  The sequence of tactics used constitutes the proof script.
Building such scripts is surprisingly addictive, in a videogame kind
of way, but reading them is notoriously difficult.  We were initially
concerned that adapting and reusing proof scripts when specifications
change could be difficult, forcing many proofs to be rewritten from
scratch.  In practice, careful decomposition of the proofs in separate
lemmas enabled us to reuse large parts of the development even in the
face of major changes in the semantics, such as switching from the
``mixed-step'' semantics described in \cite{Leroy-compcert-06} to the
small-step transition semantics described in this paper.  

Our proofs frequently use the basic proof automation facilities
provided by Coq, mostly {\tt eauto} (Prolog-style resolution), {\tt omega}
(Presburger arithmetic) and {\tt congruence}
(a decision procedure for ground equalities with uninterpreted symbols).
However, these tactics do not combine automatically, and significant
manual massaging of the goals is necessary before they apply.  The
functional induction and functional inversion mechanisms of 
Coq~8.1 \cite{Barthe-func-ind-06} helped reason about functions
defined by complex pattern-matching.

Coq also provides a dedicated  \lang{Ltac} language for users to
define their own tactics.  We used this facility occasionally, for
instance to define a ``monadic inversion'' tactic that recursively
simplifies hypotheses of the form $({\tt do}~x$~{\tt <-}~$a; b)~s =
{\tt OK}(s',r)$ into 
$\exists s_1.~ \exists ~x.~ ~a~s = {\tt OK}(s_1, x) ~\wedge~ b~s_1 = {\tt OK}(s', r)$.
There is no doubt that a Coq expert could have found more
opportunities for domain-specific tactics and could have improved
significantly our proof scripts.

As mentioned earlier, most of the development is conducted in
first-order logic, suggesting the possibility of using automated
theorem provers such as SMT solvers.  Preliminary experiments with
using SMT solvers to prove properties of the Compcert memory model
\cite{Leroy-Blazy-memory-model} indicate that many but not all of the
lemmas can be proved automatically.  While fully automated
verification of a program like Compcert appears infeasible with
today's technology, we expect that our interactive proof scripts would
shrink significantly if Coq provided a modern SMT solver as one of its
tactics.

\subsection{Size of the development}

\bgroup

\def\rotatetext#1{
  \begin{tikzpicture}
  \draw (0,0) node[rotate=90]{#1};
  \end{tikzpicture}
}

\begin{figure}

\normalbar

\begin{tabular}{@{}|l|*{7}{@{~}r@{~}|}}
\hline
& \rotatetext{Code (in Coq)}
& \rotatetext{Code (in Caml)}
& \rotatetext{Specifications}
& \rotatetext{Statements}
& \rotatetext{Proof scripts}
& \rotatetext{Directives}
& \rotatetext{Total} \\
\hline
Data structures, proof auxiliaries
   &   413 &    83 &     - &   471 &  1127 &   336 &  2430 \\
Integers, floats, values, memory model
   &   850 &    21 &     - &  1196 &  2819 &   119 &  5005 \\
Common syntactic and semantic definitions
   &   122 &     - &   111 &   439 &   775 &   517 &  1964 \\
Cminor semantics
   &     - &     - &   563 &    39 &   186 &    85 &   873 \\
CminorSel semantics, operators
   &     - &     - &   639 &    88 &   221 &    85 &  1033 \\
Instruction selection, reassociation
   &   838 &     - &     - &   399 &   755 &    82 &  2074 \\
RTL semantics
   &     - &     - &   403 &    62 &   198 &    69 &   732 \\
RTL generation
   &   413 &    55 &     - &   931 &  1237 &   161 &  2797 \\
Dataflow optimizations over RTL
   &  1162 &     - &     - &   596 &  1586 &   282 &  3626 \\
Calling conventions
   &   117 &     - &     - &   127 &   315 &    37 &   596 \\
LTL semantics
   &     0 &     0 &   414 &    48 &   129 &    40 &   631 \\
Register allocation
   &   303 &   544 &     0 &   564 &   975 &   190 &  2576 \\
LTLin semantics
   &     0 &     0 &   243 &     5 &    10 &    27 &   285 \\
Branch tunneling and code linearization
   &   137 &    52 &     0 &   349 &   674 &    97 &  1309 \\
Linear semantics
   &     0 &     0 &   280 &     3 &     4 &    27 &   314 \\
Spilling and reloading
   &   333 &     0 &     0 &   763 &  1669 &   171 &  2936 \\
Mach semantics
   &     - &     - &   545 &    38 &   106 &    73 &   762 \\
Layout of the activation record
   &   236 &     - &     - &   801 &  1510 &   203 &  2750 \\
PPC semantics
   &     - &     - &   520 &     6 &     7 &    29 &   562 \\
PPC generation
   &   424 &     - &     - &   832 &  1904 &   171 &  3331 \\
Compiler driver, PPC printer
   &    34 &   290 &     - &   223 &   336 &   101 &   984 \\
\hline
Total
   &  5382 &  1045 &  3718 &  7980 & 16543 &  2902 & 37570 \\
   & 14\% & 3\% & 10\% & 21\% & 44\% & 8\% & \\
\hline

\end{tabular}

\caption{Size of the development (in non-blank, non-comment lines of code).}
\label{f:size}

\end{figure}

\egroup

The size of the Coq development can be estimated from the line counts
given in figure~\ref{f:size}.  The whole development represents
approximately 37000 lines of Coq (excluding comments and blank lines)
plus 1000 lines of code directly written in Caml.  The overall effort
represents approximately 2~person-years of work.

The Coq datatype and function definitions that implement the compiler
itself (column ``Code in Coq'' in figure~\ref{f:size}) account for
14\% of the source.  In other terms, the Coq verification is about 6
times larger than the program being verified.  The remaining 86\%
comprise 10\% of specifications (mostly, the operational semantics for
the source, target and intermediate languages), 21\% of statements of
lemmas, theorems and supporting definitions, 44\% of proof scripts and
8\% of directives, module declarations, and custom proof tactics.

The sizes of individual passes are relatively consistent: the most
difficult passes (RTL generation, register allocation, spilling and
reloading, and layout of the activation record) take between 2000 and
3000 lines each, while simpler passes (constant propagation, CSE,
tunneling, linearization) take less than 1500 lines each.  One outlier
is the PPC code generation pass which, while conceptually simple,
involves large definitions and proofs by case analysis, totaling more than
3300 lines.  Among the supporting libraries, the formalizations of
machine integer arithmetic and of the memory model are the largest and
most difficult, requiring 1900 and 2300 lines respectively.

Checking all the proofs in the development takes about 7.5 minutes of
CPU time on a 2.4 GHz Intel Core 2 processor equipped with 4 Gb of RAM
and running 64-bit Linux.  The version of Coq used is 8.1pl3.
Parallel {\tt make} with 2 cores results in a wall-clock time of 4.5
minutes.

\section{Experimental results} \label{s:experimental}

\subsection{Extracting an executable compiler}

As mentioned in the Introduction, the verified parts of the Compcert
compiler are programmed directly in Coq, then automatically translated
to executable Caml code using Coq's extraction facility
\cite{Letouzey-02,Letouzey-08}.  To obtain an executable compiler,
this extracted Caml code is combined with:
\begin{itemize}
\item A compiler front-end, translating the \lang{Clight} subset of
  C to \lang{Cminor}.  This front-end is itself extracted from a Coq
  development.  An earlier version of this front-end compiler is
  described in \cite{Blazy-Dargaye-Leroy-06}.
\item A parser for C that generates \lang{Clight} abstract syntax.
  This parser is built on top of the CIL library \cite{CIL2}.
\item Hand-written Caml implementations of the heuristics that we
  validate a posteriori (graph coloring, \lang{RTL} type
  reconstruction, etc), of the pretty-printer for PowerPC assembly
  code, and of a {\tt cc}-style compiler driver.
\end{itemize}
The resulting compiler runs on any platform supported by Caml and
generates PowerPC code that runs under MacOS~X.  While the
soundness proof for Compcert does not account for separate
compilation and assumes that whole programs are compiled at once, the
compiler can be used to separately compile C source files and link
them with precompiled libraries, which is convenient for testing.
(The calling conventions implemented by Compcert are compatible enough
with the standard PowerPC ABI to support this.)

Program extraction performs two main tasks.  First, it eliminates
the parts of Coq terms that have no computational content, by a
process similar to program slicing.  For instance, if a data structure
carries a logical invariant, every instance of this structure contains
a proof term showing that the invariant is satisfied.  This subterm does
not contribute to the final result of the program, only to its
correctness, and is therefore eliminated by extraction.  The second
task is to bridge the gap between Coq's rich type system and Caml's
simpler Hindley-Milner type system.  Uses of first-class polymorphism
or general dependent types in Coq can lead to programs that are not
typeable in Caml; extraction works around this issue by inserting
unsafe Caml coercions, locally turning off Caml's type checking.  
This never happens in Compcert, since the source Coq code is written
in pedestrian ML style, using only Hindley-Milner types.

Generally speaking, the Caml code extracted from the Compcert
development looks like what a Caml programmer would write if confined
to the purely functional subset of the language.  There are two
exceptions.  The first is related to the handling of default cases in
Coq pattern-matching.  Consider a data type with 5 constructors {\tt A} to
{\tt E}, and a pattern-matching {\tt (A,\ A)\ ->\ x\ {\char124}\ ({\char95},\ {\char95})\ ->\ y}.  Internally,
Coq represents this definition by a complete matching having $5 \times
5$ cases, 24 of which are {\tt y}.  Extraction does not yet re-factor the
default case, resulting in 24 copies of the code {\tt y}.  On large data
types, this can lead to significant code explosion, which we limited
on a case by case basis, often by introducing auxiliary functions.

The other problem with extraction we encountered is the
$\eta$-expansions that extraction sometimes performs in the hope of
exposing more opportunities for program slicing.  These expansions can
introduce inefficiencies by un-sharing computations.  Consider for
example a curried function of two arguments $x$ and $y$ that takes
$x$, performs an expensive computation, then returns a function
$\lambda y \ldots$.  After $\eta$-expansion, this expensive
computation is performed every time the second argument is passed.  We
ran into this problem twice in Compcert.  Manual patching of the
extracted Caml code was necessary to undo this ``optimization''.

\subsection{Benchmarks}

\bgroup

\def\doline#1#2#3#4#5#6{
  #1 & #2 & #3\% & #4\% & #5\% &
\begin{tikzpicture}[x=0.044cm,y=0.5cm]
\filldraw[fill=black!60] (0,0) rectangle (#5,0.2);
\filldraw[fill=black!25] (0,0.2) rectangle (#4,0.4);
\filldraw[fill=black] (0,0.4) rectangle (#3,0.6);
\filldraw[fill=white] (0,0.6) rectangle (100,0.8);
\end{tikzpicture}
}

\def\legend#1{
\begin{tikzpicture}
\filldraw[fill=#1] (0,0) rectangle (0.3,0.15);
\end{tikzpicture}
}

\begin{figure}

\normalbar
\begin{tabular}{@{}l|rrrrl}
             & {\tt gcc} & Comp- & {\tt gcc} & {\tt gcc} \\
Test program & {\tt -O0} & -cert & {\tt -O1} & {\tt -O2} \\ 
             & \legend{white} & \legend{black} & \legend{black!25} & \legend{black!60} \\
\cline{1-5}
\\[-2mm]
\doline{AES cipher      }{6.00s}{39.2}{29.4}{25.5}{25.8} \\
\doline{Almabench       }{1.22s}{90.4}{83.5}{84.5}{78.9} \\
\doline{Arithmetic coding}{8.00s}{52.6}{39.8}{38.8}{37.3} \\
\doline{Binary trees     }{14.08s}{74.4}{78.6}{78.1}{75.4} \\
\doline{Fannkuch        }{2.23s}{31.6}{27.7}{25.6}{25.6} \\
\doline{FFT             }{0.52s}{85.1}{82.6}{81.5}{80.9} \\
\doline{K-nucleotide     }{0.21s}{75.3}{74.4}{74.9}{73.5} \\
\doline{Lempel-Ziv      }{79.43s}{20.2}{14.5}{14.5}{13.8} \\
\doline{Lempel-Ziv-Welch}{26.67s}{25.5}{22.5}{21.5}{21.4} \\
\doline{Mandelbrot      }{8.93s}{33.6}{28.3}{29.6}{28.6} \\
\doline{N-body           }{34.00s}{37.7}{44.9}{31.6}{31.7} \\
\doline{Number sieve    }{1.06s}{64.0}{61.7}{60.1}{60.0} \\
\doline{Quicksort       }{1.36s}{41.0}{35.7}{36.0}{36.0} \\
\doline{Ray tracer      }{11.57s}{47.6}{41.7}{39.9}{37.7} \\
\doline{SHA1 hash       }{4.17s}{23.7}{24.0}{22.1}{22.0} \\
\doline{Spectral test   }{17.01s}{55.2}{54.9}{53.3}{52.8} \\
\doline{Virtual machine }{42.50s}{24.4}{37.1}{35.0}{35.0} \\
\cline{1-5}
\\[-2mm]
\doline{Geometric mean  }{}{43.8}{40.8}{38.9}{38.2}
\end{tabular}

\caption{Execution times of compiled code.  Times are given relative
  to those obtained with {\tt gcc\ -O0}.  Lower percentages and shorter
  bars mean faster.}
\label{f:perfs}

\end{figure}

\egroup

Performance of the generated code can be estimated from the timings
given in figure~\ref{f:perfs}.  Since Compcert accepts only a subset
of the C language (excluding variadic functions and {\tt long\ long}
arithmetic types, for instance), standard benchmark suites cannot be
used, and we reverted to a small home-grown test suite.  The test
programs range from 50 to 3000 lines of C code, and include
computational kernels (FFT, N-body, etc.), cryptographic primitives
(AES, SHA1), text compression algorithms, a virtual machine
interpreter, and a ray tracer derived from the ICFP 2000 programming
contest.  The PowerPC code generated by Compcert was benchmarked
against the code generated by GCC version 4.0.1 at optimization levels
0, 1 and 2.  (Higher GCC optimization levels make no significant
differences on this test suite.) For the purpose of this benchmark,
Compcert was allowed to recognize the fused multiply-add and
multiply-sub PowerPC instructions.  (These instructions are normally
not used in Compcert because they produce results different from a
multiply followed by an add or sub, but since GCC uses them
nonetheless, it is fair to allow Compcert to do so as well.)
Measurements were performed on an Apple PowerMac workstation with two
2.0 GHz PowerPC 970 (G5) processors and 6 Gb of RAM, running MacOS~10.4.11.

As the timings in figure~\ref{f:perfs} show, Compcert generates code
that is more than twice as fast as that generated by GCC without
optimizations, and competitive with GCC at optimization levels 1 and 2.
On average, Compcert code is only 7\% slower than {\tt gcc\ -O1} and 12\%
slower than {\tt gcc\ -O2}.  The test suite is too small to draw definitive
conclusions, but these results strongly suggest that 
while Compcert is not going to win any prize in high performance
computing, the performance of generated programs is
adequate for critical embedded code.

Compilation times are higher for Compcert than for GCC but remain
acceptable: to compile the 3000-line ray tracer, Compcert takes 4.6\,s
while {\tt gcc\ -O1} takes 2.7\,s.  There are several possible reasons for
this slowdown.  One is that Compcert proceeds in a relatively high
number of passes, each of which reconstructs entirely the
representation of the function being compiled.  Another is the use of
purely functional data structures (balanced trees) instead of
imperative data structures (bitvectors, hash tables).  We were
careful, however, to use functional data structures with good
asymptotic complexity (mainly AVL trees and radix-2 trees), which
cost only a logarithmic factor compared with imperative data structures.

\section{Discussion and perspectives} \label{s:perspectives}

We now discuss some of the design choices and limitations of our work
and outline directions for future work.

\subsection{On Cminor as a target language}

The \lang{Cminor} intermediate language was designed to allow
relatively direct translation of a large subset of the C language.
In particular, the memory model closely matches that of C, and the
{\tt block}/{\tt exit} mechanism makes it easy to translate C loops (including
{\tt break} and {\tt continue}) in a compositional manner.  The C feature that
appears most difficult to support is variadic functions; this is not
surprising given that variadic functions account for most of the
complexity of function calling conventions in C compilers.  

Other features of C could be supported with small extensions to
\lang{Cminor} and the Compcert back-end.  For instance, \lang{Cminor}
currently performs all floating-point arithmetic in double precision,
but it is planned to add single-precision float operators to better
support ISO~C's floating-point semantics.  Also, large {\tt switch}
statements could be compiled more efficiently if multi-way branches
(jump tables) were added to \lang{RTL} and later intermediate
languages.

Less obviously, \lang{Cminor} can also be used as a target language
when compiling higher-level source languages.  Function pointers and
tail-call optimization are supported, enabling the compilation of
object-oriented and functional languages.  Exceptions in the style of
ML, C{\tt ++} or Java are not supported natively in \lang{Cminor}
but could be encoded (at some run-time cost) either as special
return values or using continuation-passing style.
For the former approach, it could be worthwhile to add functions with
multiple return values to \lang{Cminor}.

For languages with automatic memory management, \lang{Cminor} provides
no native support for accurate garbage collection: mechanisms for
tracking GC roots through register and stack allocation in the style
of C{\tt --} \cite{Peyton-Jones-Ramsey-99} are not provided and appear
difficult to specify and prove correct.  However, the \lang{Cminor}
producer can explicitly register GC roots in \lang{Cminor} stack
blocks, in the style of Henderson \cite{Henderson-uncooperative}.  Zaynah
Dargaye has prototyped a verified front-end
compiler from the mini-ML functional language to \lang{Cminor} that
follows this approach \cite{Dargaye-these}.

\subsection{On retargeting}

While some parts of the Compcert back-end are obviously specific to
the PowerPC (\eg generation of assembly language,
section~\ref{s:ppcgen}), most parts are relatively independent of the
target processor and could conceivably be reused for a different
target.  To make this claim more precise, we experimented with
retargeting the Compcert back-end for the popular ARM processor.
The three aspects of this port that required significant changes in
the back-end and in its proof are:
\begin{itemize}
\item Reflecting the differences in instruction sets, the types and
  semantics of machine-specific operators, addressing modes and
  conditions (section~\ref{s:cminorsel}) change.  This impacts the
  instruction selection pass (section~\ref{s:selection}) but also the
  abstract interpretation of these operators performed by constant
  propagation (section~\ref{s:constprop}).
\item Calling conventions and stack layout differ.  Most differences
  are easy to abstract over, but the standard ARM calling convention
  could not be supported: it requires that floats are passed in pairs of
  integer registers, which our value and memory model cannot express yet.  
  Nonstandard calling conventions, using float registers to pass
  floats, had to be used instead.
\item ARM has fewer registers than PowerPC (16 integer registers and 8
  float registers instead of 32 and 32).  Consequently, we had to
  reduce the number of registers reserved to act as temporary
  registers.  This required some changes in the spilling and reloading
  pass (section~\ref{s:reload}).
\end{itemize}
Overall, the port of Compcert and its proof to the ARM processor took
about 3~weeks.  Three more weeks were needed to revise the
modular structure of the Coq development, separating the
processor-specific parts from the rest, adapting the PowerPC and
ARM-specific parts so that they have exactly the same interface,
and making provisions for supporting other target processors in the
future.  Among the 37500 lines of the initial development,
28000 (76\%) were found to be processor-independent and
8900 (24\%) ended up in PowerPC-specific modules; an additional
8800 lines were added to support the ARM processor.

\subsection{On optimizations}

As mentioned in the Introduction, the main objective for the Compcert
project was to prove end-to-end semantic preservation.  This led us to
concentrate on non-optimizing transformations that are required in a
compiler and to spend less time on optimizations that are optional.
Many interesting optimizations \cite{Muchnick,Appel-tiger} remain to be proved
correct and integrated in Compcert.

In separate work not yet part of Compcert,
Jean-Baptiste Tristan
verified two additional optimizations: instruction scheduling by list
scheduling and trace scheduling \cite{Ellis-Bulldog} and lazy code
motion (LCM) \cite{Knoop-LCM}.  These optimizations are more advanced
than those described in section~\ref{s:dataflow}, since they move
instructions across basic blocks and even across loop boundaries in
the case of LCM.  In both cases, Tristan used a translation validation
approach (see section~\ref{s:translation-validation}) where the code
transformation is performed by untrusted Caml code, then verified a
posteriori using a verifier that is formally proved correct in Coq.
In the case of instruction scheduling, validation is performed by
symbolic execution of extended basic blocks
\cite{Tristan-Leroy-scheduling}.  For LCM, validation exploits
equations between program variables obtained by an available expressions
analysis, combined with an anticipability analysis \cite{Tristan-Leroy-LCM}.
On these two examples, the verified translation validation approach
was effective, resulting in relatively simple semantic correctness
proofs that are insensitive to the many heuristics decisions taken by
these two optimizations.  We conjecture that for several other
optimizations, the verified translation validation approach is
simpler than proving directly the correctness of the optimization.

Many advanced optimizations are formulated in terms of static single
assignment (SSA) representation rather than over classic
intermediate representations like those currently used in Compcert.
SSA enables more efficient static analyses and sometimes simpler code
transformations.  A typical SSA-based optimization that interests us
is global value numbering \cite{Rosen-Wegman-Zadeck}.  Since the
beginning of Compcert we have been considering using SSA-based
intermediate languages, but were held off by two difficulties.  First,
the dynamic semantics for SSA is not obvious to formalize (see
\cite{Blech-Glesner-05,Beringer-07,Pop-SSA-phd} for various approaches).
Second, the SSA property is global to the code of a
whole function and not straightforward to exploit locally within
proofs.  Functional representations such as A-normal forms could offer
some of the benefits of SSA with clearer semantics.  In a translation
validation setting, it might not be necessary to reason directly over
the SSA form: the untrusted optimizations could convert to SSA, use
efficient SSA-based algorithm and convert out of SSA; the validator,
which is the only part that needs proving, can still use conventional
\lang{RTL}-like representations.

\subsection{On memory} \label{s:on-memory}

Whether to give formal semantics to imperative languages or to reason
over pointer-based programs and transformations over these programs,
the memory model is a crucial ingredient.  The formalization of memory
used in Compcert can be extended and refined in several directions.

The first is to add memory allocation and deallocation primitives to
\lang{Cminor}, in the style of C's {\tt malloc} and {\tt free}.  Both can be
implemented in \lang{Cminor} by carving sub-blocks out of a large
global array, but primitive support for these operations could
facilitate reasoning over \lang{Cminor} programs.  Supporting dynamic
allocation is easy since it maps directly to the {\tt alloc} function of
our memory model.  (This model does not assume that allocations and
deallocations follow a stack-like discipline.)  Dynamic,
programmer-controlled deallocation requires more care: mapping it to
the {\tt free} operation of our memory model opens up the possibility that
a \lang{Cminor} function explicitly deallocates its own stack block.
This could invalidate semantic preservation: if the \lang{Cminor}
function does not use its stack block and does not terminate, nothing
wrong happens, but if some of its variables are spilled to memory, the
corresponding \lang{Mach} code could crash when accessing a spilled
variable.  Explicit deallocation of stack frames must therefore be
prevented at the \lang{Cminor} level, typically by tagging memory
blocks as belonging to the stack or to the heap.

Another limitation of our memory model is that it completely hides the
byte-level representation of integers and floats in memory.  This
makes it impossible to express in \lang{Cminor} some C programming
idioms of dubious legality but high practical usefulness such as
copying byte-per-byte an arbitrary data structure to another of the
same layout (in the style of {\tt memmove}).  Doing so in \lang{Cminor}
would fill the destination structure with {\tt undef} values.  As
mentioned in section~\ref{s:ppc-syntax}, this feature of our memory
model also prevents us from reasoning about machine code that manipulates
the IEEE representation of floats at the bit level.  As discussed in 
\cite[section 7]{Leroy-Blazy-memory-model}, a strong reason for hiding
byte-level in-memory representations is to ensure that pointer values
cannot be forged from integers or floats; this guarantee plays a
crucial role in proving semantic preservation for certain memory
transformations.  A topic for future work is to refine the memory
model to obtain the best of both worlds: unforgeable pointers and
byte-level access to the representations of integers and floats.

Compared with ``real'' memory implementations or even with the version
of our memory model presented in \cite{Leroy-Blazy-memory-model}, the
memory model currently used in Compcert makes two simplifying
assumptions: (1) {\tt free} never fails, therefore allowing repeated
deallocation of a given block; (2) {\tt alloc} never fails, therefore
modeling an infinite memory.  Assumption (1) is not essential, and
our proofs extend straightforwardly to showing that the compiler never
inserts a double {\tt free} operation.  
Assumption (2) on infinite memory is more difficult to
remove, because in general a compiler does not preserve the stack
memory consumption of the program it compiles.  It is easy to show
that the generated code performs exactly the same dynamic allocations
and deallocations as the source program; therefore, heap memory usage
is preserved.  However, in Compcert, the sizes of stack blocks can
increase arbitrarily between \lang{Cminor} and \lang{PPC}, owing to
the spilling of \lang{Cminor} variables to the stack described in
section~\ref{s:stacking}.  A \lang{Cminor} program that executes
correctly within $N$ bytes of stack space can therefore be translated
to a \lang{PPC} program that runs out of stack, significantly
weakening the semantic preservation theorems that we proved.

In other words, while heap memory usage is clearly preserved, it seems
difficult to prove a bound on stack usage on the source program and
expect this resource certification to carry over to compiled code:
stack consumption, like execution time, is a program property that is
not naturally preserved by compilation.  A simpler alternative is to
establish the memory bound directly on the compiled code.  If
recursion and function pointers are not used, which is generally the
case for critical embedded software, a simple, verified static
analysis over \lang{Mach} code that approximates the call graph can
provide the required bound $N$ on stack usage.  We could, then, prove
a strong semantic preservation theorem of the form ``if the \lang{PPC}
stack is of size at least $N$, the generated \lang{PPC} code behaves
like the source \lang{Cminor} program''.  For programs that use
recursion or function pointers, the issue remains open, however.

\subsection{On the multiplicity of passes and intermediate languages}

The number of compilation passes in Compcert is relatively high by
compiler standards, but not shockingly so.  The main motivation here
was to have passes that do exactly one thing each, but do it well and
in a complete manner.  Combining several passes together tends to
complicate their proofs super-linearly.  For example, the
first published version of Compcert \cite{Leroy-compcert-06} performed
register allocation, reloading, spilling and enforcement of calling
conventions all in one pass.  Splitting this pass in two (register
allocation in section~\ref{s:regalloc}, spilling in
section~\ref{s:reload}) resulted in a net simplification of the
proofs.

What is more surprising is the high number of intermediate languages
involved in Compcert: with the exception of dataflow-based
optimizations (constant propagation, CSE, and tunneling), each pass
introduces a new intermediate language to use as its target language.
Many of these intermediate languages are small variations on one
another.  Yet we found it necessary to define each intermediate
language separately, rather than identifying them with subsets of
a small number of more general intermediate languages (like for
instance GCC does with its Tree and RTL representations).  

The general problem we face is that of transmitting information
between compiler passes: what are the properties of its output that a
compiler pass guarantees and that later passes can rely on?  
These guarantees can be positive properties (\eg ``the \lang{Mach} code
generated by the stack layout pass is compatible with treating
registers as global variables'') but also negative, ``don't care''
properties (\eg ``the \lang{LTL} code generated by register allocation
does not use temporary registers and is insensitive to modifications
of caller-save registers across function calls'').  

Some of these guarantees can be captured syntactically.  For instance,
code produced by register allocation never mentions temporary
registers.  In Compcert, such syntactic guarantees for a compiler pass
are enforced either by the abstract syntax of its target
language or via an additional inductive predicate on this abstract
syntax.  As mentioned in section~\ref{s:cminor-typing}, all
intermediate languages from \lang{Cminor} to \lang{Mach} are weakly
typed in a {\tt int}-or-{\tt float} type system.  The corresponding typing
rules are a good place to carry additional restrictions on the syntax
of intermediate languages.  For example, \lang{LTL}'s typing rules
enforce the restriction that all locations used are either
non-temporary registers or stack slots of the {\tt local} kind.

Other guarantees are semantic in nature and cannot be expressed by
syntactic restriction.  One example is the fact that \lang{LTL} code
generated by register allocation does not expect caller-save registers
to be preserved across function calls.  Such properties need to be
reflected in the dynamic semantics for the target language of the
considered pass.  Continuing the previous example, the \lang{LTL}
semantics captures the property by explicitly setting caller-save
registers to the {\tt undef} value when executing a function call
(section~\ref{s:ltl}).  Later passes can then refine these {\tt undef}
values to whatever value is convenient for them
(section~\ref{s:reload-proof}). 

In other cases, we must guarantee that an intermediate representation
not only doesn't care about the values of some locations but actually
preserves whatever values they hold.  Just setting these locations to
{\tt undef} is not sufficient to capture this guarantee.  The approach we
followed is to anticipate, in the semantics of an intermediate
language, the actual values that these locations will take after later
transformations.  For example, the \lang{Linear} semantics
(section~\ref{s:linear}) anticipates the saving and reloading of
callee-save registers performed by \lang{Mach} code generated by the
stack layout pass (section~\ref{s:stacking}).  Another instance of
this technique is the semantics of \lang{Mach} that anticipates the
return address that will be stored by \lang{PPC} code in the slot of
the activation record reserved for this usage (sections~\ref{s:mach}
and~\ref{s:ppcgen-proof}).  While this technique of semantic
anticipation was effective in Compcert, it is clearly not as modular
as one would like: the semantics of an intermediate language becomes
uncomfortably dependent on the effect of later compilation passes.
Finding better techniques to capture behaviors of the form ``this
generated code does not depend on the value of $X$ and guarantees to
preserve the value of $X$'' is an open problem.

\subsection{Toward machine language and beyond} \label{s:hardware-verif}

The Compcert compilation chain currently stops at the level of an
assembly language following a Harvard architecture and equipped
with a C-like memory model.  The next step ``down'' would be machine
language following a von Neumann architecture and representing memory
as a finite array of bytes.  The main issue in such a refinement is to
bound the amount of memory needed for the call stack, as discussed in
section~\ref{s:on-memory}.  Once this is done, we believe that the
refinement of the memory model can be proved correct as an instance of
the memory embeddings studied by
Leroy and Blazy \cite[section~5.1]{Leroy-Blazy-memory-model}.  Additionally, symbolic
labels must be resolved into absolute addresses and relative
displacements, and instructions must be encoded and stored in memory.
Based on earlier work such as \cite{Moore-96}, such a refinement
is likely to be tedious but should not raise major
difficulties.

The main interest in going all the way to machine language is to
connect our work with existing and future hardware verification
efforts at the level of instruction set architectures (\ie machine
language) and micro-architectures.  Examples of such hardware
verifications include the Piton project \cite{Moore-89,Moore-96}
(from a high-level assembly language to an NDL netlist for a
custom microprocessor), Fox's verification of
the ARM6 micro-architecture \cite{Fox-ARM6-03},
and the VAMP project \cite{Beyer-VAMP}
(from the DLX instruction set to a gate-level description of a
processor).
Sharing a specification of an instruction set architecture between the
verification of a compiler and the verification of a hardware
architecture strengthens the confidence we can have in both
verifications.

\subsection{Toward shared-memory concurrency}

Shared-memory concurrency is back into fashion these days and is
infamous for raising serious difficulties both with the verification of
concurrent source programs and with the reuse, in a concurrent
setting, of languages and compilers designed for sequential
execution \cite{Boehm-threads-library}.  An obvious question,
therefore, is whether the Compcert back-end and its soundness proof
could be extended to account for shared-memory concurrency.

It is relatively easy to give naive interleaving semantics for
concurrency in \lang{Cminor} and the other languages of Compcert, but
semantic preservation during compilation obviously fails: if arbitrary
data races are allowed in the source \lang{Cminor} program, the
transformations performed by the compiler introduce additional
interleavings and therefore additional behaviors not present in the
source program.  For instance, the evaluation of an expression is an
atomic step in the \lang{Cminor} semantics but gets decomposed into
several instructions during compilation.  The weakly consistent
hardware memory models implemented by today's processors add even more
behaviors that cannot be predicted easily by the \lang{Cminor}
semantics.

Our best hope to show a semantic preservation result in a
shared-memory concurrent setting is, therefore, to restrict ourselves
to race-free source programs that implement a proper mutual exclusion
discipline on their memory accesses.  A powerful way to characterize
such programs is concurrent separation logic \cite{OHearn-07}.  Using
this approach, Hobor et al. \cite{Hobor-Appel-Zappa-08} develop an operational
semantics for Concurrent Cminor, an extension of \lang{Cminor} with
threads and locks.  This semantics is pseudo-sequential in that
threads run sequentially between two operations over locks, and their
interleaving is determined by an external {\em oracle} that appears as
a parameter to the semantics.  It is conceivable that, for a fixed but
arbitrary oracle, the Compcert proofs of semantic preservation would
still hold.  The guarantees offered by concurrent separation logic
would then imply that the pseudo-sequential semantics for the
generated \lang{PPC} code captures all possible actual executions of
this code, even in the presence of arbitrary interleavings and
weakly-consistent hardware memory.  This approach is very promising,
but much work remains to be done.

\section{Related work} \label{s:related}

We have already discussed the relations between
compiler verification and other approaches to trusted compilation in
section~\ref{s:framework}.
Proving the correctness of compilers has been an active research
topic for more than 40 years, starting with the seminal work of
McCarthy and Painter \cite{McCarthy-Painter-67}.  Since then, a great many on-paper proofs
for program analyses and compiler transformations have been published
--- too many to survey here.  Representative examples include the works
of Clemmensen and Oest \cite{Clemmensen-84},
Chirica and Martin \cite{Chirica-86},
Guttman et al. \cite{Guttman-VLISP},
M\"{u}ller-Olm \cite{Muller-phd-97} and
Lacey et al. \cite{Lacey-Jones-02}.  We refer the reader to Dave's annotated
bibliography \cite{Dave-compiler-verif-03} for further references.
In the following, we restrict ourselves to discussing correctness
proofs of compilers that involve mechanized verification.

Milner and Weyhrauch \cite{Milner-compiler-correctness} were arguably the first to
mechanically verify semantic preservation for a compilation algorithm
(the translation of arithmetic expressions to stack machine code),
using the Stanford LCF prover.  The first mechanized verification of a
full compiler is probably that of Moore \cite{Moore-89,Moore-96},
although for a rather low-level, assembly-style source language.

The Verifix project \cite{Goos-Zimmermann-99} had goals broadly similar to
ours: the construction of mathematically correct compilers.  It
produced several methodological approaches, but few parts of this
project led to machine-checked proofs.  The parts closest to the
present work are the formal verification (in PVS)
of a compiler from a subset of Common Lisp to Transputer code
\cite{Dold-Vialard-01} and of instruction selection by a bottom-up
rewrite system \cite{Dold-Gaul-Zimmerman-98}.  

Strecker \cite{Strecker-02} and Klein and Nipkow \cite{Klein-Nipkow-jinja} verified
non-optimizing byte-code compilers from a subset of Java to a subset
of the Java Virtual Machine using Isabelle/HOL.  They did not address
compiler optimizations nor generation of actual machine code.  Another
verification of a byte-code compiler is that of Gr\'{e}goire \cite{Gregoire-phd},
for a functional language.

The Verisoft project \cite{Verisoft-project} is an ambitious attempt at
end-to-end formal verification that covers the whole spectrum from
application to hardware.  The compiler part of Verisoft is the formal
verification of a compiler for a Pascal-like language called C0
down to DLX machine code using the Isabelle/HOL proof assistant
\cite{Leinenbach-Paul-05,Strecker-C0-05,Leinenbach-Petrova-08}.
The publications on this verification lack details but suggest
a compiler that is much simpler than Compcert and generates
unoptimized code.

Li et al. \cite{Li-Owens-Slind-07,Li-Slind-07} describe an original
approach to the generation of trusted ARM code where the input
language is a subset of the HOL specification language.  Compilation
is not entirely automatic: the user chooses interactively which
transformations to apply, but the system produces formal evidence (a
HOL proof term) that the generated ARM code conforms to the HOL
specification.

Chlipala \cite{Chlipala-07} developed and proved correct a compiler for
simply-typed $\lambda$-calculus down to an idealized assembly
language.  This Coq development cleverly uses dependent types and
type-indexed denotational semantics, resulting in remarkably compact
proofs.  Another Coq verification of a compiler from a simply-typed
functional language to an idealized assembly language is that of
Benton and Hur \cite{Benton-Hur}.  Like Chlipala's, their proof has a strong
denotational semantics flavor; it builds upon the concepts of
step-indexed logical relations and biorthogonality.  It is unclear yet
whether such advanced semantic techniques can be profitably applied to
low-level, untyped languages such as those considered in this paper,
but this is an interesting question.

The formal verification of static analyses, usable both to support
compiler optimizations or to establish safety properties of programs,
has received much attention.  Considerable efforts have been expended
on formally verifying Java's dataflow-based bytecode verification;
see \cite{Hartel-01,Leroy-bytecode-verification-03} for a survey.
Cachera et al. \cite{Cachera-Jensen-05} and Pichardie \cite{Pichardie-phd} develop a
framework for abstract interpretation and dataflow analysis, formally
verified using Coq.  A related project is 
Rhodium \cite{Lerner-Chambers-05}, a domain-specific language to
describe program analyses and transformations.  From a Rhodium
specification, both executable code and an automatically-verified
proof of semantic preservation are generated.  Rhodium achieves
a high degree of automation, but applies only to the
optimization phases of a compiler and not to the non-optimizing
translations from one language to another.

\section{Conclusions} \label{s:conclusions}

The formal verification of a compiler back-end presented in this
article provides strong evidence that the initial goal of formally
verifying a realistic compiler can be achieved, within the limitations
of today's proof assistants, and using only elementary semantic and
algorithmic approaches.  It is, however, just one exploration within a
wide research area: the certification, using formal
methods, of the verification tools, code generators, compilers and
run-time systems that participate in the development, validation 
and execution of critical software.  In addition, we hope that this work
also contributes to renewing scientific interest in the semantic
understanding of compiler technology, in mechanized operational
semantics, and in integrated environments for programming and proving.

Looking back at what was achieved, we did not completely rule out all
uncertainties concerning the soundness of the compiler, but reduced
the problem of trusting the whole compiler down to trusting
(1) the formal semantics for the source (\lang{Cminor}) and target
(\lang{PPC}) languages; (2) the compilation chain used to produce the
executable for the compiler (Coq's extraction facility and the OCaml
compiler); and (3) the Coq proof assistant itself.  Concerning (3), it
is true that an inconsistency in Coq's logic or a bug in Coq's
implementation could theoretically invalidate all the guarantees we
obtained about Compcert.  As Hales \cite{Hales-formal-proof} argues, this
is extremely unlikely, and proofs mechanically checked by a proof
assistant that generates proof terms are orders of magnitude
more trustworthy than even carefully hand-checked mathematical proofs.

To address concern (2), ongoing work within the Compcert project
studies the feasibility of formally verifying Coq's extraction
mechanism and a compiler from \lang{Mini-ML} (the target language for
this extraction) to \lang{Cminor}.  Composed with the Compcert back-end,
these efforts could eventually result in a trusted execution path for
programs written and verified in Coq, like Compcert itself, therefore
increasing confidence further through a form of bootstrapping.

The main source of uncertainty is concern (1): do the formal
semantics of \lang{Cminor} and \lang{PPC}, along with the underlying
memory model, capture the intended behaviors?  One
could argue that they are small enough (about 2500 lines of Coq) to
permit manual review.  Another effective way to increase confidence in these
semantics is to use them in other formal verifications, such as the
\lang{Clight} to \lang{Cminor} and \lang{Mini-ML} to \lang{Cminor}
front-ends developed within the Compcert project, and the axiomatic
semantics for \lang{Cminor} of \cite{Appel-Blazy-07}.  Future work in
this direction could include connections with architectural-level
hardware verification (as outlined in section~\ref{s:hardware-verif})
or with verifications of program provers, model checkers and static
analyzers for C-like languages.  Drawing and formalizing such
connections would not only strengthen even further the confidence we
can have in each component involved, but also progress towards the
availability of high-assurance environments for the development and
validation of critical software.

\begin{acknowledgements}
This work was supported in part by Agence Nationale de la Recherche,
grant number ANR-05-SSIA-0019.  We would like to thank the
following colleagues for their input.  
E.~Ledinot made us aware of the need for a formally verified
C~compiler in the aerospace industry.  Several of the techniques
presented here were discussed and prototyped with members of the
Concert INRIA coordinated research action and of the Compcert
ANR advanced research project, especially Y.~Bertot, S.~Blazy,
B.~Gr\'egoire, P.~Letouzey and L.~Rideau.
Our formalization of dataflow analyses
and constant propagation builds on that of B.~Gr\'egoire.  L.~Rideau
and B.~Serpette contributed the hairy correctness proof for parallel
moves.  D.~Doligez proved properties of finite maps and type
reconstruction for RTL.  A.~Appel and S.~Blazy invented the
continuation-based transition semantics for \lang{Cminor}.  G.~Necula
argued convincingly in favor of a posteriori verification of static
analyses and optimizations.
Last but not least, the author is grateful to the JAR anonymous
reviewers for their meticulous reading of this long article and for
their helpful suggestions for improvements.  One of them went so far
as to suggest the union-find based algorithm for branch tunneling
presented in section~\ref{s:tunneling}, which is original to the best
of our knowledge and more elegant than our previous bounded-depth
algorithm.

\end{acknowledgements}

\bibliographystyle{spmpsci}
\ifx\french\undefined \def\biling#1#2{#1} \else \def\biling#1#2{#2}
  \fi\ifx\abbrevbib\undefined \def\abbrev#1#2{#1} \else \def\abbrev#1#2{#2} \fi

\end{document}